\newcommand{\ceil}[1]{\left\lceil #1\right\rceil}
\newcommand{\tell}{\tilde{\ell}}
\newcommand{\vv}{\bm{v}}
\newcommand{\vu}{\bm{u}}
\newcommand{\vw}{\bm{w}}
\newcommand{\va}{\bm{a}}
\newcommand{\vb}{\bm{b}}
\newcommand{\vc}{\bm{c}}
\newcommand{\ve}{\bm{e}}
\newcommand{\valpha}{\bm{\alpha}}
\newcommand{\vbeta}{\bm{\beta}}
\newcommand{\vL}{\bm{L}}
\newcommand{\vzero}{\bm{0}}
\newcommand{\xn}{\bm{x}}
\newcommand{\by}{\bm{y}}
\newcommand{\bw}{\bm{w}}
\newcommand{\bz}{\bm{z}}
\newcommand{\n}{\mathbb{N}}
\newcommand{\C}{\mathbb{C}}
\newcommand{\Z}{\mathbb{Z}}
\newcommand{\field}{\mathbb{F}}
\newcommand{\f}{\field}
\newcommand{\F}{\mathbb{F}}
\newcommand{\kk}{\mathbb{K}}
\newcommand{\cC}{\mathcal{C}}
\newcommand{\cH}{\mathcal{H}}
\newcommand{\cF}{\mathcal{F}}
\newcommand{\cU}{\mathcal{U}}
\newcommand{\R}{\mathbb{R}}
\newcommand{\norm}[1]{\left\lVert#1\right\rVert}
\newcommand{\partiald}[2]{\frac{\partial #1}{\partial #2}}
\newcommand{\gen}[1]{\left\langle #1\right\rangle}
\newcommand\restr[2]{{% we make the whole thing an ordinary symbol
  \left.\kern-\nulldelimiterspace % automatically resize the bar with \right
  #1 % the function
  \vphantom{\big|} % pretend it's a little taller at normal size
  \right|_{#2} % this is the delimiter
  }}
 \newcommand{\trace}{\ensuremath{\text{Trace}}}
\newcommand{\szlem}{DeMillo-Lipton-Schwartz-Zippel lemma}
\newcommand{\bra}[1]{\ensuremath{\left( #1 \right) }}
\newcommand{\cbra}[1]{\ensuremath{\left\{ #1 \right\} }}
\newcommand{\matrixgroup}[3]{\ensuremath{\text{#1}_{#2}(#3)}}
\newcommand{\affmatrixgroup}[3]{\ensuremath{\text{#1}^{\text{aff}}_{#2}(#3)}}
\newcommandx{\matrixgroupnofield}[2]{\ensuremath{\text{#1}_{#2}}}
\newcommandx{\GL}[2][1=n,2=\C]{\matrixgroup{GL}{#1}{#2}}
\newcommandx{\GLm}[2]{\matrixgroup{GL}{#1}{#2}}
\newcommandx{\GLplus}[2][1=n,2=\C]{\affmatrixgroup{GL}{#1}{#2}}
\newcommandx{\GLmplus}[2][1=m,2=\C]{\affmatrixgroup{GL}{#1}{#2}}
\newcommandx{\SL}[2][1=n,2=\C]{\matrixgroup{SL}{#1}{#2}}
\newcommandx{\PS}[2][1=n,2=\C]{\matrixgroup{PS}{#1}{#2}}
\newcommandx{\TR}[2][1=n,2=\C]{\matrixgroup{TR}{#1}{#2}}
\newcommandx{\TS}[2][1=n,2=\C]{\matrixgroup{TS}{#1}{#2}}
\newcommandx{\TPS}[2][1=n,2=\C]{\matrixgroup{TPS}{#1}{#2}}
\newcommandx{\GLaff}{\ensuremath{\text{GL}^{\text{aff}}}}
\newcommand{\gl}[2]{\text{GL}_{#1}(#2)}
\newcommand{\gla}[2]{\affmatrixgroup{GL}{#1}{#2}}
\newcommand{\croanf}[1][\Delta]{\ensuremath{\text{ANF}_{#1}}}
\newcommandx{\sdmpol}[2][1=s,2=d]{\ensuremath{\text{T}_{#1,#2}}}
\renewcommand{\det}{\ensuremath{\text{Det}}}
\newcommand{\generator}{\ensuremath{\mathcal{G}}}
\newcommand{\G}{\generator}
\newcommand{\svgen}{\ensuremath{\G^{\text{SV}}}}
\newcommand{\svgenh}{\ensuremath{\G^{\text{SV-hom}}}}
\newcommand{\cont}{\text{C}}
\newcommand{\ccont}[1]{{\text{C}}^{\gla{}{#1}}}
\newcommand{\dtwo}{\ensuremath{\Sigma\Pi}}
\newcommand{\dthree}{\ensuremath{\Sigma\Pi\Sigma}}
\newcommand{\VP}{\ensuremath{\text{VP}}}
\newcommand{\VNC}{\ensuremath{\text{VNC}}}
\newcommand{\VPe}{\ensuremath{\text{VP}_{\text{e}}}}
\newcommand{\VNP}{\ensuremath{\text{VNP}}}
\newcommand{\ROP}{\text{ROP}}
\newcommand{\ROF}{\text{ROF}}
\newcommandx{\orbitofgroup}[2]{\ensuremath{#1^{#2}}}
\newcommand{\affineorbitof}[2][]{\orbitofgroup{#2}{\GLplus[#1]}}
\newcommand{\roanf}[1][]{\affineorbitof[#1]{\croanf[]}}
\newcommand{\anf}[2]{\ensuremath{\croanf[#1]^{\GLaff(#2)}}}
\newcommand{\anfn}[2]{\ensuremath{\croanf[#1]^{\gla{n}{#2}}}}
\newcommand{\sdminv}[1][]{\affineorbitof[#1]{\ensuremath{\mathcal T}}}
\newcommand{\taff}[1]{\ensuremath{\mathcal T}^{#1}}
\newcommandx{\sinv}[2][1=,2=]{\affineorbitof[#1]{\dtwo_{#2}}}
\newcommand{\spinv}[1]{\ensuremath{\dtwo^{\gla{}{#1}}}}
\newcommand{\ropinvgen}[1][t]{\ensuremath{#1}-independent polynomial map}
\newcommand{\ropinvgentext}[1][t]{\ensuremath{#1}\text{\normalfont{-independent polynomial map}}}
\newtheorem{theorem}{Theorem}[section]
\newtheorem{lemma}[theorem]{Lemma}
\newtheorem{claim}[theorem]{Claim}
\newtheorem{corollary}[theorem]{Corollary}
\newtheorem{definition}[theorem]{Definition}
\newtheorem{fact}[theorem]{Fact}
\newtheorem{observation}[theorem]{Observation}
\newtheorem{remark}[theorem]{Remark}
\crefname{claim}{Claim}{Claims}
\crefname{corollary}{Corollary}{Corollaries}
\crefname{definition}{Definition}{Definitions}
\crefname{example}{Example}{Examples}
\crefname{fact}{Fact}{Facts}
\crefname{lemma}{Lemma}{Lemmas}
\crefname{observation}{Observation}{Observations}
\crefname{statement}{Statement}{Statements}
\crefname{theorem}{Theorem}{Theorems}
\newtheorem{problem}[theorem]{Problem}
\def\algbackskip{\hskip-\ALG@thistlm}
\date{}
\title{Hitting Sets and Reconstruction for Dense Orbits in \VPe{} and $\dthree$ Circuits}
\author{Dori Medini\thanks{Department of Computer Science, Tel Aviv University, Tel Aviv, Israel, E-mail: \texttt{dorimedini@gmail.com, shpilka@tauex.tau.ac.il}. The research leading to these results has received funding from the  Israel Science Foundation (grant number 514/20) and from the Len Blavatnik and the Blavatnik Family foundation. 
	}   \and  Amir Shpilka\footnotemark[1]}
\begin{document}
	\maketitle

\begin{abstract}

In this paper we study  polynomials in $\VPe$ (polynomial-sized formulas) and in $\dthree$ (polynomial-size depth-$3$ circuits) whose orbits, under the action of the affine group $\gla{n}{\f}$,\footnote{The action of $(A,\vb)\in\GLplus[n][\f]$ on a polynomial $f\in\f[\xn]$ is defined as $(A,\vb)\circ f=f(A^T\xn+\vb)$.} are \emph{dense} in their ambient class. We construct hitting sets and interpolating sets for these orbits as well as give reconstruction algorithms. Specifically, we obtain the following results:

\begin{enumerate}
	\item For $\cont_n\bra{\ell_1(\xn),\ldots,\ell_n(\xn)}\triangleq \trace \left(
	\begin{pmatrix}
		\ell_1(\xn) & 1 \\
		1 & 0
	\end{pmatrix}
	\cdot 
	\ldots \cdot
	\begin{pmatrix}
		\ell_n(\xn) & 1 \\
		1 & 0
	\end{pmatrix} 
	\right)$, where the $\ell_i$s are linearly independent linear functions, we construct a polynomial-sized interpolating set, and give a polynomial-time reconstruction algorithm. By a result of Bringmann, Ikenmeyer and Zuiddam, the set of all such polynomials is dense in $\VPe$ \cite{bringmann2018algebraic}, thus our construction  gives the first polynomial-size interpolating set for a dense subclass of $\VPe$.

	\item For polynomials of the form $\croanf\bra{\ell_1(\xn),\ldots,\ell_{4^\Delta}(\xn)}$, where $\croanf(\xn)$ is the canonical read-once formula in \emph{alternating normal form}, of depth $2\Delta$, and  the $\ell_i$s are linearly independent linear functions, we provide a quasipolynomial-size interpolating set. We also observe that the reconstruction algorithm of \cite{gupta2014random} works for \emph{all} polynomials in this class. This class is also dense in $\VPe$. 
	
	\item Similarly, we give a quasipolynomial-sized hitting set for read-once formulas (not necessarily in alternating normal form) composed with a set of linearly independent linear functions. This gives another dense class in $\VPe$.
	
	\item We give a quasipolynomial-sized hitting set for polynomials of the form $f\bra{\ell_1(\xn),\ldots,\ell_{m}(\xn)}$, where $f$ is an $m$-variate $s$-sparse polynomial and  the $\ell_i$s are linearly independent linear functions in $n\geq m$ variables.  This class is  dense in $\dthree$.

	\item For polynomials of the form $\sum_{i=1}^{s}\prod_{j=1}^{d}\ell_{i,j}(\xn)$, where the $\ell_{i,j}$s are linearly independent linear functions, we construct a polynomial-sized interpolating set. We also observe that the reconstruction algorithm of \cite{DBLP:journals/cc/KayalNS19} works for \emph{every} polynomial in the class. This class is  dense in $\dthree$. 
		
\end{enumerate}

As $\VP=\VNC^2$, our results for $\VPe$ translate immediately to $\VP$ with a quasipolynomial blow up in parameters. 
 
If any of our hitting or interpolating sets could be made \emph{robust} then this would immediately yield a hitting set for the superclass in which the relevant class is dense, and as a consequence also a lower bound for the superclass. Unfortunately, we also prove that the kind of constructions that we have found (which are defined in terms of $k$-independent polynomial maps) do not necessarily yield robust hitting sets. 
%However, we believe that our results could be a first step towards the construction of (robust) hitting sets for $\dthree$ or $\VPe$.

\end{abstract}

\thispagestyle{empty}
\newpage

\tableofcontents
\thispagestyle{empty}
\newpage

\addtocontents{toc}{\protect\thispagestyle{empty}}

\pagenumbering{arabic}

\section{Introduction}

Proving lower bounds on the size of algebraic circuits (also called arithmetic circuits), is an outstanding open problem in algebraic complexity. In spite of much effort, only a handful of lower bounds are known  (a detailed account of most known lower bounds can be found in the excellent survey of Saptharishi \cite{ramprasad}). 
One common theme of most known lower bounds is that they are proved using \emph{algebraic arguments}. That is, a proof of a lower bound for a class of circuits $\cC$, usually has the following structure: one comes up with a set of (nonzero) polynomials $F_1,\ldots,F_m$, in $N={n+d \choose d}$ many variables, such that the coefficient vector of every $n$-variate, degree-$d$ polynomial that can be computed in $\cC$, is a common zero of all the $F_i$s (such $F_i$s are called \emph{separating polynomials}). Then, one exhibits a polynomial $f$ whose coefficient vector is not a common zero, thus proving  $f\not\in \cC$. As an example one can immediately see that the well known partial derivative technique, and its predecessor, shifted partial derivative technique, are algebraic. Grochow  \cite{grochow2015unifying} demonstrated this for most of the known lower bound proofs.
As the set of common zeros of a set of polynomials is closed,\footnote{It is closed in the Zariski topology. Over $\R$ or $\C$ this is the same as being closed in the Euclidean topology.} this immediately implies that if we prove that $f\not\in \cC$ using an algebraic argument, then the same argument also implies that $f\not \in \overline\cC$, the closure of $\cC$. Recall that, in characteristic zero, the closure of a class $\cC$ is the set of all polynomials that are limit points of sequences of polynomials from $\cC$, where convergence is coefficient-wise (see \cref{def:approx} for a general definition over arbitrary characteristic). As most known  techniques are algebraic, we see that for proving a lower bound for a class $\cC$ one actually has to consider the larger, and less structured class, $\overline\cC$.

%Attempts for explaining the lack of strong lower bounds were given in \cite{DBLP:conf/innovations/EfremenkoGOW18,DBLP:conf/focs/GargMOW19} and in \cite{DBLP:journals/toc/ForbesSV18}. The papers by Efremenko et at. and by Garg et al.  prove that so called \emph{rank} methods cannot prove strong lower bounds. The paper by Forbes et al. demonstrates an algebraic analog of the natural proof argument of Razborov and Rudich \cite{DBLP:journals/jcss/RazborovR97}. 

%Another explanation of why proving lower bounds via algebraic techniques is difficult, follows from the simple observation that the set of common zeros of a set of polynomials, is closed. This immediately implies that if we prove that $f\not\in \cC$ using algebraic arguments, then the same argument also implies that $f\not \in \overline\cC$, the closure of $\cC$. Recall that, in characteristic zero, the closure of a class $\cC$ is the set of all polynomials that are limit points of sequences of polynomials from $\cC$, were convergence is coefficient-wise.\footnote{In this paper we restricted ourselves to characteristic zero. For more on closures in positive characteristic see for example \cite{DBLP:conf/coco/Guo0S18}.} As most known  techniques are algebraic, we see that for proving a lower bound for a class $\cC$ one actually has to consider the larger, and less structured class, $\overline\cC$.

Geometric Complexity Theory (GCT for short), which was initiated by Mulmuley and Sohoni \cite{DBLP:journals/siamcomp/MulmuleyS01,DBLP:journals/siamcomp/MulmuleyS08}, approaches the lower bound question from a different angle. GCT also looks for an algebraic lower bound proof, but rather than exhibiting an algebraic argument, it aims to prove the existence of a separating polynomial. 
Specifically, GCT attempts to prove Valiant's hypothesis, that \VP$\neq$\VNP, over $\C$, via \emph{representation theory}. Valiant's hypothesis is, more or less, equivalent to showing that the permanent of a symbolic $n\times n$ matrix is not a \emph{projection} of the symbolic $m\times m$ determinant for any $m=m(n)$  polynomial in $n$.\footnote{A super-quasipolynomial lower bound would imply that  \VP$\neq$\VNP{} whereas a super-polynomial lower bound would imply that permanent does not have polynomial-size algebraic formulas or algebraic branching programs.} 
Recall that a projection of a polynomial is a restriction of the polynomial to an affine subspace of its inputs.
Observe that a restriction of an $n$-variate polynomial $f(\xn)$ to a subspace of its inputs, is equivalent to considering the polynomial $f(A\xn+\vb)$, where  $A$ is an $n\times n$ matrix and $\vb\in \C^n$. As any matrix is a limit point of a sequence of invertible matrices, an algebraic proof that the  permanent is not a projection of the $m\times m$ determinant, over $\C$, is equivalent to an algebraic proof showing that the permanent is not in the closure of the set of polynomials $\{\det(AX+\vb) \mid A\in \GLm{m}{\C},\; \vb\in\C^m\}$, where $\GLm{m}{\C}$ is the group of invertible $m\times m$ matrices  (this is true for every field of characteristic $\neq 2$). The set  $\{\det(AX+\vb) \mid A\in \GLm{m}{\C},\; \vb\in\C^m\}$ is called the \emph{orbit} of the determinant under the action of the affine group (we denote the affine group over $\C^m$ with $\gla{m}{\C}$). GCT considers the linear space of polynomials that vanish on every coefficient vector in the orbit of the determinant, and similarly the linear space of polynomials that vanish on every coefficient vector in the orbit of the permanent. There is a natural action of \GLmplus{} on those linear spaces, thus defining two representations of \GLmplus. GCT wishes to find a separating polynomial by showing that some irreducible representation of \GLmplus{} has strictly larger multiplicity when considering the representation corresponding to the determinant. This approach bypasses the barrier given in \cite{DBLP:journals/toc/ForbesSV18,DBLP:journals/corr/Grochow0SS17} as it does not exhibit any efficiently computable separating polynomial but rather just proves the existence of one.  However, the representation theory questions arising in this program are quite difficult, even when considering the analog questions for restricted classes. For an introduction to GCT see the lecture notes of Bl\"aser and Ikenmeyer \cite{BlaserIkenmeyer}.

Another possible approach for proving lower bounds against a class of polynomials $\cC$, is via the construction of a \emph{hitting set} for $\cC$. Recall that a hitting set $\cal H$ for a class $\cC$ is a set of points such that for any nonzero polynomial $f$, that can be computed by a circuit from $\cC$, there is $\vv\in{\cal H}$ such that $f(\vv)\neq 0$. In  \cite{heintz1980testing} Heintz and Schnorr  observed that if we have such a hitting set $\cal H$ then any nonzero polynomial $g$ that vanishes on $\cal H$ cannot be computed in $\cC$. It is also not hard to see that this way of obtaining lower bounds also bypasses the natural proof barrier of  \cite{DBLP:journals/toc/ForbesSV18,DBLP:journals/corr/Grochow0SS17}. The problem is that in most cases we obtained a hitting set for a class only after proving a lower bound for it.  

In \cite{forbes2018pspace} Forbes and Shpilka defined the notion of a \emph{robust} hitting set for a circuit class $\cC$. 
%Their motivation was proving that a hitting set for the closure of $\VP$ can be constructed in PSPACE. 
Over fields of characteristic zero, a hitting set $\cH$ for a class $\cC$ is $c$-robust if it also   satisfies that for every $f\in\cC$ there is $\vv\in\cH$ such that $|f(\vv)|\geq c \cdot \norm{f}$, where $\norm{\cdot}$ is some fixed norm on $\C[\xn]$ (see \cref{def:robust} for a definition over arbitrary fields). It is not hard to see that if $\cH$ is a robust hitting set for a class $\cC$ then it also hits the closure of $\cC$. 

In this work we 
%propose a related approach for tackling the lower bound problem, that holds over any characteristic. We 
focus on depth-$3$ algebraic circuits, known as \dthree{}, and on \VPe, the class of algebraic formulas, two classes for which we lack strong lower bounds, and in particular we do not have hitting sets for them. For \dthree{} circuits the best lower bound is the near cubic lower bound of Kayal, Saha and Tavenas \cite{kayal2016almost}, and for \VPe{} the best lower bound is the %Ne{\v{c}}hiporuk-like 
quadratic lower bound of Kalarkoti \cite{kalorkoti1985lower}. Recall that by the result of Valiant et al. \cite{DBLP:journals/siamcomp/ValiantSBR83}, a super-quasipolynomial lower bound against \VPe{} implies a super-polynomial lower bound against \VP. Similarly, a hitting set for $\VPe$ implies a hitting set for $\VP$. We also note that by a result of Gupta et al. \cite{gupta2013arithmetic}, a strong enough lower bound or a hitting set for $\dthree$ imply both a lower bound for general circuits and a hitting set for them. This result also implies that a polynomial-time reconstruction algorithm for $\dthree$ circuits would give rise to a sub-exponential time \emph{reconstruction algorithm} for general circuits. Recall that a reconstruction algorithm for a class $\cC$ is an algorithm that, given black-box access to a circuit from $\cC$, outputs a circuit in $\cC$ that computes the same polynomial.

Instead of viewing robust hitting sets as a way to obtain  hitting sets for the closure of circuit classes, we suggest to find subclasses of interesting classes, $\tilde{\cC}\subset \cC$, such that $\cC$ is contained in the closure of $\tilde{\cC}$, and aim to construct a robust hitting set for the subclass $\tilde{\cC}$. 
This offers a new approach for constructing hitting sets for known classes and for obtaining lower bounds. 
Specifically, we consider subclasses of $\dthree$ and $\VPe$ that are dense in their superclasses. Each of these subclasses is the orbit of some simple polynomial under the group of invertible affine transformations. 

For $\VPe$, we first consider a subclass that was defined by Bringmann, Ikenmeyer and Zuiddam \cite{bringmann2018algebraic}--the orbit of the so called \emph{continuant} polynomial (see \cref{def:cont}). We give a polynomial-sized interpolating set\footnote{
	Recall that an interpolating set for a class  $\cC$ of polynomials in $n$ variables, over a field $\f$, is a set of points ${\cal H}\subset\f^n$ such that for every  $f\in\cC$, the list of values $f(\cal{H})$ uniquely determines $f$. See \cref{defInterSet}.} for this subclass as well as a  polynomial-time deterministic reconstruction algorithm that uses as oracle a  \emph{root-finding algorithm}.\footnote{A root-finding algorithm, over a field $\f$, when given black-box access to a univariate polynomial, outputs a root of that polynomial in $\f$, if such a root exists.} In particular, this implies a polynomial-time randomized reconstruction algorithm, and, in some cases, a polynomial-time deterministic algorithm.

In addition, we exhibit two other subclasses that are dense in $\VPe$. The first class is defined as the orbit of read-once formulas (ROF for short, see \cref{def:ROF}) and the second as the orbit of read-once formulas in \emph{alternating normal form} (ROANF for short, see \cref{def:ANF}). We obtain hitting sets for both classes and an interpolating set for the second. We also observe that the reconstruction algorithm of \cite{gupta2014random} works for the polynomials in the orbit of ROANFs.
Although the results that we obtain for the subclass defined by the continuant polynomial are stronger, we think that every such dense subclass can shed more light on $\VPe$ and may eventually be used in order to obtain new lower bounds.

For $\dthree$ we consider two subclasses. One is based on orbits of \emph{sparse} polynomials (polynomials having polynomially many monomials) and the other on orbits of \emph{diagonal} tensors (see \cref{defT}). We give a hitting set for the first, an interpolation set for the second, and we also observe that a slight modification of the randomized reconstruction algorithm of \cite{DBLP:journals/cc/KayalNS19} applies for the second class. 

In particular, our results give the first dense subclasses inside $\VPe$ and $\dthree$ for which a polynomial-size interpolating set is known as well as a polynomial-time reconstruction algorithm. By  \cite{DBLP:journals/siamcomp/ValiantSBR83} our result immediately translate to $\VP$, giving a dense subclass of for which a quasipolynomial-sized interpolating set is known as well as a quasipolynomial-time reconstruction algorithm.

%The advantage of considering these subclasses is that we construct hitting sets and even \emph{interpolating sets}\footnote{
%	Recall that an interpolating set for a class  $\cC$ of polynomials in $n$ variables, over a field $\f$, is a set of points ${\cal H}\subset\f^n$ such that for every  $f\in\cC$, the list of values $f(\cal{H})$ uniquely determines $f$. See \cref{defInterSet}.} for some of them. For one subclass we also exhibit a new polynomial-time  \emph{reconstruction algorithm} and for the others  we observe that known average-case reconstruction algorithms work in their case.

 If we could transform the interpolating sets that we have found to \emph{robust hitting sets}  for the orbits, then this will immediately give hitting sets for the closure of the orbits, i.e. for \dthree{} and \VPe, which, by \cite{heintz1980testing} gives a lower bound for the class. 
Thus, our work raises an intriguing problem: 
\begin{problem}\label{prob:robust}
	Given an interpolating set for a class $\cC$ construct a robust hitting set for $\cC$.
\end{problem}
 We stress that by our results, solving this problem would lead to hitting sets, and lower bounds, for  \VPe{} and \VP{}. 

Another advantage for having small interpolating sets for dense subclasses is the following: One approach for searching for separating polynomials for a class, is by considering the map from circuits in the class to the coefficient vectors of the polynomials that they compute. That is, once we fix a computation graph, an assignment to the constants appearing in the circuit determines the output polynomial. Each coefficient is a polynomial in those constants, and as there are ``few'' constants (polynomially many for polynomially sized circuits), and there are exponentially many coefficients, there should be many polynomials vanishing on the closure of the image of this map. If we could get a good understanding of this map then perhaps we could use it to construct a polynomial that vanishes on all such coefficient vectors. This polynomial will vanish on all coefficient vectors of the superclass in which the subclass is dense.
A different approach is to find a coefficient vector that is not in the closure of the image of this map (this is the approach of Raz in \cite{DBLP:journals/toc/Raz10}).
Now, assume that $\cH$ is an interpolating set for a dense subclass $\tilde{\cC}\subset \cC$. We know that the map $f\to \restr{f}{\cH}$ is one-to-one on $\tilde{\cC}$. Thus, the list of values  $\restr{f}{\cH}$ can be viewed as an efficient encoding that is given in terms of values of the computed polynomial. This provides a different encoding of a circuit -- instead of the constants in it, use the evaluations on $\cH$.
% -- to work with, compared with the map from constants in the circuit to the coefficients of the computed polynomial. 
Thus, by studying the closure of this map (i.e. the closure of the set of points on $\f^{|\cH|}$ that can be obtained as evaluation vectors of polynomials in the subclass) we may be able to find a separating polynomial, or, as in Raz's approach, find an evaluation vector that is not obtained by any polynomial in the superclass. It is clear that one can also try this approach even if $\cal H$ is not an interpolating set, however, as interpolating sets ``preserve information'' of a dense set, we believe that such sets are better suited for this approach.

To conclude, focusing on dense subclasses and studying their properties could lead to better understanding of their superclasses and perhaps to breakthrough results in algebraic complexity.

To formally state our results we need some definitions that we give next.

\subsection{Basic definitions}

%In this section we give some basic definitions that are required for stating our results. 
%More definitions are given in Section~\ref{sec:prelim}.

\subsubsection{Circuit classes}

%We first define the basic circuit classes that we consider in this work.

%\begin{definition}\label{defCircuit}
%	An \emph{arithmetic circuit} $\Phi$ is a directed, acyclic graph with exactly one node of out-degree $0$ (the \emph{output} node). Nodes with in-degree $0$ are called \emph{input} nodes and are labeled with a single variable $x_i$. Internal nodes are labeled with an arithmetic operation ($+$ or $\times$). Edges may be labeled with field constants. We say an input node labeled with $x_i$ computes the polynomial $x_i$. An internal node $v$, labeled with the operation $*\in\{+,\times\}$, with children $u_1,\ldots,u_k$ and where the edges $(u_1,v),\ldots,(u_k,v)$ are labeled with the constants $\alpha_1,\ldots,\alpha_k$, computes the polynomial $\alpha_1 u_1*\alpha_2u_2*\ldots*\alpha_ku_k$. We say the circuit $\Phi$ \emph{computes} the polynomial computed by the output node of $\Phi$. We denote this polynomial $\Phi(\xn)$, or simply $\Phi$ when it is clear from context we are referring to the polynomial and not the graph itself. The \emph{size} of the circuit, denoted $|\Phi|$ or $\text{\normalfont size}(\Phi)$, is defined to be the number of edges in the graph. The \emph{depth} of the circuit is the longest path from an input node to the output node of $\Phi$. We say an internal node with $k$ children has \emph{fan-in} $k$.
%\end{definition}

\begin{definition}\label{defFormula}
	An algebraic formula (also called arithmetic formula) over a field $\f$, is a rooted tree whose leaves are labeled with either variable or scalars from $\f$, and whose root and internal nodes (called gates) are labeled with either ``$+$''  (addition) or ``$\times$'' (multiplication). An algebraic formula computes a polynomial in the natural way. Each leaf computes the polynomial that labels it, and each gate computes either the sum or product of its children, depending on its label. The output of the formula is the polynomial computed at its root. The size of a formula is the number of wires in it. The depth of a formula is the length of the longest simple leaf-root path in it. The formula size of a polynomial $f$ is defined as the smallest size of a formula that outputs $f$.
\end{definition}

A sequence $m(n)$ of natural numbers is called polynomially bounded if there exists a univariate polynomial $q$ such that $m(n) \leq q(n)$ for all $n$.

The complexity class \VPe{} is defined as the set of all families of polynomials $(f_n)_n$, with $f_n \in\f[x_1,\ldots,x_n]$, whose formula size is polynomially bounded.

\begin{definition}\label{defDepthTow}
	An arithmetic circuit $\Phi$ is a $\Sigma^{[s]}\Pi^{[d]}$ circuit if it is a layered graph of depth-$2$, has a top gate labeled $+$ with fan-in $\leq s$ and its second layer is comprised entirely of $\times $ gates with fan-in $\leq d$. In other words, $\Sigma^{[s]}\Pi^{[d]}$ compute polynomials of degree $d$ with at most $s$ monomials.
\end{definition}

\begin{definition}\label{defDepthThree}
	An arithmetic circuit $\Phi$ in $n$ variables is a $\Sigma^{[s]}\Pi^{[d]}\Sigma$ circuit if it is a layered graph of depth-$3$, has a top gate labeled $+$ with fan-in $\leq s$, its second layer is comprised entirely of $\times $ gates with fan-in $\leq d$, and its bottom layer is comprised of linear functions in $x_1,\ldots,x_n$. In other words, $\Sigma^{[s]}\Pi^{[d]}\Sigma$ circuit compute polynomials of the form
	$$ f(\xn)=\sum_{i=1}^s\prod_{j=1}^d(\alpha_{i,j,0}+\sum_{k=1}^n\alpha_{i,j,k}x_k) \;. $$
%	A family of circuits $\Phi=\{\Phi_n\}$ (such that $\Phi_n$ depends on $n$ variables) is a member of the class $\dthree$ if there exist polynomials $s(n),d(n)$ such that $\Phi_n\in\Sigma^{[s(n)]}\Pi^{[d(n)]}\Sigma$ for every $n$.
\end{definition}

Given a family of circuits $\cC$, we will sometime denote it as $\cC(\f)$ to stress that we allow coefficients to come from the field $\f$. Observe that the definitions of  the classes above do not depend on the field and so we can define them over any field of our choice.

\subsubsection{Approximate complexity}

The following definition gives sense to the notion of approximation over arbitrary fields. In what follows we let $\varepsilon$ be a new formal variable.\footnote{Intuitively, one should think of $\varepsilon$ as an infinitesimal quantity.} For a field $\f$ we denote with $\f[\varepsilon]$ the ring of polynomial expressions in $\varepsilon$ over $\f$, and with $\f(\varepsilon)$ the fraction field of $\f[\varepsilon]$, i.e. the field of rational expressions in $\varepsilon$.

%\begin{definition}\label{def:converge}
%	We say that a sequence of polynomials $g_k\in \f[\varepsilon][\xn]$ converges to a polynomial $f\in \f[\xn]$, if there exist polynomials $h_k\in \f[\varepsilon][\xn]$ such that $g_k(\xn)=f(\xn)+\varepsilon\cdot h_k(\xn)$.
%\end{definition}

\begin{definition}\label{def:approx}
	Let $\cC(\f)$ be a circuit class over a field $\f$. The closure of $\cC$, denoted $\overline{\cC(\f)}$, is defined as follows: A family of functions $(f_n)_n$, where $f_n\in \f[x_1,\ldots,x_n]$, is in $\overline{\cC(\f)}$ if there is  a polynomially bounded function $m:\n\to\n$, and a family of functions $(g_{m(n)})_n \in \cC(\f(\varepsilon))$, with $g_{m(n)}\in \f[\varepsilon][x_1,\ldots,x_{m(n)}]$, such that for all $n\in \n$,
	\begin{equation}\label{eq:approx}
	g_{m(n)}(x_1,\ldots,x_{m(n)}) = f_n(x_1,\ldots,x_n) +  \varepsilon \cdot g_{n,0}(x_1,\ldots,x_{m(n)}) \;,		
	\end{equation}
	for some polynomial $g_{n,0}\in \f[\varepsilon][x_1,\ldots,x_{m(n)}]$.
	Whenever an equality as in \eqref{eq:approx} holds we say that 
	$$  g_{m(n)} =f_n  +O(\varepsilon) \quad \text{or} \quad f_n = g_{m(n)} +O(\varepsilon)\;.$$ 
	In that case we think of $g_{m(n)}$ as an ``approximation'' of $f_n$, and we say that the family $(g_{m(n)})_n$ approximates the family $(f_n)_n$.	
\end{definition}

Alder \cite{Alder84}  have shown that over $\C$ it holds that $(f_n) \in \overline{\cC(\C)}$, in the sense of \cref{def:approx}, if and only if it is in the closure of $\cC(\C)$ in the usual sense. That is, if for every $n$ there exists a sequence of polynomials $g_{n,k}\in \cC(\C)$ such that $\lim_{k\to \infty} g_{n,k}  = f_n$, where convergence is taken coefficient wise. This result holds over $\mathbb{R}$ as well, see \cite{DBLP:journals/tcs/LehmkuhlL89,DBLP:journals/focm/Burgisser04}.

Finally, we note that every matrix is approximable (in the sense of \cref{def:approx}) by a non-singular matrix (which is equivalent to being a limit of a sequence of non-singular matrices, in characteristic zero).

\begin{observation}\label{matLimits}
	For every $A\in\f^{n\times n}$ there exists a non-singular matrix $B\in \f(\varepsilon)^{n\times n}$ such that
	$A=B+O(\varepsilon)$. 
	%a sequence $\{A_k\}_{k=1}^\infty\subseteq\GL$ such that $\lim\limits_{k\to\infty}A_k=A$.
\end{observation}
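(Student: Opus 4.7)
The plan is to exhibit an explicit non-singular $B \in \f(\varepsilon)^{n\times n}$ by a tiny perturbation of $A$. Specifically, I would set
\[
B \;=\; A + \varepsilon\cdot I_n,
\]
and verify the two required conditions: that $B = A + O(\varepsilon)$ in the sense of \cref{def:approx}, and that $B$ is invertible as a matrix over the field $\f(\varepsilon)$.

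The first condition is essentially immediate from the construction: entrywise, $B_{ij} - A_{ij}$ is either $0$ or $\varepsilon$, so each entry of $B$ equals the corresponding entry of $A$ plus $\varepsilon$ times an element of $\f[\varepsilon]$, matching the form $B = A + \varepsilon\cdot G$ required by \cref{def:approx}.

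For the second condition I would compute $\det(B) = \det(A + \varepsilon I_n)$ and view it as a polynomial in $\varepsilon$ with coefficients in $\f$. Expanding (for instance by the Leibniz formula), this is
\[
\det(A+\varepsilon I_n) \;=\; \varepsilon^n + c_{n-1}\,\varepsilon^{n-1} + \cdots + c_1\,\varepsilon + \det(A),
\]
where the $c_i\in\f$ are (up to sign) the coefficients of the characteristic polynomial of $A$. The crucial point is that the leading coefficient is $1$, regardless of the characteristic of $\f$ and regardless of whether $A$ itself is singular. Hence $\det(B)$ is a nonzero element of $\f[\varepsilon]$, and therefore a nonzero (and thus invertible) element of $\f(\varepsilon)$, which is exactly what it means for $B$ to be non-singular over $\f(\varepsilon)$.

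There is essentially no obstacle here; the only thing to be slightly careful about is to argue the determinant is nonzero in $\f(\varepsilon)$ rather than just informally saying ``it is generically nonzero,'' and to make sure the perturbation matches the algebraic notion of approximation from \cref{def:approx} (which allows $O(\varepsilon)$ corrections whose coefficients lie in $\f[\varepsilon]$, so the single-term correction $\varepsilon I_n$ is certainly admissible). This also makes clear why the statement holds over arbitrary fields, not just over $\C$ or $\R$: the argument is purely algebraic and never invokes a limiting procedure.
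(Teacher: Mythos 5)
Your proof is correct, and since the paper states \cref{matLimits} without proof, your argument is exactly the standard one the authors are implicitly relying on: perturb by $\varepsilon I_n$, observe that $\det(A+\varepsilon I_n)$ is a monic degree-$n$ polynomial in $\varepsilon$ (hence a nonzero element of $\f[\varepsilon]$, hence a unit in $\f(\varepsilon)$), and note that the perturbation fits the $O(\varepsilon)$ form of \cref{def:approx}. Nothing is missing.
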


\subsubsection{Hitting and interpolating sets}

\begin{definition}\label{def:hitting}
	A set of points $\mathcal H\subseteq\f^n$ is called a \emph{hitting set} for a circuit class $\mathcal C$ (we also say that $\mathcal H$ \emph{hits} $\mathcal C$) if for every circuit $\Phi\in\mathcal C$, computing a non-zero polynomial, there exists some $\va\in\mathcal H$ such that $\Phi(\va)\neq 0$.
\end{definition}

We next give the definition of a robust hitting set, a notion first defined in \cite{forbes2018pspace}. Here we extend the definition for arbitrary characteristic. We start by giving the definition of \cite{forbes2018pspace}, over characteristic zero (and focus on $\C$) and then the more general definition.

\begin{definition}[Following Definition 5.1 of \cite{forbes2018pspace}]\label{def:robust-0}
	Let $\norm{\cdot}$ be some norm on $\C[\xn]$. A hitting set $\mathcal H$ for a circuit class $\mathcal C\subseteq\C[\xn]$ is called \emph{robust} if there exists some constant $c>0$ such that, for every $0\neq f\in\mathcal C$,\footnote{We abuse notation and write $f\in \cC$ when $f$ is the output of some circuit from $\cC$.} there exists some $\va\in\mathcal H$ such that $|f(\va)|\geq c\cdot \norm{f}$.
\end{definition}

For arbitrary characteristic we use the same approach as in \cref{def:approx}.

\begin{definition}\label{def:robust}
	Let $\f$ be a field of arbitrary characteristic.
	A hitting set $\mathcal H\subset \f^n$ for a circuit class $\cC(\f)$ is called \emph{robust} if for every circuit $\Phi\in\cC(\f(\varepsilon))$ computing a polynomial $f(\xn)= h(\xn)+\varepsilon \cdot g(\xn)$, where $h(\xn)\in\f[\xn]$ and $g(\xn)\in\f[\varepsilon][\xn]$, there exists some $\va\in\mathcal H$ such that $f(\va) \not\in \varepsilon \cdot \f[\varepsilon]$.
\end{definition}

It is not hard to prove using the result of \cite{Alder84}  that for $\f=\C$, \cref{def:robust-0,def:robust} are equivalent. 
%is a robust hitting set for $\overline{\cC(\C)}$ as well, and vice versa. 
%We next prove that if $\cal H$ satisfies \cref{def:robust} then it also hits the closure of $\cC(\f)$. 
 
\begin{observation}\label{tm:robust-hits}
	If $\cH$ is a finite  robust hitting set for $\cC(\f)$, then $\cH$  hits $\overline{\cC(\f)}$ as well.
\end{observation}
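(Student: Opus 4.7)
The plan is to unwind the definition of closure and apply the robustness condition directly to an $\varepsilon$-approximation of the target polynomial. Let $h\in\overline{\cC(\f)}$ be a nonzero polynomial (in the natural sense where its family is nonzero at the index $n$ of interest). By \cref{def:approx}, there is a polynomially bounded $m(n)$ and a circuit $\Phi\in\cC(\f(\varepsilon))$ computing a polynomial $f\in\f[\varepsilon][\xn]$ with
\[
	f(\xn) = h(\xn) + \varepsilon\cdot g(\xn),
\]
for some $g\in\f[\varepsilon][\xn]$. Since $h\in\f[\xn]$, this $f$ is precisely the shape to which the robustness condition in \cref{def:robust} applies.

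Next, I would apply robustness of $\cH$ to this $f$: there exists $\va\in\cH$ with $f(\va)\notin \varepsilon\cdot\f[\varepsilon]$. Now evaluate: $f(\va) = h(\va) + \varepsilon\cdot g(\va)$, where $h(\va)\in\f$ and $g(\va)\in\f[\varepsilon]$, so $\varepsilon\cdot g(\va)\in \varepsilon\cdot\f[\varepsilon]$. Reading off the $\varepsilon^0$ coefficient, $f(\va)\notin\varepsilon\cdot\f[\varepsilon]$ forces $h(\va)\neq 0$. Hence $\cH$ hits $h$, and since $h$ was an arbitrary nonzero element of $\overline{\cC(\f)}$, we conclude that $\cH$ hits $\overline{\cC(\f)}$.

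There is essentially no obstacle here beyond being careful about rings of coefficients: one needs to observe that the definition of $\overline{\cC(\f)}$ furnishes exactly the kind of decomposition $f=h+\varepsilon g$ that robustness is designed to handle, and that evaluation at a point $\va\in\f^n$ is a ring homomorphism $\f[\varepsilon][\xn]\to\f[\varepsilon]$, so $g(\va)$ indeed lies in $\f[\varepsilon]$ and the constant-term argument goes through. The finiteness of $\cH$ is not used in the argument itself; it is relevant only insofar as $\cH\subset \f^n$ makes evaluation well-defined and keeps the notion of hitting set meaningful.
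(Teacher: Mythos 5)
Your proof is correct and follows essentially the same route as the paper's: take the $\varepsilon$-approximation $f = h + \varepsilon g$ furnished by \cref{def:approx}, apply \cref{def:robust} to the approximating circuit to get $\va\in\cH$ with $f(\va)\notin\varepsilon\cdot\f[\varepsilon]$, and read off the $\varepsilon^0$ coefficient to conclude $h(\va)\neq 0$. The paper is terser and leaves the ring-theoretic bookkeeping implicit, but the argument is identical; your side remark that finiteness of $\cH$ is not actually used also matches the paper's proof.
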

\begin{proof}
	Consider $0\neq f \in \overline{\cC(\f)}$. By \cref{def:approx}  there is $g\in \cC(\f(\varepsilon))$, such that $f=g+O(\varepsilon)$. Clearly $g\neq 0$. Let  $\va\in \cH$ be such that $g(\va)\not \in \varepsilon\cdot\f[\varepsilon]$.  It follows that $f(\va)\not \in \varepsilon\cdot\f[\varepsilon]$. In particular, $f(\va)\neq 0$.
\end{proof}

We next define the notion of an interpolating set.

\begin{definition}\label{defInterSet}
	Let $\cC$ be a class of $n$-variate polynomials. A set $\mathcal H\subseteq\f^n$ is called an \emph{interpolating set} for $\cC$ if, for every $f\in\cC$, the evaluations of $f$ on $\mathcal H$ uniquely determine $f$.
\end{definition}

\begin{observation}\label{obs:inter}
	If $\mathcal H$ is a hitting set for $\cC(\f)+\cC(\f)\triangleq \{\alpha f+\beta g:f,g\in\cC,\alpha,\beta\in\f\}$, then $\mathcal H$ is an interpolating set for $\cC$.
\end{observation}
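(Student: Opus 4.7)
The plan is to argue by contradiction (or contrapositive), showing that two polynomials in $\cC$ that agree on $\cH$ must be equal, which is exactly the definition of $\cH$ being an interpolating set.

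First I would suppose that $f, g \in \cC$ have identical evaluations on $\cH$, i.e.\ $f(\va) = g(\va)$ for every $\va \in \cH$. Then the polynomial $h \triangleq f - g = 1 \cdot f + (-1) \cdot g$ lies in $\cC(\f) + \cC(\f)$ by the very definition of that set (take $\alpha = 1$, $\beta = -1$). By construction, $h$ vanishes on every point of $\cH$.

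Next I would apply the hitting set property: since $\cH$ hits $\cC(\f) + \cC(\f)$ and $h \in \cC(\f) + \cC(\f)$ vanishes identically on $\cH$, the only possibility consistent with \cref{def:hitting} is that $h$ is the zero polynomial. Hence $f = g$, proving that the evaluation vector $\restr{f}{\cH}$ uniquely determines $f \in \cC$, which is exactly the condition in \cref{defInterSet}.

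There is no real obstacle here; the statement is a direct consequence of linearity and the definition of a hitting set. The only subtlety worth flagging is that one must interpret ``$f \in \cC$'' as ``$f$ is computable by some circuit in $\cC$'' (as the footnote in the paper already clarifies), so that $f - g$ being realized as $1\cdot f + (-1)\cdot g$ indeed falls into the sumset $\cC(\f) + \cC(\f)$ without any additional closure assumption on $\cC$.
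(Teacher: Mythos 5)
Your argument is correct and is the natural (essentially only) proof of this observation; the paper itself states it without proof, treating it as immediate from \cref{def:hitting,defInterSet}. Your contrapositive reasoning via $h = f - g$ lying in the sumset and your note on interpreting ``$f\in\cC$'' as circuit-computability are exactly the right points.
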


A common method for designing hitting and interpolating sets is via hitting set generators. 

\begin{definition}
	A polynomial mapping $\G:\f^k\to\f^n$ is called a \emph{hitting set generator} (or simply a generator) for a circuit class $\cC(\f)$ if for any non-zero $n$-variate polynomial $f\in\cC$, the $k$-variate polynomial $f\circ\G$ is non-zero. 
	
	Similarly, we call $\G:\f^k\to\f^n$  an \emph{interpolating set generator} for a circuit class $\cC(\f)$ if for any two different  $n$-variate polynomials $f_1,f_2\in\cC$, the $k$-variate polynomial $(f_1-f_2)\circ\G$ is non-zero.
\end{definition}

Generators immediately give rise to hitting sets.

\begin{observation}\label{obsHitSetGen}
	Let $\G:\f^k\to\f^n$ be a {generator} for $\cC(\f)$ such that the individual degree of each coordinate of $\G$ is at most $r$. Let $W\subset \f$ be any set of size $|W|=d\cdot r+1$. Let $\cH = \G\bra{W^k}$. Then $\cH$ hits every $n$-variate polynomial $f\in\cC$ of degree at most $d$.
\end{observation}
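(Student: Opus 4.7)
The plan is a direct application of the definition of a generator together with the DeMillo--Lipton--Schwartz--Zippel lemma applied on a product grid. Let me outline the four steps.

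First, I would fix an arbitrary nonzero $f\in\cC$ of degree at most $d$ and consider the pullback polynomial $F(y_1,\ldots,y_k) \triangleq (f\circ\G)(y_1,\ldots,y_k)$. By the defining property of a generator, $F$ is a nonzero polynomial in $\f[y_1,\ldots,y_k]$.

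Second, I would bound the \emph{individual degree} of $F$ in each variable $y_i$. Write $f=\sum_{\alpha} c_\alpha \xn^\alpha$, where each multi-index $\alpha = (\alpha_1,\ldots,\alpha_n)$ satisfies $|\alpha| = \alpha_1+\cdots+\alpha_n \leq d$. Then
\[
F(y_1,\ldots,y_k) \;=\; \sum_\alpha c_\alpha \prod_{j=1}^n \G_j(y_1,\ldots,y_k)^{\alpha_j},
\]
where $\G_j$ is the $j$-th coordinate of $\G$. Since each $\G_j$ has individual degree at most $r$ in every $y_i$, the monomial $\prod_j \G_j^{\alpha_j}$ has individual degree in $y_i$ bounded by $r\sum_j \alpha_j = r|\alpha|\leq rd$. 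Hence the individual degree of $F$ in each $y_i$ is at most $rd$.

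Third, I would invoke the DeMillo--Lipton--Schwartz--Zippel lemma in its grid form: a nonzero polynomial in $k$ variables, each of individual degree at most $rd$, cannot vanish on the entire product set $W^k$ whenever $|W|\geq rd+1$. Since our $W$ has exactly this size, there exists $\vw\in W^k$ with $F(\vw)\neq 0$, i.e.\ $f(\G(\vw))\neq 0$. Because $\G(\vw)\in\cH$ by construction, this witnesses that $\cH$ hits $f$.

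There is essentially no obstacle here; the only subtle point is step two, where one must be careful to use the \emph{individual} degree bound on the coordinates of $\G$ (not the total degree) in combination with the total degree bound on $f$, in order to match the grid-version of Schwartz--Zippel rather than its total-degree version. Everything else is bookkeeping.
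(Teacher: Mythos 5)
Your proof is correct and takes the same approach as the paper's: pull $f$ back through $\G$, note the pullback is nonzero by the generator property, bound its individual degrees by $dr$, and conclude via the grid form of the Schwartz--Zippel argument. The paper simply states the individual-degree bound without spelling out the computation you give in step two, but the underlying reasoning is identical.
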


\begin{proof}
	As $\G$ is a generator, the $k$-variate polynomial $f\circ \G$ is nonzero. As its individual degrees are bounded by $d\cdot r$ it follows that at least one of the values in $\bra{f\circ \G}\bra{W^k}=f\bra{\cH}$ is not zero.
\end{proof}

\subsubsection{$k$-independent maps}

Our constructions  rely on polynomial mappings $\G_k$, parameterized by some integer $k\leq n$, with the property that the image of $f\circ \G_k$ contains all projections of $f$ to $k$ variables. We call such a map a $k$-independent map.

\begin{definition}\label{def:indPolyMap}
	We call a polynomial mapping $\G(y_1,\ldots,y_t,z_1):\f^{t+1}\to\f^n$ a \ropinvgentext[1]{}  if for every index $i\in[n]$ there exists an assignment $\va_i\in\f^t$ to $y_1,\ldots,y_t$ such that the $i$th coordinate of $\G(\va_i,z_1)$ is $z_1$, and the rest of the coordinates are $0$. For $k>1$, a polynomial mapping $\G(y_1,\ldots,y_{tk},z_1,\ldots,z_k):\f^{k(t+1)}\to\f^n$ is called a \ropinvgentext[k]{} (or a $k$-independent map) if $\G$ is a sum of $k$ variable-disjoint $1$-independent polynomial maps. We denote \ropinvgen[k]{}s as $\G(\bm y,\bm z)$ when $k,t$ are implicit. The $\bm y$ variables are called \emph{control variables}.
	
	A \ropinvgen[k]{} $\G$ is called \emph{uniform} if all $n$ coordinates of $\G$ are homogeneous polynomials of the same degree.
\end{definition}

\subsubsection{The linear and  affine groups and their actions}

Given a matrix $A\in \f^{n\times n}$ and a tuple of variables  $\xn= (x_1,\ldots,x_n)$, we denote  $$A\xn=\bra{\sum_{i=1}^n A_{1,i}x_i,\sum_{i=1}^n A_{2,i}x_i,\ldots,\sum_{i=1}^n A_{n,i}x_i}\;.$$
Let $n\geq m \in \n$. For an $m$-variate polynomial $f(x_1,\ldots,x_m)\in \f[x_1,\ldots,x_m]$, a matrix $A=(A_{i,j})_{i,j=1}^n\in\f^{n\times n}$ and a vector $\vb=(b_1,\ldots,b_n)\in\f^n$, we define the $n$-variate polynomial $f\bra{A\xn+\vb}$ to be
\begin{equation}\label{eq:compose}
	f\bra{A\xn+\vb}\triangleq f\left(\sum_{i=1}^n A_{1,i}x_i+b_1,\sum_{i=1}^n A_{2,i}x_i+b_2,\ldots,\sum_{i=1}^n A_{m,i}x_i+b_m\right)\;.	
\end{equation}
Note that we ignored the last $n-m$ coordinates of $A\xn+\vb$.
%We denote by $\gl{n}{\f}$ the group of non-singular $n\times n$ matrices with entries from the field $\f$, or $\gl{}{\f}$ when the dimension is clear from context. 
%We denote the cell at row $i$, column $j$ of $A\in\gl{n}{\f}$ by $A_{i,j}$. 

We denote with $\GLm{n}{\f}$ the group of invertible $n\times n$ matrices over $\f$, 
%or $\GL[][\f]$ when the dimension is clear from context, 
and with $\GLplus[n][\f]$ the group of invertible affine transformation, i.e. all the maps $\xn\to A\xn +\vb$, where $A\in \GLm{n}{\f}$ and $\vb\in\f^n$.

For an $m$-variate polynomial $f$ over $\f$, and $n\geq m$ we denote with  $f^{\GLplus[n][\f]}$  the orbit of $f$ under the natural action of $\GLplus[n][\f]$:\footnote{To be precise, the action is $\bra{(A,\vb)\circ f}(\xn)=f(A^T\xn +\vb)$. This is required in order to make the action a homomorphism, however, for the groups that we consider it does not change the orbit.}
$$f^{\GLplus[n][\f]}\triangleq \left\{ f(A\xn+\vb) \mid A\in \GLm{n}{\f},\; \vb\in\f^n\right\}\;.$$ 
%where we ignore the last $n-m$ coordinates of $A\xn+\vb$.
We similarly define $f^{\GLm{n}{\f}}$.
More generally, for a class of $m$-variate polynomials ${\mathcal C}(\f)$, we denote the \emph{orbit} of $\mathcal C$ under $\GLplus[n][\f]$ by 
 $${\mathcal C}^{\GLplus[n][\f]}\triangleq\cbra{f(A\xn+\vb)\mid f\in{\mathcal C},\; A\in \GLm{n}{\f},\; \vb\in\f^n}\;.$$ We similarly define ${\mathcal C}^{\GLm{n}{\f}}$.
When we want to speak about orbits of families of polynomials from  $\cC(\f)$, with arbitrary number of variables, we use the notation $\cC^{\GLm{}{\f}}$ or ${\mathcal C}^{\GLplus[][\f]}$.

\subsection{Our results}\label{secResults}

%We define dense sets of polynomials, in the appropriate circuit classes, by fixing some ``simple'' families of polynomials and composing with the linear group $\GLm{n}{\f}$ or the affine group $\GLplus[n][\f]$. 

We first give our results for the class $\VPe$ and then for the  class of depth-$3$ circuits, for which it may be easier to obtain a robust hitting set, or prove super-polynomial lower bounds.

\subsubsection{The continuant polynomial}

Bringmann, Ikenmeyer and Zuiddam \cite{bringmann2018algebraic} defined the following polynomial (in Remark 3.14 of their paper), which they called the continuant polynomial:

\begin{definition}\label{def:cont}
	The continuant polynomial on $n$ variables, $\cont_{n}(x_1,\ldots,x_n)$, is defined as the \emph{trace}
	%\footnote{\cite{bringmann2018algebraic} denote this polynomial as $C_n$ and they use $F_n$ to denote a different polynomial.} 
	of the following matrix product:
	\begin{equation}
		\cont_n(x_1,\ldots,x_n)\triangleq \trace \left(
		\begin{pmatrix}
			x_1 & 1 \\
			1 & 0
		\end{pmatrix}
	\cdot 
	\begin{pmatrix}
		x_2 & 1 \\
		1 & 0
	\end{pmatrix}
	\cdot \ldots \cdot
	\begin{pmatrix}
		x_n & 1 \\
		1 & 0
	\end{pmatrix} 
	\right) \;.
	\end{equation}
	We denote with $\ccont{\f}$ the class of families of
	polynomials $(f_n)_n$ such that $f_n \in \f[x_1,\ldots,x_n]$ and for some $m\leq n$, $f_n \in \cont_m^{\gla{n}{\f}}$. 
	%\doricmnt{Why is the superscript here not the affine variant of $\gl{n}{\f}$?}.
\end{definition}

%Observe that the definition of $\ccont{\f}$ makes sense as every polynomial in $\cont_m^{\gla{n}{\f}}$ can be computed by an algebraic formula of polynomial-size.
 A result of Allender and Wang implies that the polynomial $x_1\cdot y_1 + \cdots + x_8 \cdot y_8$ is not in $\ccont{\f}$ \cite{DBLP:journals/cc/AllenderW16}. Thus, as a computational class it is very weak.
However,
Theorem 3.12 of \cite{bringmann2018algebraic} states that for every field $\f$ of characteristic different than $2$, it holds that
\begin{equation}
	\overline{\ccont{\f}} = \overline{\VPe}\;.
\end{equation}

We give a polynomial-size interpolating set for the class $\ccont{\f}$ as well as a  polynomial-time reconstruction algorithm for it. We first state a simple result that gives a hitting set for the class. 

\begin{restatable}{theorem}{thmhitcont}\label{thmhitcont}
	Let $f(x_1,\ldots,x_n) \in \cont_m^{\gla{n}{\f}}$, for $m\leq n$, and arbitrary $\f$. Then, for any uniform \ropinvgen[1]{} $\G$ over $\f$,  $f\circ \G\neq 0$. 
\end{restatable}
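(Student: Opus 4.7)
The plan is to exhibit a nonzero top-degree homogeneous component of $f\circ\G$. First, by an easy induction on $m$ applied to the $2\times 2$ matrix product in \cref{def:cont}, I would verify that the degree-$m$ part of $\cont_m(x_1,\ldots,x_m)$ is precisely the monomial $x_1x_2\cdots x_m$, with coefficient $1$. Concretely, the $(1,1)$-entry of the partial product has degree $m$ with leading term $x_1\cdots x_m$, the off-diagonal entries have degree at most $m-1$, and the $(2,2)$-entry has degree at most $m-2$, so the trace inherits the leading monomial from the $(1,1)$-entry. Writing $f(\xn)=\cont_m(\ell_1(\xn),\ldots,\ell_m(\xn))$ with $\ell_i(\xn)=L_i(\xn)+b_i$ and $L_i(\xn)=\sum_{j=1}^n A_{i,j}x_j$, and noting that the first $m$ rows of the invertible matrix $A$ are linearly independent so each $L_i$ is a nonzero linear form, it follows that the homogeneous degree-$m$ part of $f$ (in $\xn$) is exactly $L_1(\xn)\cdots L_m(\xn)$.

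Next I would use the uniformity of $\G$: every coordinate $\G_i(\bm y,z_1)$ is homogeneous in $(\bm y,z_1)$ of some common degree $d$. Therefore, for each homogeneous piece $f_k$ of $f$ (of degree $k\leq m$ in $\xn$), the composition $f_k(\G)$ is homogeneous in $(\bm y,z_1)$ of degree $kd$, so the top-degree homogeneous component of $f\circ\G$ is exactly the degree-$md$ polynomial $L_1(\G)\cdots L_m(\G)$, and it cannot be cancelled by contributions of strictly smaller degree. Since $\f[\bm y,z_1]$ is an integral domain, it suffices to show that each factor $L_i(\G)$ is a nonzero polynomial.

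To finish, for each $i\in[m]$ I would choose some $j\in[n]$ with $A_{i,j}\neq 0$. By the $1$-independence of $\G$ there is an assignment $\va_j\in\f^t$ to the control variables such that $\G(\va_j,z_1)=z_1\cdot\ve_j$, and substituting this into $L_i(\G)$ yields $A_{i,j}\cdot z_1\neq 0$, so $L_i(\G)$ is a nonzero polynomial in $(\bm y,z_1)$. Multiplying over $i\in[m]$, the top-degree component $L_1(\G)\cdots L_m(\G)$ is nonzero, and hence $f\circ\G\neq 0$. I do not anticipate any real obstacle: the only step worth double-checking is the leading-monomial structure of $\cont_m$, since it is precisely the fact that this leading monomial is squarefree with full support that makes the reduction from $f\circ\G\neq 0$ to nonvanishing of the individual forms $L_i(\G)$ go through cleanly, with no characteristic restriction needed.
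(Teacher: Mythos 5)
Your proof is correct and follows essentially the same route as the paper: both isolate the leading monomial $\prod_i y_i$ of $\cont_m$, observe that the top homogeneous part of $f$ is $\prod_i \ell_i^{[1]}$, use uniformity of $\G$ to prevent cancellation by lower-degree terms, and conclude by showing the product of linear forms composed with $\G$ is nonzero. The only cosmetic difference is that the paper invokes \cref{obs:kwise}(\ref{item:coordsGenInd}) (linear independence of $\G$'s coordinates) to see each $\ell_i^{[1]}\circ\G\neq 0$, while you derive the same fact directly from the $1$-independence property via a suitable assignment to the control variables.
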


As immediate corollary we get a hitting set for the class.

\begin{corollary}\label{cor:hit-cont}
	For every field $\f$, there is an explicit hitting set $\cH\subset\f^n$, of size $|\cH|=O\bra{n^6}$, that hits every $0\neq f\in \cont_m^{\gla{n}{\f}}$. If $|\f|<n^2$ then $\cH$ is defined over  a polynomial-sized extension field of $\f$, $\kk$ such that $|\kk|\geq n^2$.
\end{corollary}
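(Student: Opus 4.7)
The plan is to derive the corollary by applying Theorem~\ref{thmhitcont} with an explicit uniform $1$-independent polynomial map $\G$, and then turning $\G$ into a hitting set via Observation~\ref{obsHitSetGen}. The main design choice is to produce such a $\G$ with few variables and small individual degree, so that the $|W|^k$ bound coming out of Observation~\ref{obsHitSetGen} is polynomial in $n$. I expect no serious obstacle; the mildly subtle point is making a Lagrange-type interpolating map \emph{uniform}, which will cost one extra control variable (and hence a factor of $n^2$ in the final hitting-set size).

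For the construction I would use Lagrange interpolation with an auxiliary variable for homogenization. Assuming first that $|\f|\ge n$, fix distinct $\alpha_1,\dots,\alpha_n\in\f$ and let $L_j(y)=\prod_{k\neq j}(y-\alpha_k)/(\alpha_j-\alpha_k)$ be the standard degree-$(n-1)$ Lagrange basis polynomials. The naive map $(y,z)\mapsto(zL_1(y),\dots,zL_n(y))$ is $1$-independent but not uniform, since the $L_j$ are inhomogeneous. Let $\tilde L_j(y_0,y_1)$ denote the degree-$(n-1)$ homogenization of $L_j$ in a fresh variable $y_0$, and set
\[ \G_j(y_0,y_1,z_1)\;\triangleq\; z_1\cdot\tilde L_j(y_0,y_1),\qquad j\in[n]. \]
Each $\G_j$ is then homogeneous of degree $n$, has individual degree at most $n-1$ in each variable, and the evaluation $(y_0,y_1,z_1)=(1,\alpha_i,z_1)$ kills every coordinate except the $i$-th, which becomes $z_1$. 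Hence $\G:\f^3\to\f^n$ is a uniform $1$-independent map.

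By Theorem~\ref{thmhitcont}, for every nonzero $f\in \cont_m^{\gla{n}{\f}}$ with $m\le n$, the composition $f\circ\G$ is a nonzero polynomial in three variables whose individual degrees are at most $n\cdot(n-1)=O(n^2)$. Observation~\ref{obsHitSetGen} then yields the hitting set $\cH=\G(W^3)$ for any $W\subset\f$ of size $n(n-1)+1$, and gives $|\cH|\le|W|^3=O(n^6)$. If $|\f|<n^2$, then neither the nodes $\alpha_i$ nor a set $W$ of the required size live inside $\f$, so I would simply pass to the smallest extension $\kk\supseteq\f$ with $|\kk|\ge n^2$; this $\kk$ has degree $O(\log n)$ over $\f$, hence polynomial-size representation, and since $\cont_m^{\gla{n}{\f}}\subseteq \cont_m^{\gla{n}{\kk}}$, the resulting $\cH\subset\kk^n$ still hits every nonzero $f$ in the original class.
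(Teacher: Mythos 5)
Your proof is correct and matches the paper's argument: the paper also derives the corollary from Theorem~\ref{thmhitcont} and Observation~\ref{obsHitSetGen} using the uniform one-independent generator $\svgenh_1$ of Definition~\ref{defsvgenhom}, which is precisely the homogenized Lagrange map you construct (with $y_0$ playing the role of the homogenizing control variable). Your degree accounting ($r=n-1$, $d\le n$, $|W|=O(n^2)$, $|\cH|=|W|^3=O(n^6)$) and the field-extension handling likewise agree with the paper.
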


\begin{theorem}\label{thm:cont-int}
	For every field $\f$, there is an explicit interpolating set $\cH\subset\f^n$, of size $|\cH|=O\bra{n^{10}}$, for $\bigcup_{m=1}^{n}\cont_m^{\gla{n}{\f}}$. If $|\f|<n^2$ then $\cH$ is defined over  a polynomial-sized extension field of $\f$, $\kk$ such that $|\kk|\geq n^2$.
\end{theorem}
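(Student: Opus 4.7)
The plan is to reduce the construction of an interpolating set to a hitting set problem via \cref{obs:inter}, and to extend the technique of \cref{thmhitcont} from $1$-independent to $2$-independent polynomial maps.

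By \cref{obs:inter}, it suffices to construct a hitting set for $\cC+\cC$ where $\cC=\bigcup_{m=1}^{n}\cont_m^{\gla{n}{\f}}$; that is, for every nonzero polynomial of the form $\alpha f_1+\beta f_2$ with $f_i\in\cont_{m_i}^{\gla{n}{\f}}$ and $\alpha,\beta\in\f$, the set must contain a non-vanishing point. A $1$-independent generator alone, as used in \cref{thmhitcont}, is not enough, because although it hits each $f_i$ separately, it may fail to distinguish $f_1$ from $f_2$ when they agree on every axis-aligned $1$-dimensional slice through the origin.

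The main step I would carry out is to show that any uniform $2$-independent polynomial map $\G=\G_1+\G_2$ (with $\G_1,\G_2$ variable-disjoint $1$-independent maps) serves as a generator for $\cC+\cC$, i.e.\ $(\alpha f_1+\beta f_2)\circ\G\not\equiv 0$ whenever $\alpha f_1+\beta f_2\not\equiv 0$. I would follow the strategy of \cref{thmhitcont} but use the $2$-independent structure: specializing the control variables of $\G$ to isolate a pair of coordinates reduces the problem to analyzing the bivariate restrictions of $\cont_{m_1}(A_1\xn+\vb_1)$ and $\cont_{m_2}(A_2\xn+\vb_2)$. Using the matrix-product definition of $\cont_m$ (equivalently, the recurrence $\cont_m(y_1,\ldots,y_m)=y_m\cdot\cont_{m-1}(y_1,\ldots,y_{m-1})+\cont_{m-2}(y_1,\ldots,y_{m-2})$) together with the fact that a polynomial in $\cont_m^{\gla{n}{\f}}$ is uniquely determined by the tuple $(m,A,\vb)$ up to the (known) stabilizer of $\cont_m$ in $\GLaff$, one shows that if every such bivariate restriction of $\alpha f_1+\beta f_2$ vanishes, then $\alpha f_1+\beta f_2$ must itself be zero.

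For the size, $\G$ uses $O(1)$ parameters and $(\alpha f_1+\beta f_2)\circ\G$ has individual degree $O(n)$ in each parameter, so by \cref{obsHitSetGen}, taking $W\subset\kk$ of size $O(n^2)$ (over a polynomial-sized extension field if necessary, as in \cref{cor:hit-cont}) gives a hitting set of size $|W|^{O(1)}=O(n^{10})$. The main obstacle is the structural claim above: showing that $2$-independence already suffices to separate orbit elements across the entire union $\bigcup_m \cont_m^{\gla{n}{\f}}$. In particular one must handle both the case $m_1\neq m_2$ (where the two polynomials live in orbits of different ``essential arities'') and the case $m_1=m_2$ where the nontrivial stabilizer of $\cont_m$ must be accounted for; both sub-cases rest on the rigid algebraic structure of the continuant.
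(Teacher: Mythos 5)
Your reduction via \cref{obs:inter} to hitting $\cC+\cC$ is the right first move, and the Step-1 analysis via the top-degree homogeneous part $\prod_i\ell_i^{[1]}$ (matching the factors up to permutation and scale) does follow through much as in \cref{thmhitcont}. But the central claim that a $2$-independent map already separates $\cC+\cC$ is asserted, not proved, and you flag it yourself as ``the main obstacle.'' This is a genuine gap, and it is exactly where the paper spends most of its effort: \cref{cor:5intscont} uses a \emph{$5$-independent} map, not a $2$-independent one, obtained by combining \cref{lem:cont-hit-deriv} and \cref{lem:2hitder}.

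The reason $2$-independence is not obviously sufficient is the combinatorial structure of $\cont_m$. After matching the linear factors up to a permutation $\pi$ and scalars $\alpha_j$, one must recover the cyclic order of the $\ell_j$s (the paper's Step 2). The tool for this is the ``consecutive-triplet'' characterization (\cref{lem:triplet}, \cref{cor:triplet}): $(i,j,k)$ is a consecutive triplet iff $\cont_{m}^{(i,j,k)}=\restr{\partial^2\cont_m/(\partial x_i\partial x_k)}{x_j=0}$ vanishes. Simulating a second directional derivative plus a single hyperplane restriction costs $3$ units of independence (via \cref{lem:indDerivLinear} and \cref{lem:indProjectZero}), and hitting the resulting orbit of $\cont_m^{(i,j,k)}$ costs $2$ more (\cref{lem:2hitder}), giving $5$ total. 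A third step then handles the scalar ambiguity, including the even-$m$ ``checkerboard'' rescaling symmetry. Your proposal never engages with the index/ordering structure of the continuant, which is where the real work lies. Note also that bivariate coordinate slices do not in general determine a polynomial — $x_1x_2x_3$ vanishes on every $2$-dimensional coordinate subspace — so if $2$-independence were enough it could only be by a structural argument about continuant orbits that you would have to supply. Finally, the ``$O(1)$ parameters $\Rightarrow O(n^{10})$'' bookkeeping also assumes the unproved $2$-independence claim; with the $5$-independent map the paper actually uses, the parameter count is larger.
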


\begin{theorem}\label{thm:cont-recon}
	There is a deterministic algorithm that given $\f$, an integer $ n$, oracle access to a root-finding algorithm over $\f$, and black-box access to a polynomial  $f(x_1,\ldots,x_n) \in \cont_m^{\gla{n}{\f}}$ (for any $m\leq n$), runs in polynomial-time and outputs linear functions $\left(\ell_1(x_1,\ldots,x_n),\ldots,\ell_m(x_1,\ldots,x_n)\right)$ such that $$f(x_1,\ldots,x_n) = \cont_m\left(\ell_1(\xn),\ldots,\ell_m(\xn)\right)\;.$$
	 If $|\f|<n^{3}$ then the algorithm will make queries from a polynomial-sized extension field of $\f$, $\kk$, such that $|\kk|\geq n^{3}$, and it also requires oracle access to a root-finding algorithm over $\kk$.
\end{theorem}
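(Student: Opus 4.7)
The plan is to proceed in four stages, each making $\text{poly}(n)$ black-box queries and performing $\text{poly}(n)$ field arithmetic.

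\emph{Stage 1: essential variables.} Since $f = \cont_m(\ell_1, \ldots, \ell_m)$ with the linear parts of $\ell_1,\ldots,\ell_m$ linearly independent, the chain rule shows that the span of $\{\partial f/\partial x_i\}_{i=1}^n$ as polynomials in $\f[\xn]$ equals the $m$-dimensional space $\mathrm{span}\{(\partial_{t_j}\cont_m)(\ell_1,\ldots,\ell_m)\}_{j=1}^m$. Using standard black-box techniques for computing essential variables (interpolate $\partial f/\partial x_i$ along lines, then do linear algebra on the coefficient vectors), recover an invertible linear change of variables that reduces $f$ to an $m$-variate polynomial $\tilde f \in \cont_m^{\gla{m}{\f}}$.

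\emph{Stages 2--3: top-degree factorization and chain order.} The degree-$m$ homogeneous component of $\cont_m(y_1,\ldots,y_m)$ is $y_1 \cdots y_m$, so $\tilde f^{(m)}$ (extracted by interpolating $\tilde f(t \mathbf{y})$ in $t$) is a product of $m$ linearly independent linear forms; factor it deterministically using the root-finding oracle and change variables accordingly. A direct expansion of the $2\times 2$ matrix product defining $\cont_m$ shows that its degree-$(m-2)$ homogeneous part equals $\sum_{i=1}^{m-1}\prod_{j\ne i,i+1}y_j + \prod_{j=2}^{m-1}y_j$---the sum of the $m$ squarefree monomials indexed by the edges of the Hamiltonian cycle on $\{1,\ldots,m\}$ formed by the chain together with the endpoint pair. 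Extract those $O(m^2)$ coefficients of $\tilde f^{(m-2)}$, read off the Hamiltonian cycle, and fix any Hamiltonian path on it; this is well-defined because $\cont_m$ is invariant under cyclic shifts and reversal (the dihedral symmetry of the trace). After a final permutation of variables, $\tilde f(\mathbf{y}) = \cont_m(\alpha_1 y_1 + c_1, \ldots, \alpha_m y_m + c_m)$ with $\prod_i \alpha_i = 1$.

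\emph{Stage 4: affine parameters.} Because $\cont_m$ has no degree-$(m-1)$ monomials, the coefficient of $\prod_{j \ne i}y_j$ in $\tilde f$ equals exactly $c_i/\alpha_i$, read off directly. Similarly, the coefficient of $\prod_{j \ne i,i+1}y_j$ expands to $(c_ic_{i+1}+1)/(\alpha_i\alpha_{i+1})$, from which $\alpha_i\alpha_{i+1}$ is recovered for every consecutive chain pair and the endpoint pair. These $m$ pairwise products determine all $\alpha_i$ in terms of $\alpha_1$; pinning down $\alpha_1$ uses the constraint $\prod_i\alpha_i=1$ when $m$ is odd, or one additional lower-degree coefficient of $\tilde f$ when $m$ is even---either way yielding a univariate polynomial equation of bounded degree that the root-finding oracle solves. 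Back-substitution gives $c_i = \alpha_i \cdot (c_i/\alpha_i)$, and composing all of the accumulated affine changes of variables expresses $\ell_1,\ldots,\ell_m$ in the original coordinates.

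The main obstacle is Stage 4: one must verify that the chain propagation yields a univariate equation of polynomially bounded degree with a root in $\f$ (rather than a proper extension) and handle the parity of $m$ carefully. The extension-field assumption $|\kk|\geq n^3$ is present to supply enough interpolation nodes in Stages 2--3 and to give the oracle room to succeed in Stage 4.
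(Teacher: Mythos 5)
Your approach is genuinely different from the paper's and the core observation is correct and nice: the degree-$(m-2)$ homogeneous part of $\cont_m$ is $\sum_{i=1}^{m-1}\prod_{j\ne i,i+1}y_j + \prod_{j=2}^{m-1}y_j$, so it records the Hamiltonian cycle $(1,2),(2,3),\ldots,(m,1)$. The paper instead recovers this cycle by the ``consecutive triplet'' characterization (\cref{lem:triplet}, \cref{cor:triplet}): it tests, for each unordered triple $(i,j,k)$, whether $\restr{\partial^2 f/\partial v_i\partial v_k}{L_j=0}$ vanishes by evaluating on the hitting set of \cref{lem:2hitder}, and never reduces the number of variables. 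Both routes identify the same combinatorial object; yours is coefficient-based after a change of variables, the paper's is evaluation-based in $n$ variables throughout. Your Stage~1 is feasible but largely redundant: once you factor $f^{[m]}=\prod_i\ell_i^{[1]}$ in Stage~2, the span of the factors already gives the $m$-dimensional ``essential'' subspace and a coordinate change, which is essentially how the paper gets its dual vectors.

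There is, however, a real gap in your Stage~3. After the change of variables you have $\tilde f(\by)=\cont_m(\alpha_1 y_1+c_1,\ldots,\alpha_m y_m+c_m)$, and the degree-$(m-2)$ part of $\tilde f$ receives contributions not only from $\cont_m^{[m-2]}$ but also from the top part $\prod_j(\alpha_j y_j+c_j)$ with two factors replaced by their constants. Concretely, the coefficient of $\prod_{j\ne i,k}y_j$ in $\tilde f$ is $\frac{c_ic_k+1}{\alpha_i\alpha_k}$ when $\{i,k\}$ is a cycle edge and $\frac{c_ic_k}{\alpha_i\alpha_k}$ otherwise; generically all $\binom{m}{2}$ of these are nonzero, so you cannot ``read off the Hamiltonian cycle'' from the raw degree-$(m-2)$ coefficients as stated. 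The fix is to first run the degree-$(m-1)$ step of your Stage~4 (the coefficient of $\prod_{j\ne i}y_j$ gives $c_i/\alpha_i$), form the products $\tfrac{c_i}{\alpha_i}\cdot\tfrac{c_k}{\alpha_k}$, and subtract them; the residual is $\tfrac{1}{\alpha_i\alpha_k}\neq 0$ exactly on cycle edges. You clearly have these pieces, but the order of Stages~3 and~4 must be interleaved for the argument to go through. A smaller issue: for even $m$ there is a genuine one-parameter rescaling symmetry, $\cont_m(\ell_1,\ell_2,\ldots)=\cont_m(\gamma\ell_1,\gamma^{-1}\ell_2,\gamma\ell_3,\ldots)$ for any $\gamma\neq 0$ (the paper derives this in the display culminating in \eqref{eq:rescale}), so there is \emph{no} additional coefficient of $\tilde f$ that pins down $\alpha_1$; one simply fixes $\alpha_1=1$ and outputs that representative.
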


\subsubsection{Orbits of read-once formulas}
\label{sec:res-rof}

Roughly, a read-once formula (ROF) is a formula in which every variable labels at most one leaf. However, following \cite{shpilka2015read,v010a018} we also allow gates of the formula to pass on their output wire a linear function of their polynomial (see \cref{def:ROF}).  We denote with $\ROF^{\GLm{}{\f}}$ the class of families of polynomials $(f_n)_n$, such that for every $n$ there exists a ROF $\Phi$, on $m\leq n$ variables,  such that  $f_n(x_1,\ldots,x_n) \in \Phi^{\GL[n][\f]}$.

A ROF is in \emph{alternating normal form} (ROANF) if it is a full binary tree of depth $2\Delta$ with alternating layers of addition and multiplication gates. In particular, it is a ROF on $4^\Delta$ many variables (see \cref{def:ANF}). 

We denote with $\croanf$ the canonical ROANF of depth $2\Delta$ in which the leaves are labeled with the variables $x_1,\ldots,x_{4^\Delta}$ according to their order (see \cref{defCroanf}). We denote with $\croanf[]^{\GLaff[\f]}$ the class of families of polynomials $(f_n)_n$, such that for every $n$ there exists $\Delta$ such that $4^\Delta\leq n$ and  $f_n(x_1,\ldots,x_n) \in \croanf^{\gla{n}{\f}}$.

We first make the following simple observation.
\begin{restatable}{theorem}{ROANFisDense}
	\label{ROANFisDense}
For every field $\f$, it holds that 	
\begin{equation}
	\anf{}{\f}\subsetneq\ROF^{\gl{}{\f}}\subsetneq \VPe(\f)\;.
\end{equation}
However, when taking closures we get
\begin{equation}
	\overline{\croanf[]^{\GLaff(\f)}}=\overline{\ROF^{\GLm{}{\f}}}=\overline{\VPe(\f)}\;. 	
\end{equation}
\end{restatable}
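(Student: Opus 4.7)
My plan is to handle the two halves of the theorem separately and to invoke the Bringmann-Ikenmeyer-Zuiddam density result, already cited in the excerpt, for the closure part. For the inclusion $\anf{}{\f} \subseteq \ROF^{\gl{}{\f}}$, I observe that $\croanf$ is itself a ROF, and for any $(A,\vb)\in\GLaff(\f)$ one can rewrite
\[
\croanf(A\xn+\vb) \;=\; \widetilde\Phi(A\xn),
\]
where $\widetilde\Phi$ is the ROF obtained from $\croanf$ by replacing each leaf $x_i$ by the affine-linear form $x_i+b_i$ (still a legal ROF, since leaves may carry constants). The inclusion $\ROF^{\gl{}{\f}} \subseteq \VPe(\f)$ is immediate: a ROF is by definition a formula of polynomial size, and composing with a linear substitution preserves this bound.

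For the strictness of the first inclusion I would take the family $f_n(\xn) = (x_1+x_2)(x_3+x_4)\cdots(x_{n-1}+x_n)$ for even $n$, a ROF of degree $n/2$; every $g\in\croanf^{\gla{n}{\f}}$ has degree $2^\Delta \le \sqrt{n}$ because $4^\Delta \le n$, so $f_n$ is not in any such orbit for large enough $n$. For the strictness of the second inclusion I would take $f_n(\xn)=x_1x_2(x_1+x_2)$: any expression $\Phi(L_1,\ldots,L_m)$ with $\Phi$ a ROF and $L_1,\ldots,L_m$ linearly independent forces every $L_i$ into $\mathrm{span}(x_1,x_2)$ (since $f_n$ depends on only these two variables), hence $m\le 2$, but a ROF on two variables has degree at most $2$ while $\deg f_n = 3$.

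For the closure equalities, taking closures of the inclusions already gives $\overline{\anf{}{\f}}\subseteq\overline{\ROF^{\gl{}{\f}}}\subseteq\overline{\VPe(\f)}$, so it suffices to establish $\overline{\VPe(\f)}\subseteq\overline{\anf{}{\f}}$. By the cited Bringmann-Ikenmeyer-Zuiddam theorem, $\overline{\VPe(\f)} = \overline{\ccont{\f}}$, so it is enough to show $\ccont \subseteq \overline{\anf{}{\f}}$. My plan here is: first apply Brent's depth-reduction to $\cont_m$, obtaining a balanced formula of depth $O(\log m)$ and size $\mathrm{poly}(m)$; make it alternating by inserting trivial gates; then introduce a fresh auxiliary variable at each internal node and use standard $\varepsilon$-approximation tricks to force the auxiliaries to collapse to their intended subformula values in the limit $\varepsilon\to 0$. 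Padding with further dummy variables (scaled by $\varepsilon$) to match the shape $4^\Delta$ with $2^\Delta\ge m$ yields a sequence in $\croanf^{\GLaff(\f)}$ that approximates $\cont_m$.

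The main obstacle is this last step: converting a balanced but not-necessarily-read-once formula for $\cont_m$ into an approximation lying in the orbit of the \emph{canonical} $\croanf$ rather than in some generic alternating ROF. The $\varepsilon$-scaling must be arranged so that (i) in the limit the auxiliary variables agree with their intended subformulas, (ii) the resulting polynomial lies in the $\GLaff$ orbit of the canonical $\croanf$ for the chosen $\Delta$, and (iii) the total number of variables stays polynomially bounded in $m$. The approximation tricks themselves are by now standard in the algebraic-approximation literature, but the bookkeeping required to land exactly in $\croanf^{\GLaff(\f)}$ takes care.
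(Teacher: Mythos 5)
Your inclusion arguments and the degree-based strictness of $\anf{}{\f}\subsetneq\ROF^{\gl{}{\f}}$ are fine (the paper actually uses the even simpler observation that degrees in $\anf{}{\f}$ are always powers of $2$, but your $\sqrt{n}$ bound works just as well). The strictness of $\ROF^{\gl{}{\f}}\subsetneq\VPe(\f)$ contains a genuine gap: the claim that ``$f_n$ depends only on $x_1,x_2$ forces every $L_i$ into $\text{span}(x_1,x_2)$, hence $m\le 2$'' is false as stated. For example, the ROF $\Phi=(w_1+w_2)-(w_3+w_4)$ composed with the linearly independent forms $L_1=x_1+x_3$, $L_2=x_2+x_4$, $L_3=x_3$, $L_4=x_4$ yields $x_1+x_2$, which depends on only two variables even though none of the $L_i$ lies in $\text{span}(x_1,x_2)$. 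To repair your reasoning you would first need to reduce to a ROF that depends \emph{essentially} on all its variables (e.g.\ by merging sibling leaves under a $+$ gate into a single leaf), and only then does a dual-direction argument give $m\le 2$. The paper sidesteps this entirely: every $g\in\ROF^{\gl{}{\f}}$ is multilinear in some linear basis, which immediately rules out $x^2$; your $x_1x_2(x_1+x_2)$ is also excluded by that observation, since a degree-$3$ polynomial depending on two essential directions cannot be multilinear in any basis, but not by the route you sketched.

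For the closure equalities, the detour through the continuant and Bringmann--Ikenmeyer--Zuiddam is unnecessary, and the plan for that detour is where the proposal is weakest. The paper proves $\VPe(\f)\subseteq\overline{\anf{}{\f}}$ directly by invoking Proposition 3.2 of \cite{gupta2014random}: every size-$s$ formula can be rewritten as an ANF formula of size $O(s^4)$, and since the leaves of an ANF are already labeled by linear functions, one perturbs them by $\varepsilon$-linear forms to make them linearly independent (after padding the variable count polynomially), placing the perturbed polynomial in $\croanf^{\gla{n}{\f(\varepsilon)}}$; setting $\varepsilon\to 0$ in the sense of \cref{def:approx} recovers the original. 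Your scheme of introducing fresh auxiliary variables at internal gates and ``collapsing'' them is both more complicated and not clearly correct: it is unclear how the auxiliary-variable formula would land in the $\GLaff$ orbit of the \emph{canonical} $\croanf$, and the standard $\varepsilon$-approximation device operates on leaf labels rather than on intermediate gate values. You flag this bookkeeping as the main obstacle; the missing ingredient is precisely the formula-to-ANF conversion of \cite{gupta2014random}, which makes that obstacle disappear.
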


%As a read-once ANF is a read-once formula we immediately get that Since every matrix is the limit of invertible matrices we immediately get that $\overline{\ropinv}=\overline{\VPe}$. In addition, Gupta et al. prove that every polynomial that is computable by a formula of size-$s$ can be computed by an ANF formula of size $s^4$ \cite{ANF}. It therefore follows that $\overline{\roanf}=\overline{\VPe}$. Thus

Our main results for ROFs and ROANFs are a construction of a hitting set for the orbit of ROFs, and an interpolating set for the orbit of ROANFs. Both constructions are obtained using  independent polynomial maps (\cref{def:indPolyMap}).

\begin{restatable}{theorem}{PITROPINV}
	\label{thm:PITROPINV}
	Let $0\neq f\in \ROF^{\gla{n}{\F}}$ where the underlying \ROF{} depends on $2^t$ variables, for $2^t\leq n$. Then, for any \ropinvgen[(t+1)]{} $\G$, over $\f$, $f\circ \G\neq 0$.
\end{restatable}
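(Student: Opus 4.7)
I plan to prove Theorem~\ref{thm:PITROPINV} by structural induction on the underlying read-once formula $\Phi$. Writing $f = \Phi(\ell_1,\ldots,\ell_m)$ with $m = 2^t$, where $\ell_1,\ldots,\ell_m$ are the $m$ linearly independent affine forms produced by the first $m$ rows of the affine transformation giving $f$, the base case $m = 1$ (so $t=0$) is immediate: $f = \alpha\ell_1 + \beta$ is a nonzero affine form, and the $1$-independence of $\G$ provides a control specialisation $\bm y \mapsto \va_i$ for which $\G = z_1 \bm e_i$, so that $f\circ\G$ specialises to a nonzero polynomial in $z_1$ at any index $i$ witnessing a nonzero coefficient of $\ell_1$.

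For the inductive step, after absorbing any linear output of the root gate of $\Phi$ into the ambient affine transformation, write $\Phi = \Phi_L \star \Phi_R$ with $\star \in \{+,\times\}$, where $\Phi_L,\Phi_R$ are ROFs on disjoint variable sets $S_L,S_R$ of sizes $m_L + m_R = m$, and set $f_L,f_R$ accordingly. In the \emph{product} case $f = f_L f_R$, each factor lies in the orbit of a strictly smaller ROF, so by the inductive hypothesis each factor is hit by an appropriately smaller independent map obtained from $\G$ by zeroing out some blocks entirely (set $z_k = 0$ and $\bm y_k = \va_i$ for some $i$; this forces $\G_k \equiv 0$ because $\G_k(\va_i,0) = 0\cdot \bm e_i = 0$). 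Hence $f_L \circ \G$ and $f_R \circ \G$ are nonzero polynomials in $\F[\bm y,\bm z]$, and their product $f \circ \G$ is nonzero in this integral domain.

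The main technical challenge is the \emph{addition} case $f = f_L + f_R$, since $f_L\circ\G$ and $f_R\circ\G$ could a priori cancel. To rule out cancellation my plan is to exploit that (i) in the $y$-coordinates the polynomials $f_L$ and $f_R$ depend on disjoint subsets $\bm y_{S_L}, \bm y_{S_R}$ of the transformed variables, and (ii) the $(t+1)$-independent $\G$ supplies one extra block beyond what either of $f_L$ or $f_R$ requires in isolation. Expanding $\ell_j(\G) = \sum_{k=1}^{t+1} A_{j,:}\G_k + b_j$ and viewing $f\circ\G$ as a polynomial in $(z_1,\ldots,z_{t+1})$ with coefficients in $\F[\bm y]$, I aim to exhibit a specific multilinear $\bm z$-monomial whose coefficient is a nonzero polynomial in $\bm y$. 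The $1$-independence property forces each $A_{j,:}\G_k$ to specialise to $A_{j,i}z_k$ at $\bm y_k = \va_i$, so the target coefficient reduces to an entry of a symmetric tensor sum indexed by $S_L$ and $S_R$ separately; linear independence of the rows $A_{j,:}$ for $j \in [m]$ then forces the combined tensor to be nonzero. For the model case $\Phi = y_1 y_2 + y_3 y_4$ this is concrete: the coefficient of $z_1 z_2$ equals the $(i_1,i_2)$-entry of a symmetric matrix built from outer products of $A_1,A_2,A_3,A_4$, and nonvanishing follows from linear independence of these four rows.

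The main obstacle I anticipate is closing out the addition case in full generality. A clean implementation likely requires a strengthened inductive hypothesis that tracks, for each subformula, a specific nonzero $\bm z$-monomial coefficient whose block-support mirrors the ROF tree structure; without such bookkeeping, $f_L\circ\G$ and $f_R\circ\G$ might favour the same witness monomial and cancel. It is precisely the extra block supplied by $(t+1)$-independence---one more than the $t$ that each half alone requires---that should let the inductive argument assign a ``fresh'' block distinguishing the witness for $f_R$ from that for $f_L$, ruling out cancellation and completing the proof.
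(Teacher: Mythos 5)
Your base case and product case essentially match the paper's proof, but the addition case—which you correctly identify as the crux—is where your plan has a genuine gap that your proposed strategy would not close. You hope to track the ``block-support'' of $\bm z$-monomials and exploit that $\Phi_L,\Phi_R$ read disjoint sets of the formula's internal variables $w_j$. The problem is that this disjointness is destroyed by the ambient affine transformation: each $\ell_j(\G) = \sum_{k=1}^{t+1} A_{j,:}\G_k + b_j$ involves every block of $\G$, so both $f_L\circ\G$ and $f_R\circ\G$ depend on all of $\bm z$, and there is no block-support separation between them. Your model case $\Phi = w_1w_2+w_3w_4$ reduces to a rank computation on a symmetric matrix and is tractable, but that structure is special to quadratics; for general ROFs, cancellation of a chosen multilinear $\bm z$-monomial cannot be ruled out this way, and no strengthened bookkeeping hypothesis of the kind you gesture at is apparent.

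The paper's proof sidesteps the problem with a different and cleaner idea: take a \emph{directional derivative in a dual direction}. Since the $\ell_j$ are linearly independent, there is a dual set $\{\vv_j\}$ with $\ell_j^{[1]}(\vv_i)=\delta_{i,j}$ (Definition~\ref{def:dual}), and Lemma~\ref{lem:dualder} gives $\partial f/\partial\vv_1 = (\partial\Phi_1/\partial w_1)(\ell_1,\ldots,\ell_m)$, which annihilates $\Phi_2$ entirely and restores the $w$-coordinate structure that the affine map obscured. One of $\Phi_1,\Phi_2$ (WLOG $\Phi_1$) depends on $\leq m/2 \leq 2^{t-1}$ variables, so the derivative is in the orbit of a ROF on fewer than $2^{t-1}$ variables; the induction hypothesis with a $t$-independent map $\G_t$ applies, and Lemma~\ref{lem:indDerivLinear} lifts nonvanishing of $(\partial f/\partial\vv_1)\circ\G_t$ to nonvanishing of $f\circ(\G_1+\G_t)$. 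This is precisely where the $(t+1)$-versus-$t$ budget is spent, matching the halving of variables. The dual-direction derivative is the missing key lemma in your proposal: it converts the hard ``mixed'' addition case back into a smaller instance of the same problem, whereas trying to separate $f_L\circ\G$ from $f_R\circ\G$ by monomial structure alone fights the affine transformation rather than undoing it.
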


\begin{corollary}\label{cor:PITROPINV}
	For every field $\f$, there is a hitting set $\cH\subset \f^n$, of size $|\cH|=n^{O(\log n)}$, that hits every $0\neq f\in \ROF^{\gla{n}{\f}}$. If $|\f|<n^2$ then $\cH$ is defined over  a polynomial-sized extension field of $\f$, $\kk$ such that $|\kk|\geq n^2$.
\end{corollary}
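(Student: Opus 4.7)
The plan is to combine Theorem~\ref{thm:PITROPINV} with Observation~\ref{obsHitSetGen}, using an explicit low-degree $(t+1)$-independent polynomial map for $t = \lceil \log_2 n\rceil$.

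First I would build a $1$-independent polynomial map $\G^{(1)}:\f^2\to\f^n$ by Lagrange interpolation: fix $n$ distinct scalars $\alpha_1,\ldots,\alpha_n\in\f$ (passing to a polynomial-size extension if $|\f|<n$) and set
\[
\G^{(1)}_i(y,z) \;=\; z\cdot\prod_{j\neq i}\frac{y-\alpha_j}{\alpha_i-\alpha_j} \qquad (i\in[n])\;.
\]
Then $\G^{(1)}(\alpha_i,z)$ has $z$ in the $i$-th coordinate and $0$ in all others, making $\G^{(1)}$ a uniform $1$-independent polynomial map (with a single control variable); each coordinate has individual degree at most $n-1$ in $y$ and exactly $1$ in $z$. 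I obtain a $(t+1)$-independent map $\G$ for $t=\lceil\log_2 n\rceil$ as the variable-disjoint sum
\[
\G(y_1,z_1,\ldots,y_{t+1},z_{t+1}) \;=\; \sum_{k=1}^{t+1}\G^{(1)}(y_k,z_k)\;,
\]
on $2(t+1) = O(\log n)$ variables and with individual degrees bounded by $n$.

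To apply Theorem~\ref{thm:PITROPINV}, let $0\neq f\in\ROF^{\gla{n}{\f}}$ with underlying ROF on $m\leq n$ variables, and put $s=\lceil\log_2 m\rceil\leq t$. After possibly padding the ROF to exactly $2^s$ leaves within the generalized ROF model used in the paper (an operation that keeps $f$ inside $\ROF^{\gla{n}{\f}}$), the theorem gives $f\circ\G'\neq 0$ for any $(s+1)$-independent map $\G'$. Our $\G$ specializes to such a $\G'$ by setting the last $t-s$ pairs $(y_k,z_k)$ to, say, $(\alpha_1,0)$, which zeroes out those summands entirely. Consequently $f\circ\G\neq 0$, so $\G$ is a hitting set generator for all of $\ROF^{\gla{n}{\f}}$.

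Finally I invoke Observation~\ref{obsHitSetGen}. Polynomials in $\ROF^{\gla{n}{\f}}$ have degree at most $n$, and $\G$ has individual degrees at most $n$, so picking any $W\subset\f$ (or $\kk$) of size $|W|=n^2+1$ and taking $\cH = \G(W^{2(t+1)})$ yields a hitting set of size at most $|W|^{2(t+1)} = (n^2+1)^{O(\log n)} = n^{O(\log n)}$, as required. The field-size clause is handled exactly as stated by working over a degree-$O(\log n)$ extension $\kk\supseteq\f$ with $|\kk|\geq n^2$. All the real content lies in Theorem~\ref{thm:PITROPINV} itself; once that theorem is in hand, the only mildly technical step is the padding used to reduce to an ROF on exactly $2^s$ variables, and the remainder is a routine generator-to-hitting-set conversion.
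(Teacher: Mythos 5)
Your overall route is correct and matches the paper's: build an explicit $(t+1)$-independent polynomial map with $t=\lceil\log_2 n\rceil$ (your Lagrange construction is exactly $\svgen_{t+1}$ from \cref{defSVGEN}), hit with it via \cref{thm:PITROPINV}, and turn the generator into a hitting set via \cref{obsHitSetGen}; the degree bookkeeping (individual degree $\le n-1$ in the control variable, $\deg f\le n$, $O(\log n)$ seed variables, field of size $\ge n^2$) all checks out.

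The one gap is the padding step. You pad the underlying ROF from $m$ to $2^s$ leaves with $s=\lceil\log_2 m\rceil$ and assert this keeps $f$ inside $\ROF^{\gla{n}{\f}}$. For the padded formula to lie in that orbit, it must be composed with $2^s$ linearly independent $n$-variate linear functions, which forces $2^s\le n$. But when $m$ is not a power of two and $m>n/2$ one has $n<2^s<2m$, so the padding overshoots the ambient dimension (for instance $m=n=100$ gives $2^s=128>n$), and the claim that $f$ stays in $\ROF^{\gla{n}{\f}}$ fails. Fortunately the padding is unnecessary: although \cref{thm:PITROPINV} is phrased for an underlying ROF "on $2^t$ variables," its proof in the paper explicitly establishes the stronger statement that any ROF on \emph{at most} $2^t$ variables is hit by any $(t+1)$-independent map. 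That is the reading the paper's terse "follows immediately from \cref{thm:PITROPINV} and \cref{obsHitSetGen}" relies on; with it, $t=\lceil\log_2 n\rceil$ and your $\G$ apply directly to every $f\in\ROF^{\gla{n}{\f}}$ with no padding or specialization, and the rest of your calculation goes through verbatim.
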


Since a hitting set for all polynomials of the form $g-h$ where $g,h\in\cC$ is the same as an interpolating set for $\cC$, the following theorem gives an interpolating set for the orbit of ROANFs.

\begin{restatable}{theorem}{pitSumOfRoanfThm}\label{thm:pitSumOfRoanfThm}
	Let $f_1=\croanf[\Delta_1](A_1\xn+\vb_1),f_2=\croanf[\Delta_2](A_2\xn+\vb_2)\in\anfn{}{\f}$ and $f=f_1-f_2$. Set $k\triangleq 2\max\{\Delta_1,\Delta_2\}+7$ and let $\G$ be any uniform \ropinvgen[k]{}, over $\f$. If $f\neq 0$ then $f\circ\G\neq 0$.
\end{restatable}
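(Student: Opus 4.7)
The plan is to handle the degenerate case first and then argue the main case by induction on $\Delta \triangleq \max\{\Delta_1,\Delta_2\}$, exploiting the recursive tree structure of the canonical ROANF. If either $f_1=0$ or $f_2=0$, then $f$ is (up to sign) a single element of the orbit $\croanf[\Delta_i]^{\gla{n}{\f}}$, and the conclusion follows from \cref{thm:PITROPINV} applied with $t=2\Delta_i$ (so that $2^t=4^{\Delta_i}$), which requires only $(2\Delta_i+1)$-independence; since $k=2\Delta+7 \ge 2\Delta_i+1$, we are done. So assume from now on that both $f_1$ and $f_2$ are nonzero.

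For the inductive step, recall that at its root $\croanf[\Delta]$ splits as $(\croanf[\Delta-1]^{(1)}+\croanf[\Delta-1]^{(2)})\cdot(\croanf[\Delta-1]^{(3)}+\croanf[\Delta-1]^{(4)})$ on four disjoint blocks of $4^{\Delta-1}$ variables (or with the dual top-level arrangement, depending on the parity convention). After composition with the affine transformation, each $f_i$ therefore takes the form $(L_{i,1}+L_{i,2})(L_{i,3}+L_{i,4})$, with each $L_{i,j}$ a translate of an element of $\croanf[\Delta-1]^{\gla{n}{\f}}$. Expanding $f=f_1-f_2$ in this form and composing with the uniform $(2\Delta+7)$-independent map $\G$, I would partition the $k$ ``slots'' of $\G$: two slots are spent at the top level to commit to a particular top-level summand on each side of the difference, and the remaining $2\Delta+5=2(\Delta-1)+7$ slots form a $(2(\Delta-1)+7)$-independent submap, inherited by substitution, to which the inductive hypothesis applies on the resulting depth-$(\Delta-1)$ sub-difference. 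The base case $\Delta=0$ is immediate, since both $f_i$ are then affine functions and any $1$-independent map detects a nonzero affine function.

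The principal obstacle I expect is the ``misalignment'' case: $A_1$ and $A_2$ can induce essentially different partitions of the variable space, so that the four sub-blocks of $f_1$'s recursive decomposition need not line up with those of $f_2$, and the sub-ROANFs cannot be paired off in the obvious way. To surmount this I would argue along the following dichotomy: either (i) there is a choice of control variables for $\G$ that reduces both $f_1$ and $f_2$ to a common top-level partition, yielding a clean application of the inductive hypothesis on the four lower-depth sub-differences, or (ii) the very mismatch between the two partitions already guarantees a distinguishing monomial of $f$ of bounded support, which a small constant number of extra slots in the independent map can isolate directly. The precise reason that the constant $+7$ suffices (rather than some larger constant, or something growing with $\Delta$) should drop out of a careful count of how many slots of $\G$ are consumed at the top level in each branch of the dichotomy; verifying this counting tightly is the main technical work.
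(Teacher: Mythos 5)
Your approach is genuinely different from the paper's, and unfortunately I believe it has gaps that are not merely technical. You propose an induction on $\Delta$ by peeling off one level of the tree and ``committing'' to a top-level summand with two slots of $\G$. But the top gate of $\croanf[\Delta]$ is an \emph{addition} gate (the recursion in \cref{defCroanf} is $\croanf[\Delta-1]^{(1)}\cdot\croanf[\Delta-1]^{(2)}+\croanf[\Delta-1]^{(3)}\cdot\croanf[\Delta-1]^{(4)}$), so $f=f_1-f_2$ is a signed sum of four \emph{products} of depth-$(\Delta-1)$ ROANF translates, not a form to which an inductive hypothesis about ``differences of two ROANFs'' applies. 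Even when the two top-level partitions align, after committing to a sub-block you are left with something like $g_1h_1-g_1'h_1'$, which is again a difference of products of ROANFs, not a difference of ROANFs; it is not clear what inductive statement you would apply, nor how restricting/differentiating via $\G$ reduces to it while preserving the form. The arithmetic $2+(2(\Delta-1)+7)=2\Delta+7$ checks out only once the commitment step is actually shown to cost at most two independence slots per level, which is the part that is not established.

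The case you relegate to branch (ii) — ``the mismatch guarantees a distinguishing monomial of bounded support'' — is in fact the heart of the matter, and it is simply asserted rather than argued. The paper devotes \cref{lem:roanfMonInc} (a rigidity result: if $\text{mon}(\croanf(A\xn+\vb))\subseteq\text{mon}(\croanf(\xn))$ then the monomial structure agrees up to $\TR$-symmetry, and $\vb=0$), \cref{lem:separateRoanfSum} (from monomial disagreement one extracts a second-order directional derivative that kills one $f_i$ but not the other), and \cref{lem:hitSecondDeriv}, \cref{lem:pitDerivRoanf} (hitting such derivatives with a $(2\Delta+5)$-independent map) precisely to make this work, and none of these is captured by the vague ``bounded support'' claim. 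You also do not address two cases the paper must handle separately: (a) $\Delta_1\neq\Delta_2$, where the separating observation is that some leaf linear form of the deeper formula is outside the span of the other's leaves, so a single directional derivative already kills the shallower one (\cref{cor:derivRoanfIndep}); and (b) the ``aligned'' case where $f_1$ and $f_2$ have identical monomial structure but the leaf linear forms differ by nonzero scalars — this is the content of \cref{lem:pitRoanfSame}, and it is not captured by your induction, because at the level where the rescaling occurs both children may be proportional to each other while the two products still fail to cancel. So while your high-level intuition (go down the tree, spend slots, recurse) is reasonable, the proposal as written defers exactly the lemmas that constitute the proof.
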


\begin{corollary}\label{cor:pitSumOfRoanf}
	For any field $\f$, the class $\croanf[\Delta]^{\gla{n}{\f}}$, for $4^\Delta\leq n$, admits an interpolating set $\cH\subset \f^n$, of size $|\cH|=n^{O(\Delta)}$. If $|\f|<n^2$ then $\cH$ is defined over  a polynomial-sized extension field of $\f$, $\kk$, such that $|\kk|\geq n^2$.
\end{corollary}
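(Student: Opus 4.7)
The plan is to derive \cref{cor:pitSumOfRoanf} directly from \cref{thm:pitSumOfRoanfThm} via the standard generator-to-hitting-set packaging in \cref{obsHitSetGen}, together with \cref{obs:inter}. Concretely, an interpolating set for $\croanf[\Delta]^{\gla{n}{\f}}$ is exactly a hitting set for the family of differences $\{f_1-f_2:f_1,f_2\in\croanf[\Delta]^{\gla{n}{\f}}\}$, and \cref{thm:pitSumOfRoanfThm} tells us that any uniform $k$-independent polynomial map $\G$ with $k=2\Delta+7$ is a generator for precisely this family. The task therefore reduces to writing down one explicit uniform $k$-independent polynomial map of small arity and bounded individual degree, and then instantiating \cref{obsHitSetGen}.

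For $\G$ I would use the Shpilka--Volkovich template. Fix $n$ distinct field elements $a_1,\dots,a_n$ (passing to a polynomial-size extension $\kk\supseteq\f$ with $|\kk|\ge n^2$ if the base field is too small) and form the Lagrange interpolators $L_j(y):=\prod_{i\ne j}(y-a_i)/(a_j-a_i)$. To satisfy the \emph{uniformity} clause of \cref{def:indPolyMap}, I homogenize each coordinate to total degree $n$ by introducing a single auxiliary control variable $z_0$:
\[
\tilde L_j(y,z_0)\;:=\;\prod_{i\ne j}\frac{y-a_i z_0}{a_j-a_i},\qquad
\G^{(1)}(y,z_0,z)\;:=\;\bigl(\tilde L_1(y,z_0)\,z,\;\dots,\;\tilde L_n(y,z_0)\,z\bigr).
\]
Setting $(y,z_0)=(a_j,1)$ places $z$ in coordinate $j$ and $0$ everywhere else, so $\G^{(1)}$ is a $1$-independent map whose $n$ coordinates are homogeneous of degree exactly $n$. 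Taking the variable-disjoint sum of $k=2\Delta+7$ such copies produces a uniform $k$-independent polynomial map $\G$ on $3k=O(\Delta)$ variables with individual degree $r=O(n)$ in each variable.

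Every polynomial in $\croanf[\Delta]^{\gla{n}{\f}}$ has degree at most $d=2^\Delta$, since a full binary tree of depth $2\Delta$ with alternating $+$ and $\times$ layers has exactly $\Delta$ multiplicative layers, and the same bound transfers to differences. Instantiating \cref{obsHitSetGen} with these values of $d$ and $r$, I take $W\subset\f$ (or $\kk$) of size $d r+1 = O(n\cdot 2^\Delta)$ and set $\cH:=\G(W^{3k})$. Then
\[
|\cH|\;\le\;(n\cdot 2^\Delta)^{O(\Delta)}\;=\;n^{O(\Delta)},
\]
where the exponent collapses thanks to $2^\Delta\le n^{1/2}$, which is exactly the hypothesis $4^\Delta\le n$. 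The genuine content sits entirely inside \cref{thm:pitSumOfRoanfThm}; the only subtlety at the level of the corollary itself is the uniformity requirement, which forces the mild homogenization step above rather than using the naive Shpilka--Volkovich map off the shelf.
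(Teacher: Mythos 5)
Your derivation is correct and coincides with the route the paper leaves implicit: \cref{thm:pitSumOfRoanfThm} plus \cref{obs:inter} reduces the interpolating set to hitting a uniform $(2\Delta+7)$-independent map, and your homogenized Shpilka--Volkovich construction is essentially the paper's $\svgenh_k$ from \cref{defsvgenhom} (same three variables per copy, same degree-$(n-1)$ homogenization), so plugging $d=2^\Delta$, $r=O(n)$ and $4^\Delta\le n$ into \cref{obsHitSetGen} gives the claimed $n^{O(\Delta)}$ bound and field-size requirement.
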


Finally, we observe that the randomized algorithm of Gupta, Kayal And Qiao \cite{gupta2014random}, for reconstructing \emph{random algebraic formula} (for a natural definition of a random formula), yields a randomized reconstruction algorithm for $\roanf$. Naturally, the reconstruction is up to the symmetry group of ROANFs.

\begin{restatable}[A special case of Theorem 1.1 of \cite{gupta2014random}]{theorem}{reconstructROANF}
	\label{thm:reconstructROANF}
	Let $T$ be a finite subset of $\C$. Let $n,\Delta\geq 1$ be integers such that $s\triangleq 4^\Delta\leq n$. Given black-box access to the output $f$ of a circuit $\Phi\in{\croanf[]}^{\gla{n}{\C}}$, with probability at least $1-\frac{n^2s^{O(1)}}{|T|}$ (on internal randomness), Algorithm 6.9 of \cite{gupta2014random} successfully computes a tuple of $s$ linearly independent linear functions $L=(\ell_1,\ldots,\ell_s)\in(\C[\xn])^s$ such that $f=\croanf(\ell_1,\ldots,\ell_s)$, and the $\ell_i$s are identical to the labels of the leaves of $\Phi$ up to $\TS$-equivalence (see \cref{defTS}). Moreover, the running time of the algorithm is $\text{poly}(n,s,\log(|T|))$.
\end{restatable}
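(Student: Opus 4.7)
The plan is to invoke the correctness analysis of Theorem 1.1 of \cite{gupta2014random} after matching our input class to their formula model. I would first observe that any $f\in\croanf^{\gla{n}{\C}}$ can be written as $f=\croanf(\ell_1,\ldots,\ell_s)$ with $s=4^\Delta$ and $\ell_1,\ldots,\ell_s$ linearly independent affine linear forms in $\xn$: the shape of $\croanf$ is the canonical ROANF skeleton, a complete binary tree of depth $2\Delta$ with alternating $+$ and $\times$ layers on $s$ leaves, and composing with an invertible affine transformation $(A,\vb)$ replaces each leaf label $x_i$ with a linear function $\ell_i = A_i\xn + b_i$ (the $i$th row of $A$, shifted). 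Linear independence of the $\ell_i$s is immediate from the invertibility of $A$. This is exactly the situation that Algorithm 6.9 of \cite{gupta2014random} is designed to reconstruct.

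Next, I would trace through their correctness analysis to identify which properties of the leaf labels are actually used. GKQ's algorithm proceeds in stages: partial-derivative / Hessian-based invariants are used to recover the top-layer partition of leaves into subtrees, and then one recurses on each subtree. The correctness analysis requires the leaf labels to satisfy certain non-degeneracy conditions—principally linear independence, together with non-singularity of auxiliary matrices built from their coefficients. For the \emph{fixed} alternating $+/\times$ skeleton of $\croanf$, I would argue that linear independence of the $\ell_i$s is already sufficient to force all of these auxiliary non-degeneracy conditions, by examining the relevant determinants symbolically: each such determinant is a polynomial in the entries of $A,\vb$ that is a nonzero multiple of (a power of) $\det(A)$, and hence does not vanish whenever $A$ is invertible. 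GKQ's ``random leaf labels from $T^n$'' hypothesis is used in their statement only to guarantee, via a Schwartz--Zippel union bound, that these same non-degeneracy conditions hold; in our setting they hold by construction.

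Finally, the internal randomness of Algorithm 6.9 is used for polynomial identity testing on black-box subcircuits and for choosing generic auxiliary evaluation points during the recursion, and the failure probability of these subroutines is bounded by $n^2 s^{O(1)}/|T|$ via Schwartz--Zippel, yielding the claimed success probability. The running time bound $\mathrm{poly}(n, s, \log|T|)$ and the uniqueness of the output up to the symmetry group of $\croanf$ (i.e.\ $\TS$-equivalence, corresponding to swapping the two children at each $+$ or $\times$ gate and to the obvious diagonal rescalings) then follow directly from the statement of Theorem 1.1 of \cite{gupta2014random}. The main obstacle is the second paragraph: one must carefully inspect which non-degeneracy conditions the GKQ analysis actually invokes and verify that each of them reduces, for the fixed ROANF skeleton, to linear independence of the leaf labels. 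Once this inspection is complete, the theorem follows as a direct specialization.
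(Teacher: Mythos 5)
Your first paragraph is fine, and the high-level strategy (specialize Theorem 1.1 of GKQ to the subclass $\roanf$ by checking that the non-degeneracy hypotheses its correctness analysis needs are met) is the right one. But the central step in your second paragraph is a genuine gap, and in fact it is not how the paper argues.

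You claim that linear independence of the leaf labels $\ell_1,\ldots,\ell_s$ already forces all the non-degeneracy conditions GKQ's analysis needs, because ``each such determinant is a polynomial in the entries of $A,\vb$ that is a nonzero multiple of (a power of) $\det(A)$.'' This is not correct, for two intertwined reasons. First, the conditions GKQ actually need—formulaic independence (FI) and pairwise singular independence (PSI), \cref{defFI,defPSI}—are dimension conditions on varieties such as $\text{Sing}(f)\cap V_J(\bm f)$ and $W_S$, defined by vanishing of collections of minors of Jacobian-type matrices; these are not minors of the coefficient matrix $A$ and there is no identity expressing them as multiples of $\det(A)$. Second, and more fundamentally, FI and PSI are required of the \emph{random projections} $\sigma_{A^{i,j}_r}(f)$ to $r+3=128$ variables that Step \texttt{AFR}\ref{alg:afr3} of \cref{alg:afr} feeds into \texttt{LDR}, not of $f$ itself. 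Even if $f$'s own leaves are linearly independent, a generic linear projection to $128$ variables destroys that linear independence, and whether the projected $128$-variate ANF satisfies FI and PSI is a separate, nontrivial question. GKQ's Theorem 1.1 is stated as an average-case result over the distribution $\mathcal D(n,s,S)$ precisely because even linearly independent leaves do not automatically give FI/PSI; a priori, all projections of a fixed worst-case $f$ could land in the bad variety. If your argument were valid as written, GKQ's randomness over $S$ would have been redundant in the first place.

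The paper's proof fills this gap differently and this is the novel content: it shows (\cref{clmLDRsucceeds}) that for \emph{every} $f\in\roanf[n]$ there \emph{exists} at least one projection $A^{i,j}_r$ for which the projected polynomial satisfies FI and PSI. The construction takes the explicit $128$-variate polynomial $g$ from GKQ's Lemma 5.30 (which GKQ prove satisfies FI and PSI), builds an $n$-variate ROANF $\tilde\Psi$ whose restriction to the first $128$ variables is $g$, uses that $f^h$ and $\tilde\Psi$ lie in the same $\GL[n+1]$-orbit of $\croanf$ to write $g$ as $\sigma_{A^{i,j}_r}(f)$ for a suitable projection $A^{i,j}_r$, and then invokes \cref{obs:goodVariety} (the bad set of projections is a variety cut out by polynomials of degree $s^{O(1)}$) together with Schwartz--Zippel and a union bound over the $O(n^2)$ pairs $(i,j)$. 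Your third paragraph's characterization of the randomness as being for ``PIT and generic auxiliary evaluation points'' misses that its dominant role is to sample these projection matrices outside the bad variety. To repair your proof you would need to replace the ``linear independence $\Rightarrow$ determinant is a multiple of $\det(A)$'' step with an existence argument of this kind.
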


\begin{remark}\label{rem:gupta2014-char}
	Theorem 1.1 of \cite{gupta2014random} is stated only for characteristic zero fields. However, in Remark 6.10 they explain how to make the algorithm work over any characteristic, for a large enough field. Thus, \cref{thm:reconstructROANF} also holds over large enough fields in arbitrary characteristic.
\end{remark}

\begin{remark}\label{rem:ze-anf}
		As a direct implication of \cref{thm:pitSumOfRoanfThm}, the reconstruction algorithm of \cref{thm:reconstructROANF} can be converted into a zero-error algorithm, with expected quasipolynomial running time: Given black-box access to some $f_1\in\anf{}{\f}$, we define $f_2$ to be the output of the algorithm of \cref{thm:reconstructROANF} on input $f_1$, and then verify $f_1= f_2$ using \cref{cor:pitSumOfRoanf}.	
%	{\color{red} As in \cref{zeroErrRecon}, we may use \cref{pitSumOfRoanfThm} to convert the algorithm of \cref{reconstructROANF} to a zero-error algorithm with quasipolynomial expected execution time.}
\end{remark}

\subsubsection{Dense subclasses of \dthree}

We start by defining the canonical diagonal tensor of degree $d$ and rank $s$, $\sdmpol[s][d]\in\f[x_{1,1},\ldots,x_{s,d}]$, and the resulting class of polynomials $\taff{\gla{}{\f}}$.

\begin{definition}{\label{defT}}
Let $\sdmpol[s][d]\triangleq \sum_{i=1}^{s}\prod_{j=1}^{d}x_{i,j}$. I.e., it is 
a sum of $s$ variable-disjoint monomials.  For $n\geq s\cdot d$, we denote with $\sdmpol[s][d]^{\gla{n}{\f}}$ the orbit of $\sdmpol[s][d]$ over $\f$, under the action of the affine group. Finally, we denote with $\taff{\gla{}{\f}}$ the class of families of polynomials $(f_n)_n$, such that for every $n$ there exist $s$ and $d$ such that $n\geq s\cdot d$ and   $f_n(x_1,\ldots,x_n) \in \sdmpol[s][d]^{\gla{n}{\f}}$.
\end{definition}

Clearly, $\sdmpol[s][d]^{\gla{n}{\f}}\subset \Sigma^{[s]}\Pi^{[d]}\Sigma$.
We next define the class consisting of orbits of sparse polynomials. 

\begin{definition}
Let $\spinv{\f}$ denote the class of families of polynomials that are computed by orbits of depth-$2$ circuits, of polynomially bounded size, over $\f$. I.e., it is all families $(f_n)_n$, of polynomially bounded degree, such that for some polynomially bounded $m(n)$, there exist $\Sigma^{m(n)}\Pi^{\deg(f_n)}$ circuits $\Phi_m$, in $k\leq n$, many variables, such that $f_n\in \Phi_m^{\gla{n}{\f}}$. 	
\end{definition}

%It is not hard to see that every family $(f_n)_n\in \spinv{\f}$ can be computed by polynomial-size $\dthree$ circuits.
As before we first give the basic observation connecting all three classes.

\begin{theorem}\label{thm:denseInDepthThree}
For every field $\f$ it holds that
\begin{equation*}
	\taff{\gla{}{\f}}\subsetneq \spinv{\f} \subseteq\dthree(\f)\;,
\end{equation*}	
and for fields of size $|\f|\geq n+1$ 
$$\spinv{\f} \subsetneq\dthree(\f)\;.$$
In addition,
\begin{equation}\label{eq:Tdense}
	\overline{\taff{\gla{}{\f}}} = \overline{\spinv{\f}}= \overline{\dthree(\f)} \;.
\end{equation}
\end{theorem}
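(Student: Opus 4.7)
The plan is to prove the theorem in three stages: the inclusions, the strict inclusions, and the closure equality.

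The inclusions are routine. For $\taff{\gla{}{\f}} \subseteq \spinv{\f}$, I would observe that $\sdmpol[s][d]$ is itself a sparse polynomial computable by a $\Sigma^{[s]}\Pi^{[d]}$ circuit with $s$ monomials, so its orbit is contained in the orbit-class of polynomially-sized depth-$2$ circuits. For $\spinv{\f} \subseteq \dthree(\f)$, composing a sparse polynomial $f = \sum_i c_i m_i(\xn)$ with $A\xn + \vb$ gives $\sum_i c_i m_i(A\xn+\vb)$, where each $m_i(A\xn+\vb)$ is a product of affine linear forms, yielding a $\dthree$ circuit.

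For the strict containment $\taff{\gla{}{\f}} \subsetneq \spinv{\f}$ I would use $x_1^2$ as a separating polynomial. It lies in $\spinv{\f}$ trivially as a single monomial. Suppose $x_1^2 = \sdmpol[s][d](A\xn+\vb)$ with $A$ invertible; matching total degrees forces $d=2$, and the degree-$2$ homogeneous part yields $x_1^2 = \sum_{i=1}^{s} L_{2i-1} L_{2i}$ for some $2s$ linearly independent linear forms $L_1,\ldots,L_{2s}$ coming from $A$. After extending $L_1,\ldots,L_{2s}$ to a basis $y_1,\ldots,y_n$ (with $y_j = L_j$ for $j\leq 2s$) and re-expressing in this basis, the equation becomes $\bra{\sum_j \alpha_j y_j}^2 = \sum_{i=1}^{s} y_{2i-1} y_{2i}$ for some scalars $\alpha_j$. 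In characteristic different from $2$, comparing the associated symmetric matrices gives rank $1$ on the left and rank $2s \geq 2$ on the right, a contradiction; in characteristic $2$ the left side has no cross term $y_j y_k$ with $j\neq k$ (since $2\alpha_j\alpha_k = 0$), whereas the right side has coefficient $1$ for $y_1 y_2$, again a contradiction. For $\spinv{\f} \subsetneq \dthree(\f)$ under $|\f|\geq n+1$, my plan is a dimension argument: for suitably chosen polynomial fan-in parameters, the variety of $\dthree$ circuits has strictly larger dimension than the variety of orbits of polynomially-sparse polynomials, and the hypothesis $|\f|\geq n+1$ ensures a generic-position witness is actually realized over $\f$; alternatively, one may exhibit an explicit product of linear forms spanning only a small subspace in such a way that no affine change of coordinates makes it sparse.

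For the closure equality, $\overline{\taff{\gla{}{\f}}} \subseteq \overline{\spinv{\f}} \subseteq \overline{\dthree(\f)}$ follow from the containments, so it remains to show $\overline{\dthree(\f)} \subseteq \overline{\taff{\gla{}{\f}}}$. Given $f = \sum_{i=1}^{s}\prod_{j=1}^{d}\ell_{i,j}(\xn) \in \dthree(\f)$, I would pad the variable set to size $n'\geq sd$ if necessary, fix $sd$ linearly independent linear forms $h_{i,j}$ in these variables, and set
\[
	f_\varepsilon \;\triangleq\; \sum_{i=1}^{s}\prod_{j=1}^{d}\bra{\ell_{i,j} + \varepsilon\, h_{i,j}} \;\in\; \f(\varepsilon)[\xn].
\]
The $sd\times n'$ coefficient matrix of the forms $\ell_{i,j}+\varepsilon h_{i,j}$ has an $sd\times sd$ minor whose determinant, viewed as a polynomial in $\varepsilon$, has leading term equal to the corresponding (nonzero) minor of the $h_{i,j}$'s, hence is nonzero in $\f(\varepsilon)$. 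Therefore the forms $\ell_{i,j}+\varepsilon h_{i,j}$ are linearly independent over $\f(\varepsilon)$, placing $f_\varepsilon$ in $\sdmpol[s][d]^{\gla{n'}{\f(\varepsilon)}}$; and since $f_\varepsilon = f + O(\varepsilon)$, we obtain $f \in \overline{\taff{\gla{}{\f}}}$ via \cref{def:approx}. I expect the main obstacle to be the strict containment $\spinv{\f} \subsetneq \dthree(\f)$: proving that a given depth-$3$ polynomial admits no sparse representation under any affine change of variables is a genuine sparsity lower bound, and the dimension argument has to be executed carefully over a finite field using $|\f|\geq n+1$.
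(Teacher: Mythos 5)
Your inclusions, your choice of witness for $\taff{\gla{}{\f}}\subsetneq\spinv{\f}$, and your closure argument all align with the paper (for the first strict inclusion the paper's route is shorter: it observes that $\sdmpol[s][d]$ is multilinear while $\ell(\xn)^2$, the image of $x_1^2$ under an invertible affine map, is not, which settles every characteristic at once and avoids the split through quadratic-form ranks and the characteristic-$2$ analysis).

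The genuine gap is $\spinv{\f}\subsetneq\dthree(\f)$, which you correctly flag as the obstacle but do not close. Neither the ``dimension argument over $\f$'' nor the ``explicit product of linear forms spanning a small subspace'' is carried out, and as sketched neither yields the statement: you would need to actually parametrize both classes, bound their image dimensions in a way compatible with the polynomial-sparsity constraint built into $\spinv{\f}$, and explain how $|\f|\geq n+1$ produces a realized witness over a finite field — none of which appears. The paper closes this step concretely with the elementary symmetric polynomial $\sigma_d$. A short homogenization claim (\cref{cla:sdminvHom}) shows that a $d$-homogeneous $f\in\spinv{\f}$ lies in the $\gl{n}{\f}$-orbit (no affine shift) of a $d$-homogeneous sparse polynomial; composing such a sparse polynomial with an invertible linear map gives a \emph{homogeneous} $\dthree$ circuit whose top fan-in is the sparsity, so the $\Omega\bra{(n/2d)^d}$ lower bound of Theorem~0 of \cite{nisan1996lower} on homogeneous $\dthree$ circuits for $\sigma_d$ transfers into a superpolynomial lower bound on the sparsity of any $\spinv{\f}$ representation of $\sigma_d$. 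On the other hand, over any $\f$ with $|\f|\geq n+1$, interpolating $\prod_{i=1}^n(Y+x_i)$ in $Y$ at $n+1$ distinct field points gives a polynomial-size $\dthree$ circuit for $\sigma_d$. That interpolation is exactly where the hypothesis $|\f|\geq n+1$ enters, and this is the piece of the theorem your proposal leaves unproved.
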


Our main results for this section are a 
%\doricmnt{quasi-?} 
quasipolynomial-size hitting set for the class $\spinv{\f}$, and  a polynomial-size interpolating set for $\taff{\gla{}{\f}}$.

\begin{restatable}{theorem}{PITSINV}\label{thm:PITSINV}
	Let $0\neq g\in\f[\xn]$ have sparsity $\leq 2^t$.  Let $(A,\vb)\in\gla{n}{\f}$, and $f(\xn)=g(A\xn+\vb)$. Then, for any \ropinvgen[(t+1)]{} $\G$, $f\circ \G\neq 0$.
\end{restatable}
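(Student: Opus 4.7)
My plan is to argue by induction on $t$. For the base case $t = 0$, $g$ is a single nonzero monomial $c\prod_i x_i^{e_i}$, so $f = g(A\xn+\vb) = c\prod_i(A_{i,*}\xn + b_i)^{e_i}$ factors as a product of powers of affine linear forms. Each factor $A_{i,*}\xn + b_i$ composed with the $1$-independent map $\G$ yields the polynomial $\sum_j A_{i,j}\G^{(j)} + b_i$ in $\f[\bm y, z_1]$. This is nonzero: since $A$ is invertible, its $i$-th row is nonzero, so some $A_{i,j^*}\neq 0$, and at the specialization $\bm y = \va_{j^*}$ (where $\G$ realizes $z_1\ve_{j^*}$) the factor becomes $A_{i,j^*}z_1 + b_i\neq 0$. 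A product of nonzero elements in the integral domain $\f[\bm y, z_1]$ is nonzero, so $f\circ\G\neq 0$.

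For the inductive step $t\geq 1$, write $\G = \G_1 + \cdots + \G_{t+1}$ and let $\tilde\G := \G_1 + \cdots + \G_t$ be a $t$-independent map. Fix a column index $j\in[n]$ (to be chosen) and partially specialize $\bm y_{t+1} = \va_j^{(t+1)}$ so that $\G_{t+1}$ becomes $z_{t+1}\ve_j$. Setting $\by^* := A\tilde\G + \vb$ and $\vu := A_{*,j}$, we obtain
\[
f\circ\G\bigl|_{\bm y_{t+1} = \va_j^{(t+1)}} = g(\by^* + z_{t+1}\vu) = \sum_{M\geq 0} z_{t+1}^M\, Q_M(\by^*),
\]
where $Q_M(\xn) := [z^M]\,g(\xn + z\vu)$ is the $M$-th directional slice of $g$ along $\vu$. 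If we can choose $(j, M)$ such that $Q_M$ is nonzero and has sparsity $\leq 2^{t-1}$, then applying the inductive hypothesis to $Q_M$ with the affine transformation $(A,\vb)$ and the $t$-independent map $\tilde\G$ gives $Q_M(\by^*) = Q_M(A\tilde\G + \vb)\neq 0$, so the coefficient of $z_{t+1}^M$ in the specialization is nonzero, whence $f\circ\G\neq 0$.

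The main obstacle is the sub-lemma on sparsity reduction of directional slices. When $\vu$ is a standard basis vector $\ve_{i^*}$ (e.g.\ if $A$'s $j$-th column is $\ve_{i^*}$), the slice simplifies dramatically since only the multi-index $\bm m = M\ve_{i^*}$ contributes to the multinomial expansion of $g(\xn + z\vu)$, giving
\[
Q_M(\xn) = \sum_{k:\,e_{k,i^*}\geq M} c_k\binom{e_{k,i^*}}{M}\xn^{\bm e_k - M\ve_{i^*}},
\]
whose sparsity equals $|\{k : e_{k,i^*}\geq M\}|$; this is easily made to lie in $[1,2^{t-1}]$ by choosing $i^*$ with at least two distinct values among $\{e_{k,i^*}\}_k$ and picking $M$ via a median-type argument. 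For a general invertible $A$, the slice along $\vu = A_{*,j}$ combines contributions from many multi-indices $\bm m$, and one must exploit the spanning property of the columns of $A$ (via a Newton-polytope-type combinatorial argument) to guarantee that \emph{some} column and index yield a sparsity-reducing slice. Handling the vanishing of binomial coefficients in small characteristic will require additional care but does not affect the overall structure of the argument.
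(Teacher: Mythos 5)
Your base case and the decomposition $\G = \tilde\G + \G_{t+1}$ with the Taylor expansion $g(\by^* + z_{t+1}A_{*,j}) = \sum_M z_{t+1}^M Q_M^{(j)}(\by^*)$ are correct, but you have correctly identified the gap yourself: the ``sub-lemma on sparsity reduction of directional slices'' is not proven, and it is not a routine technicality. Slicing $g$ along a \emph{column} $A_{*,j}$ of $A$ produces a linear combination $\sum_i A_{i,j}\partiald{g}{x_i}$ (and higher analogues) whose sparsity is not controlled by the sparsity of $g$; the columns of $A$ are arbitrary basis vectors, so $Q_M^{(j)}$ mixes contributions from many monomials of $g$, and even the simplest slice $Q_{\deg g}^{(j)} = \bar g(A_{*,j})$ can vanish for every $j$ (already for $A = I$ and $g$ multilinear). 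No Newton-polytope argument you sketch is going to make all cases work uniformly, precisely because you are fixing $j$ first and then trying to control the resulting slice.

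The paper's proof sidesteps this entirely by differentiating along a \emph{dual} vector $\vv$ with $A\vv = \ve_i$ (a column of $A^{-1}$, not of $A$): then $\partiald{f}{\vv} = \partiald{g}{x_i}(A\xn+\vb)$ is again an affine orbit of a single partial derivative of $g$, whose sparsity is controlled by how many monomials of $g$ contain $x_i$. The passage from ``$\partiald{f}{\vv}\circ\tilde\G\neq 0$'' back to ``$f\circ\G\neq 0$'' is \cref{lem:indDerivLinear}: since $\partiald{f}{\vv}=\sum_j v_j\partiald{f}{x_j}$, nonvanishing forces some $\partiald{f}{x_j}\circ\tilde\G\neq 0$, and \emph{that} $j$ is the one to which $\G_{t+1}$ is specialized --- so the index $j$ and the index $M=1$ come for free and need no combinatorial control. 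Moreover you are missing a second move entirely: when \emph{no} variable appears in at most half the monomials, derivatives do not help and one must instead \emph{restrict} to a hyperplane $\ell_i=0$ (this is what makes $\restr{g}{x_i=0}$ sparse), which a $1$-independent map can also simulate via \cref{lem:indProjectZero}. Finally, the paper first peels off any variable factors of $g$ (reducing to the case that no $x_i$ divides $g$); without this preprocessing, neither the derivative nor the restriction case guarantees sparsity below $2^{t-1}$.
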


\begin{corollary}\label{cor:hs-sparse}
	For any integers $s,d,n$, there exists an explicit hitting set ${\cal H}\subset \f^n$, of size $|{\cal H}|= (nd)^{O(\log s)}$, such that $\cal H$ hits every nonzero polynomial $f \in \left(\Sigma^{[s]}\Pi^{[d]}\right)^{\gla{n}{\f}}$. If $|\f|\leq  n\cdot d$ then we let $\cH$ be defined over an extension field $\kk$ of $\f$ of size $|\kk| > n\cdot d$.
\end{corollary}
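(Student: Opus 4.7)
The plan is to combine Theorem \ref{thm:PITSINV} with Observation \ref{obsHitSetGen}, after exhibiting a concrete $(t+1)$-independent polynomial map of low degree. First I would unwind the class: any nonzero $f\in(\Sigma^{[s]}\Pi^{[d]})^{\gla{n}{\f}}$ can be written as $f(\xn)=g(A\xn+\vb)$ where $g$ is an $s$-sparse polynomial of degree at most $d$ and $(A,\vb)\in\gla{n}{\f}$. Setting $t\triangleq\lceil\log_2 s\rceil$, the sparsity of $g$ is bounded by $2^t$, so Theorem \ref{thm:PITSINV} guarantees $f\circ\G\neq 0$ for every $(t+1)$-independent polynomial map $\G$ over $\f$ (or over any extension field).

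Next I would build an explicit $(t+1)$-independent map via Lagrange interpolation. Fix distinct elements $\alpha_1,\ldots,\alpha_n\in\kk$, where $\kk=\f$ if $|\f|\geq n$ and otherwise $\kk$ is a polynomial-sized extension of $\f$. Define the $1$-independent map $\G^{(1)}\colon\kk^2\to\kk^n$ by letting its $i$-th coordinate be $z\cdot\prod_{j\neq i}(y-\alpha_j)/(\alpha_i-\alpha_j)$: evaluating at $y=\alpha_i$ selects the $i$-th standard basis vector scaled by $z$, so $\G^{(1)}$ is $1$-independent with a single control variable. Taking $\G$ to be the sum of $t+1$ variable-disjoint copies of $\G^{(1)}$ gives a $(t+1)$-independent polynomial map $\G\colon\kk^{2(t+1)}\to\kk^n$ whose every coordinate has individual degree at most $n-1$ in each input.

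Finally I would apply Observation \ref{obsHitSetGen}. Since every $f$ in the class has total degree at most $d$ and $\G$ has coordinate individual degree at most $r\triangleq n-1$, choosing any $W\subseteq\kk$ of size $dr+1\leq dn$ produces a set $\cH\triangleq\G(W^{2(t+1)})\subseteq\kk^n$ that hits the class, of cardinality
\[
|\cH|\;\leq\;|W|^{2(t+1)}\;\leq\;(dn)^{2(t+1)}\;=\;(dn)^{O(\log s)}.
\]
If $|\f|\leq nd$, then we enlarge $\f$ to an extension $\kk$ of size greater than $nd$, which is polynomial in $n,d$, so that both the Lagrange nodes $\alpha_i$ and the evaluation set $W$ fit inside $\kk$; otherwise we simply take $\kk=\f$.

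The main content is entirely encapsulated in Theorem \ref{thm:PITSINV}; once that is in hand, the present corollary is a routine degree-counting exercise, and the only bookkeeping issues are the logarithmic slackness from rounding $s$ up to a power of two and the possible passage to a polynomial-sized extension field so that enough distinct scalars are available. Neither of these represents a genuine obstacle.
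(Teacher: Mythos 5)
Your proposal is correct and matches the paper's route exactly: apply \cref{thm:PITSINV} to the $(t{+}1)$-independent map, then use \cref{obsHitSetGen} to convert the generator into a hitting set. The explicit Lagrange-based $1$-independent map you build is precisely the construction $\svgen_k$ the paper already gives in \cref{defSVGEN}, so your write-up just unpacks the paper's one-line proof ``follows immediately from \cref{thm:PITSINV} and \cref{obsHitSetGen}.''
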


We next state our result concerning an interpolating set for $\taff{\gla{}{\f}}$.

\begin{restatable}{theorem}{PITsumSDMinv}\label{thm:PITsumSDMinv}
%	Let $n,s_1,s_2,d_1,d_2\in\n$, and let $(A_1,\vec{b_1})\in\GLplus[s_1 d_1][\f]$ and $(A_2,\vec{b_2})\in\GLplus[s_2 d_2][\f]$. Set $f_i(\xn)\triangleq\sdmpol[s_i][d_i](A_i^T\xn+\vec{b_i})$ and $f=f_1+f_2$. 	
	Let $n,s_1,s_2,d_1,d_2\in\n$ be such that $n\geq s_1\cdot d_1,s_2\cdot d_2$. For $i\in\{1,2\}$ let $f_i\in \sdmpol[s_i][d_i]^{\gl{n}{\f}}$, and let $f=f_1-f_2$. 
	If $f\neq 0$, then any uniform \ropinvgen[6]{} $\G$ satisfies $f\circ \G\neq 0$.
\end{restatable}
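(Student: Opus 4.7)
Start by writing each $f_i$ in expanded form. Since the orbit is under $\gl{n}{\f}$ (linear, not affine), we have
\[
f_i(\xn) \;=\; \sdmpol[s_i][d_i](A_i\xn) \;=\; \sum_{a=1}^{s_i}\prod_{b=1}^{d_i} L_{a,b}^{(i)}(\xn),
\]
where $\bigl\{L_{a,b}^{(i)}\bigr\}_{(a,b)\in[s_i]\times[d_i]}$ is a linearly independent collection of linear forms (drawn from the rows of the invertible $A_i$). Crucially, each $f_i$ is homogeneous of degree $d_i$ -- there are no constants to handle, which is what makes the constant $6$ plausible (unlike the affine case which would need extra slots).

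The first reduction uses uniformity of $\G$. If every coordinate of $\G$ is homogeneous of the same degree $D$, then $h\circ\G$ is homogeneous of degree $D\cdot\deg(h)$ for any homogeneous $h$. In particular, $f_1\circ\G$ and $f_2\circ\G$ are homogeneous of degrees $d_1 D$ and $d_2 D$ respectively. Thus, if $d_1\neq d_2$, the two compositions live in different graded pieces and $(f_1-f_2)\circ\G=0$ would force each $f_i\circ\G=0$ separately. Since each $f_i$ is a sparse polynomial of sparsity $s_i$ in the $L$-basis, this can be ruled out by applying \cref{thm:PITSINV} to $f_i$ (a $6$-independent map is a fortiori $(t+1)$-independent for any $t\le 5$; the case $s_i > 32$ would need separate handling via the $\sdmpol$-specific structure rather than generic sparsity). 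So I focus on $d_1=d_2=d$.

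In the main case the heart of the argument is a structural rigidity fact for $\sdmpol$-orbit polynomials: modulo the trivial symmetries of $\sdmpol[s][d]$ (permuting the $s$ summands, permuting the $d$ factors inside each summand, and scaling linear forms so that each product is preserved), the decomposition $f = \sum_a\prod_b L_{a,b}$ is unique. Consequently $f_1 - f_2 \neq 0$ forces a genuine discrepancy between the two sets of rank-$1$ products. The plan is to convert this algebraic discrepancy into a constant-size ``witness'': a collection of at most $6$ of the variable-disjoint blocks of $\G$ whose control variables can be tuned so that the resulting substitution produces a nonzero polynomial in the $z$'s. The role of the constant $6$ is to provide enough slots to isolate one offending summand on each side (contributing roughly $3$ slots each via a careful argument that uses the linear independence of the $L_{a,b}^{(i)}$ inside a single product to guarantee non-cancellation), matching the ``$6$-independent witness'' framework that was used for the single-polynomial hitting set and now has to survive a difference.

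The main obstacle is precisely this detection step with the universal constant $6$. A naive ``restrict to a $6$-dimensional coordinate subspace'' approach is too weak: the example $f_1 = x_1\cdots x_8$ and $f_2 = x_1\cdots x_7(x_8 + \varepsilon x_1)$, both in $\sdmpol[1][8]^{\gl{8}{\f}}$, gives $f_1-f_2 = -\varepsilon\, x_1^2 x_2\cdots x_7$, a single monomial of support $7$ on which no $6$-coordinate restriction is nonzero. One must instead exploit that each coordinate $\G_k$ is itself a nonzero polynomial (a direct consequence of $6$-independence, since some setting of control variables yields $\G_k = z$), together with the fact that the ring $\f[\bm y,\bm z]$ is an integral domain, to rule out vanishing of products like $\G_1^2\G_2\cdots\G_7$. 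Generalising this from a single monomial to the full expansion $\sum_a\prod_b L_{a,b}^{(1)}(\G) - \sum_a\prod_b L_{a,b}^{(2)}(\G)$ -- and showing that the linear independence of the $L^{(i)}$ prevents a fortuitous cancellation after substitution -- is the technical crux of the proof, and is where I expect the detailed work to live.
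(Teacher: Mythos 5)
The proposal correctly identifies the first reduction (uniformity of $\G$ implies $f_1\circ\G$ and $f_2\circ\G$ live in graded pieces of degree $d_i\cdot\deg\G$, forcing $d_1=d_2$), but the resolution you offer for that case does not close: \cref{thm:PITSINV} needs a $(t+1)$-independent map for sparsity $2^t$, so a $6$-independent map only covers $s_i\le 32$, and you flag this yourself. The paper instead observes that $f_i^{[d_i]}$ is itself a homogeneous diagonal-tensor orbit polynomial and invokes \cref{cor:PITSDMINV}, which shows that a $2$-independent map already suffices regardless of $s_i$; this is what makes the degree argument unconditional.

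The larger gap is in the main case. Your plan is to ``isolate one offending summand on each side'' with roughly three slots apiece, and for the single-monomial example $f_1-f_2=-\varepsilon\,x_1^2x_2\cdots x_7$ you propose an integral-domain argument about nonvanishing of $\G_1^2\G_2\cdots\G_7$. That argument is on the wrong track: the obstruction to $f\circ\G\neq 0$ is never that an individual product vanishes (it cannot, in a domain), but that the \emph{sum} over monomials cancels after substitution, and nonvanishing of each term says nothing about the sum. The mechanism the paper actually uses is different and is the crux you are missing. After reducing to $d_1=d_2$, $s_1=s_2$, and $\mathrm{span}\{\ell_{1,i,j}\}=\mathrm{span}\{\ell_{2,i,j}\}$ (the unequal-span and unequal-$s$ subcases are handled by a single first-order directional derivative that annihilates one side), one writes $f_2$ in the $\{\ell_{1,i,j}\}$-monomial basis. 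Either $f_2$'s monomials are contained in those of $f_1$, in which case $f\in\sdmpol^{\gl{n}{\f}}$ and \cref{cor:PITSDMINV} finishes, or there is an $\{\ell_{1,i,j}\}$-monomial of $f_2$ that is not a monomial of $f_1$. In the latter case one picks a dual pair $\vu,\vw$ so that the second directional derivative $\partiald{^2}{\vu\,\partial\vw}$ annihilates $f_1$ entirely (either by hitting a repeated $\ell_{1,i,j}$, using multilinearity of $\sdmpol$, or by hitting two linear forms from distinct product gates) while keeping $f_2$ alive. Then $\partiald{^2f}{\vu\,\partial\vw}=\partiald{^2f_2}{\vu\,\partial\vw}\neq 0$, and one concludes via \cref{PITderivSDMINV} (a $(k+2)$-independent map hits $k$-th directional derivatives of orbit polynomials, here $k=2$) combined with \cref{lem:indDerivLinear} (two more slots to simulate the two derivatives), giving exactly $4+2=6$. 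This dual-direction second-derivative ``killing'' step is what your proposal would need to supply; the uniqueness-of-decomposition intuition you invoke is the right instinct but does not by itself yield the witness.
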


Finally we note that the randomized reconstruction algorithm of Kayal and Saha \cite{kayal2019reconstruction}, which works for (as it is termed in their paper) ``non-degenerate'' homogeneous depth-$3$ circuits, works for $\taff{\gla{}{\f}}$. This follows from the observation that $\taff{\gla{}{\f}}$ circuits are always non-degenerate.

\begin{restatable}[special case of Theorem 1 of \cite{kayal2019reconstruction}]{theorem}{reconstructSDM}
	\label{thm:reconstructSDM}
	Let $n,d,s\in\n$, $n\geq(3d)^2$ and $s\leq(\frac{n}{3d})^\frac{d}{3}$. Let $\f$ be a field of characteristic zero or greater than $ds^2$. There is a randomized $\text{poly}(n,d,s)=\text{poly}(n,s)$ time algorithm which takes as input black-box access to a polynomial $f$ that is computable by a $\sdmpol[s][d]^{\gla{n}{\f}}$ circuit, and outputs a $\sdmpol[s][d]^{\gla{n}{\f}}$ circuit $\Phi$ computing $f$ with high probability. Furthermore, $\Phi$ is unique up to $\TPS[s,d][\f]$-equivalence (see \cref{sdmSymGroup}).
\end{restatable}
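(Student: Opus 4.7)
The plan is to deduce this from Theorem~1 of \cite{kayal2019reconstruction} by checking that every polynomial in $\sdmpol[s][d]^{\gla{n}{\f}}$ satisfies the \emph{non-degeneracy} hypotheses of that theorem, so that their reconstruction algorithm applies verbatim. Concretely, the Kayal--Saha algorithm takes as input a homogeneous $\Sigma\Pi\Sigma$ circuit $\sum_{i=1}^{s}\prod_{j=1}^{d}\ell_{i,j}(\xn)$ whose Hessian and related matrices satisfy certain rank/genericity conditions (essentially, the $sd$ forms $\ell_{i,j}$ should be linearly independent and the spaces $\mathrm{span}\{\ell_{i,j}\}_j$ should be in sufficiently general position).

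First I would record the structural observation: if $f = \sdmpol[s][d](A\xn+\vb)$ with $A\in\GLm{n}{\f}$, then $f = \sum_{i=1}^{s}\prod_{j=1}^{d}\ell_{i,j}(\xn)$ where $\ell_{i,j}$ is the $((i-1)d+j)$-th coordinate of $A\xn+\vb$. Because $A$ is invertible, the $sd$ linear forms $\{\ell_{i,j}\}$ are linearly independent. This is exactly the strongest possible form of non-degeneracy: the $s$ subspaces $V_i\triangleq\mathrm{span}\{\ell_{i,1},\ldots,\ell_{i,d}\}$ are pairwise intersecting trivially and their sum has dimension $sd$. In particular the specific rank conditions that \cite{kayal2019reconstruction} imposes on the Hessian of $f$ and on its higher-order partial derivatives (which for this class essentially reduce to non-singularity of a block-diagonal Gram matrix in the $\ell_{i,j}$ basis) hold automatically, and they hold in the parameter regime $n\geq(3d)^2$, $s\leq (n/(3d))^{d/3}$ stated in the theorem. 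The characteristic assumption (zero or $>ds^2$) is inherited directly from \cite{kayal2019reconstruction}, and is needed there to ensure invertibility of certain constant matrices whose entries are small integers depending on $d$ and $s$.

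With non-degeneracy in hand, I would invoke Theorem~1 of \cite{kayal2019reconstruction} as a black box to obtain, in randomized $\mathrm{poly}(n,d,s)$ time, a non-degenerate homogeneous $\Sigma^{[s]}\Pi^{[d]}\Sigma$ circuit $\Phi$ computing $f$; by the structural observation above this circuit is automatically of the form $\sdmpol[s][d](A'\xn+\vb')$ for some $(A',\vb')\in\gla{n}{\f}$ (the reconstruction recovers a representation as a sum of products of linearly independent linear forms, which is the same thing up to a relabeling). Finally, uniqueness up to $\TPS[s,d][\f]$-equivalence follows from the fact that $\TPS[s,d][\f]$ is precisely the stabilizer of $\sdmpol[s][d]$ inside $\gla{sd}{\f}$ (it is generated by permutations of the $s$ products, permutations of the $d$ variables within each product, and scalings of individual variables that multiply to $1$ within each monomial), so any two representations of $f$ as a $\sdmpol[s][d]^{\gla{n}{\f}}$ circuit differ by an element of $\TPS[s,d][\f]$.

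The main obstacle, such as it is, is not the core reduction but a bookkeeping check: matching the exact formulation of ``non-degenerate'' used in \cite{kayal2019reconstruction} against the structural properties of $\sdmpol[s][d]^{\gla{n}{\f}}$, and verifying that their parameter constraints translate without loss into the regime $n\geq(3d)^2$, $s\leq(n/(3d))^{d/3}$. Once this bookkeeping is done, the theorem is immediate.
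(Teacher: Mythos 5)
Your high-level plan — reduce to Theorem~1 of \cite{kayal2019reconstruction}, verify non-degeneracy, conclude via the symmetry group — is the same as the paper's. But there are two genuine gaps in your execution, both of which you implicitly acknowledge as ``bookkeeping'' but which are in fact where the content of the proof lives.

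First, your description of the non-degeneracy condition is not what \cite{kayal2019reconstruction} actually requires. Their condition (stated in \cref{nonDegenKayal}) is about the dimension of the \emph{partial-derivative space}: $\dim(\partial^k f)=s\binom{d}{k}$ for a specific $k$, together with a second requirement that $\dim\bigl(\partial^k(\sum_{j\neq i}f_j)\bmod \mathrm{span}\{\ell_{i,r_1},\ldots,\ell_{i,r_{2k+1}}\}\bigr)=(s-1)\binom{d}{k}$. This is not ``non-singularity of a block-diagonal Gram matrix'' or a Hessian rank condition, and it does not obviously follow from linear independence of the $\ell_{i,j}$ alone — one has to actually compute $\dim(\partial^k\sdmpol)$ and then invoke the fact that partial-derivative spaces transform equivariantly under $\GLm{n}{\f}$ (the paper records this as \cref{claimSpan}). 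Your argument asserts the conclusion without this check. Second, you assert that the output of the Kayal--Saha algorithm ``is automatically'' a $\sdmpol[s][d]^{\gla{n}{\f}}$ circuit, but their theorem only promises a non-degenerate homogeneous $\Sigma^{[s]}\Pi^{[d]}\Sigma$ circuit $\Phi$ computing $f$; it does \emph{not} promise that the $sd$ bottom linear forms of $\Phi$ are linearly independent. The paper closes this gap with a dimension argument: since $f$ is $\GLm{n}{\f}$-equivalent to $\sdmpol[s][d]$, $\partial^{d-1}f$ has dimension $sd$, and since $\partial^{d-1}\Phi\subseteq\mathrm{span}\{\ell_{1,1},\ldots,\ell_{s,d}\}$, the $\ell_{i,j}$ must be linearly independent. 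You also don't address the homogenization step needed to pass from the affine orbit $\gla{n}{\f}$ (used in the statement) to the homogeneous setting required by Kayal--Saha; the paper handles this by querying $f^h$. The uniqueness argument via $\TPS[s,d][\f]$ being the stabilizer of $\sdmpol[s][d]$ is correct and matches the paper's use of \cref{sdmSymmetry}.
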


\begin{remark}\label{rem:zeroErrRecon}
	As in \cref{rem:ze-anf}, \cref{thm:PITsumSDMinv} enables us to convert the reconstruction algorithm of \cref{thm:reconstructSDM} to a zero-error algorithm, with expected polynomial running time. Given black-box access to some $f_1\in\taff{\gla{}{\f}}$, we define $f_2$ to be the output of the algorithm of \cref{thm:reconstructSDM} on input $f_1$, and then verify $f_1\equiv f_2$ by applying \cref{thm:PITsumSDMinv} to $f=f_1-f_2$.
\end{remark}

\subsubsection{Robust hitting sets?}

As we showed in \cref{tm:robust-hits}, if a hitting set $\mathcal H$ for a circuit class $\mathcal C$ is \emph{robust}, then $\mathcal H$ hits $\overline{\mathcal C}$ as well. It is thus natural to ask whether our interpolating sets are already robust. 
Our next result shows that the property of being a $t$-independent map, which was sufficient for the constructions in  \cref{thmhitcont,thm:cont-int,thm:PITROPINV,thm:pitSumOfRoanfThm,thm:PITSINV,thm:PITsumSDMinv} (for the appropriate values of $t$), by itself is not sufficient for obtaining robust hitting sets. We prove this by constructing an independent polynomial map which gives rise to a provably non-robust hitting set. Our construction is the same as the one given by Forbes et al. \cite{DBLP:conf/coco/ForbesSTW16} (Construction 6.3 in the full version).

\begin{theorem}\label{thm:tIndNotDense}
	Let $\f$ be of characteristic zero. For every $t$, there exists a uniform \ropinvgen[t]{} $\G$ and a nonzero polynomial $f$ such that $f\circ \G\equiv 0$, and $f$ can be computed by a $\dthree$ formula of size $t^{O(\sqrt t)}$. If $\f$ has a positive characteristic then $f$ can be computed by a $\dthree$ formula of size $t^{t}$, or by a general formula of size $t^{O(\log t)}$. Furthermore, for a certain arrangement of the variables in a $\sqrt{n}\times \sqrt{n}$ matrix, $f$ can be taken to be the determinant of any $(t+1)\times (t+1)$ minor.
\end{theorem}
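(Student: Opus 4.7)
The plan is to replicate Construction~6.3 of Forbes et al.~\cite{DBLP:conf/coco/ForbesSTW16}. The idea is to pick $n=m^2$ with $m\geq t+1$, identify the $n$ output coordinates of $\G$ with the entries of an $m\times m$ symbolic matrix, and engineer each of the $t$ summands of $\G$ to produce a rank-one matrix. The image of $\G$ then lies in the locus of $m\times m$ matrices of rank at most $t$, which is cut out by the ideal of all $(t+1)\times(t+1)$ minors, so we may let $f$ be any such minor of the symbolic matrix, in particular any $(t+1)\times(t+1)$ minor, as promised.

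Concretely, identify $[n]$ with $[m]\times[m]$; for each $i\in[t]$ introduce fresh control variables $\bm y^{(i)}=(y^{(i)}_1,\ldots,y^{(i)}_m)$ and $\bm y'^{(i)}=(y'^{(i)}_1,\ldots,y'^{(i)}_m)$, together with a fresh scalar $z_i$, and define the $1$-independent map $\G_i$ coordinate-wise by
$$ \bigl(\G_i(\bm y^{(i)},\bm y'^{(i)},z_i)\bigr)_{(a,b)} \;\triangleq\; z_i\cdot y^{(i)}_a\cdot y'^{(i)}_b \qquad (a,b\in[m]). $$
To realize $z_i\cdot \bm e_{(a,b)}$ one sets $y^{(i)}_a=y'^{(i)}_b=1$ and all remaining control variables of $\G_i$ to $0$; thus $\G_i$ is $1$-independent in the sense of \cref{def:indPolyMap}, and since every coordinate is a homogeneous cubic it is also uniform. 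Declare $\G\triangleq\sum_{i=1}^t\G_i$ on disjoint variable sets; this is a uniform $t$-independent polynomial map whose output, viewed as an $m\times m$ matrix, equals $\sum_{i=1}^t z_i\cdot\bm y^{(i)}(\bm y'^{(i)})^T$, a sum of $t$ rank-one matrices, and hence has rank at most $t$ over the fraction field.

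Let $X=(x_{(a,b)})_{a,b\in[m]}$ be the symbolic variable matrix and take $f$ to be the determinant of any $(t+1)\times(t+1)$ submatrix of $X$. Then $f$ is a nonzero polynomial which vanishes identically on every $m\times m$ matrix of rank at most $t$, so $f\circ\G\equiv 0$. For the size bounds on $f$ we invoke known depth reductions for the determinant: in characteristic zero, the Gupta--Kamath--Kayal--Saptharishi depth-$3$ reduction (refining Agrawal--Vinay) gives a $\dthree$ formula for the $(t+1)\times(t+1)$ determinant of size $(t+1)^{O(\sqrt{t+1})}=t^{O(\sqrt t)}$; in positive characteristic, the Leibniz expansion $\det=\sum_{\sigma}\operatorname{sgn}(\sigma)\prod_i x_{i,\sigma(i)}$ is already a $\dthree$ formula of size $O((t+1)\cdot(t+1)!)\leq t^{t}$, while the Valiant--Skyum--Berkowitz--Rackoff conversion applied to the standard polynomial-size algebraic branching program for $\det$ yields a general formula of size $t^{O(\log t)}$ in any characteristic. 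The argument is essentially bookkeeping: once the rank-one form of each $\G_i$ is in place, the containment in the rank-$\leq t$ locus and the vanishing of $(t+1)$-minors on it are classical linear algebra, and the formula-size claims are obtained by quoting the correct depth reductions for the determinant in each characteristic. The only place one has to be even slightly careful is in matching our rank-one tensor factor $z_i\bm y^{(i)}(\bm y'^{(i)})^T$ to the precise uniform-$1$-independent template of \cref{def:indPolyMap}, which is accomplished by the explicit $0/1$ control-variable assignment above.
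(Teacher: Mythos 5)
Your proposal is correct and follows the same overall strategy as the paper (and as Construction~6.3 of \cite{DBLP:conf/coco/ForbesSTW16}, which the paper explicitly credits): build a uniform $1$-independent map whose output is a rank-one $\sqrt n\times\sqrt n$ matrix, take a variable-disjoint sum of $t$ copies so that the image lies in the rank-$\leq t$ locus, let $f$ be a $(t+1)\times(t+1)$ minor (which vanishes on that locus), and quote the standard depth reductions for the determinant for the size bounds. The one place you deviate is in how the rank-one $1$-independent map is realized: the paper uses a single control variable $y$ with univariate Lagrange interpolation polynomials $R_k(y)$, $C_k(y)$ of degree $n-1$ acting as row/column selectors, together with a homogenizing variable $w$, giving coordinates of degree $2n-1$; you instead introduce $2m$ control variables per copy acting as direct $0/1$ indicators for the row and column, so each coordinate is simply the degree-$3$ monomial $z_i\,y^{(i)}_a\,y'^{(i)}_b$ and uniformity is immediate. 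Since the definition of a $k$-independent map places no bound on the number of control variables, your variant is equally valid and arguably cleaner, trading the paper's smaller ambient parameter count for the elimination of the interpolation and homogenization bookkeeping; both constructions yield exactly the rank-$\leq t$ containment needed.
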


\subsection{Polynomial Identity Testing}

So far we discussed our work from the perspective of dense subclasses of classes for which no strong lower bounds are known.  
Here we put our work in the context of the polynomial identity testing problem.

Polynomial Identity Testing (PIT for short) is the problem of designing efficient deterministic algorithms for deciding whether a given arithmetic circuit computes the identically zero polynomial. PIT has many applications, e.g. deciding primality \cite{Agrawal02primesis}, finding a perfect matching in parallel \cite{DBLP:journals/cacm/FennerGT19,DBLP:conf/focs/SvenssonT17} etc., and strong connection to circuit lower bounds  \cite{kabanets2004derandomizing,dvir2010hardness,DBLP:conf/coco/ChouKS18,DBLP:conf/focs/Guo0SS19}. See \cite{DBLP:journals/fttcs/ShpilkaY10,DBLP:journals/eatcs/Saxena09,Saxena-II} for surveys on PIT and \cite{DBLP:journals/eatcs/0001S19} for a survey of algebraic hardness-randomness tradeoffs.

PIT is considered both in the white-box model, in which we get access to the graph of computation of the circuit, and in the black-box model in which we only get query access to the polynomial computed by the circuit. Clearly, a deterministic PIT algorithm in the black-box model is equivalent to  a hitting set for the circuit class. In this work we only focus on the black-box model.

\paragraph{The continuant polynomial and algebraic branching programs:}
%As mentioned earlier, Bringmann et al. studied orbits of the continuant polynomial, $\ccont{\f}$, for fields of characteristic different than $2$, and prove that it is dense in $\VPe$ \cite{bringmann2018algebraic}. 
%Similarly to the classes that we consider, the orbits of the continuant polynomial form a strict subclass of $\VPe$. Specifically,  A result of Allender and Wang implies that the polynomial $x_1\cdot y_1 + \cdots + x_8 \cdot y_8$ is not in $\ccont{\f}$ \cite{DBLP:journals/cc/AllenderW16}.

The continuant polynomial is trivially computed by width-$2$ \emph{Algebraic Branching Programs} (ABPs). Recall that an ABP of depth-$d$ and width-$w$ computes polynomials of the form $\trace\bra{M_1(\xn)\cdot \ldots\cdot M_d(\xn)}$, where each $M_i$ is a $w\times w$ matrix whose entries contain variables or field elements. Ben-Or and Cleve proved that every polynomial in $\VPe$ can be computed by a width-$3$ ABP of polynomial-size \cite{DBLP:journals/siamcomp/Ben-OrC92}.

Raz and Shpilka gave the first polynomial-time white-box PIT algorithm for read-once ABPs (ABPs in which every variable can appear in at most one matrix) \cite{DBLP:journals/cc/RazS05}.
Forbes, Saptharishi and Shpilka gave the first quasipolynomial-sized hitting set for read-once ABPs (ROABPs) \cite{forbes2013pseudorandomness}. This  result was slightly improved in \cite{DBLP:conf/approx/GuoG20} for the case where the width of the ROABP is small.
Anderson et al. gave a subexponential hitting set for read-$k$ ABPs \cite{DBLP:journals/toct/AndersonFSSV18}.
We note that none of these models is strong enough to contain the orbit $\ccont{\f}$.
For ABPs that are not constant-read we do not have sub-exponential time PIT algorithms. Thus, the following is an interesting open problem (recall that by the result of Ben-Or and Cleve a PIT algorithm for width-$3$ ABPs works for  $\VPe$ as well).

\begin{problem}
	Give a sub-exponential time PIT algorithm for ABPs of width-$2$.
\end{problem}

Although we do not have a PIT algorithm for general branching programs, in \cite{kayal2018reconstruction}  Kayal et al. gave an average-case reconstruction algorithm for low width ABPs.  Kayal, Nair and Saha obtained a significantly better algorithm in  \cite{DBLP:journals/cc/KayalNS19}. Their algorithm succeeds w.h.p, provided the ABP satisfies four non-degeneracy conditions (these conditions are defined in Section 4.3 of \cite{DBLP:journals/cc/KayalNS19}). However, the ABP computing the continuant polynomial does not satisfy the non-degeneracy conditions that are required for their algorithm to work. Thus, \cref{thm:cont-recon} does not follow from  \cite{DBLP:journals/cc/KayalNS19}.

To the best of our knowledge,  $\ccont{\f}$ is the first natural\footnote{It is hard to define what a natural class means, but, for example the set of all polynomials in $\VPe$ with a nonzero free term has a trivial hitting set, but is not a ``computational'' subclass.} computational class that is dense in $\VPe$ for which a polynomial (or even sub-exponential)-sized interpolating set (or a hitting set) is known.

\paragraph{Read-Once formulas:}

Hitting sets for read-once formulas were first constructed by Volkovich and Shpilka \cite{shpilka2015read}, who gave quasipolynomial-sized hitting set for the model, as well as a deterministic reconstruction algorithm of the same running time (earlier randomized reconstruction algorithms were known  \cite{DBLP:journals/siamcomp/BshoutyHH95,DBLP:journals/jcss/BshoutyB98}). Minahan and Volkovich obtained a polynomial-sized hitting set for the class, which led to a similar improvement in the running time of the reconstruction algorithm \cite{ROF}. Anderson, van Melkebeek and Volkovich constructed a hitting set of  size $n^{k^{O(k)}+O(k \log n) }$ for read-$k$ formulas \cite{DBLP:journals/cc/AndersonMV15}.
All these results work in a slightly stronger model in which we allow to label leaves with univariate polynomials, of polynomial degree, such that every variable appears in at most one polynomial, or with sparse polynomials on disjoint sets of variables.

The read-once models that we consider here, $\anf{}{\f}$ and $\ROF^{\gl{}{\f}}$, can be viewed as read-once formulas composed with a layer of addition gates with the restriction that the bottom layer of additions computes linearly independent linear functions. We note that these models do not fall into any of the previously studied models, as a variable can appear in all the linear functions.

As is the case with $\ccont{\f}$, our hitting sets for $\anf{}{\f}$ and $\ROF^{\gl{}{\f}}$  are the first sub-exponential-sized  hitting sets for  natural dense subclasses of $\VPe$.

\paragraph{Small depth circuits:}

The class of $\dtwo$ circuits was considered in many works, see e.g. \cite{DBLP:conf/stoc/Ben-OrT88,klivans2001randomness} and polynomial-sized hitting sets were constructed. 
The class of $\dthree$ circuits also received a lot of attention but with lesser success. Dvir and Shpilka \cite{dvir2007locally} and Karnin and Shpilka \cite{karnin2008black} gave the first quasipolynomial-time white-box and black-box PIT algorithms for $\Sigma^{[k]}\Pi^{[d]}\Sigma$ circuits, respectively. Currently, the best result is by Saxena and Seshadhri who gave a hitting set of size $(nd)^{O(k)}$ for such circuits  \cite{saxena2012blackbox}. In  \cite{DBLP:journals/cc/OliveiraSV16}
a subexponential-size hitting set for \emph{multilinear} $\dthree$ circuits was given.
In \cite{DBLP:journals/siamcomp/AgrawalSSS16}, Agrawal et al. gave a hitting set of size 
$n^{O(1)}\cdot (kd)^{O(r)}$ for
$\Sigma^{[k]}\Pi^{[d]}\Sigma$ circuits, where $r$ is an upper bound on the \emph{algebraic rank} of the multiplication gates in the circuit. Thus, known quasipolynomial-size hitting sets for subclasses of $\dthree$ circuits are known when the fan-in of the top gate is poly-logarithmic, or when the algebraic rank of the set of multiplication gates is poly-logarithmic. In contrast,  polynomials in $\taff{\gla{n}{\f}}$ and $\spinv{\f}$, when viewed as $\dthree$ circuits, 
can have polynomially many multiplication gates and their algebraic rank can be $n$. On the other hand, the corresponding $\dthree$ circuits are such that  the \emph{different} linear functions that are computed at their bottom layer  are linearly independent (when we view linear functions that are a constant multiple of  each other as the same function). Thus, our \cref{cor:hs-sparse} provides a hitting set for a new  subclass of $\dthree$ circuits.

To the best of our knowledge, our results for $\taff{\gla{}{\f}}$ and $\spinv{\f}$ give the first sub-exponential size hitting sets for natural subclasses that are dense in $\dthree$.

\subsection{More related work}

Approximations in algebraic complexity were first studied by Bini et al. in the context of algorithms for matrix multiplication \cite{BINI1979234}. For more on the history of border rank in the context of matrix multiplication see notes of chapter 15 in \cite{burgisser2013algebraic}. More recently, influenced by the GCT program, a lot of research was invested in trying to find polynomials characterizing tensors of small rank. See \cite{landsberg_2017} for a discussion on this approach. More recently, Kumar proved that \emph{every} polynomial over $\C$ can be approximated by a $\Sigma^{[2]}\Pi\Sigma$ circuit (of exponential degree) \cite{DBLP:journals/toct/Kumar20}.

Very little is known about the closure of circuit classes. Forbes observed that the class of ROABPs is closed \cite{ForbesWACT}. I.e. $\text{ROABP}=\overline{\text{ROABP}}$. We are not aware of other collapses or separation between general ``natural'' classes and their closures.

Beside the reconstruction algorithms mentioned earlier,  reconstruction algorithms are known for $\Sigma\Pi$ circuits \cite{DBLP:conf/stoc/Ben-OrT88,klivans2001randomness};  for random depth three \emph{powering} circuits \cite{AffProj};
for set-multilinear $\dthree$ and ROABPs \cite{beimel2000learning,DBLP:journals/toc/KlivansS06}; for $\dthree$ circuits with bounded top fan-in   \cite{shpilka2009interpolation,karnin2009reconstruction,DBLP:conf/coco/Sinha16}; and for multilinear depth-$4$ circuits with a constant top fan-in
\cite{gupta2012reconstruction,DBLP:conf/soda/BhargavaSV20}. 

In general, we do not expect the reconstruction problem to be solvable efficiently, as the problem of finding the minimal circuit computing a given polynomial  is a notoriously hard problem. A  detailed discussion on the hardness of reconstruction can be found in \cite{DBLP:journals/cc/KayalNS19}.

\subsection{Proof technique}

Our proofs are based on the following simple yet important, and as far as we know novel, observations concerning $k$-independent polynomial maps. Specifically, our proofs are based on the following two claims:

\begin{enumerate}
	\item If we have a hitting-set generator $H$ for nonzero polynomials of the form $\partiald{f}{x_1}$, for $f\in \cC$, and if $\G$ is a $1$-independent map then $H+\G$ hits every nonzero $f\in \cC$. This is proved in \cref{lem:indDerivLinear}.
	
	\item Similarly, we prove that if we have a hitting-set generator $H$ for nonzero polynomials of the form $\restr{f}{\ell=0}(A\xn+\vb)$, for $f\in \cC$, a linear function $\ell$, and an invertible affine transformation $(A,\vb)$, and if $\G$ is a $1$-independent map then $H+\G$ hits every nonzero $f\in \cC$. This follows from \cref{lem:indProjectZero}.
\end{enumerate}
	
By applying these claims $k+r$ times we get that composition with a $(k+r)$-independent map allows to reduce the problem of hitting a class $\cC$ to hitting polynomials of the form $\restr{\partiald{^k f}{x_{i_1}\partial x_{i_2}\cdots\partial x_{i_k}}}{\ell_1=\ldots=\ell_r=0}$. Thus, if we could prove that for a class $\cC$, there is such a sequence of derivatives and restrictions that simplifies the polynomials in it to a degree that they can be easily hit by some map $H$, then we conclude that $H+\G_{k+r}$, for a $(k+r)$-independent map $\G_{k+r}$, is a hitting set generator for $\cC$. 

It seems that all that is left to do is prove that for each of the orbits that we consider in Section~\ref{secResults} that is such small $k$ and $r$. However, a potential problem is that a partial derivative of the polynomial $g(\xn)=f(A\xn +\vb)$ gives $\partiald{g}{x_1}=\sum_{i=1}^{n} \partiald{f}{y_i}\cdot \partiald{\ell_i}{x_1}$, where $\ell_i$ is the $i$th coordinate of $A\xn+\vb$. Thus, it is no longer a derivative composed with an affine transformation but rather a sum of such derivatives, which could lead to polynomials outside of our class. For example, it is not hard ot prove that if we compose the ROF $y_1\cdot y_2 \cdot y_3$ with $(x_1,x_1+x_2,x_1+x_3)$ and then take a derivative according to $x_1$, then the resulting polynomial, $\partiald{\bra{x_1\cdot (x_1+x_2)\cdot (x_1+x_3)}}{x_1}=3x_1^2 + 2x_1\cdot (x_2+x_3) + x_2\cdot x_3$, is not in the orbit of any ROF. The solution to this problem is to take a \emph{directional derivative} in a direction coming from a \emph{dual basis}. For example if $\ell_i(\vv_j)=\delta_{i,j}$ then $\partiald{g}{\vv_1}=\partiald{f}{x_1}\bra{A\xn+\vb}$ (see \cref{lem:dualder}). Now, comes another important observation: If $H$ is a hitting-set generator  for nonzero polynomials of the form $\partiald{f}{\vv}$, for $f\in \cC$ and a direction $\vv$, and if $\G$ is a $1$-independent map then $H+\G$ hits every nonzero $f\in \cC$. The point is that if $\partiald{f}{\vv}\circ H \neq 0$ then for some $i$, $\partiald{f}{x_i}\circ H \neq 0$ and the claim follows from the first claim above. Thus, composition with $(k+r)$-independent maps allows us to reduce the problem of  hitting a class $\cC$ to finding a generator for polynomials that are obtained as a restriction to a subspace of co-dimension $r$ of a directional partial derivative of  order $k$ of polynomials in $\cC$.

Let us demonstrate this idea for the case of orbits of sparse polynomials. I.e. to polynomials of the form $g(\xn)=f(A\xn+\vb)$, where the number of monomials in $f$ is at most $2^t$. It is not hard to see that there is a variable $x_i$ such that if we consider $\restr{f}{x_i=0}$ and $\partiald{f}{x_i}$ then one of these polynomials has at most $2^{t-1}$ monomials.\footnote{This is not exactly accurate -- it only holds if $f$ is not divisible by some variable $x_i$. However, the case where there is a monomial dividing $f$ is also quite easy to handle as it is enough to hit the polynomial obtained after dividing by that monomial (since a composition with a $1$-independent map keeps any nonzero linear function nonzero).} Thus, after a a sequence of at most $t$ partial derivatives and restrictions, we get to a polynomial with only one monomial that we can easily hit.
Hence after at most $t$ directional derivatives and restrictions to a subspace, we get that $g$ is a product of linear forms, which we can easily hit. This proves that any $(t+1)$-independent map hits such nonzero polynomials $g$.

To obtain interpolating sets for our classes (and also a reconstruction algorithm for the orbit of the continuant polynomial), we prove that if two polynomials in the orbit, of any of the classes that we consider, are different, then there is a sequence of a few (directional) partial derivatives and restrictions that makes one of them zero while keeping the other nonzero. Using this and the ideas from above we construct our interpolating sets.  

\subsection{Discussion}
\label{sec:discuss}

%Hardness of reconstruction? (Hastad, also Klivans-Sherstov?)

As \cref{thm:tIndNotDense} shows, our hitting sets are not necessarily robust. It is thus an outstanding open problem to find a way to convert a hitting set to a robust one (recall Problem~\ref{prob:robust}).

The following toy example demonstrates that converting a hitting set for a class $\cC$ to a robust hitting set for $\cC$, cannot be done in a black-box manner and one has to use information about $\cC$ for that: let $\cC(\f)$ be the class of all polynomials with non-zero free term. A trivial hitting set for $\cC$ would simply be the singleton set $\mathcal H=\{\vzero\}$. On the other hand, it is clear that $\overline{\mathcal C}=\f[\xn]$, so making $\cH$ robust would yield a hitting set for \emph{all} polynomials. Note, however, that this is not a ``computational class.''

Another potential approach for obtaining robust hitting sets follows from the observation that the set of queries made by a non-adaptive 
%\doricmnt{non-adaptive? for reconstruction this property is not implicit} 
deterministic black-box reconstruction algorithm, $\cal A$, for $\cC$, which is \emph{continuous} at $0$ (i.e. at the identically zero polynomial) is a robust hitting set for $\cC$. The reason is, that if $0\neq f\in\overline{\mathcal C}$ and $\{f_k\}_{k=1}^\infty\subseteq\mathcal C$ converges to $f$, then for large enough $k$: $\norm{f_k}_2\geq\frac{1}{2}\norm{f}_2>0$. As the $f_k$ sequence converges and polynomial evaluation is continuous (and their evaluation vectors are bounded), the sequence $\vv_k=\restr{f_k}{\mathcal H}\subseteq\C^{|\mathcal H|}$ must also converge to some vector $\vv=\restr{f}{\mathcal H}\in \C^{|\mathcal H|}$. If $\vv=\vzero$ then the continuity of $\mathcal A$ at $\vzero$ implies the coefficients of  the polynomials $f_k(\xn)$ must also converge to zero, as $\mathcal A(\vzero)= 0$. This would contradict $\norm{f_k}_2\geq\frac{1}{2}\norm{f}_2>0$ for large enough $k$, so $\vv\neq \vzero$ and thus $\mathcal H$ hits $\overline{\mathcal C}$.

Thus, an interesting challenge is to derandomize the reconstruction algorithms given in \cref{thm:cont-recon,thm:reconstructROANF,thm:reconstructSDM}, hoping that the resulting algorithms are continuous at $\vzero$.
We note however, that currently we do not even have efficient deterministic root-finding algorithms over $\C$. It is also known that in general, finding the minimal circuit for a polynomial can be very difficult. E.g., in \cite{DBLP:journals/jal/Hastad90,DBLP:conf/approx/Swernofsky18} it was shown that the question of computing, or even approximating, tensor rank, for degree $3$ tensors, is NP hard, over any field.

%As a practical direction, we point out that derandomizing the reconstruction algorithm for \roanf{} will yield a PIT algorithm for \VPe{}, and derandomizing the reconstruction algorithm for $\sdminv$ yields PIT for \dthree{}; provided the resulting algorithm is non-adaptive, and the resulting reconstruction map is continuous at $\vzero$.

\begin{remark}
	In \cref{thm:PITsumSDMinv}, we have seen that any uniform \ropinvgen[O(\log(sn))]{} $\G$ is an interpolating set generator for $\sdminv$; i.e, $\G$ induces an interpolating set $\mathcal H$ for $\sdminv$. On the other hand, in \cref{thm:tIndNotDense}, we constructed such a map $\G$, with the additional property that $\G$ is \emph{not} a hitting set generator for \dthree{} circuits. In particular, this implies that the induced (non-efficient) reconstruction map $\mathcal A$ (that takes $f(\cH)$ and returns a circuit computing $f$) is not continuous at $\vzero$.
\end{remark}

%
%Another interesting challenge is to construct a hitting set for the orbits of ROABPs. As mentioned earlier, the continuant polynomial is a special case of a width-$2$ ROABP.  In \cite{forbes2013pseudorandomness,DBLP:conf/approx/GuoG20} quasipolynomial-size hitting sets for ROABPs were constructed. However, currently we do not have a construction of a hitting set for the orbit of width-$2$ ROABPs. 
%
%\begin{problem}
%	Construct a hitting set for polynomials of the form $\trace\bra{M_1\bra{\ell_1(\xn)}\cdot \ldots\cdot M_d\bra{\ell_d(\xn)}}$, where each $M_i$ is a $w\times w$ matrix  and the $\ell_i$s are linearly independent (even the case $w=2$ is interesting).
%\end{problem}
%

We conclude this section with a somewhat vague question. 

\begin{problem}
	Find a ``computational'' class of polynomials $\mathcal C$ with a known hitting set $\mathcal H$, such that $\overline{\mathcal C}\neq\mathcal C$, and convert $\mathcal H$ to a robust hitting set. 
\end{problem}

We note that the closure of $\Sigma\bigwedge\Sigma$ circuits (i.e. circuits computing polynomials of the form $\sum_i \ell_i(\xn)^d$, for linear functions $\ell_i$) is contained in the class of commutative read-once algebraic branching programs (see \cite{forbes2013pseudorandomness}). Thus, the hitting set for the latter class gives a robust hitting set for the former \cite{forbes2013pseudorandomness}. However, we seek an example in which there is an ``interesting'' conversion of a hitting set to a robust one.

\subsection{Organization} The paper is organized as follows.  Section~\ref{sec:prelim} contains some more basic notations and definitions as well as characterization of the groups of symmetries of $\croanf$ and of $\sdmpol$. In Section~\ref{sec:k-ind} we give properties and constructions of $k$-independent polynomial maps and prove \cref{thm:tIndNotDense}. In Section~\ref{sec:cont} we study the continuant polynomial and prove \cref{thmhitcont,thm:cont-int,thm:cont-recon}. In Section~\ref{sec:rof} we study orbits of ROFs and ROANFs and prove \cref{ROANFisDense,thm:PITROPINV,thm:pitSumOfRoanfThm,thm:reconstructROANF}. Section~\ref{sec:sps} contains our results for subclasses of $\dthree$ circuits (\cref{thm:denseInDepthThree,thm:PITsumSDMinv,thm:PITSINV,thm:reconstructSDM}).
The appendix contains missing definitions that are required for explaining the reconstruction algorithm of 
\cite{gupta2014random}.

\section{Preliminaries}\label{sec:prelim}

\subsection{Notation}\label{sec:not}

For $k\in\n$, we denote $[k]\triangleq \{1,2,3,\ldots,k\}$ and $[k]_0\triangleq \{0,1,2,\ldots,k-1\}$. We use boldface lowercase letters to denote tuples of variables or vectors, as in $\xn=(x_1,\ldots,x_n)$, $\va =(a_1,\ldots,a_m)$, when the dimension is clear from the context. For any two elements $i,j$ coming from some set $S$ (usually $i$ and $j$ will be numbers), $\delta_{i,j}$ equals $1$ when $i=j$ and $0$ otherwise. For every $m\in \n$ we denote with $I_m$ the $m\times m$ identity matrix. When we wish to treat the entries of a matrix $A$ as formal variables, we use boldface $\bm A$. We will note use capital bold face letters other than to denote such matrices.

For an exponent vector $\va=(a_1,\ldots,a_n)\in\n^n$, we denote  $\xn^{\va}\triangleq \prod_{i=1}^n x_i^{a_i}$. In some cases we shall consider ``monomials'' with respect to set of linear functions $\cbra{\ell_i}_{i=1}^{m}$: for an exponent vector $\ve=(e_1,\ldots,e_m)\in {\n}^m$ we denote  ${\bm \ell}^{\ve} = \prod_{i=1}^m\ell_i^{e_i}$ and refer to it as  an \emph{$\cbra{\ell_i}$-monomial}.
%We start by defining the \emph{monomial support} of a polynomial:
%\begin{definition}
%	Let $f=\sum_{\ve\in\n^n}\alpha_{\ve}\xn^{\ve}\in\f[\xn]$. The \emph{monomial support} of $f$, denoted $\text{\normalfont mon}(f)$, is the set of monomials with non-zero coefficient in $f$: $ \text{\normalfont mon}(f)=\{\xn^{\ve}:\alpha_{\ve}\neq 0\}$.
%\end{definition}
For a polynomial $f(\xn)$ we define the \emph{monomial support} of $f$, denoted $\text{\normalfont mon}(f)$, as the set of monomials with non-zero coefficient in $f$. The \emph{variable set} of $f$, denoted $\text{var}(f)$, is the set of variables that $f$ depends on. I.e., all variables that appear in $\text{\normalfont mon}(f)$.
%When clear from context, we sometimes write $\text{\normalfont var}(f)$ instead of $\text{\normalfont var}(f(\xn))$. 
The individual degree of a variable $x_i$ in $f(\xn)$ is the degree of $f$ as a polynomial in $x_i$.
A polynomial $f\in\f[\xn]$ of $\deg(f)\leq 1$ is called a linear function, and if $f$ is homogeneous then it is called a \emph{linear form}. 
For a polynomial $f\in\f[\xn]$ and an integer $k\in\n$ we denote by $f^{[k]}$ the degree-$k$ homogeneous part of $f(\xn)$,i.e. the sum of all monomials of $f$ of degree exactly $k$. In particular,
$$ f(\xn)=f^{[0]}(\xn)+f^{[1]}(\xn)+\ldots+f^{[\deg(f)]}(\xn) \;.$$
Note that for a linear function $f$, $f^{[1]}$ is a linear form. We say that a polynomial $f$ is homogeneous of degree $k$ or that $f$ is $k$-homogeneous if $f=f^{[k]}$.
We say a set of linear functions $\cbra{\ell_1(\xn),\ldots,\ell_n(\xn)} \subset \f[\xn]$ is \emph{linearly independent} if the set $\cbra{\ell_i^{[1]}}$ is linearly independent.\footnote{Note that by our definition, $x$ and $x+1$ are linearly dependent.}
%We denote by $\norm{\cdot}$ a generic norm function $\norm{\cdot}:\C[\xn]\to\real^{\geq 0}$ (when applicable, we will not care which specific norm). 
Given a polynomial $f(\xn)$, a subset of variables $\bm y\subseteq\{x_1,\ldots,x_n\}$ and an assignment to those variables $\va\in\f^{|\bm y|}$, we denote by $\restr{f}{\bm y=\va}\in\f[\xn\setminus\bm y]$ the polynomial resulting from assigning the values of $\va$ to the variables of $\bm y$ in $f(\xn)$. We sometimes abuse notation and write $\bm y\subseteq[n]$ to indicate the indices of the assigned variables instead of the variables themselves.

Given an arithmetic circuit $\Phi$, we frequently denote by $\Phi(\xn)$ or, abusing notation, by $\Phi$, the polynomial computed at the output node of $\Phi$. 
%Unless stated otherwise, we use the lowercase letters $n,s,d$ to describe the number of variables, circuit size and degree of $\Phi(\xn)$, respectively. 
Given a class of arithmetic circuits $\cC$ and a polynomial $f\in\f[\xn]$, we say $f\in\cC$ if $f$ can be computed by some circuit from $\cC$. For a circuit class $\cC(\f)$ we denote by $\overline{\mathcal C}(\f)$ the \emph{closure} of $\cC(\f)$, as in \cref{def:approx}.

\subsection{Groups of matrices and their action}

We first list some simple properties of composition with a linear (or affine) transformation that we shall use implicitly.

%For a polynomial $f\in\f[\xn]$, polynomials in the orbit $f^{\gl{m}{\f}}$ admit several trivial (though important) properties used implicitly throughout this work. We list them here:

\begin{observation}\label{obsTrivProps}
	For any $m$ variate polynomial $f(x_1,\ldots,x_m)$ and $n\geq m$:
	\begin{itemize}
		\item For any $A\in \gl{n}{\f}$ and $d\in\n$, $f^{[d]}(A\xn)$ is the $d$-homogeneous part of $f(A\xn)$.
		\item For any $A\in \gla{n}{\f}$, $f(\xn)$ is irreducible if and only if $f\bra{A\xn}$ is irreducible.
		\item The set of matrices $A$ for which $f(\xn)=f(A\xn)$ forms a multiplicative subgroup of $\gl{n}{\f}$ and a similar claim holds for  $\gla{n}{\f}$.
	\end{itemize}
\end{observation}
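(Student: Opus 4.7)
The plan is to verify each bullet in turn; every item reduces to the observation that composition with an affine map $A\xn + \vb$ defines a ring homomorphism $\varphi_{A,\vb}\colon \f[\xn]\to\f[\xn]$, together with elementary degree bookkeeping. No item poses a genuine obstacle.

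For the first bullet, I would decompose $f = f^{[0]} + f^{[1]} + \ldots + f^{[\deg f]}$ and substitute $A\xn$. Because $A\in\gl{n}{\f}$ is \emph{linear} (no translation), each coordinate of $A\xn$ is a linear form in $\xn$, so substitution carries a $d$-homogeneous polynomial to a $d$-homogeneous polynomial; in particular $f^{[d]}(A\xn)$ is $d$-homogeneous. Summing gives $f(A\xn) = \sum_d f^{[d]}(A\xn)$, and by uniqueness of the decomposition of a polynomial into homogeneous components, the $d$-homogeneous part of $f(A\xn)$ must equal $f^{[d]}(A\xn)$.

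For the second bullet, I would note that for $(A,\vb)\in\gla{n}{\f}$, the map $\varphi_{A,\vb}$ is a ring automorphism of $\f[\xn]$, with two-sided inverse $\varphi_{A^{-1},-A^{-1}\vb}$. Ring automorphisms preserve irreducibility, so the only point requiring care is ruling out a ``trivial'' factorization in which one factor becomes a unit (a nonzero constant). This is handled by observing that affine composition preserves total degree, so a nonconstant polynomial remains nonconstant after composition. Hence $f = gh$ with both $g,h$ nonconstant if and only if $f(A\xn+\vb) = g(A\xn+\vb)\cdot h(A\xn+\vb)$ with both factors nonconstant, and applying the same argument to $(A^{-1},-A^{-1}\vb)$ closes the equivalence.

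For the third bullet, let $G_f \triangleq \{A\in\gl{n}{\f} : f(A\xn) = f(\xn)\}$. The identity is in $G_f$. For closure under multiplication, if $A,B\in G_f$ then $f((AB)\xn) = f(A(B\xn)) = f(B\xn) = f(\xn)$, where the middle equality is the identity $f(A\yn) = f(\yn)$ applied with $\yn = B\xn$, and the last equality uses $B\in G_f$. For inverses, if $f(A\xn) = f(\xn)$, substituting $A^{-1}\xn$ for $\xn$ gives $f(\xn) = f(A^{-1}\xn)$, i.e., $A^{-1}\in G_f$. The same three arguments carry over verbatim to $\gla{n}{\f}$ using affine composition in place of linear composition, completing the verification.
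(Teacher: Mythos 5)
The paper does not give a proof of this observation; it is introduced as a list of "simple properties... that we shall use implicitly." Your verification of all three bullets is correct and is precisely the routine argument one would expect — the ring-automorphism framing for the second bullet (together with the degree-preservation remark to rule out a factor becoming a unit, though this is already automatic since automorphisms carry units to units and non-units to non-units) and the elementary substitution arguments for the first and third bullets are all sound.
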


We next define some special groups that serve as group of symmetries of some of the models that we consider. We first define the group of symmetries of $\croanf(\xn)$.

\begin{definition}\label{defTR}
	For $m,\Delta\in\n$ such that  $m=2^\Delta$, the \emph{tree-symmetry group} $\TR[m][\f]$ denotes the automorphisms of a rooted complete binary tree of depth $\Delta$. It is defined recursively as follows.
	\begin{itemize}
		\item For $m=1$, $\TR[1][\f]$ consists only of the identity matrix.
		\item For $m>0$, $\TR[m][\f]$ is generated by matrices of the form
		$$ \begin{pmatrix}
			A&0\\0&B
		\end{pmatrix}\;\;\;and\;\;\;\begin{pmatrix}
			0&I_{\frac{m}{2}}\\I_{\frac{m}{2}}&0
		\end{pmatrix} $$
		where $A,B\in \TR[\frac{m}{2}][\f]$.
	\end{itemize}
\end{definition}

\begin{definition}\label{defTS}
	For any $m=4^\Delta$, the \emph{tree-scale group} $\TS[m][\f]$ is the group generated by elements of $\TR[m][\f]$ and matrices of the form
	$$ \begin{pmatrix}
		\alpha I_{\frac{m}{4}}&0&0&0\\
		0&\alpha^{-1}I_{\frac{m}{4}}&0&0\\
		0&0&\beta I_{\frac{m}{4}}&0\\
		0&0&0&\beta^{-1}I_{\frac{m}{4}}
	\end{pmatrix} $$
	where $0\neq \alpha,\beta\in\f$.
\end{definition}

The importance of the group  $\TS[m][\f]$ stems from the fact that it is the symmetry group of $\croanf$. To intuitively see why this is the case, notice that in any representation of an ANF one may swap children of any node without changing the output polynomial. We call such symmetries ``tree-symmetries'' and they are captured by the group $\TR[n][\f]$. A second source of ambiguity comes from the fact that we can rescale the formula. Recall that the output polynomial is of the form $f_1\cdot f_2 + f_3 \cdot f_4$ (\cref{def:ANF}). Clearly, the output does not change if we replace $f_1$ by, say, $2f_1$ and $f_2$ by $f_2/2$. Such rescaling symmetries are captured by the group  $\TS[n][\f]$. Finally, another source for ambiguity comes from the fact that the quadratic polynomials computed at the bottom two layers of the ANF may have different representations. For example, 
$$4xy+4wz = (x+y+w-z)\cdot(x+y-w+z) + (w+z+x-y)\cdot (w+z-x+y)\;.$$ As there is an infinite number of representations for each quadratic polynomial (over infinite fields), we can expect to characterize the symmetries in term of the quadratics computed at the bottom two layers of the ANF. 

\begin{fact}[Special case of Theorem 5.43(iii) of \cite{gupta2014random}]\label{fact:anf-sym}
	Let $m,\Delta,n\in \n$ such that  $m=4^{\Delta-1}\leq n/4$. Let	$f=\croanf[\Delta](\ell_1,\ldots,\ell_{4m})\in \croanf^{\gla{n}{\f}}$. Let $Q=(q_1,\ldots,q_m)$ be the list of quadratic polynomials that are computed at the bottom two layers of the formula $\croanf[\Delta](\ell_1,\ldots,\ell_{4m})$. In particular, $f=\croanf[\Delta-1](q_1,\ldots,q_m)$. 
If $Q'=(q'_1,\ldots, q'_m)$ is any other $m$-tuple of quadratic polynomials for which $f=\croanf[\Delta-1](q'_1,\ldots,q'_m)$
then $Q$ is $\TS[m][\f]$-equivalent to $Q'$.	
\end{fact}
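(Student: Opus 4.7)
My plan is induction on $\Delta$, leveraging the recursive structure of $\croanf$ together with unique factorization in $\f[\xn]$.

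\textbf{Base case.} For $\Delta=1$ we have $m=4^{0}=1$, so $Q=(q_1)$, $Q'=(q'_1)$, and $q_1=f=q'_1$; taking $\TS[1][\f]$ to be the trivial group, the equivalence is immediate.

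\textbf{Inductive step} ($\Delta\ge 2$). Stripping off the outermost $+$ and $\times$ gates of $\croanf[\Delta-1]$ gives the decomposition
\begin{equation*}
\croanf[\Delta-1](y_1,\ldots,y_m) \;=\; S_1 S_2 + S_3 S_4,
\end{equation*}
where each $S_j=\croanf[\Delta-2]$ acts on a distinct block $B_j$ of $m/4$ of the $y$'s (with $\croanf[0]$ interpreted as a single variable). Substituting $Q$ and $Q'$ yields two decompositions of $f$,
\begin{equation*}
f \;=\; A_1 A_2 + A_3 A_4 \;=\; A'_1 A'_2 + A'_3 A'_4,
\end{equation*}
with $A_j=S_j(q_i:i\in B_j)$ and $A'_j=S_j(q'_i:i\in B_j)$. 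The linear independence of $\ell_1,\ldots,\ell_{4m}$ implies that each $A_j$ depends on exactly the $m$ linear forms $\ell_i$ associated with the $m/4$ quadratics in its block, and these four index sets partition $\{1,\ldots,4m\}$; after a linear change of variables, we may assume the four $A_j$'s (and the four $A'_j$'s) use pairwise-disjoint sets of actual variables.

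The heart of the argument is a uniqueness-of-decomposition lemma: any two decompositions of $f$ of the form $P_1 P_2 + P_3 P_4$ with the same variable-disjointness pattern must agree up to (i) swapping the two summands, (ii) swapping the two factors within a summand, and (iii) rescaling those two factors by reciprocal constants $\alpha$ and $\alpha^{-1}$. At the top level these three operations correspond exactly to the generators of $\TS[m][\f]$ coming from $\TR[m][\f]$ (\cref{defTR}) and the diagonal block matrix in \cref{defTS}. The lemma is proved by combining unique factorization in $\f[\xn]$ with the variable-disjointness of the $A_j$'s: working in the UFD $\f[\xn]$ and comparing the irreducible factorizations of the two sides forces the factors to match up to scalars and rules out any ``migration'' of an irreducible factor across summands. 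After applying such a top-level $\TS[m][\f]$-element we may assume $A_j=A'_j$ for every $j$, at which point each $A_j$ becomes a $\croanf[\Delta-2]$ applied to two $m/4$-tuples of quadratics computing the same polynomial. The inductive hypothesis yields a $\TS[m/4][\f]$-equivalence on each of the four blocks, and the block-diagonal combination of these four equivalences sits inside $\TS[m][\f]$ by the recursive definition of $\TR$, completing the induction.

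\textbf{Main obstacle.} The technical core is the uniqueness-of-decomposition lemma -- ruling out, for instance, that an irreducible factor of $A_1$ could migrate into $A'_3$ with a compensating piece pulled back from $A'_4$, and verifying that no ambiguities beyond the three listed above arise. Both points depend crucially on the linear independence of the $\ell_i$'s, which provides the variable-disjointness needed to read the top-level decomposition off $f$'s irreducible factorization. A secondary subtlety is realizing the top-level rescaling at the level of the $A_j$'s as an honest $\TS[m][\f]$-transformation of the $q$-tuple, which must account for how rescaling a $q_i$ propagates through the degree-$2^{\Delta-2}$ polynomial $S_j$; this is handled by leveraging the recursive structure of $\TS$ (so that rescalings at every internal level are already elements of the group).
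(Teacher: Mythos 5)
The paper does not prove this statement; it is quoted as a special case of Theorem~5.43(iii) of \cite{gupta2014random}, so there is no ``paper proof'' to compare against --- only the correctness of your argument is at issue.

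Your argument has a substantive gap at the step where you write ``we may assume the four $A_j$'s (and the four $A'_j$'s) use pairwise-disjoint sets of actual variables.'' Disjointness of the $A_j$'s does follow from the linear independence of $\ell_1,\ldots,\ell_{4m}$, but nothing in the hypotheses forces the $q'_i$'s, and hence the $A'_j$'s, to be variable-disjoint, let alone on the \emph{same} partition as the $A_j$'s --- so the precondition of your uniqueness-of-decomposition lemma is never verified. In fact, under the literal phrasing (``any other $m$-tuple of quadratic polynomials'') the statement is false: already for $\Delta=2$, with $f=q_1q_2+q_3q_4$, take $Q'=(q_1+q_3,\; q_2,\; q_3,\; q_4-q_2)$. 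Each entry is a quadratic, $\croanf[1](Q')=(q_1+q_3)q_2+q_3(q_4-q_2)=f$, yet $Q'$ is not in the $\TS[4][\f]$-orbit of $Q$, since every element of $\TS[4][\f]$ is a monomial matrix and $q_1+q_3$ is not a scalar multiple of any single $q_j$. What rescues the statement in the source is an implicit non-degeneracy requirement: each $q'_i$ must itself be of the form $\ell'\ell''+\ell'''\ell''''$ with all $4m$ of the resulting linear forms linearly independent (a rank-$\leq 4$ condition that $q_1+q_3$, of rank $8$, violates). But even with that constraint, showing that the $A'_j$'s must share the variable partition of the $A_j$'s is the real content of the theorem; Gupta et al.\ extract it from the geometry of $\text{Sing}(f)$ --- that is what the formulaic-independence and pairwise-singular-independence conditions reproduced in the appendix of this paper are for. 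A unique-factorization argument cannot substitute here: $f$ is a sum of two variable-disjoint products, not a product, so there are no two ``irreducible factorizations of $f$'' to compare; UFD only helps \emph{after} one knows that the two summands on each side match. Your uniqueness lemma, once both decompositions are known to respect the same partition, is essentially correct (and is the easy half); the hard half, which your proof does not address, is establishing that precondition.

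A secondary issue: you appeal to ``the recursive definition of $\TS$'' to fold four block-level $\TS[m/4][\f]$-equivalences into a single $\TS[m][\f]$-equivalence, but as written in \cref{defTS} only $\TR$ is defined recursively; $\TS[m][\f]$ is $\TR[m][\f]$ together with scalings at the single coarsest level $m/4$. Block-diagonal $\TS[m/4][\f]$ elements involve scalings at level $m/16$, which that definition does not generate. (This looks like an imprecision in the paper's transcription of the source, but as stated it is not a valid justification for your last step.)
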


Next, we define the group of symmetries of $\sdmpol(\xn)$.

\begin{definition}
	For any $n\in\n$ the \emph{permutation-scale group}, denoted $\PS[n][\f]$, is the set of all matrices $A\in\GL[n][\f]$ which are row-permutations of 
	non-singular diagonal matrices with determinant one.
\end{definition}

For example,  $\begin{pmatrix}
	0 & -2 & 0 \\
	0 & 0 & -1 \\
	1/2 & 0 & 0 \end{pmatrix}\in \PS[3][\C]$.

\begin{definition}\label{sdmSymGroup}
	Let $s,d,n\in\n$ such that $n= s\cdot d$. A matrix $A\in\GL[n][\f]$ is a member of the \emph{tensor permutation-scale group}, denoted $\TPS[s,d][\f]$, if $A= ( P\otimes I_d) \cdot B$, where $P$ is an $s\times s$ permutation matrix and $B=\begin{pmatrix}
		B_1 & 0 & \ldots & 0 \\
		0 & B_2 & \ldots & 0 \\
		\vdots &  & \ddots & \vdots \\
		0 & \ldots & 0 & B_d
	\end{pmatrix}$ is a block diagonal matrix such that  each   block $B_i$ of $B$ satisfies  $B_i\in\PS[d][\f]$. 
	%$A$ satisfies the following properties	
	%	\begin{itemize}
	%		\item 
	%		There exists a matrix $B\in\GL[n][\f]$ comprised of $s$ $d\times d$ blocks on the diagonal, and a permutation-scale matrix $P\in\PS[s][\f]$, such that $A= ( P\otimes I_d) \cdot B$. I.e, $A$ is a block-permutation of a block-diagonal matrix.
	%		\item Each block $B_i$ of $B$ satisfies $B_i\in\PS[d][\f]\cap \SL[d][\f]$. I.e, each block of $A$ is a permutation-scale matrix, such that the product of the scale coefficients is $1$.
	%	\end{itemize}
\end{definition}
For example, for $s=d=2$ the matrix $A= \begin{pmatrix}
	0 & 0 & 0 & 2 \\
	0 & 0 & 1/2 & 0 \\
	-1 & 0 & 0 & 0 \\
	0 & -1 & 0 & 0
\end{pmatrix}$
is in $\TPS[2,2][\C]$, as for $P= \begin{pmatrix}
	0 & 1\\
	1 & 0
\end{pmatrix}$ and $B=\begin{pmatrix}
	-1 & 0 & 0 & 0 \\
	0 & -1 & 0 & 0\\
	0 & 0 & 0 & 2 \\
	0 & 0 & 1/2 & 0
\end{pmatrix}$, we have $A=\bra{P\otimes I_2}\cdot B$, and clearly each block of $B$ is in $\PS[2][\C]$.\\

 Another way of defining the group is as follows: index rows and columns of $A$ with pairs $(i,j)\in [s]\times [d]$. Then, $A \in \TPS[s,d][\f]$ if and only if %there exists a permutation $\pi':[s]\to[s]$, and for all $i\in[s]$  there exist permutations $\theta'_i:[d]\to[d]$ and constants $\alpha'_{i,j}$ satisfying $\prod_{j=1}^d\alpha'_{i,j}=1$, such that $A_{(i,j),(i',j')}= \delta_{i,\pi'(i')}\cdot \delta_{j,\theta'_{i'}(j')} \cdot \alpha'_{i',j'}$. Equivalently, 
 there exists a permutation $\pi:[s]\to[s]$, and for all $i\in[s]$ permutations $\theta_i:[d]\to[d]$ and constants $\alpha_{i,j}$ satisfying $\prod_{j=1}^d\alpha_{i,j}=1$, such that $A_{(i,j),(i',j')}= \delta_{\pi(i),i'}\cdot \delta_{\theta_i(j),j'}\cdot \alpha_{i,j}$ for all $i,j$. 
 %Indeed, one can see this equivalence by letting $\pi'=\pi^{-1}$, $\theta'_i = \theta_{\pi^{-1}(i)}^{-1}$ and $\alpha'_{i,j} = \alpha_{\pi^{-1}(i),\theta_{\pi^{-1}(i)}^{-1}(j)}$. 

We next prove that $\TPS[s,d][\f]$ is the group of symmetries of $\sdmpol(\xn)$. In other words, we show that  $\sdmpol(\xn)=\sdmpol(A\xn)$ if and only if $A\in \TPS[s,d][\f]$.
Intuitively, $\sdmpol$ admits no symmetries other than the trivial ones: permutations on the product gates, and internal permutation-scale of each product gate such that the product of the scale coefficients is $1$. This is exactly captured by the group $\TPS[s,d][\f]$, which is therefore contained in  the group of symmetries of $\sdmpol(\xn)$.

\begin{restatable}{lemma}{sdmSymmetry}\label{sdmSymmetry}
	Let $s,d,n\in\n$, such that $d>2$ and $n=s\cdot d$. If $A\in\gl{n}{\f}$ satisfies $\sdmpol(\xn)=\sdmpol(A\xn)$, then $A\in\TPS[s,d][\f]$.
\end{restatable}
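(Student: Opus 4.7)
The plan is to extract the block-permutation-scale structure of $A$ by computing the second-order partial derivatives of the identity $\sdmpol(A\xn)=\sdmpol(\xn)$ in two ways. On one side the multilinearity of $\sdmpol$ makes the bookkeeping easy (many second partials just vanish), while on the other side the chain rule produces polynomial identities whose coefficients are bilinear in the entries of $A$.

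Write $A\xn=\by$, so $y_{i,j}=\ell_{i,j}(\xn)$ are linearly independent linear forms. For every pair $a=(r,s),b=(r',s')\in[s]\times[d]$, I would compute $\partial^2/\partial x_a\,\partial x_b$ of both sides. The left-hand side $\partial^2\sdmpol/\partial x_{r,s}\,\partial x_{r',s'}$ vanishes whenever $r\ne r'$ (different blocks) and when $a=b$ (by multilinearity, since every variable has degree $\leq1$). Expanding the right-hand side via the chain rule and the formula $\partial^2 \sdmpol/\partial y_{(i,j)}\,\partial y_{(i',j')}=\delta_{i,i'}[j\neq j']\prod_{m\neq j,j'} y_{i,m}$ yields
\begin{equation*}
\sum_{i\in[s]}\sum_{j\neq\ell}A_{(i,j),a}\,A_{(i,\ell),b}\,\prod_{m\neq j,\ell}\ell_{i,m}(\xn)=0
\end{equation*}
in the appropriate cases. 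This is a family of polynomial identities in $\xn$, one for each choice of $(a,b)$.

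The main technical step — and what I expect to be the crux — is to observe that the family of \emph{minor products} $\bigl\{\prod_{m\neq j,\ell}\ell_{i,m}(\xn):i\in[s],\,1\leq j<\ell\leq d\bigr\}$ is linearly independent over $\f$. In the basis $\by$ these are pairwise distinct monomials of degree $d-2$, and linear independence is preserved by the invertible change of variables $\xn\mapsto A\xn$. This is exactly where the hypothesis $d>2$ enters essentially: for $d=2$ the minor products are all equal to $1$ and the argument collapses. (Indeed, for $d=2$ the conclusion is false, as mixing symmetries exist.) With linear independence in hand, the identities above become coefficient equations on the entries of $A$.

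From the $a=b$ case I extract $A_{(i,j),a}A_{(i,\ell),a}=0$ for all $i$ and $j\neq\ell$, meaning that within each row block indexed by $i$ the rows $A_{(i,1),\cdot},\ldots,A_{(i,d),\cdot}$ have pairwise disjoint supports. From the $a\neq b$ case with $a,b$ in different column blocks I extract $A_{(i,j),a}A_{(i,\ell),b}+A_{(i,\ell),a}A_{(i,j),b}=0$, and a short case analysis combining this with the disjoint-support conclusion forces all rows of row block $i$ to be supported in a single column block, indexed by some $\pi(i)\in[s]$. Invertibility of $A$ then implies that $\pi$ is a permutation and that within each block each row is a scalar multiple of a single variable; i.e.\ $\ell_{i,j}=\alpha_{i,j}\,x_{\pi(i),\theta_i(j)}$ for permutations $\theta_i\in S_d$ and nonzero scalars $\alpha_{i,j}$. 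Finally, substituting this form back into $\sdmpol(A\xn)=\sdmpol(\xn)$ and comparing coefficients of the monomials $M_{\pi(i)}$ yields the scale constraint $\prod_j\alpha_{i,j}=1$, which is precisely the condition placing $A$ in $\TPS[s,d][\f]$.
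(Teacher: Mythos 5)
Your approach is genuinely different from the paper's. The paper differentiates once (by $x_{i,1}$) and argues via \emph{reducibility}: $\prod_{j\geq 2}x_{i,j}$ is reducible for $d>2$, while a sum of two or more variable-disjoint, non-constant, multilinear polynomials is irreducible (\cref{irreducibleMultilinear}), so after pulling back by $A^{-1}$ at most one $h_{i,r}$ can survive. You instead differentiate twice and convert the whole problem into coefficient-matching against the linearly independent minor products $\prod_{m\neq j,\ell}\ell_{i,m}$; that linear independence is where $d>2$ enters for you, and the claim is correct (in the $\by$-coordinates these are distinct multilinear monomials of degree $d-2\geq 1$, and an invertible substitution preserves independence). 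I also checked that the case analysis combining the disjoint-support relation (i) with the cross-block bilinear relation (ii) does force each row block of $A$ into a single column block, using invertibility twice in the way you indicate. So over characteristic $\neq 2$ this is a clean, more computational alternative to the paper's irreducibility route.

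However, there is a genuine gap in characteristic $2$, whereas the lemma is stated over an arbitrary $\f$ and the paper's argument never divides by $2$. When $a=b$, grouping your displayed sum into unordered pairs gives $\sum_i\sum_{\{j,\ell\}}2\,A_{(i,j),a}A_{(i,\ell),a}\prod_{m\neq j,\ell}\ell_{i,m}=0$, which is vacuous in characteristic $2$ (indeed $\partial^2/\partial x_a^2$ is the zero operator there), so you do not get the disjoint-support fact $A_{(i,j),a}A_{(i,\ell),a}=0$. Without it, relation (ii) from the $a\neq b$, different-block case no longer pins down the block structure, because your case analysis uses (i) to kill the second term in $A_{(i,j),a}A_{(i,\ell),b}+A_{(i,\ell),a}A_{(i,j),b}$. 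The natural repair—setting the $x_a^2$-coefficient of $\sdmpol(A\xn)$ to zero by multilinearity, which carries no factor of $2$—hits the problem that the truncated forms $\restr{\ell_{i,m}}{x_a=0}$ are $sd$ linear forms in $sd-1$ variables, so the resulting minor products are no longer obviously independent and the coefficient extraction breaks. The gap is harmless for the only downstream use (\cref{thm:reconstructSDM} assumes characteristic $0$ or $>ds^2$), but as a proof of \cref{sdmSymmetry} as stated it is incomplete.
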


\begin{proof}[Proof of \cref{sdmSymmetry}]
%Before we prove \cref{sdmSymmetry}, we make two simple observations.
%we state and prove the following easy lemma regarding multilinear polynomials:

%We first make  a simple observation regarding multilinear polynomials.

%\begin{observation}\label{multReducDisj}
%	If $f$ is multilinear and reducible $f(x)=g_1(x)\cdots g_k(x)$, then $g_1,\ldots,g_k$ are multilinear and pairwise variable-disjoint; i.e. for every $i\neq j\in[k]$: $var(g_i)\cap var(g_j)=\emptyset$.
%\end{observation}
%
%\begin{observation}\label{irreducibleMultilinear}
%	If $f,g$ are non-constant, variable-disjoint, multilinear polynomials, then for every $c\in\f$ the polynomial $f(\xn)+g(\xn)+c$ is irreducible.
%\end{observation}
%
%

\sloppy
Fix linear forms $\ell_{1,1},\ldots,\ell_{s,d}$ such that the $(i,j)$th coordinate of $A\xn$ (using the indexing $[n]=[s]\times [d]$) is $\ell_{i,j}(\xn)$, and $\sdmpol(A\xn)=\sum_{i=1}^s\prod_{j=1}^d\ell_{i,j}(\xn)$. By the discussion above, our goal is to prove that there exists a permutation $\pi:[s]\to[s]$, and  for all $i\in[s]$ permutations $\theta_i:[d]\to[d]$ and constants $\alpha_{i,j}$ satisfying $\prod_{j=1}^d\alpha_{i,j}=1$, such that $\ell_{i,j}(\xn)=\alpha_{i,j}\cdot x_{\pi(i),\theta_i(j)}$ for all $i,j$. 
%These conditions are sufficient to prove $A\in\TPS[s,d][\f]$.
Fix some $i\in[s]$ and take a derivative of the equation $\sdmpol(\xn)=\sdmpol(A\xn)$ by $x_{i,1}$:
\begin{equation}\label{eq:der-tsd}
	\prod_{j\in\{2,\ldots,d\}}x_{i,j}=\partiald{\sdmpol(\xn)}{x_{i,1}}=\partiald{\sdmpol(A\xn)}{x_{i,1}}=\sum_{r=1}^s\partiald{}{x_{i,1}}\left(\prod_{j=1}^d\ell_{r,j}(\xn)\right)\;.
\end{equation}
For $r\in[s]$, denote $h_{i,r}(\xn)\triangleq \partiald{}{x_{i,1}}\left(\prod_{j=1}^d\ell_{r,j}(\xn)\right)$. As $d>2$, the LHS of Equation~\eqref{eq:der-tsd} is a reducible polynomial, so $\sum_{r=1}^sh_{i,r}(\xn)$ is also reducible. Composition with a non-singular matrix preserves reducibility, so $\sum_{r=1}^sh_{i,r}(A^{-1}\xn)$ is also reducible. However, $h_{i,1}(A^{-1}\xn),\ldots,h_{i,s}(A^{-1}\xn)$ are $s$ variable-disjoint, multilinear polynomials, each of which is either $(d-1)$-homogeneous or zero. Thus, by \cref{irreducibleMultilinear} below, at most one $h_{i,r}(A^{-1}x)$ can be non-zero. Accordingly, for every variable $x_{i,j}$ there exists a unique $i'$ such that $x_{i,j}\in\text{var}\left(\prod_{j'=1}^d\ell_{i',j'}(\xn)\right)$. Thus, for some $i'$ we have
\begin{align}
	\prod_{j\in\{2,\ldots,d\}}x_{i,j}=\partiald{}{x_{i,1}}\left(\prod_{j=1}^d\ell_{i',j}(\xn)\right)\;.\label{eqRefOneDeriv}
\end{align}
For any $j>1$, if we take a derivative of \eqref{eqRefOneDeriv} by $x_{i,j}$ then the LHS is clearly non-zero. Thus, both $x_{i,1}$ and $x_{i,j}$ exist in $\text{var}\left(\prod_{j'=1}^d\ell_{i',j'}(\xn)\right)$, proving variables in the same product gate of $\sdmpol(\xn)$ are mapped to the same product gate of $\sdmpol(A\xn)$. A similar argument  shows that variables from distinct product gates of $\sdmpol(\xn)$ are mapped to  different product gates of $\sdmpol(A\xn)$. It follows that  product gates of $\sdmpol(A\xn)$ are variable-disjoint and that there exists a permutation $\pi:[s]\to[s]$ satisfying
$$ \forall i\in[s]:\qquad\text{var}\left(\prod_{j=1}^d\ell_{i,j}(\xn)\right)=\{x_{\pi(i),1},\ldots,x_{\pi(i),d}\} \;.$$
In particular, there can be no cancellations between different product gates of $\sdmpol(A\xn)$. Therefore, by multilinearity, for every $i\in[s]$, the linear forms $\ell_{i,1}(\xn),\ldots,\ell_{i,d}(\xn)$ must be variable-disjoint.
%, otherwise $\prod_{j=1}^d\ell_{i,j}(\xn)$ is not multilinear (by \cref{multReducDisj}) %\doricmnt{reference \cref{multReducDisj} here? Otherwise, \cref{multReducDisj} isn't needed at all. Originally, it was used to prove \cref{irreducibleMultilinear}} 
%and nothing can cancel out the non-multilinear monomial. 
Exactly $d$ variables appear in $\prod_{j=1}^d\ell_{i,j}(\xn)$, so for every $i\in[s]$ and $j\in[d]$ there exists a permutation $\theta_i:[d]\to[d]$ and a non-zero constant $\alpha_{i,j}\in\f$ such that $\ell_{i,j}(\xn)=\alpha_{i,j}x_{\pi(i),\theta_i(j)}$. As $\prod_{j=1}^d\alpha_{i,j}$ is the coefficient of $\prod_{j=1}^dx_{\pi(i),j}$ in $\sdmpol(A\xn)$, this product must be $1$, which completes the proof.
\end{proof}

\begin{observation}\label{irreducibleMultilinear}
	%		If $f$ is multilinear and reducible $f(x)=g_1(x)\cdots g_k(x)$, then $g_1,\ldots,g_k$ are multilinear and pairwise variable-disjoint; i.e. for every $i\neq j\in[k]$: $var(g_i)\cap var(g_j)=\emptyset$.
	If $f,g$ are non-constant, variable-disjoint, multilinear polynomials, then for every $c\in\f$ the polynomial $f(\xn)+g(\xn)+c$ is irreducible.
\end{observation}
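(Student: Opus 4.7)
Proof proposal for \cref{irreducibleMultilinear}:

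The plan is to argue by contradiction. Suppose $f(\xn)+g(\xn)+c = h_1(\xn)\cdot h_2(\xn)$ for non-constant polynomials $h_1,h_2$. Let $X_f=\text{var}(f)$ and $X_g=\text{var}(g)$, so $X_f\cap X_g=\emptyset$ by the variable-disjointness assumption. Since $h_1h_2$ equals a multilinear polynomial, a straightforward leading-coefficient argument (look at the top $x$-degree of the product for any variable $x$) shows that both $h_1$ and $h_2$ are multilinear, and that $V_1\triangleq\text{var}(h_1)$ and $V_2\triangleq\text{var}(h_2)$ are disjoint. Since $h_1,h_2$ are non-constant, $V_1$ and $V_2$ are both nonempty.

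The heart of the proof is a mixed partial derivative argument. For any $x\in V_1$ and $y\in V_2$, by multilinearity and variable-disjointness of $h_1,h_2$,
\[
\partiald{^2 (h_1h_2)}{x\,\partial y}=\partiald{h_1}{x}\cdot \partiald{h_2}{y}\neq 0,
\]
since both factors are nonzero polynomials (as $x\in V_1$ and $y\in V_2$). On the other hand, because $f+g+c$ decomposes as a sum of a polynomial in $X_f$ and a polynomial in $X_g$ (plus a constant), the mixed derivative $\partial^2(f+g+c)/\partial x\,\partial y$ vanishes whenever $x$ and $y$ come from different variable sets. Hence for every pair $(x,y)\in V_1\times V_2$, either $\{x,y\}\subseteq X_f$ or $\{x,y\}\subseteq X_g$.

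I would then conclude by a small case analysis. Fix any $x_0\in V_1$; say $x_0\in X_f$ (the other case is symmetric). Then the above forces $V_2\subseteq X_f$, and fixing any $y_0\in V_2\subseteq X_f$ and running the argument in reverse forces $V_1\subseteq X_f$. Therefore $\text{var}(h_1h_2)\subseteq X_f$, contradicting the fact that $f+g+c$ depends on $X_g$ (which is nonempty since $g$ is non-constant). The symmetric case yields the analogous contradiction with $f$. Hence no such factorization exists, and $f+g+c$ is irreducible.

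The only mildly delicate step is the preliminary claim that $h_1,h_2$ must themselves be multilinear with disjoint variable sets; everything else is a clean mixed-derivative observation. Once that step is in hand, the proof is two or three lines.
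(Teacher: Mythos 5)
Your proof is correct. Note that the paper simply states \cref{irreducibleMultilinear} as an observation with no accompanying proof, so there is no argument of theirs to compare against; the proposal must be judged purely on its own merits.

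The argument is sound. The preliminary claim that a multilinear product forces its factors to be multilinear and variable-disjoint follows from degree additivity of each individual variable over an integral domain, exactly as you indicate. The mixed-partial computation
$\partiald{^2(h_1h_2)}{x\,\partial y}=\partiald{h_1}{x}\cdot\partiald{h_2}{y}$
for $x\in V_1$, $y\in V_2$ is right, and the crucial point that both factors are nonzero does rely on multilinearity: since $h_1$ is linear in $x\in\text{var}(h_1)$, the partial $\partiald{h_1}{x}$ is the nonzero coefficient of $x$, which rules out the characteristic-$p$ pathology where a formal derivative vanishes despite genuine dependence. On the other side, $\partiald{^2(f+g+c)}{x\,\partial y}=0$ whenever $x,y$ lie in different blocks of $X_f\cupdot X_g$, and $V_1\cup V_2\subseteq X_f\cup X_g$ since $\text{var}(h_1h_2)=\text{var}(f+g+c)$. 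Fixing $x_0\in V_1\cap X_f$ (say) forces $V_2\subseteq X_f$, then $V_1\subseteq X_f$, contradicting $\emptyset\neq X_g\subseteq\text{var}(f+g+c)$. Every step checks out over an arbitrary field, so the observation is proved.
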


\section{$k$-independent polynomial maps and their properties}\label{sec:k-ind}

%\begin{observation}\label{obsHitSetGen}
%Let $\cC\subseteq\f[\xn]$, and let $\{\G^{(n)}\}_{n\in\n}$ be a family of generators for $n$-variate polynomials $f\in\cC$. If $|\text{\normalfont var}(\G^{(n)})|=\text{\normalfont poly}(\log(n))$ and $\deg(f\circ\G^{(n)})=\text{\normalfont poly}(n)$ for all $n$-variate polynomials $f\in\mathcal C$, then there exists an explicit, quasipolynomial-sized hitting set for $\mathcal C$.
%\end{observation}

%The PIT algorithms presented in this work rely on polynomial mappings $\G_k$, parameterized by some integer $k\in[n]$, with the property that the image of $f\circ \G_k$ contains all projections of $f$ to $k$ variables:
%
%\begin{definition}\label{def:indPolyMap}
%A polynomial mapping $\G(y_1,\ldots,y_t,z_1):\C^{t+1}\to\C^n$ is called a \ropinvgentext[1]{} if, for any index $i\in[n]$, there exists an assignment $\valpha\in\C^t$ to $y_1,\ldots,y_t$ such that the $i$th coordinate of $\G(\valpha,z_1)$ is $z_1$, and the rest of the coordinates are $0$. For $k>1$, a polynomial mapping $\G(y_1,\ldots,y_{tk},z_1,\ldots,z_k):\C^{k(t+1)}\to\C^n$ is called a \ropinvgentext[t]{} if $\G$ is a sum of variable-disjoint $1$-independent polynomial maps. We denote \ropinvgen[k]{}s as $\G(\bm y,\bm z)$ when $k,t$ are implicit. The $\bm y$ variables are called \emph{control variables}.
%\end{definition}

All the hitting and interpolating sets that we construct are based on $k$-independent polynomial maps (\cref{def:indPolyMap}). We next give some simple properties of independent polynomial maps, that follow immediately from the definition.

\begin{observation}\label{obs:kwise}
	It holds that
\begin{enumerate}
	\item \label{reduceIndPolyMap}
	If $\G(\bm y,\bm z)$ is a \ropinvgen[(k+1)]{}, then there exists a subset of variables $S$ and an assignment $\valpha\in\f^{|S|}$ such that $\restr{\G}{S=\valpha}$ is a \ropinvgen[k]{}.
	\item \label{item:coordsGenInd}
	For any $k\geq 1$, the $n$ coordinates of any \ropinvgen[k]{} are $\f$-linearly independent.	
	\item \label{item:ind-lin} Let $\ell_1(\xn)$ and $\ell_2(\xn)$ be linearly independent linear functions in $\f[\xn]$. Let $\G({\bf y},z_1,z_2)$ be any \ropinvgen[2]{}. Consider $\ell_1 \circ \G$ and $\ell_2 \circ \G$ as polynomials in $z_1,z_2$ over $\f({\bf y})$. Then, $\bra{\ell_1 \circ \G}^{[1]}$ and $\bra{\ell_2 \circ \G}^{[1]}$ are linearly independent, as linear forms in $z_1,z_2$ over $\f({\bf y})$. 
\end{enumerate}
\end{observation}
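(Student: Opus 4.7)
For part~(1), the plan is to exploit the decomposition $\G=\G_1+\cdots+\G_{k+1}$ into variable-disjoint $1$-independent summands and to eliminate one of them by specialization. By the $1$-independence of $\G_{k+1}$, there exists an assignment $\va\in\f^t$ to its $\bm y$-variables with $\G_{k+1}(\va,z_{k+1})=z_{k+1}\cdot\ve_1$. I would take $S$ to be the full set of variables of $\G_{k+1}$ and $\valpha=(\va,0)$; then $\restr{\G_{k+1}}{S=\valpha}=\vzero$, so $\restr{\G}{S=\valpha}=\G_1+\cdots+\G_k$ remains a sum of $k$ variable-disjoint $1$-independent maps, i.e.\ a $k$-independent polynomial map.

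For part~(2), the plan is to iterate part~(1) until only a single $1$-independent summand $\G_1(\bm y,z_1)$ survives; any $\f$-linear relation among the coordinates of the original $\G$ specializes to one among the coordinates of $\G_1$, so it suffices to handle $k=1$. Suppose $\sum_{i=1}^{n} c_i G_1^{(i)}\equiv 0$ with $c_i\in\f$. For each $j\in[n]$, substituting the witness $\bm y=\va_j$ gives $\G_1(\va_j,z_1)=z_1\cdot\ve_j$ and reduces the identity to $c_j z_1=0$, forcing $c_j=0$.

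For part~(3), I would write $\G=\G_1+\G_2$ with $\G_i=\G_i(\bm y_i,z_i)$ variable-disjoint and $1$-independent, and expand $\ell_i(\xn)=a_{i,0}+\sum_{j=1}^n a_{i,j}x_j$. Because $\ell_i$ is affine and $\G_1,\G_2$ involve disjoint $z$-variables, the degree-$1$ part of $\ell_i\circ\G$ in $z_1,z_2$ (over $\f(\bm y)$) splits as $\alpha_i(\bm y_1)z_1+\beta_i(\bm y_2)z_2$, where $\alpha_i$ is the coefficient of $z_1$ in $\sum_j a_{i,j}G_1^{(j)}$ and $\beta_i$ is the coefficient of $z_2$ in $\sum_j a_{i,j}G_2^{(j)}$. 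Linear independence of these two forms over $\f(\bm y)$ is equivalent to nonvanishing of the determinant $\alpha_1\beta_2-\alpha_2\beta_1\in\f(\bm y)$. To verify this, I would substitute $\bm y_1=\va_{j_1}$ and $\bm y_2=\vb_{j_2}$, where $\va_{j_1},\vb_{j_2}$ are the $1$-independence witnesses of $\G_1,\G_2$ for coordinates $j_1,j_2$; this specializes $\alpha_i\mapsto a_{i,j_1}$ and $\beta_i\mapsto a_{i,j_2}$, turning the determinant into $a_{1,j_1}a_{2,j_2}-a_{2,j_1}a_{1,j_2}$. If this vanished for every pair $(j_1,j_2)$, the vectors $(a_{1,j})_j$ and $(a_{2,j})_j$ would be proportional, contradicting the linear independence of $\ell_1^{[1]},\ell_2^{[1]}$. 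The main conceptual point — and the only one that takes a moment's thought — is recognizing that $\f(\bm y)$-linear independence of the two forms can be certified by a single $\f$-valued specialization that converts the hypothesis on $\ell_1^{[1]},\ell_2^{[1]}$ into a nonvanishing $2\times 2$ minor; once that is set up, no step presents a serious obstacle.
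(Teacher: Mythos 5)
Your proposal is correct. The paper states \cref{obs:kwise} without proof (it is explicitly described as following ``immediately from the definition''), so there is no proof of record to compare against, but your three arguments supply exactly the routine verification one would expect: eliminating one $1$-independent summand by specializing it to zero for part~(1), iterating down to $k=1$ and using the coordinate witnesses for part~(2), and in part~(3) reducing $\f(\bm y)$-linear independence of the degree-$1$ parts to the nonvanishing of the $2\times 2$ determinant $\alpha_1\beta_2-\alpha_2\beta_1$ and certifying it by a single specialization of the control variables that turns it into a minor of the coefficient matrix of $\ell_1^{[1]},\ell_2^{[1]}$. No gaps.
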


We next give the construction of \cite{shpilka2015read} of  a \ropinvgen[k]{}
(denoted $G_k$ in \cite{shpilka2015read}).

\begin{definition}\label{defSVGEN}
Fix $n$ and a set of $n$ distinct field elements $\mathcal A=\{\alpha_1,\ldots,\alpha_n\}\subseteq\f$.\footnote{If $|\f|<n$ then we take these elements from an appropriate extension field of $\f$.} For every $i\in[n]$ let $L_i(w):\f\to\f$ be the $i$th Lagrange Interpolation polynomial for the set $\mathcal A$. That is, each $L_i(w)$ is polynomial of degree $n-1$ that satisfies $L_i(\alpha_j)=\delta_{i,j}$.
%$$ u_i(\alpha_j)\triangleq \begin{cases}
%1&j=i\\
%0&j\neq i
%\end{cases} $$
We define $\svgen_1(y_1,z_1):\f^2\to\f^n$ as:
$$ \svgen_1(y_1,z_1)\triangleq\left(L_1(y_1)\cdot z_1,L_2(y_1)\cdot z_1,\ldots,L_n(y_1)\cdot z_1\right), $$
and for any $k\geq 1$, we define $\svgen_k:\f^{2k}\to\f^n$ as:
$$ \svgen_k(\bm y,\bm z)\triangleq\svgen_1(y_1,z_1)+\svgen_1(y_2,z_2)+\ldots+\svgen_1(y_k,z_k)=\left(\sum_{j=1}^kL_1(y_j)\cdot z_j,\sum_{j=1}^kL_2(y_j)\cdot z_j,\ldots,\sum_{j=1}^kL_n(y_j)\cdot z_j\right). $$
\end{definition}

\begin{observation}\label{obs:SVgen}
	$\svgen_k$ is a $k$-independent polynomial map, in which each variable has degree at most $n-1$.
\end{observation}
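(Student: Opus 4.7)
The plan is to verify the two defining properties directly from the construction, since both follow almost immediately from standard facts about Lagrange interpolation polynomials together with the definition of a $k$-independent map (\cref{def:indPolyMap}).

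First, I would handle the base case $k=1$. Here I need to show that for every $i\in[n]$ there is an assignment to the single control variable $y_1$ that makes the $i$th coordinate of $\svgen_1(y_1,z_1)$ equal to $z_1$ and all others equal to $0$. The natural choice is $y_1 = \alpha_i$: by the defining property of the Lagrange basis, $L_j(\alpha_i) = \delta_{i,j}$, so the $j$th coordinate of $\svgen_1(\alpha_i, z_1)$ is $L_j(\alpha_i)\cdot z_1 = \delta_{i,j}\cdot z_1$, exactly as required. Thus $\svgen_1$ is a $1$-independent polynomial map.

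For general $k$, the definition of a $k$-independent map simply requires that the map be a sum of $k$ variable-disjoint $1$-independent polynomial maps. This is built into the definition of $\svgen_k$: it is literally the sum $\svgen_1(y_1,z_1)+\cdots+\svgen_1(y_k,z_k)$, and the variable sets $\{y_j,z_j\}$ for distinct $j$ are disjoint by construction. Combined with the base case, this proves $k$-independence.

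Finally, for the degree bound, observe that each $y_j$ appears in $\svgen_k$ only through expressions of the form $L_i(y_j)$, and each Lagrange interpolation polynomial $L_i$ has degree exactly $n-1$ (being the unique degree-$(n-1)$ polynomial with the prescribed values on $\mathcal{A}$). Each $z_j$ appears only linearly. Hence every variable has individual degree at most $n-1$. There is no real obstacle here; the main thing to be careful about is simply quoting the Lagrange property $L_i(\alpha_j)=\delta_{i,j}$ and matching the definition of $k$-independence term by term.
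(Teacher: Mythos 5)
Your proof is correct and fills in exactly the verification the paper leaves implicit (the paper states this as an observation with no written proof): choosing $y_1=\alpha_i$ invokes $L_j(\alpha_i)=\delta_{i,j}$ to establish $1$-independence, the $k$-fold variable-disjoint sum matches \cref{def:indPolyMap} directly, and the degree bound follows since each $L_i$ has degree $n-1$ and each $z_j$ appears linearly. Nothing to add.
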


%It will occasionally be convenient to assume we are working with a \ropinvgen[k]{} in which all $n$ entries are homogeneous polynomial of the same degree:
%
%\begin{definition}\label{defUniIndGen}
%A \ropinvgen[k]{} $\G:\f^k\to\f^n$ is called \emph{uniform} if all $n$ coordinates of $\G$ are homogeneous polynomials of identical degree.
%\end{definition}

The generator $\svgen_k$ can be converted to a uniform \ropinvgen[k]{} by adding another $k$ control variables $y_{k+1},\ldots,y_{2k}$, and swapping out the $L_i(y_j)$s for their homogenizations $y_{j+k}^{n-1}L_i\bra{\frac{y_j}{y_{j+k}}}$:

\begin{definition}\label{defsvgenhom}
With the notation used in \cref{defSVGEN}, define the \emph{uniform SV-generator} with $k$ independence $\svgenh_k:\f^{3k}\to\f^n$ as:
\begin{align*}
\svgenh_k\bra{y_1,\ldots,y_{2k},z_1,\ldots,z_k}&\triangleq y_{1+k}^{n-1}\cdot\svgen_1\bra{\frac{y_1}{y_{1+k}},z_1}+y_{2+k}^{n-1}\cdot\svgen_1\bra{\frac{y_2}{y_{2+k}},z_2}+\ldots +y_{2k}^{n-1}\cdot\svgen_1\bra{\frac{y_k}{y_{2k}},z_k}\\
&=\left(\sum_{j=1}^ky_{j+k}^{n-1}L_1\bra{\frac{y_j}{y_{j+k}}}\cdot z_j,\sum_{j=1}^ky_{j+k}^{n-1}L_2\bra{\frac{y_j}{y_{j+k}}}\cdot z_j,\ldots ,\sum_{j=1}^ky_{j+k}^{n-1}L_n\bra{\frac{y_j}{y_{j+k}}}\cdot z_j\right).
\end{align*}
\end{definition}

\begin{observation}
$\svgenh_k$ is a uniform \ropinvgen[k]{}, with  individual degrees at most $n-1$.
\end{observation}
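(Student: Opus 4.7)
The plan is to verify directly from \cref{defsvgenhom} the three required properties: (i) each coordinate of $\svgenh_k$ is a homogeneous polynomial of the same degree (uniformity), (ii) $\svgenh_k$ is a sum of $k$ variable-disjoint $1$-independent polynomial maps ($k$-independence), and (iii) each variable has individual degree at most $n-1$. Since the construction is defined in terms of the known-independent map $\svgen_1$ together with an explicit homogenization, all three facts should follow by short calculations.

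For uniformity, first observe that for each $j\in[k]$ and each $i\in[n]$, the expression $y_{j+k}^{n-1}L_i\bra{\frac{y_j}{y_{j+k}}}$ is precisely the standard homogenization of the degree-$(n-1)$ univariate polynomial $L_i(y_j)$ using $y_{j+k}$ as the homogenizing variable; hence it is a bivariate homogeneous polynomial of degree $n-1$ in $(y_j,y_{j+k})$. Multiplying by $z_j$ gives a polynomial that is homogeneous of degree $n$ in $(y_j,y_{j+k},z_j)$. The $k$ summands over $j$ use pairwise disjoint variable sets $\{y_j,y_{j+k},z_j\}$, so each coordinate of $\svgenh_k$ is a sum of polynomials that are individually homogeneous of degree $n$ in the ambient variable set, and is therefore itself homogeneous of degree $n$.

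For the $k$-independence property, I will show that each summand $y_{j+k}^{n-1}\cdot\svgen_1\bra{\frac{y_j}{y_{j+k}},z_j}$ is itself a $1$-independent polynomial map in its own variables; since the summands are variable-disjoint by inspection, this matches \cref{def:indPolyMap}. To establish $1$-independence of the $j$th summand, for any target index $i\in[n]$ take the assignment $y_{j+k}\mapsto 1$, $y_j\mapsto\alpha_i$. Then its $i'$th coordinate specializes to $1^{n-1}\cdot L_{i'}(\alpha_i)\cdot z_j=\delta_{i,i'}\cdot z_j$ by the defining property of the Lagrange polynomials, so the $i$th coordinate becomes $z_j$ and the others vanish, exactly as required. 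Finally, for the individual degree bound: inside $L_i(y_j)$ the variable $y_j$ has degree at most $n-1$, the factor $y_{j+k}^{n-1}$ contributes degree $n-1$ in $y_{j+k}$, and the homogenization does not raise either degree above $n-1$, while $z_j$ appears linearly; variable-disjointness of the summands transfers these bounds to $\svgenh_k$ itself.

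Every step is a direct calculation from the definitions, so I do not anticipate any genuine difficulty. The only point worth stating carefully is the homogenization computation: one must verify that multiplying by $y_{j+k}^{n-1}$ exactly clears the denominators produced by substituting $y_j/y_{j+k}$ into the degree-$(n-1)$ polynomial $L_i$, so that one lands in $\f[y_j,y_{j+k}]$ with a homogeneous polynomial of degree $n-1$, and not in a higher-degree or non-polynomial object.
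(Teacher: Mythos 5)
Your proof is correct, and it carries out the natural direct verification that the paper evidently considers immediate (the observation is stated without proof). All three parts check out: the total degree of each coordinate is indeed $n$ (degree $n-1$ from the homogenized Lagrange factor plus $1$ from $z_j$), the specialization $y_{j+k}\mapsto 1$, $y_j\mapsto\alpha_i$ gives exactly the $1$-independence property for each summand, and variable-disjointness transfers the individual-degree bound of $n-1$ to the full map.
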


We next show how we can use $k$-independent polynomial maps in order to, roughly, simulate a $k$th order directional derivative or, project a polynomial to a subspace of co-dimension $k$. We first need to define the notion of a directional derivative.

\begin{definition}\label{def:DeriveLinear}
	For an $n$-variate polynomial $f\in\f[\xn]$ and  $\vv=(v_1,\ldots,v_n)\in\f^n$, the \emph{derivative of $f(\xn)$ in the direction $\vv$} is defined as:
	$$ \partiald{f}{\vv}=\sum_{i=1}^n v_i\cdot\partiald{f}{x_i}. $$
\end{definition}

If $\f$ has positive characteristic then by $\frac{\partial F}{\partial x_i}$ we refer to the formal derivative (which in the case of fields of characteristic zero is equal to the analytical definition). Observe that we still have that $$\frac{\partial^2 f}{\partial y \partial x} = \frac{\partial^2 f}{\partial x \partial y}  \text{,} \quad 
\frac{\partial (fg)}{\partial x} = \frac{\partial f}{\partial x} \cdot g+ \frac{\partial g}{\partial x} \cdot f  \quad \text{and} \quad \frac{\partial f\bra{g_1(\xn),\ldots,g_m(\xn)}}{\partial x_k} =\sum_{i=1}^{m}\frac{\partial f}{\partial y_i}\bra{g_1(\xn),\ldots,g_m(\xn)} \cdot \frac{\partial g_i}{\partial x_k}\;, $$ 
where in the last expression $f$ is an  $m$ variate polynomial, and  $g_1,\ldots,g_m$ are $n$ variate polynomials.

We shall often take derivatives according to a \emph{dual set} to a set of linearly independent linear functions:
\begin{definition}\label{def:dual}
A dual set for $m$ linearly independent linear functions (recall that we say that linear functions are linearly independent if and only if their degree-$1$ homogeneous parts are linearly independent)  in $n\geq m$ variables,   $\ell_1(\xn),\ldots,\ell_m(\xn)$ is a set of $m$ vectors $\cbra{{\vv_i}}\subset \f^n$ such that $\ell_i^{[1]}({\vv_j})=\delta_{i,j}$. 	
\end{definition}

\begin{lemma}\label{lem:dualder}
	Let $\ell_1,\ldots,\ell_m \in \f[x_1,\ldots,x_n]$, for $n\geq m$, be linearly independent linear functions. Let $\cbra{{\vv_i}}\subset \f^n$ be a dual set. Let $g\in\f[y_1,\ldots,y_m]$ be a polynomial. 
	Then, for $f(\xn)=g\bra{\ell_1(\xn),\ldots,\ell_m(\xn)}$
	it holds that
	$$ \partiald{f}{\vv_i}(\xn)=\partiald{g}{y_i}\bra{\ell_1(\xn),\ldots,\ell_m(\xn)}\;. $$
	%\doricmnt{duplicate?} 
\end{lemma}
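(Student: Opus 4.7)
The plan is a direct application of the chain rule, combined with the defining property of the dual set $\{\vv_i\}$. First I would compute, using the chain rule recalled in the paragraph following \cref{def:DeriveLinear}, the partial derivative of $f$ with respect to a single variable $x_k$:
$$\partiald{f}{x_k}(\xn) = \sum_{j=1}^m \partiald{g}{y_j}\bra{\ell_1(\xn),\ldots,\ell_m(\xn)} \cdot \partiald{\ell_j}{x_k}(\xn).$$
Since each $\ell_j$ is a linear function, the partial $\partial \ell_j / \partial x_k$ is simply the coefficient of $x_k$ in $\ell_j^{[1]}$, and in particular it is a field constant.

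Next I would write out the directional derivative $\partial f / \partial \vv_i$ by plugging the above into \cref{def:DeriveLinear}. Denoting $\vv_i = (v_{i,1},\ldots,v_{i,n})$ and swapping the order of summation,
$$\partiald{f}{\vv_i}(\xn) = \sum_{k=1}^n v_{i,k} \partiald{f}{x_k}(\xn) = \sum_{j=1}^m \partiald{g}{y_j}\bra{\ell_1(\xn),\ldots,\ell_m(\xn)} \cdot \bra{\sum_{k=1}^n v_{i,k}\, \partiald{\ell_j}{x_k}(\xn)}.$$
The inner parenthesized sum is, by definition, $\ell_j^{[1]}(\vv_i)$, since $\ell_j^{[1]}$ is the homogeneous linear form whose coefficient on $x_k$ equals $\partial \ell_j / \partial x_k$.

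Finally I would invoke the dual-set property from \cref{def:dual}: $\ell_j^{[1]}(\vv_i) = \delta_{i,j}$. This collapses the outer sum over $j$ to the single term $j = i$, yielding
$$\partiald{f}{\vv_i}(\xn) = \partiald{g}{y_i}\bra{\ell_1(\xn),\ldots,\ell_m(\xn)},$$
as desired. There is essentially no obstacle here; the only thing to be slightly careful about is that everything is stated in terms of formal derivatives, so that the argument is valid in arbitrary characteristic. This is already guaranteed by the chain-rule identity recalled immediately after \cref{def:DeriveLinear}, which the paper explicitly states holds for formal derivatives.
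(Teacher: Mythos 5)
Your proof is correct and follows the same route as the paper: expand the directional derivative via the chain rule, recognize the inner sum as $\ell_j^{[1]}(\vv_i)$, and collapse the sum using the dual-set property $\ell_j^{[1]}(\vv_i)=\delta_{i,j}$. The paper's proof is the same computation written as a single chain of equalities, so there is no substantive difference.
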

\begin{proof}
	\begin{align*}
			\partiald{f}{\vv_i}(\xn) &=\sum_j v_{i,j}\cdot\partiald{f}{x_j}(\xn)=\sum_{j,k}v_{i,j}\cdot
			\partiald{\ell_k}{x_j}\cdot\partiald{g}{y_k}\bra{\ell_1(\xn),\ldots,\ell_m(\xn)}\\
			&=\sum_k\ell_k^{[1]}(\vv_i)\cdot\partiald{g}{y_k}\bra{\ell_1(\xn),\ldots,\ell_m(\xn)}=\partiald{g}{y_i}\bra{\ell_1(\xn),\ldots,\ell_m(\xn)}\;. \qedhere
	\end{align*}
\end{proof}

\begin{lemma}\label{lem:indDerivLinear}
	Let $f\in\f[\xn]$ where $\xn=(x_1,\ldots,x_n)$. Let $H(\bm w):\f^t\to\f^n$ be a polynomial map in variables $\bm w$, and let $\G(\bm y,\bm z)$ be a \ropinvgen[k]{} such that $\text{\normalfont var}(H)\cap\text{\normalfont var}(\G)=\emptyset$. Then, for any  $\vv_1,\ldots,\vv_k\in\f^n$:
	$$ \partiald{^k f}{\vv_1\partial\vv_2\cdots\partial\vv_k}\circ H\neq 0\quad \Rightarrow \quad f\circ(\G+H)\neq 0\;. $$
\end{lemma}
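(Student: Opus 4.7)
The plan is to prove the lemma by induction on $k$, with essentially all the content in the base case $k=1$.

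For the base case, I start from the observation that $\partiald{f}{\vv_1}=\sum_{j=1}^{n}v_{1,j}\,\partiald{f}{x_j}$, so composing with $H$ yields $\partiald{f}{\vv_1}\circ H=\sum_{j}v_{1,j}\,(\partiald{f}{x_j}\circ H)$. Nonvanishing of this sum forces the existence of some index $j\in[n]$ for which $\partiald{f}{x_j}\circ H\neq 0$. By the definition of a $1$-independent map, there is an assignment $\bm a_j\in\f^{t}$ with $\G(\bm a_j,z)=z\cdot e_j$. Substituting $\bm y=\bm a_j$ into $f(H(\bm w)+\G(\bm y,z))$ gives $f(H(\bm w)+z\,e_j)$, and a direct chain rule computation then yields
$$\left.\partiald{}{z}\,f\bigl(H(\bm w)+z\,e_j\bigr)\right|_{z=0}=\partiald{f}{x_j}(H(\bm w))\neq 0.$$
Hence $f(H+z\,e_j)$ is a nonzero polynomial in $\bm w,z$, and since it is a specialization of $f\circ(\G+H)$ at $\bm y=\bm a_j$, the latter must also be nonzero.

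For the inductive step, I would use the definition of $k$-independence to decompose $\G=\G_1+\G'$, where $\G_1(\bm y_1,z_1)$ is a $1$-independent map and $\G'$ is a $(k-1)$-independent map on variables disjoint from those of $\G_1$ (and from those of $H$). Set $g\triangleq\partiald{^{k-1}f}{\vv_2\,\partial\vv_3\cdots\partial\vv_k}$ and $H'\triangleq H+\G_1$. The hypothesis rewrites as $\partiald{g}{\vv_1}\circ H\neq 0$, so applying the base case to the triple $(g,H,\G_1)$ with direction $\vv_1$ yields $g\circ H'\neq 0$. The variables of $H'$ are disjoint from those of $\G'$, so the inductive hypothesis applied to $(f,H',\G')$ with the remaining directions $\vv_2,\ldots,\vv_k$ gives $f\circ(\G'+H')=f\circ(\G+H)\neq 0$, completing the induction.

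The main point to get right is the ``peeling'' matching: each of the $k$ directions in the iterated derivative is consumed by exactly one $1$-independent summand of $\G$, which is why the hypothesis involves precisely $k$ directions. The only other care needed is that over positive characteristic the derivatives should be read as formal derivatives (as the paper stresses), but we only use the implication ``if the $z$-derivative of $P(\bm w,z)$ at $z=0$ is nonzero then $P\not\equiv 0$,'' which holds over any field, so no additional care is required.
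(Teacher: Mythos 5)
Your proof is correct and follows the paper's argument: the base case (pass to some coordinate derivative $\partiald{f}{x_j}\circ H\neq 0$, specialize the control variables so that $\G$ contributes $z\cdot e_j$, then take the $z$-derivative at $z=0$) is exactly the paper's base case, and your formal induction decomposing $\G=\G_1+\G'$ is a cleaner packaging of the paper's "iterative application of the $k=1$ case," peeling off one $1$-independent summand per direction vector.
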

\begin{proof}
	By definition of \ropinvgen[k]{}s, $\G=\G_1(\bm{y_1},z_1)+\ldots+\G_k(\bm{y_k},z_k)$ for some variable-disjoint \ropinvgen[1]{}s $\G_1,\ldots,\G_k$. It is therefore enough to prove the lemma for $k=1$, as we can replace $f$ with $\partiald{^{k-1} f}{\vv_2\cdots\partial\vv_k}$, 
	$H$ with $H+\G_2+\ldots+\G_k$ and $\G$ with $\G_1$; by iterative application of the result for $k=1$, we will get the general result for an arbitrary $k\in\n$.
	
	Denote $H=(H_1,H_2,\ldots,H_n)$. By \cref{def:DeriveLinear},  the condition $\partiald{f}{\vv}\circ H\neq 0$ implies that there exists some $i\in[n]$ such that $\partiald{f}{x_i}\circ H\neq 0$. Assume, WLOG, $\partiald{f}{x_1}\circ H\neq 0$. As $\G$ is a \ropinvgen[1]{}, there exists some $\valpha\in\f^{|\bm{y_1}|}$ such that $\restr{f\circ(\G+H)}{\bm{y_1}=\valpha}=f(z_1+H_1,H_2,\ldots,H_n)$; denote $g\triangleq \restr{f\circ(\G+H)}{\bm{y_1}=\valpha}$. As no coordinate of $H$ depends on $z_1$:
	$$ \partiald{g}{z_1}= \partiald{\bra{z_1+H_1}}{z_1}\cdot \partiald{f}{x_1}(z_1+H_1,H_2,\ldots,H_n)= 1\cdot\left(\partiald{f}{x_1}\right)(z_1+H_1,H_2,\ldots,H_n) $$
	and therefore:
	$$ \restr{\partiald{g}{z_1}}{z_1=0}=1\cdot\left(\partiald{f}{x_1}\right)(0+H_1,H_2,\ldots,H_n)=\left(\partiald{f}{x_1}\right)\circ H\neq 0 \;.$$
	As $g$ is a projection of $f\circ(\G+H)$, it follows that $f\circ (\G+H)\neq 0$.
\end{proof}

The next lemma shows how to use $k$-independent maps in order to project a polynomial to a subset of its coordinates.

\begin{lemma}\label{lem:indProjectZero}
	Let $m\leq n\in \n$ and $g(\bw)\in\f[w_1,\ldots,w_m]$. Let $f(\xn)=g(\ell_1(\xn),\ldots,\ell_m(\xn))$ for linearly independent linear functions $\ell_1(\xn),\ldots,\ell_m(\xn)$.  Let $\G(\bm y,\bm z)$ be a \ropinvgen[k]{}. For a set $S\subseteq[n]$ of size $k$ denote by $\tilde g(x_i:i\in[m]\setminus S)=\restr{g}{S=0}$ the projection of $g$ to the variables outside of $S$. Then, there exist linearly independent linear functions $\{\tilde\ell_i(\xn):i\in[m]\setminus S\}$, additional linear functions $\vL(\xn)=(L_1(\xn),\ldots,L_k(\xn))$ and an assignment $\valpha\in\f^{|\bm y|}$ such that:
	$$f(\xn+\G(\valpha,\vL(\xn))) =  \tilde g(\tilde\ell_i(\xn):i\in[m]\setminus S)\;.$$
\end{lemma}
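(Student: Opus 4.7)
The plan is to exploit the $1$-independent structure of $\G$ to produce shifts in $\f^n$ that are supported on carefully chosen coordinates. Write $\G = \G_1(\bm{y}_1, z_1) + \cdots + \G_k(\bm{y}_k, z_k)$ as a sum of variable-disjoint $1$-independent maps. By the defining property, for any choice of $k$ coordinate indices $p_1, \ldots, p_k \in [n]$ there is an assignment $\valpha$ to the control variables so that $\G(\valpha, \bm z) = \sum_{j=1}^k z_j \cdot \ve_{p_j}$, where $\ve_p$ is the $p$-th standard basis vector of $\f^n$. With such $\valpha$ fixed, the substitution $\xn \mapsto \xn + \G(\valpha, \vL(\xn))$ shifts $\xn$ by $\sum_j L_j(\xn)\, \ve_{p_j}$, and by affinity
$$\ell_i\bra{\xn + \G(\valpha, \vL(\xn))} = \ell_i(\xn) + \sum_{j=1}^k L_j(\xn)\cdot \ell_i^{[1]}(\ve_{p_j})\;.$$

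Next I would choose the $p_j$'s and the $L_j$'s so that this quantity vanishes for every $i \in S$ (treating $S \subseteq [m]$, as required for $\restr{g}{S=0}$ to make sense). Writing $S = \{i_1, \ldots, i_k\}$, the condition is the linear system $M \cdot \vL(\xn)^T = -(\ell_{i_1}(\xn), \ldots, \ell_{i_k}(\xn))^T$, where $M_{r,j} = \ell_{i_r}^{[1]}(\ve_{p_j})$ is the $p_j$-th coefficient of the linear form $\ell_{i_r}^{[1]}$. Since the linear parts $\{\ell_i^{[1]} : i \in S\}$ are $k$ linearly independent vectors in $\f^n$, the $k \times n$ matrix formed by them has rank $k$, and so admits $k$ columns forming an invertible $k \times k$ submatrix; taking those column indices as $p_1, \ldots, p_k$ makes $M$ invertible. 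Solving the system then gives $\vL(\xn) = -M^{-1}\bra{\ell_{i_1}(\xn), \ldots, \ell_{i_k}(\xn)}^T$, an explicit tuple of linear functions of $\xn$.

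With these choices, for $i \in S$ the composed $\ell_i$ evaluates to $0$, while for $i \in [m]\setminus S$ I define $\tilde\ell_i(\xn) \triangleq \ell_i(\xn) + \sum_{j=1}^k L_j(\xn)\cdot \ell_i^{[1]}(\ve_{p_j})$. Plugging into $f = g(\ell_1, \ldots, \ell_m)$ yields exactly $\tilde g\bra{\tilde\ell_i : i \in [m]\setminus S}$, which is the desired identity. The final point to check is that $\{\tilde\ell_i : i \in [m] \setminus S\}$ is linearly independent: any nontrivial dependence among them, combined with the fact that each $L_j$ is itself a linear combination of $\ell_{i_1}, \ldots, \ell_{i_k}$, would translate into a nontrivial dependence inside the full family $\{\ell_i : i \in [m]\}$, contradicting the hypothesis.

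The main (only) obstacle is locating the $k$ coordinate indices $p_1, \ldots, p_k$ for which $M$ is invertible. This is a direct consequence of the full row rank of $\{\ell_i^{[1]} : i \in S\}$, but it is precisely the step that bridges the combinatorial $1$-independence property of $\G$ with the algebraic goal of realising $\restr{g}{S = 0}$. Once $M$ is invertible, the rest of the argument — defining $\vL$, reading off the $\tilde\ell_i$, and verifying linear independence — is routine bookkeeping.
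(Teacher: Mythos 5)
Your proof is correct, and it takes a route that is related to but cleaner than the paper's. The paper reduces to the case $k=1$ and then iterates: for a single $\ell_1$ to be annihilated it picks one coordinate $x_i$ with $\partial \ell_1/\partial x_i = \beta_1 \neq 0$, programs the $1$-independent piece to shift only coordinate $i$, and sets $L(\xn)=-\ell_1(\xn)/\beta_1$, producing $\tilde\ell_j = \ell_j - (\beta_j/\beta_1)\ell_1$. You instead handle all $k$ constraints simultaneously: you observe that since $\{\ell_i^{[1]} : i\in S\}$ has rank $k$, one can pick $k$ columns $p_1,\ldots,p_k$ so that the resulting $k\times k$ submatrix $M$ is invertible, program $\G$ to shift exactly those coordinates, and solve $M \vL^T = -(\ell_{i_1},\ldots,\ell_{i_k})^T$ for $\vL$ in one shot. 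Your $k=1$ case literally specializes to the paper's ($M=(\beta_1)$, $M^{-1}=1/\beta_1$), so the core idea is the same, but your linear-algebraic formulation dispenses with the inductive bookkeeping (the paper's ``apply the result iteratively'' step is left a bit informal, and one has to convince oneself that the nested substitutions compose to a single shift of the stated form). Your linear-independence argument at the end is also sound: each $\tilde\ell_i$ differs from $\ell_i$ only by a combination of $\{\ell_{i'} : i'\in S\}$, so a relation among the $\tilde\ell_i$ with $i\notin S$ would yield a relation inside the full independent family $\{\ell_1,\ldots,\ell_m\}$. You also correctly read $S$ as a subset of $[m]$ rather than $[n]$, which is what the restriction $\restr{g}{S=0}$ requires (the paper's statement has a small typo there).
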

\begin{proof}
	It is enough to prove the lemma for the case $k=1$, as we may then define $\tilde f(\xn)\triangleq f(\xn+\G(\valpha,L_1(\xn)))=\tilde g(\tilde\ell_1(\xn),\ldots,\tilde\ell_{m-1}(\xn))$ and apply the result iteratively. Thus, assume $k=1$, and WLOG assume $S=\{x_1\}$ (thus, $\tilde g(w_2,\ldots,w_m)=g(0,w_2,\ldots,w_m)$).
	
%	Let $(A,\vb)\in\GLplus$ such that $f(\xn)=g(A^T\xn+\vb)$ (the $i$th coordinate of $A^T\xn+\vb$ is $\ell_i(\xn)$, and $\ell_i(\vzero)=b_i$). 
Let $x_i$ be some variable with a non-zero coefficient in $\ell_1(\xn)$. Such a variable exists as the $\ell_j$s are linearly independent. For $j\in [m]$, denote $\beta_j= \partiald{\ell_j}{x_i}$, i.e. $\beta_j$ is the coefficient of $x_i$ in $\ell_j$. By our choice of $i$, $\beta_1 \neq 0$. Choose some $\valpha\in\f^{|\bm y|}$ such that $\G(\valpha,z_1)$ has $z_1$ in the $i$th coordinate, and $0$ in all other coordinates. Define $L(\xn)\triangleq -\frac{\ell_1(\xn)}{\beta_1}$, so we get:
	$$ f(\xn+\G(\valpha,L(\xn))=f\bra{x_1,x_2,\ldots,x_{i-1},x_i-\frac{\ell_1(\xn)}{\beta_1},x_{i+1},\ldots,x_n}.$$
	Observe that for every $i$,
	$$\ell_i\bra{\xn+\G(\valpha,L(\xn)} = \ell_i\bra{x_1,x_2,\ldots,x_{i-1},x_i-\frac{\ell_1(\xn)}{\beta_1},x_{i+1},\ldots,x_n} = \ell_i(\xn)-\frac{\beta_i}{\beta_1}\cdot \ell_1(\xn) \;.$$
	In particular, $\ell_1\bra{\xn+\G(\valpha,L(\xn)} =0$.
	For $i=2,\ldots,m$, define:
	$$\tilde\ell_i(\xn)\triangleq \ell_i(\xn)-\frac{\beta_i}{\beta_1}\cdot \ell_1(\xn) \;.$$
	As $\ell_1,\ldots,\ell_m$ are linearly independent, it follows that $\tilde\ell_2,\ldots,\tilde\ell_m$ are also linearly independent. We get that
	\begin{equation*}
		f(\xn+\G(\valpha,L(\xn)))=g(0,\tilde\ell_2(\xn),\ldots,\tilde\ell_m(\xn))=\tilde g(\tilde\ell_2(\xn),\ldots,\tilde\ell_m(\xn))\;. \qedhere
	\end{equation*}
\end{proof}

\subsection{Proof of \cref{thm:tIndNotDense}}

	We next prove that there are $k$-independent maps that are provably not robust. The proof is by giving a different construction of such maps that, for an appropriate arrangement of the $n$ variables in a matrix, is guaranteed to output matrices of rank at most $k$. Thus, a determinant of any $(k+1) \times (k+1)$ minor, a polynomial that has small formulas for small values of $k$, vanishes on the output of any such map.	
	
	The fact that such a construction exists 
	%\doricmnt{either "such a construction" or "such constructions"} 
	was already noticed in \cite{DBLP:conf/coco/ForbesSTW16} (Construction 6.3 of the full version of the paper). For completeness we repeat the construction here. 
	
	\begin{proof} \emph{(of \cref{thm:tIndNotDense})}
		Fix the number of variables $n$ and assume WLOG $n$ is a perfect square, i.e., $n=m^2$. We index the variables as $x_{i,j}$ for $i,j\in[m]$. 
		%For $t=2(\log(n)+1)$, our chosen candidate 
		We let $f=\det_{t+1}$. 
		By \cite{gupta2013arithmetic}, over fields of characteristic zero, $f$ has a $t^{O(\sqrt t)}=O(n)$ sized $\dthree$ formula, which is polynomial in $n$ for $t = O\left( \left(\log n / \log \log n \right)^2\right)$. Over fields of positive characteristic the formula size is quasipolynomial in $t$, and the $\dthree$ complexity is at most $t!$, which is polynomial in $n$ for $t = O\left(\log n / \log \log n \right)$. 
		
		Denote by $\bm{M}$ the $(t+1)\times(t+1)$ symbolic matrix of variables $\bm{M}_{i,j}=x_{i,j}$. We first construct a uniform \ropinvgen[1]{} $\G_1$ such that $\bm{M}\circ \G_1$ is of rank $1$, and define $\G$ to be a sum of $t$ variable-disjoint copies of $\G_1$. As $rank(\bm{M}\circ \G_1)=1$, we have $rank(\bm{M}\circ \G)\leq t$ so $\det_{t+1}(\bm{M}\circ \G)=0$, as required. We now focus on $\G_1$.
		
		Fix $n$ distinct field elements $\{\alpha_{i,j}\}_{i,j=1}^m\subseteq\f$ and let $w,y,z$ be new variables. Define two vectors of polynomials of degree $n-1$,   $R=(R_1,\ldots,R_m),C=(C_1,\ldots,C_m)\in\f[y]^m$, such that for every $k\in[m]$ $R_k$ and $C_k$ satisfy
		\begin{align*}
			R_k(\alpha_{i,j})&=\delta_{i,k}\quad \text{and}\quad C_k(\alpha_{i,j})=\delta_{j,k}.
		\end{align*}
		Define $\G_1(w,y,z)$ as the $m\times m$ matrix $z\cdot (w^{2n-2}R(\frac{y}{w})\cdot C(\frac{y}{w})^T)$ (the $(i,j)$ entry of $\G_1$ is $z\cdot w^{2n-2}\cdot R_i(\frac{y}{w})\cdot C_j(\frac{y}{w})$). As every coordinate of $\G_1$ is a homogeneous polynomial of degree $2n-1$, $\G_1$ is a uniform polynomial map. For any $i,j\in[m]$ we have that
		$$\G_1(1,\alpha_{i,j},z)=z\cdot(R_{i'}(\alpha_{i,j})\cdot C_{j'}(\alpha_{i,j}))_{i',j'\in[m]}=z\cdot(\delta_{i,i'}\delta_{j,j'})_{i',j'\in[m]}\;.$$
		The above matrix has $z$ in entry $(i,j)$ and $0$ everywhere else, so $\G_1$ is a uniform \ropinvgen[1]{}. The resulting matrix $\bm{M}\circ \G_1$ is of rank $1$ since it is a product of vectors $R\cdot C^T$, so the variable-disjoint sum $\G=\sum_1^t \G_1(w_i,y_i,z_i)$ is a uniform \ropinvgen{} satisfying $f\circ \G=0$.
	\end{proof}

\section{Interpolation and reconstruction for orbits of the continuant polynomial}\label{sec:cont}

We start by proving that any uniform $1$-independent map hits $\ccont{\f}$ (\cref{thmhitcont}).

\begin{proof}[Proof of \cref{thmhitcont}]
	Let $f(x_1,\ldots,x_n)=\cont_{m}\bra{\ell_1(\xn)+b_1,\ldots,\ell_m(\xn)+b_m}$, where the $\ell_i$s are linear forms.
	Observe that $\cont_{m}(y_1,\ldots,y_m)$ is a multilinear polynomial that has a unique  monomial of degree $m$ and all other monomials are of smaller degree. Thus,  
	$$\cont_{m}(y_1,\ldots,y_m)= \prod_{i=1}^{m}y_i + \tilde{\cont}_{m-1}(y_1,\ldots,y_m)\;,$$
	where $\deg\bra{\tilde{\cont}_{m-1}}\leq m-1$.
	Hence,
	$$f(\xn)=\cont_{m}\bra{\ell_1(\xn)+b_1,\ldots,\ell_m(\xn)+b_m} = \prod_{i=1}^{m}\ell_i + \tilde{f}\bra{\ell_1,\ldots,\ell_m}\;,$$
	where $\deg\bra{\tilde{f}}\leq m-1$.
	
	Let $\G_1$ be a uniform $1$-independent polynomial map into $\f^n$. Let $d$ be the degree of the different components of $\G_1$. 
	\cref{obs:kwise}(\ref{item:coordsGenInd}) implies that  $\bra{\prod_{i=1}^{m}\ell_i} \circ \G_1\neq 0$ and hence it is a nonzero homogeneous polynomial of degree $m\cdot d$. As $\deg\bra{\tilde{f}\circ\G_1} \leq (m-1)\cdot d< \deg\bra{\bra{\prod_{i=1}^{m}\ell_i} \circ \G_1}$, we have that	
	$$f \circ \G_1 = \bra{\prod_{i=1}^{m}\ell_i} \circ \G_1 + \tilde{f}\circ\G_1\neq 0$$
	and the claim follows.
\end{proof}

\cref{cor:hit-cont} follows immediately from \cref{thmhitcont}, \cref{obsHitSetGen} and the construction of a uniform generator in \cref{defsvgenhom}.

\begin{remark}
	A similar argument would show that   $\G(y,z)\triangleq \bra{y^{n-1},y^{n-2}z,\ldots,z^{n-1}}$ is a hitting set generator for $\cont_{m}^{\gla{n}{\f}}$, which leads to a hitting set of size $n^4$. 
\end{remark}
We now turn to giving a reconstruction algorithm for $\ccont{\f}$. We start by proving some simple lemmas that will be used for constructing an interpolating set.

\begin{definition}\label{def:triplet}
		We call an ordered triplet $(i,j,k) \in \Z_m^3$ a \emph{consecutive triplet} 
		%\doricmnt{Use slash emph on "consecutive triplet"? It's a def} 
		if $j=i+1$ and $k=i+2$, or $j=k+1$ and $i=k+2$, where all equalities are taken modulo $m$.
\end{definition}

\begin{lemma}\label{lem:triplet}
	Let $m\geq 3$. Then $(i,j,k)$ is a consecutive triplet if and only if every monomial in $\cont_{m}(x_0,\ldots,x_{m-1})$ that contains both $x_i$ and $x_k$, also contains $x_j$.
\end{lemma}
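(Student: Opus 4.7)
The plan is to first give an explicit combinatorial description of the monomials of $\cont_m$, and then deduce both directions of the biconditional.

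Expanding the trace by the transfer‐matrix method (using that $\begin{pmatrix}x_i & 1\\1 & 0\end{pmatrix}$ has a zero in the lower‐right entry), each term in $\cont_m(x_0,\ldots,x_{m-1})$ corresponds to a cyclic sequence $s\in\{0,1\}^{\Z_m}$ with no two cyclically consecutive $1$'s, with the rule that the resulting monomial is $\prod_{i:\,s_{i-1}=s_i=0}x_i$. Equivalently, letting $B=\{i\in\Z_m:s_i=1\}$ (an independent set in the cycle $C_m$), the monomial equals $\prod_{i\in T_B}x_i$ where $T_B\triangleq \Z_m\setminus (B\cup(B+1))$; distinct $B$'s give distinct monomials. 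This description reduces the lemma to a statement about independent sets in $C_m$.

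The forward direction is then immediate. Suppose $(i,j,k)$ is consecutive, say $j=i+1$ and $k=i+2$ (the other case being symmetric). If $x_i$ and $x_k$ both appear in the monomial for some valid $B$, then $i-1,i,i+1,i+2\notin B$; in particular $j-1=i$ and $j=i+1$ are not in $B$, so $j\in T_B$, i.e.\ $x_j$ appears.

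For the converse, assume $(i,j,k)$ is not consecutive; since $m\geq 3$ we take $i,j,k$ to be distinct (the case where two of them coincide is vacuous). We exhibit a valid $B$ whose associated monomial contains $x_i$ and $x_k$ but not $x_j$, using one of two singleton candidates $B_1=\{j-1\}$ or $B_2=\{j\}$. Both are trivially independent in $C_m$ since $m\geq 3$. A direct computation gives $T_{B_1}=\Z_m\setminus\{j-1,j\}$ and $T_{B_2}=\Z_m\setminus\{j,j+1\}$, so $B_1$ works iff $j\notin\{i,i+1,k,k+1\}$ and $B_2$ works iff $j\notin\{i-1,i,k-1,k\}$. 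Using distinctness of $i,j,k$, the only way for both to fail is $j\in\{i-1,k-1\}\cap\{i+1,k+1\}$. Since $m\geq 3$ rules out $j=i\pm 1=i\mp 1$ and $j=k\pm 1=k\mp 1$, this forces either $(k=i+2,\,j=i+1)$ or $(k=i-2,\,j=i-1)$, each of which is a consecutive triplet, contradicting our assumption.

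The heart of the argument is the combinatorial description of monomials; once it is in hand, the rest is elementary case analysis, and I do not anticipate a genuine obstacle beyond keeping the indexing straight.
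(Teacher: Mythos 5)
Your proof is correct, and it takes a genuinely different route from the paper's. The paper proves the converse by exploiting the cyclic invariance of the trace: it rotates so that $i<j<k$, sets the variables $x_{i+2},\ldots,x_{k-1}$ (which include $x_j$) to zero, observes that the resulting polynomial is again a smaller continuant whose top-degree monomial contains $x_i$ and $x_k$, and concludes. You instead extract an explicit combinatorial description of the monomial support of $\cont_m$ — each monomial is $\prod_{i\in T_B}x_i$ with $T_B=\Z_m\setminus(B\cup(B+1))$ for some independent set $B$ in the cycle $C_m$ — and then both directions become elementary case analysis on the singletons $B_1=\{j-1\}$, $B_2=\{j\}$. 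This is arguably cleaner for the converse and makes the structure of $\text{mon}(\cont_m)$ visible; the paper's proof is more self-contained with the matrix identity but requires the rotation and a degree argument.

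One small inaccuracy worth flagging: the parenthetical ``distinct $B$'s give distinct monomials'' is false when $m$ is even, where the two alternating singleton-free independent sets $\{0,2,\ldots\}$ and $\{1,3,\ldots\}$ both yield $T_B=\emptyset$ and hence the same constant monomial (this matches the constant term $2$ of $\cont_m$ for even $m$). The claim is not needed, though: since every term in the trace expansion has coefficient $+1$, there is no cancellation, so the monomial support is exactly the image of $B\mapsto\prod_{i\in T_B}x_i$ over independent sets $B$, which is all your argument uses. You should either delete the false parenthetical or replace it with the no-cancellation observation. The tacit assumption that $i,j,k$ are pairwise distinct matches the paper's intended use (elsewhere the paper explicitly says ``three distinct indices''), and the statement is indeed degenerate without it, so that is not an issue.
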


\begin{proof}
	Observe that a polynomial $f(\xn)$ has a monomial containing $x_i$ and $x_k$ but not $x_j$, if and only if this is also the case when we set $x_j=0$. Assume that $(i,j,k)$ is a consecutive triplet. Then,
	\begin{align*}
		\cont_{m}(x_0,\ldots,x_i,0,x_{i+2},\ldots,x_{m-1}) = \trace \bra{\begin{pmatrix}
				x_0 & 1 \\
				1 & 0
			\end{pmatrix}  
		 \cdot \ldots \cdot 
		 \begin{pmatrix}
		 	x_{i} & 1 \\
		 	1 & 0
		 \end{pmatrix}
		 \cdot 
		 \begin{pmatrix}
		 0 & 1 \\
 		1 & 0
		\end{pmatrix}
		\cdot\begin{pmatrix}
			x_{i+2} & 1 \\
			1 & 0
		\end{pmatrix}  
		\cdot \ldots \cdot \begin{pmatrix}
			x_{m-1} & 1 \\
			1 & 0
		\end{pmatrix}  
	 } \\
 = \trace \bra{\begin{pmatrix}
 			x_0 & 1 \\
 			1 & 0
 		\end{pmatrix}  
 		\cdot \ldots \cdot 
 		\begin{pmatrix}
 			x_{i-1} & 1 \\
 			1 & 0
 		\end{pmatrix}
 		\cdot  \begin{pmatrix}
 		x_i+x_{i+2} & 1 \\
 		1 & 0
 	\end{pmatrix}  
 	\cdot \begin{pmatrix}
 		x_{i+3} & 1 \\
 		1 & 0
 	\end{pmatrix}  
 	\cdot \ldots \cdot \begin{pmatrix}
 		x_{m-1} & 1 \\
 		1 & 0
 	\end{pmatrix}  
 }\;.
	\end{align*}
	It immediately follows that no monomial of $\cont_{m}(x_0,\ldots,x_i,0,x_{i+2},\ldots,x_{m-1})$ contains both $x_i$ and $x_{i+2}$.
	
	We now prove the second direction in the claim. Since $\cont_{m}$ is a trace of a matrix product, by properties of trace we can assume WLOG that  $i<j<k$, by first  rotating the order of the matrices until we have $i<j<k$ or $k<j<i$ (where $a<b$ means that the matrix corresponding to $a$ comes before that of $b$). As both cases are equivalent we can assume that $i<j<k$. We next handle this case.
	Assume WLOG that $j-i>1$. Set $x_r=0$  for every $i+2 \leq r < k$, to $0$.  We get that the new polynomial has the form 
		\begin{align*}
			& \trace \bra{\begin{pmatrix}
					x_{0} & 1 \\
					1 & 0
				\end{pmatrix}
				\cdot \ldots \cdot \begin{pmatrix}
					x_{i} & 1 \\
					1 & 0
				\end{pmatrix}
				\cdot \begin{pmatrix}
					x_{i+1} & 1 \\
					1 & 0
				\end{pmatrix}
			\cdot \begin{pmatrix}
					0&1\\1&0
					\end{pmatrix}^{k-i-2}
				\cdot  \begin{pmatrix}
					x_{k} & 1 \\
					1 & 0
				\end{pmatrix}
				\cdot \ldots \begin{pmatrix}
					x_{m-1} & 1 \\
					1 & 0
				\end{pmatrix}
				 }\\ &=\left\{\begin{array}{lr}
				 \cont_{m-k+i+2}\bra{x_0,\ldots,x_i,x_{i+1},x_k,\ldots,x_{m-1}}, & \text{for }k-i \text{ even}\\
				  \cont_{m-k+i+1}\bra{x_0,\ldots x_{i-1},x_i,x_{i+1}+x_k,\ldots,x_{m-1}}, & \text{for }k-i \text{ odd}
			 \end{array}\right. \;,  %\doricmnt{What\;are\;a_i,a_k?}, 
		\end{align*} 
	and a monomial of maximal degree in  this polynomial contains both $x_i$ and $x_k$ (when $k-i$ is even there is a unique monomial of maximal degree, and when $k-i$ is odd there are two such monomials). 
	%\doricmnt{The new polynomial has the above form if and only if $k-i$ is odd, no? Otherwise there's an extra matrix $\begin{pmatrix}
%	0&1\\1&0
%	\end{pmatrix}$ between the matrix containing $x_{i+1}$ and that of $x_k$? There is still a degree $n-(k-i-2)$ monomial with $x_ix_k$ in this case anyway so the result is still true} 
%Indeed, very strange, I probably deleted this by mistake...
	\end{proof}

\begin{corollary}\label{cor:triplet}
	Let $m\geq 3$. Then $(i,j,k)$ is a consecutive triplet if and only if 
	$\restr{\frac{\partial^2 \cont_{m}}{\partial x_i \partial x_k}}{x_j=0}=0$.
\end{corollary}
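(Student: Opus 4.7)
The plan is to deduce this immediately from \cref{lem:triplet} by interpreting the operator $\restr{\frac{\partial^2}{\partial x_i \partial x_k}}{x_j=0}$ combinatorially in terms of the monomial support of $\cont_m$. The key observation is that $\cont_m(x_0,\ldots,x_{m-1})$ is multilinear: each matrix factor in the product defining $\cont_m$ contains a distinct variable with exponent at most $1$, and the trace of a product of such matrices is therefore a multilinear polynomial. Moreover, every monomial in the support of $\cont_m$ appears with coefficient $1$.

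Given multilinearity, I would argue that
\[
    \restr{\frac{\partial^2 \cont_m}{\partial x_i\,\partial x_k}}{x_j=0} \;=\; \sum_{\substack{M\in \text{\normalfont mon}(\cont_m)\\ x_i x_k\mid M,\ x_j\nmid M}} \frac{M}{x_i x_k}\,.
\]
Indeed, differentiating a multilinear monomial $M$ by $x_i$ (resp.\ $x_k$) yields $M/x_i$ (resp.\ zero) depending on whether $x_i\mid M$, so the double derivative picks out exactly the monomials divisible by $x_i x_k$ and divides them by $x_i x_k$. Then setting $x_j=0$ kills every surviving monomial in which $x_j$ appears.

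The monomials $M/(x_ix_k)$ appearing in the sum are pairwise distinct (they come from distinct monomials of $\cont_m$ after dividing by the same factor $x_ix_k$), so no cancellation occurs and the expression is the zero polynomial if and only if the indexing set is empty. That is, $\restr{\frac{\partial^2 \cont_m}{\partial x_i\,\partial x_k}}{x_j=0}=0$ if and only if no monomial of $\cont_m$ contains both $x_i$ and $x_k$ but not $x_j$; equivalently, every monomial containing both $x_i$ and $x_k$ also contains $x_j$. By \cref{lem:triplet}, this is exactly the condition that $(i,j,k)$ is a consecutive triplet, which is what we wanted.

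The only minor obstacle is justifying rigorously that $\cont_m$ is multilinear with $\{0,1\}$-valued coefficients; this is a routine induction on $m$ (or a direct inspection of the matrix product), so no step here is genuinely difficult—the corollary really is just the derivative/restriction translation of \cref{lem:triplet}.
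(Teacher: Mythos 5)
Your proof is correct and is precisely the spelling-out the paper leaves implicit: the corollary is stated without proof as an immediate consequence of \cref{lem:triplet}, and the multilinearity-based reading of $\restr{\frac{\partial^2 \cont_m}{\partial x_i\partial x_k}}{x_j=0}$ is the intended bridge between the analytic condition and the monomial-support condition. One small inaccuracy worth noting: not every monomial in the support of $\cont_m$ has coefficient $1$---the constant term equals $2$ when $m$ is even (corresponding to the two perfect matchings of the $m$-cycle). This does not affect your argument, since the monomials divisible by $x_i x_k$ have degree at least $2$ (where the coefficient is indeed $1$), and in any case distinct monomials $M$ give distinct quotients $M/(x_i x_k)$, so no cancellation can occur whatever the nonzero coefficients are; the coefficient-$1$ claim is both slightly wrong and unnecessary for the conclusion.
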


For every list of three distinct indices $(i,j,k)\in [m]_0^3$ denote  $$\cont_{m}^{(i,j,k)}(\xn) \triangleq \restr{\frac{\partial^2 \cont_{m}}{\partial x_i \partial x_k}}{x_j=0}\;.$$

\begin{lemma}\label{lem:cont-hit-deriv}
	Let $n\geq m\geq 3$ and $t$ be integers. Assume  $H(\bw):\f^t\to \f^n$ is a hitting-set generator for ${\cont_{m}^{(i,j,k)}}^{\gla{n}{\f}}$, for every list of three distinct indices $(i,j,k)\in  [m]_0^3$.
%	has the following property: For every list of three distinct indices $(i,j,k)\in  [m]_0^3$ and for every $m$ linearly independent linear functions in $n$-variables $\ell_0(\xn),\ldots,\ell_{m-1}(\xn)\in \f[\xn]$  it holds that if $
%	\cont_{m}^{(i,j,k)}\bra{\ell_0,\ldots,\ell_{m-1}}\neq 0$ then  $\cont_{m}^{(i,j,k)}\bra{\ell_0,\ldots,\ell_{m-1}} \circ H \neq 0$. 
	Let $\G_3(\by,\bz)$ be a $3$-independent polynomial map (into $\f^n$) that each of its coordinates is a homogeneous linear function in $\bz$, over $\f(\by)$ (for example, $\svgen_k$ has this property, for every $k$). 
	Then, for every $m_1,m_2$ and $ n$ and every two polynomials $f_1 \in \cont_{m_1}^{\gla{n}{\f}}$ and $f_2 \in \cont_{m_2}^{\gla{n}{\f}}$   it holds that $f_1=f_2$ if and only if $f_1\circ(H+\G_3)=f_2\circ(H+\G_3)$.
\end{lemma}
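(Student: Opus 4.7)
I plan to prove the nontrivial direction of the biconditional: assuming $f \triangleq f_1 - f_2 \neq 0$, I will show that $f \circ (H + \G_3) \neq 0$. Writing $f_s = \cont_{m_s}(\bm{\ell}^{(s)})$ with $\bm{\ell}^{(s)} = (\ell^{(s)}_1,\ldots,\ell^{(s)}_{m_s})$ a list of linearly independent linear functions of $\xn$, the overall strategy is to use the $3$-independence of $\G_3$ to simulate two directional derivatives plus one projection via \cref{lem:indDerivLinear,lem:indProjectZero}, reducing the problem to hitting a nonzero polynomial in some orbit ${\cont_{m'}^{(i,j,k)}}^{\gla{n}{\f}}$, which $H$ is assumed to do.

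More concretely, I would decompose $\G_3 = \G_1^{(1)} + \G_1^{(2)} + \G_1^{(3)}$ into variable-disjoint $1$-independent maps; using \cref{lem:indProjectZero} on $\G_1^{(3)}$ and iterating \cref{lem:indDerivLinear} on $\G_1^{(1)}, \G_1^{(2)}$, it suffices to find a linear form $L(\xn)$ and directions $\vv_1, \vv_2 \in \f^n$ such that $\restr{\partiald{^2 f}{\vv_1 \partial \vv_2}}{L = 0}$ is a nonzero polynomial in some orbit ${\cont_{m'}^{(i,j,k)}}^{\gla{n}{\f}}$. The natural choice is to take $\vv_1, \vv_2$ dual (in the sense of \cref{def:dual}) to $\ell^{(1),[1]}_i, \ell^{(1),[1]}_k$ and $L = \ell^{(1)}_j$, for some triple $(i,j,k) \in \Z_{m_1}^3$ that is not consecutive (\cref{def:triplet}): by \cref{lem:dualder} this transforms $f_1$ into $\cont_{m_1}^{(i,j,k)}(\bm{\ell}^{(1)})$, which is nonzero by \cref{cor:triplet} and lies in the required orbit.

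I would complete the argument with a case analysis on how $f_2$ reacts to the same operation. In Case~A, when $\deg(f_1) \neq \deg(f_2)$, or more generally when the top-degree homogeneous parts $\prod_s \ell^{(1),[1]}_s$ and $\prod_s \ell^{(2),[1]}_s$ differ as polynomials, a degree comparison shows $\restr{\partiald{^2 f_2}{\vv_1 \partial \vv_2}}{\ell^{(1)}_j=0}$ has strictly smaller degree than $\cont_{m_1}^{(i,j,k)}(\bm{\ell}^{(1)})$, so the difference retains the top-degree part coming from $f_1$ and stays inside the orbit. In Case~B, $m_1=m_2=m$ and the top-degree parts coincide; by unique factorization the bases $\bm{\ell}^{(1)}, \bm{\ell}^{(2)}$ must then agree up to a permutation and nonzero scalings, so the same operation applied to $f_2$ yields some $\cont_{m}^{(i',j',k')}(\bm{\ell}^{(1)})$ in the same orbit (after relabelling), and a careful choice of $(i,j,k)$ via \cref{lem:triplet} combined with the constants $\vb^{(1)} \neq \vb^{(2)}$ then ensures the difference is a nonzero element of the orbit.

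I anticipate the main obstacle to lie in Case~B: once the linear parts of $f_1$ and $f_2$ are essentially identified by unique factorization, distinguishing them requires working only with the affine shifts, and one must choose the non-consecutive triple $(i,j,k)$ carefully so that the resulting difference is a nonzero $\cont_m^{(i,j,k)}$-polynomial. The rigidity of $\cont_m$ reflected in \cref{lem:triplet}, together with the abundance of $\Theta(m^3)$ non-consecutive triples, should supply enough flexibility to always locate such a triple; formalizing how the shift differences $\vb^{(1)} - \vb^{(2)}$ propagate through the derivative/projection calculation and land as a nonzero contribution in $\cont_m^{(i,j,k)}(\bm{\ell}^{(1)})$ will be the main technical step.
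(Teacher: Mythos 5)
Your high-level strategy is right and matches the paper: decompose $\G_3$, use \cref{lem:indDerivLinear} for two derivatives and \cref{lem:indProjectZero} for one projection, and aim to land $f_1-f_2$ inside some orbit ${\cont_{m}^{(i,j,k)}}^{\gla{n}{\f}}$ that $H$ is promised to hit. Your Case~A (comparing leading homogeneous parts, forcing $m_1=m_2$ and matching the $\ell^{[1]}$'s via unique factorization and \cref{obs:kwise}(\ref{item:ind-lin})) is essentially Step~1 of the paper's proof. The gaps are in Case~B.

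First, you need to make explicit that the \emph{only} way the difference of two transformed terms ends up in a single orbit ${\cont_m^{(\cdot,\cdot,\cdot)}}^{\gla{n}{\f}}$ is if one of $f_1,f_2$ \emph{vanishes outright} under the chosen derivative/restriction operator. You gesture at ``a careful choice of $(i,j,k)$,'' but the mechanism matters: after matching $\ell_{1,j}^{[1]}\propto\ell_{2,\pi(j)}^{[1]}$, if $\pi$ is not a cyclic rotation or reflection of $\Z_m$ then some consecutive triplet $(i,i+1,i+2)$ maps under $\pi$ to a non-consecutive one. Taking the two directional derivatives in directions dual to $\ell_{1,i}^{[1]},\ell_{1,i+2}^{[1]}$ and restricting $\ell_{1,i+1}=0$ annihilates $f_1$ (by \cref{cor:triplet}, consecutive) while $f_2$ survives as a nonzero element of ${\cont_m^{(\pi(i),\pi(i+1),\pi(i+2))}}^{\gla{n}{\f}}$. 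Without this one-sided vanishing, you are left with a difference of two distinct orbit elements, and the hypothesis on $H$ gives you nothing.

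Second, and more seriously, you miss the hardest sub-case entirely: when $\pi$ \emph{is} an ordered cycle (a rotation or a reversal of $\Z_m$). Then every consecutive triplet of $f_1$ maps to a consecutive triplet of $f_2$, so no choice of triplet annihilates exactly one of them, and the ``two derivatives $+$ one projection $+$ apply $H$'' scheme has no foothold. The residual freedom at this point is the scaling vector $(\alpha_j)_j$ with $\prod_j\alpha_j=1$ coming from Step~1, \emph{not} the affine shifts $\vb^{(1)},\vb^{(2)}$ as you conjecture. The paper's Step~3 handles this by a completely different mechanism that does not invoke $H$: use \cref{lem:indProjectZero} twice to restrict two consecutive linear forms to zero (which reduces $\cont_m$ to $\cont_{m-2}$), compare the $\bz$-leading-degree coefficients of $f_1\circ\G_3$ and $f_2\circ\G_3$, and derive $\alpha_i\alpha_{i+1}=1$ for every $i$. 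Together with $\prod_j\alpha_j=1$ this forces either all $\alpha_i=1$, or (for even $m$) the alternating pattern; in both cases a matrix-conjugation identity for $\cont_m$ shows $f_1=f_2$ after all. Your plan has no analogue of this final argument, so as written it cannot close the loop.
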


Roughly, what the lemma claims is that if $\G_3$ is a $3$-independent map and $H$ hits ${\cont_{m}^{(i,j,k)}}^{\gla{n}{\f}}$, then $H+\G_3$ is
%a generator for differences of polynomials from $\ccont{\f}$. In particular, it is 
an interpolating-set generator.

\begin{proof}
	Denote $f_1 = \cont_{m_1}\bra{\ell_{1,0},\ldots,\ell_{1,m_1-1}}$ and $f_2 = \cont_{m_2}\bra{\ell_{2,0},\ldots,\ell_{2,m_2-1}}$.
	The proof has three steps. We first prove that if $f_1\circ(H+\G_3)=f_2\circ(H+\G_3)$ then $m_1=m_2$ and there exists a permutation $\pi:[m]_0\to [m]_0$, and constants 
	%for every $i$ there exists a $j$ and a constant 
	$\alpha_j$, such that for every $j$ it holds that $\ell_{1,j}= \alpha_{j} \cdot \ell_{2,\pi(j)}$. We then show that, possibly after rotating the order and taking a transpose, we can assume WLOG that $\pi$ is the identity permutation. 
	%\doricmnt{this sentence was a bit unclear, perhaps replace "the order" with a comma?}. 
	At the last step we prove that either $\alpha_j=1$ for every $j$, or that $m$ is even, $\alpha_0\cdot \alpha_1=1$ and for every $j$, $\alpha_{2j}=\alpha_0$ and $\alpha_{2j+1}=\alpha_1$.
	
	{\bf Step 1:} As in the proof of \cref{thmhitcont}, $\deg(f_i)=m_i$ and the homogeneous part of degree $m_i$ in $f_i$ is given by $$f_i^{[m_i]}=\prod_{j=0}^{m_i-1}\ell_{i,j}^{[1]}\;.$$ Observe that since $f_i^{[m_i]}\circ \bra{H+\G_3}$ is nonzero (e.g. by \cref{obs:kwise}(\ref{item:coordsGenInd})), and its degree, as a polynomial in $\bz$, is exactly $m_i$ (and every other term in $f_i\circ(H+\G_3)$ has degree strictly smaller as a polynomial in $\bz$), 
	%\doricmnt{"it must hold" should follow "thus" or "therefore" or something. Maybe go with "... smaller). Thus, it must hold..."} 
	it must hold that $m_1=m_2$. To simplify the notation let $m=m_1=m_2$. Again by comparing terms of maximal degree in $\bz$ we see that 
	\begin{equation}\label{eq:high-term}
		\bra{\prod_{j=0}^{m-1}\ell_{1,j}^{[1]}} \circ \G_3 = \bra{\prod_{j=0}^{m-1}\ell_{2,j}^{[1]}} \circ \G_3\;.
	\end{equation} 
	As both $\cbra{\ell_{1,i}}$ and $\cbra{\ell_{2,i}}$ are linearly independent sets, we get from unique factorization and from
	\cref{obs:kwise}(\ref{item:ind-lin}), that there exists a permutation $\pi:[m]_0\to [m]_0$ and constants $\cbra{\alpha_j}$ so that $\ell_{1,j}=\alpha_j \ell_{2,\pi(j)}$, for every $j$.  This completes the first step.
	
	{\bf Step 2:} We wish to show that the permutation $\pi$ is an ``ordered'' cycle of length $m$. That is, that  it either has the form $(i,i+1,\ldots ,m-1,0,\ldots,i-1)$, or $(i,i-1,\ldots,0,m-1,\ldots,i+1)$, for some $i$. Indeed, assume for a contradiction that this is not the case. Then, there must be an index $i$ such that $(\pi(i),\pi(i+1),\pi(i+2))$ is not a consecutive triplet. Let $\cbra{{\vv_j}}_j$ be a dual set to $\cbra{\ell_{2,j}}_j$. \cref{cor:triplet} and \cref{lem:dualder} imply that 
	$$ \restr{\frac{\partial^2 f_1}{\partial \vv_i\partial \vv_{i+2}}}{\ell_{2,{i+1}}(\xn)=0} = 0 \quad \text{and } \quad \restr{\frac{\partial^2 f_2}{\partial \vv_i\partial \vv_{i+2}}}{\ell_{2,{i+1}}(\xn)=0} \neq 0 \;.$$ In particular
	$$ -\cont_{m}^{(\pi(i),\pi(i+1),\pi(i+2))}\bra{\ell_{2,0},\ldots,\ell_{2,m-1}}=\restr{\frac{\partial^2 (f_1-f_2)}{\partial \vv_i\partial \vv_{i+2}}}{\ell_{2,{i+1}}(\xn)=0} \neq 0 \;.$$ 
	By the assumption on $H$ we get that $$-\cont_{m}^{(\pi(i),\pi(i+1),\pi(i+2))}\circ H= \bra{\restr{\frac{\partial^2 (f_1-f_2)}{\partial \vv_i\partial \vv_{i+2}}}{\ell_{2,{i+1}}(\xn)=0}} \circ H=-\bra{\restr{\frac{\partial^2 f_2}{\partial \vv_i\partial \vv_{i+2}}}{\ell_{2,{i+1}}(\xn)=0}} \circ H\neq 0 \;.$$ 
	Applying \cref{lem:indDerivLinear} for $k=2$ and \cref{lem:indProjectZero} for $k=1$ we get that  $(f_1-f_2)\circ(H+\G_3)\neq 0$, in contradiction.
	
	{\bf Step 3:}  To simplify notation, assume, WLOG, that $\pi$ is the identity permutation. Observe that  $\prod_{i=0}^{m-1}\ell_{1,i}^{[1]} \circ \G_3 = \prod_{i=0}^{m-1}\alpha_i \cdot  \prod_{i=0}^{m-1}\ell_{2,i}^{[1]} \circ \G_3$. Hence, Equation~\eqref{eq:high-term} implies that $\prod_{i=0}^{m-1}\alpha_i=1$. If there is $i$ such that $\alpha_i \cdot \alpha_{i+1}\neq 1$ then use $\G_3$ to restrict to the subspace $\ell_{1,i}=\ell_{1,i+1}=0$ (using \cref{lem:indProjectZero}). 
	Denote with $\G_3'$, the map $\G_3$  after we used two of the $z_i$s for the restriction ($\G_3'$ is a $1$-independent map).
	As $\cont_m(x_0,\ldots,x_{i-1},0,0,x_{i+2},\ldots,x_{m-1}) = \cont_{m-2}(x_0,\ldots,x_{i-1},x_{i+2},\ldots,x_{m-1})$, we get a contradiction by considering the terms of maximal degrees (as polynomials in the remaining $z$) in $f_1 \circ \G_3$ and  $f_2 \circ \G_3$  as follows:
	%\doricmnt{This is because setting two consecutive variables of $A_m$ to zero results in $A_{m-2}$, correct? This took me a moment, maybe add a remark / refer to \cref{cor:triplet}?}
	\begin{align*}
	\bra{\prod_{j\in [m]_0\setminus \cbra{i,i+1}}\ell_{1,j}^{[1]}} \circ \G_3' &= \bra{\prod_{j\in [m]_0\setminus \cbra{i,i+1}}\ell_{2,j}^{[1]}} \circ \G_3' = \bra{ \prod_{j\in [m]_0\setminus \cbra{i,i+1}} \alpha_j \cdot\ell_{1,j}^{[1]} }\circ \G_3' \\
	&= \bra{ \prod_{j\in [m]_0\setminus \cbra{i,i+1}} \alpha_j} \cdot \bra{ \prod_{j\in [m]_0\setminus \cbra{i,i+1}}  \cdot\ell_{1,j}^{[1]} }\circ \G_3' \\ &= \frac{1}{\alpha_i \cdot \alpha_{i+1}} \cdot \bra{ \prod_{j\in [m]_0\setminus \cbra{i,i+1}}  \cdot\ell_{1,j}^{[1]} }\circ \G_3' \neq  \bra{\prod_{j\in [m]_0\setminus \cbra{i,i+1}}{\ell^{[1]}_{1,j}}}\circ \G_3'\;,
	\end{align*}
	where the first equality follows from the assumption that $f_1 \circ \G_3 = f_2 \circ \G_3$ and the last inequality uses the assumption  $\alpha_i \cdot \alpha_{i+1}\neq 1$. Consequently, either for every  $i$, $\alpha_i=1$, which means that $f_1=f_2$, as we wanted to prove, or  $m$ is even and  for every $i$, $\alpha_{2,i}=\alpha_0$ and $\alpha_{2,i+1}=\alpha_1$, and that $\alpha_0\cdot \alpha_1=1$. We next show that in this case as well the polynomials are equal. Indeed, observe that
	$\begin{pmatrix}
		1 & 0 \\
		0 & \alpha_0
	\end{pmatrix} \cdot \begin{pmatrix}
	1 & 0 \\
	0 & \alpha_1
\end{pmatrix}  = \begin{pmatrix}
1 & 0 \\
0 & 1
\end{pmatrix} $. Hence, 
	\begin{align}
		\nonumber
		f_1 &= \trace \bra{\begin{pmatrix}
				\ell_{1,0} & 1 \\
				1 & 0
			\end{pmatrix}  
			\cdot  
			\begin{pmatrix}
				\ell_{1,1} & 1 \\
				1 & 0
			\end{pmatrix}
			\cdot\ldots 			\cdot\begin{pmatrix}
				\ell_{1,m-1} & 1 \\
				1 & 0
			\end{pmatrix}  
		} \\ \nonumber
		&=\trace \left( \begin{pmatrix}
				1 & 0 \\
				0 & \alpha_0
			\end{pmatrix} \cdot\begin{pmatrix}
				\ell_{1,0} & 1 \\
				1 & 0
			\end{pmatrix}  
			\cdot  
			\begin{pmatrix}
				1 & 0 \\
				0 & \alpha_0
			\end{pmatrix} \cdot \begin{pmatrix}
				1 & 0 \\
				0 & \alpha_1
			\end{pmatrix} 
		\cdot 
			\begin{pmatrix}
				\ell_{1,1} & 1 \\
				1 & 0
			\end{pmatrix}
			\cdot
			\begin{pmatrix}
				1 & 0 \\
				0 & \alpha_1
			\end{pmatrix} \cdot \begin{pmatrix}
				1 & 0 \\
				0 & \alpha_0
			\end{pmatrix}  \right. \\ \nonumber
			&\qquad\qquad\quad\cdot \left. \begin{pmatrix}
				\ell_{1,2} & 1 \\
				1 & 0
			\end{pmatrix}  
			\cdot  
			\begin{pmatrix}
				1 & 0 \\
				0 & \alpha_0
			\end{pmatrix} \cdot \begin{pmatrix}
				1 & 0 \\
				0 & \alpha_1
			\end{pmatrix} 
			\cdot 
			\ldots 			\cdot \begin{pmatrix}
				1 & 0 \\
				0 & \alpha_0
			\end{pmatrix} \cdot \begin{pmatrix}
				1 & 0 \\
				0 & \alpha_1
			\end{pmatrix} \cdot\begin{pmatrix}
				\ell_{1,m-1} & 1 \\
				1 & 0
			\end{pmatrix}  \cdot \begin{pmatrix}
		1 & 0 \\
		0 & \alpha_1
	\end{pmatrix} 
		\right)\\ \label{eq:rescale}
	&= \trace \bra{\begin{pmatrix}
			\ell_{1,0} & \alpha_0 \\
			\alpha_0 & 0
		\end{pmatrix}  
		\cdot  
		\begin{pmatrix}
			\ell_{1,1} & \alpha_1 \\
			\alpha_1 & 0
		\end{pmatrix}
		\cdot\ldots 			\cdot\begin{pmatrix}
			\ell_{1,m-1} & \alpha_1 \\
			\alpha_1 & 0
		\end{pmatrix}  
	} \\ \nonumber
&= \trace \bra{\begin{pmatrix}
	\alpha_0\cdot\ell_{2,0} & \alpha_0 \\
	\alpha_0 & 0
\end{pmatrix}  
\cdot  
\begin{pmatrix}
	\alpha_1 \cdot\ell_{2,1} & \alpha_1 \\
	\alpha_1 & 0
\end{pmatrix}
\cdot\ldots 			\cdot\begin{pmatrix}
	\alpha_1\cdot \ell_{2,m-1} & \alpha_1 \\
	\alpha_1 & 0
\end{pmatrix}}\\ \nonumber
 &= \bra{\alpha_0 \cdot \alpha_1}^{m/2}\cdot \trace \bra{\begin{pmatrix}
		\ell_{2,0} & 1 \\
		1 & 0
	\end{pmatrix}  
	\cdot  
	\begin{pmatrix}
		\ell_{2,1} & 1 \\
		1 & 0
	\end{pmatrix}
	\cdot\ldots 			\cdot\begin{pmatrix}
		\ell_{2,m-1} & 1 \\
		1 & 0
	\end{pmatrix}  
} = 1\cdot f_2 \;.
	\end{align}

 This concludes the proof of the lemma.
\end{proof}

From \cref{lem:cont-hit-deriv} we see that all that we have to do in order to construct an interpolating set for $\ccont{\f}$, is to find a map $H$ as in the statement of the lemma.

\begin{lemma}\label{lem:2hitder}
	Let $n\geq m$ be integers. Let $\G_2(\by,\bz)$ be a $2$-independent polynomial map into $\f^n$, that is linear in $\bz$. Then, For every list of three distinct indices $(i,j,k)\in [m]_0^3$ and for every $m$ $n$-variate linearly independent linear functions $\ell_0(\xn),\ldots,\ell_{m-1}(\xn)\in \f[\xn]$  it holds that if $
	\cont_{m}^{(i,j,k)}\bra{\ell_0,\ldots,\ell_{m-1}}\neq 0$ then $\cont_{m}^{(i,j,k)}\bra{\ell_0,\ldots,\ell_{m-1}} \circ \G_2 \neq 0$. 
\end{lemma}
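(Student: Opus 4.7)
The plan is to mimic the degree-based argument in the proof of \cref{thmhitcont}, applied to the top-degree part of $\cont_m^{(i,j,k)}$ viewed as a polynomial in its nominal variables $y_0,\ldots,y_{m-1}$. First, by \cref{cor:triplet}, if $(i,j,k)$ were consecutive then $\cont_m^{(i,j,k)}\equiv 0$ and the composition $\cont_m^{(i,j,k)}(\ell_0,\ldots,\ell_{m-1})$ would also vanish, contradicting the hypothesis. So I may assume $(i,j,k)$ is non-consecutive throughout.

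Next I would identify the top-degree part $P$ of $\cont_m^{(i,j,k)}$ combinatorially. The monomials of $\cont_m$ are exactly $\prod_{r\in T}y_r$ where the cyclic complement $[m]_0\setminus T$ decomposes into disjoint adjacent pairs, so $\cont_m$ has no degree-$(m-1)$ monomial, and the top-degree monomials of $\cont_m^{(i,j,k)}$ therefore come from the degree-$(m-2)$ monomials of $\cont_m$ whose complement is a single adjacent pair $\{s,s+1\}$ containing $j$ but not $i$ or $k$. Non-consecutiveness of $(i,j,k)$ guarantees at least one such pair exists, and splits into two cases: either both $\{j-1,j\}$ and $\{j,j+1\}$ qualify (Case~A), in which case
\[
P=(y_{j-1}+y_{j+1})\cdot\prod_{r\in[m]_0\setminus\{i,j-1,j,j+1,k\}}y_r,
\]
or exactly one qualifies (Case~B), yielding a single squarefree monomial $P=\prod_{r\in[m]_0\setminus S}y_r$ with $|S|=4$. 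In both cases $P$ has degree $d=m-4$ and factors as a product of $m-4$ linear forms in the $y_r$'s; the $m=4$ sub-case produces an empty product, i.e., $P$ is a nonzero constant, which is handled trivially.

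The final step is a degree-in-$\bz$ analysis. Because $\G_2$ is linear in $\bz$ and $\deg(h)\le d$, we have $\deg_{\bz}(h\circ\G_2)\le d$, and the degree-$d$-in-$\bz$ part of $h\circ\G_2$ equals $P(\alpha_0 z_1+\beta_0 z_2,\ldots,\alpha_{m-1}z_1+\beta_{m-1}z_2)$, where $\alpha_r(\by)z_1+\beta_r(\by)z_2$ is the $\bz$-linear-homogeneous part of $\ell_r\circ\G_2$. Writing $\ell_r^{[1]}=\sum_s a_{r,s}x_s$ and each coordinate of $\G_2$ as $p_s(\by)z_1+q_s(\by)z_2+c_s(\by)$, we get $\alpha_r=\sum_s a_{r,s}p_s$ and $\beta_r=\sum_s a_{r,s}q_s$, so the expression above is a product of $m-4$ linear forms in $\bz$ over $\f[\by]$. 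To see each factor is nonzero, use the $1$-independence of either sub-generator of $\G_2$: for every $u\in[n]$ there is an assignment $\va_u$ with $p_s(\va_u)=\delta_{s,u}$, giving $\alpha_r(\va_u)=a_{r,u}$ and $(\alpha_{j-1}+\alpha_{j+1})(\va_u)=a_{j-1,u}+a_{j+1,u}$. Linear independence of the $\ell_r$'s forces $\ell_r^{[1]}\ne 0$ and $\ell_{j-1}^{[1]}+\ell_{j+1}^{[1]}\ne 0$, so choosing $u$ appropriately makes each factor a nonzero element of $\f[\by][\bz]$; the product is then nonzero in the integral domain $\f[\by,\bz]$ and hence $h\circ\G_2\ne 0$. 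The main technical hurdle is the combinatorial description of $P$; once that is established, the leading-coefficient manipulation of \cref{thmhitcont} carries over directly.
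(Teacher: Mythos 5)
Your proof is correct, but it takes a genuinely different route from the paper's. The paper spends one unit of the $2$-independence to restrict to the hyperplane $\ell_{j-1}=0$ via \cref{lem:indProjectZero} (after a WLOG arranging $j-1\neq i$); once that is done, the top homogeneous part collapses to the single product $\prod_{t\notin\{i,j-1,j,k\}}\ell_t^{[1]}$, and the leftover $1$-independent map together with \cref{obs:kwise} finish the argument. You instead compute the top homogeneous part $P$ of $\cont_m^{(i,j,k)}$ directly in the $y$-variables: from the description of the monomial set of $\cont_m$ (monomials are exactly those whose cyclic complement is a disjoint union of adjacent pairs), $P$ is the sum of the at most two surviving degree-$(m-2)$ contributions and always factors into $m-4$ nonzero linear forms in the $y$'s — a squarefree monomial in your Case~B, or $(y_{j-1}+y_{j+1})$ times one in Case~A — and the degree-in-$\bz$ argument then applies directly to this product. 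Two observations. First, your argument never calls \cref{lem:indProjectZero} and in fact only uses the $1$-independence of a single sub-generator (you only ever use the $p_s$ coefficients), so it actually proves the statement with a $1$-independent map, which is slightly stronger than the lemma as stated; the paper's $2$-independence is needed only for its projection step. Second, the price you pay is the combinatorial characterization of the monomials of $\cont_m$, which is standard (a short induction on the $2\times 2$ matrix recursion suffices) but is not stated in the paper and would merit a line of justification in a clean write-up. Overall the paper's route is more modular in that it reuses its generic $k$-independence toolbox; yours is more self-contained and yields a marginally sharper independence parameter.
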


\begin{proof}
	As $\cont_{m}^{(i,j,k)}\bra{\ell_0,\ldots,\ell_{m-1}}\neq 0$ it follows that $(i,j,k)$ is not a consecutive triplet. Assume WLOG that $i<j-1<j<k$. Use $\G_2$ to further restrict the polynomial to the subspace $\ell_{j-1}=0$ (using \cref{lem:indProjectZero}). Let $\G_2'$ denote $\G_2$ after the restriction. \cref{lem:indProjectZero} guarantees that $\G_2'$  is $1$-independent. Observe that the homogeneous term of maximal degree in $\restr{\cont_{m}^{(i,j,k)}\bra{\ell_0,\ldots,\ell_{m-1}}}{\ell_{j-1}(\xn)=0}$ is equal to $\prod_{t\in [m]_0\setminus \cbra{i,j-1,j,k}}\ell_{t}^{[1]}$. It follows that the term of maximal degree, as a polynomial in $\bz$, in
$\restr{\cont_{m}^{(i,j,k)}\bra{\ell_0,\ldots,\ell_{m-1}}}{\ell_{j-1}(\xn)=0}\circ \G_2'$ is  	
	 	$\bra{\prod_{t\in [m]_0\setminus \cbra{i,j-1,j,k}} \ell_{t}^{[1]}}\circ \G_2'$, which is nonzero by \cref{obs:kwise}(\ref{item:coordsGenInd}).  
\end{proof}

Combining \cref{lem:cont-hit-deriv,lem:2hitder} we get the following corollary:

\begin{corollary}\label{cor:5intscont}
	Let $\G_5(\by,\bz):\f^t\to\f^n$ be a $5$-independent polynomial map  that is linear in $\bz$. Then, for every $m_1,m_2 \leq n$ and every two polynomials $f_1 \in \cont_{m_1}^{\gla{n}{\f}}$ and $f_2 \in \cont_{m_2}^{\gla{n}{\f}}$,   it holds that $f_1=f_2$ if and only if $f_1\circ \G_5 =f_2\circ\G_5$.	
\end{corollary}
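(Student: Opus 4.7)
The plan is to derive \cref{cor:5intscont} as a direct combination of \cref{lem:cont-hit-deriv} and \cref{lem:2hitder}. The idea is that a $5$-independent polynomial map can be split into a $3$-independent part (to play the role of $\G_3$ in \cref{lem:cont-hit-deriv}) plus a $2$-independent part (to play the role of the generator $H$ for the derivatives, via \cref{lem:2hitder}).

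First, since every $k$-independent polynomial map is, by definition, a sum of $k$ variable-disjoint $1$-independent polynomial maps, I would write
\[
\G_5(\by,\bz) \;=\; \G_3(\by^{(1)},\bz^{(1)}) \;+\; \G_2(\by^{(2)},\bz^{(2)})\;,
\]
where the $\by$-variables and the $\bz$-variables of $\G_3$ and $\G_2$ are disjoint, $\G_3$ is $3$-independent, and $\G_2$ is $2$-independent. Because $\G_5$ is linear in $\bz$, and each $1$-independent summand has a single $z_i$-variable, each coordinate of $\G_3$ is a homogeneous linear form in $\bz^{(1)}$ over $\f(\by^{(1)})$ and each coordinate of $\G_2$ is similarly linear in $\bz^{(2)}$; in particular, both maps satisfy the linearity hypotheses needed downstream.

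Next, I would verify that $H\triangleq \G_2$ is a hitting-set generator for ${\cont_m^{(i,j,k)}}^{\gla{n}{\f}}$, for every $m\le n$ and every triplet of distinct indices $(i,j,k)\in [m]_0^3$. Any nonzero element of that orbit has the form $\cont_m^{(i,j,k)}(\ell_0,\ldots,\ell_{m-1})$ for linearly independent affine-linear functions $\ell_0,\ldots,\ell_{m-1}$, because composition with an invertible affine transformation preserves linear independence. \cref{lem:2hitder} says precisely that such a nonzero polynomial remains nonzero after composing with a $2$-independent map that is linear in its $\bz$-variables, which is the case for $\G_2$.

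Finally, I would apply \cref{lem:cont-hit-deriv} with this choice of $H=\G_2$ and with $\G_3$ as above, whose variables are disjoint from those of $H$ by construction. The lemma yields that for all $f_1\in \cont_{m_1}^{\gla{n}{\f}}$ and $f_2\in \cont_{m_2}^{\gla{n}{\f}}$,
\[
f_1=f_2 \quad\Longleftrightarrow\quad f_1\circ (H+\G_3)=f_2\circ(H+\G_3)\;.
\]
Since $H+\G_3 = \G_2+\G_3 = \G_5$, this is exactly the desired equivalence. The only mild thing to keep straight is the bookkeeping of variable-disjointness and the two linearity conditions (on $\G_3$ and on $\G_2$); beyond that, the argument is purely an assembly of the two preceding lemmas, with no new computation required.
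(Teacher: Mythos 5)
Your proof is correct and takes essentially the same route as the paper: the paper itself simply states that the corollary follows by "combining Lemma~\ref{lem:cont-hit-deriv} and Lemma~\ref{lem:2hitder}," and your decomposition $\G_5=\G_3+\G_2$, with $H\triangleq\G_2$ playing the role of the hitting-set generator for the derivative orbits, is exactly the intended assembly. (One inherited wrinkle: Lemma~\ref{lem:cont-hit-deriv} asks that the coordinates of $\G_3$ be \emph{homogeneous} linear in $\bz$, whereas the corollary's hypothesis only says ``linear in $\bz$''; this mild mismatch is present in the paper's own statement, and you pass it through unchanged, so it is not a gap introduced by your argument.)
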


\cref{thm:cont-int} follows immediately from \cref{cor:5intscont} and \cref{obsHitSetGen}.

\subsection{Reconstruction algorithm for $\ccont{\f}$}\label{sec:cont-recon}

The reconstruction algorithm is given in Page~\pageref{alg:ccont}.

\begin{algorithm}
	\SetKwData{Left}{left}\SetKwData{This}{this}\SetKwData{Up}{up}	\SetKwFunction{Union}{Union}\SetKwFunction{FindCompress}{FindCompress}
	\SetKwInOut{Input}{input}\SetKwInOut{Output}{output}
	\Input{Integer $n$, black-box access to $f= \cont_m^{\gla{n}{\f}}$}
	\Output{Linear functions $\tilde\ell_0,\ldots,\tilde\ell_{m-1}\in\f[\xn]$ such that $f= \cont_m\bra{\tilde\ell_0,\ldots,\tilde\ell_{m-1}}$}
	\BlankLine
	Compute $m$ using interpolation and the hitting set constructed in \cref{thmhitcont} \label{alg:degstep}\;
	Factor $f^{[m]}$ \label{alg:factorstep} \tcc*{Using univariate root-finding}
	\tcc{We found linear functions $L_0^{[1]},\ldots,L_{m-1}^{[1]}$, such that for some permutation $\pi$ and scalars $\alpha_i$, $\alpha_i\cdot L_i^{[1]} =  \ell_{\pi(i)}^{[1]}$}
	Compute a dual set $\cbra{\vv_i}_i$ to  $\cbra{L_i^{[1]}}_i$\;
	\tcc{Next we compute the free terms}
	\For{$i=0$ \KwTo $m-1$ }{\label{alg:freetermloop}
		Define $f_i'(\xn) \triangleq \frac{\partial f}{\partial {\vv_i}}(\xn)$\;
		Set $g_i(\xn) = f(\xn) - L_i^{[1]}(\xn) \cdot f'_i(\xn)$ \tcc*{We can simulate queries to $g_i$  }
		Compute $\deg\bra{g_i}$\;
		\eIf{$\deg\bra{g_i}=m-2$}{
		set $\lambda_i=0$}{
		Find $\vu\in\f^n$ such that $f^{[m]}(\vu) \neq 0$\;
		Set $\lambda_i = \bra{L_i^{[1]}(\vu)\cdot{g_i}^{[m-1]}(\vu)}/f^{[m]}(\vu)$\;
		}
	Set $L_i = L_i^{[1]} + \lambda_i$\; 
	}
	\tcc{There is a permutation $\pi$ and scalars $\alpha_i$ such that $\alpha_i\cdot L_i =  \ell_{\pi(i)}$}
	Find all consecutive triplets and recover the permutation $\pi$ \label{alg:findtriplet} \;
	\tcc{WLOG $\pi$ is the identity permutation}
	Find $\cbra{\vu_i}$ such that $L_i(\vu_j)=\delta_{i,j}$ \label{alg:2nddualstep}\;
	\tcc{We now recover the $\alpha_i$s}
	\eIf{$m$ is odd}{\label{alg:findalphastep}
		\For{$i=0$ \KwTo $m-1$}{
			Set $\tilde{\ell}_i = f(\vu_i)\cdot L_i$\;
			}
		}
	{Set $\beta_0=\alpha_0=1$ and $\tilde{\ell}_0=L_0$\;
		\For{$i=1$ \KwTo $m-1$}{
			Set $\beta_i = \bra{f(\vu_{i-1}+\vu_i)-2}/\beta_{i-1}$ and $\tilde{\ell}_i = \beta_i \cdot L_i$\;
		}\label{alg:endstep}
	}
	\Return{$\tilde{\ell}_0,\ldots,\tilde{\ell}_{m-1}$}\;
\caption{reconstruction algorithm for $\ccont{\f}$}\label{alg:ccont}
\end{algorithm}

%\newpage
\paragraph{Analysis of Algorithm~\ref{alg:ccont}:} 

\begin{claim}
	Step~\ref{alg:degstep} can be executed in polynomial-time.
\end{claim}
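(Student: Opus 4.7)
The plan is to observe that $m$ equals $\deg(f)$, and then to compute $\deg(f)$ from the black box via composition with a uniform $1$-independent polynomial map.

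First, I would observe that $\cont_m(y_1,\ldots,y_m)$ contains $\prod_{i=1}^m y_i$ as its unique monomial of maximum degree, so any $f = \cont_m(\ell_0,\ldots,\ell_{m-1}) \in \cont_m^{\gla{n}{\f}}$ satisfies $\deg(f) = m$, with top homogeneous part $f^{[m]} = \prod_{i=0}^{m-1}\ell_i^{[1]}$. Hence recovering $m$ is equivalent to computing $\deg(f)$ from the black box.

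The key tool is the uniform \ropinvgen[1]{} $\G_1 \triangleq \svgenh_1(y_1,y_2,z_1):\f^3 \to \f^n$ of \cref{defsvgenhom}, whose coordinates are all homogeneous of the same total degree $d$ (here $d = n$). The proof of \cref{thmhitcont} shows, in fact, more than $f \circ \G_1 \neq 0$: it shows that $(f \circ \G_1)(y_1, y_2, z_1)$ has total degree exactly $m \cdot d$, since its degree-$md$ homogeneous component equals $\bigl(\prod_{i=0}^{m-1}\ell_i^{[1]}\bigr) \circ \G_1$, which is nonzero by \cref{obs:kwise}(\ref{item:coordsGenInd}), while every other contribution has degree at most $(m-1)d$. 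So once $\deg(f \circ \G_1)$ is computed, $m$ is recovered by a single division.

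To compute $\deg(f \circ \G_1)$, I would evaluate $f \circ \G_1$ on a grid $W^3 \subset \f^3$ with $|W| \geq n^2 + 1$ (matching, up to constants, the hitting set of \cref{cor:hit-cont}), using one black-box query to $f$ per grid point, for a total of $O(n^6)$ queries. Since $f \circ \G_1$ is a $3$-variate polynomial of total degree at most $n \cdot d \leq n^2$, these values uniquely determine all its coefficients by standard multivariate Lagrange interpolation in $\text{poly}(n)$ time, after which the total degree is read off directly. When $|\f| < n^2 + 1$, the computation is carried out over the polynomial-sized extension field $\kk$ provided by \cref{cor:hit-cont}. I do not anticipate a real obstacle here: the algebraic content is already present in \cref{thmhitcont,cor:hit-cont}, and the only additional ingredient is the elementary fact that a polynomial of polynomially bounded total degree in a constant number of variables can be fully interpolated, and its degree therefore computed, in polynomial time.
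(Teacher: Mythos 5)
Your proof is correct and takes essentially the same approach as the paper: both recover $m$ by composing with a generator whose degree structure makes $\deg(f\circ\G)$ directly reveal $m$, and then interpolating the trivariate composition. The only cosmetic difference is that the paper multiplies a (not necessarily uniform) $1$-independent map $\G_1(y,z)$ by an extra homogenization variable $w$ and reads off the degree in $w$, whereas you use the already-uniform $\svgenh_1$ and read off the total degree; these amount to the same thing.
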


\begin{proof}
	Let $\G_1(y,z)$ be a $1$-independent map. Let $w$ be a new variable and consider $\G=w\cdot \G_1$. I.e., we multiply each coordinate of $\G_1$ with $w$. Observe that the degree of $w$ and of $z$ in $\bra{f\circ \G}$ is exactly $\deg(f)=m$.
	As in the proof of \cref{thmhitcont}, we see that the $m$-homogeneous component of $\bra{f\circ \G}$, when viewed as a polynomial in $w$, is $\bra{\prod_{i=0}^{m-1}\ell_i^{[1]}}\circ \G_1\neq 0$. As we know that $m\leq n$, using interpolation (over $w$) we get black-box access to $\bra{f\circ \G}^{[k]}$, for every $0\leq k\leq n$. We look for the first $k$, starting from $n$ and going down, such that $\bra{f\circ \G}^{[k]}\neq 0$. This can be done, for example, by interpolation (over $y,z$). 
\end{proof}

\begin{claim}
	Step~\ref{alg:factorstep} can be done with polynomially many queries to 
	%an algorithm for finding roots of univariate polynomials 
	a root-finding algorithm over $\f$ (assuming $|\f|\geq n^3$).
\end{claim}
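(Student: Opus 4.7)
The plan is to reduce the factorization of $f^{[m]}$ to a single call to the univariate root-finding oracle followed by a few gradient evaluations. The key observation is that $f^{[m]}=\prod_{i=0}^{m-1}\ell_i^{[1]}$ is a product of linearly independent linear forms, so the hyperplane arrangement $\{f^{[m]}=0\}$ has $m$ distinct irreducible components and a generic line meets them in $m$ distinct points. First, I would simulate black-box access to $f^{[m]}$ from black-box access to $f$: querying $f(\lambda\xn)$ at $m+1$ distinct values of $\lambda$ and performing Lagrange interpolation on $f(\lambda\xn)=\sum_{k=0}^{m}\lambda^k f^{[k]}(\xn)$ recovers $f^{[m]}$ at any desired point, at a cost of $m+1\leq n+1$ queries to $f$.

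Next, I would choose a base point $\vu\in\f^n$ with $f^{[m]}(\vu)\neq 0$ (any nonroot of $f^{[m]}$, found via Schwartz-Zippel over the enlarged field) and a direction $\vv\in\f^n$ such that the univariate polynomial
\[
g(t)\triangleq f^{[m]}(\vu+t\vv)=\prod_{i=0}^{m-1}\bra{\ell_i^{[1]}(\vu)+t\cdot\ell_i^{[1]}(\vv)}
\]
has $m$ distinct roots $t_i=-\ell_i^{[1]}(\vu)/\ell_i^{[1]}(\vv)$. This distinctness is equivalent to the nonvanishing of
\[
D(\vv)\triangleq\prod_i\ell_i^{[1]}(\vv)\cdot\prod_{i<j}\bra{\ell_i^{[1]}(\vu)\ell_j^{[1]}(\vv)-\ell_j^{[1]}(\vu)\ell_i^{[1]}(\vv)},
\]
a polynomial in $\vv$ of total degree at most $m^2\leq n^2$, and nonzero by linear independence of the $\ell_i^{[1]}$. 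The assumption $|\f|\geq n^3$ easily supports finding such a $\vv$ via Schwartz-Zippel on a polynomial-sized grid. Invoking the root-finding oracle on $g$ (peeling off one root at a time and dividing) yields all $m$ roots, and distinctness can be verified on the fly by checking that exactly $m$ roots are produced.

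For each root $t_i$, set $\vp_i\triangleq\vu+t_i\vv$. Distinctness of the $t_i$'s gives $\ell_i^{[1]}(\vp_i)=0$ and $\ell_j^{[1]}(\vp_i)\neq 0$ for $j\neq i$, since $\ell_j^{[1]}(\vp_i)=\ell_j^{[1]}(\vv)\cdot(t_i-t_j)$. Writing $\ell_i^{[1]}(\xn)=\sum_k c_{i,k}x_k$, the product rule yields
\[
\partiald{f^{[m]}}{x_k}(\vp_i)=c_{i,k}\cdot\prod_{j\neq i}\ell_j^{[1]}(\vp_i),
\]
so $\nabla f^{[m]}(\vp_i)\in\f^n$ is a nonzero scalar multiple of the coefficient vector of $\ell_i^{[1]}$. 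Each partial derivative is obtained from $m+1$ evaluations of $f^{[m]}$ on a line through $\vp_i$ by univariate interpolation in an auxiliary parameter. Defining $L_i^{[1]}$ to be the linear form whose coefficients are the entries of $\nabla f^{[m]}(\vp_i)$ yields the desired factors up to the permutation $\pi$ (the ordering of the roots) and the scalars $\alpha_i$ (the normalizations of the gradients).

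The only delicate point is the deterministic choice of a good $\vv$, which is handled by the Schwartz-Zippel argument above together with the field-size assumption $|\f|\geq n^3$ (or a polynomial-sized extension). The total cost is $\mathrm{poly}(n)$ queries to $f$, one call to the root-finding oracle on a univariate polynomial of degree $m$, and $\mathrm{poly}(n)$ additional arithmetic operations.
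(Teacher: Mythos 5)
Your gradient-based method is correct and is genuinely different from the paper's approach. The paper treats this step as a black-box factoring problem: it composes $f^{[m]}$ with a $2$-independent map to reduce to a square-free bivariate polynomial whose linear factors remain pairwise independent (as linear forms in $z_1,z_2$, using \cref{obs:kwise}), and then invokes standard bivariate factoring machinery, which, since the univariate specializations split completely into linear factors, bottoms out in calls to the root-finding oracle. You instead exploit the specific structure -- $f^{[m]}=\prod_{i}\ell_i^{[1]}$ is a product of linearly independent linear forms -- directly: a generic line meets the $m$ hyperplanes in $m$ distinct points, and at each such point $\vp_i$ the gradient satisfies $\partiald{f^{[m]}}{x_k}(\vp_i)=c_{i,k}\cdot\prod_{j\neq i}\ell_j^{[1]}(\vp_i)$, so $\nabla f^{[m]}(\vp_i)$ is a nonzero scalar multiple of the coefficient vector of $\ell_i^{[1]}$. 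This recovers the factors up to permutation and scaling exactly as the paper requires. Your route is more elementary and avoids citing the factoring literature; the paper's route is more generic and reusable.

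One step is stated too loosely. You say the direction $\vv$ is ``found via Schwartz-Zippel on a polynomial-sized grid,'' but $D(\vv)$ lives in $n$ variables, so a grid that is polynomial in each coordinate has exponentially many points, and Schwartz-Zippel by itself is a probabilistic statement whereas \cref{thm:cont-recon} demands a deterministic algorithm. The standard fix is to restrict $\vv$ to a one-parameter curve, e.g. $\vv(s)=(1,s,s^2,\ldots,s^{n-1})$: since $D(\vv)$ is a product of at most $m+\binom{m}{2}\leq n^2$ nonzero linear forms in $\vv$, the univariate polynomial $D(\vv(s))$ is nonzero of degree less than $n^3$, so enumerating $s$ over a set of size $n^3$ (this is precisely where $|\f|\geq n^3$ is used) and checking ``on the fly'' that $m$ distinct roots appear, as you already suggest, gives a deterministic search. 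The same remark applies to the base point $\vu$, where the relevant polynomial $f^{[m]}$ has degree at most $n$. With this correction spelled out, your argument is complete and the query count is still polynomial.
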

We assume some knowledge with known factoring algorithms. For good a reference see \cite{books/daglib/0010544} (the lecture notes of Madhu Sudan are also a great resource on the subject \cite{Sudan-alg-notes}).
\begin{proof}[Proof sketch]
Observe that $f^{[m]}= \prod_{i=0}^{m-1}\ell_i^{[1]}$, and all its linear factors are linearly independent. Known factoring algorithms require that we reduce the polynomial that we wish to factor to a square-free, bivariate polynomial. This can be easily done using $2$-independent maps. Let $\G_2(\by,z_1,z_2)$ be a  $2$-independent map that is a linear form in $z_1$ and $z_2$ (e.g., $\svgen_2$). \cref{obs:kwise}(\ref{item:ind-lin}) shows that composing $f^{[m]}$ with  $\G_2(\by,\bz)$, keeps all factors linearly independent, when viewed as linear polynomials in $\bz$.  Each assignment to $\by$ gives a different polynomial whose factors are homogeneous linear functions in $z_1,z_2$. Observe that there is an assignment to $\by$ from the set $[n^3]^{|\by|}$, that maintains the property that the factors are linearly independent. Indeed, for every two factors we need the assignment to be a nonzero of the determinant of the coefficient-matrix of the two factors. There are ${m \choose 2}$ such determinant, each has degree $2(n-1)$ as a polynomial in $\by$ (hence the requirement for a field of size $n^3$). By going over all such assignments to $\by$, we are guaranteed to find one that maintains this property.

 Once we reduced to the square-free, bivariate case, factoring algorithms proceed by reducing to factoring of univariate polynomials. In our case the univariate completely splits as a product of linear factors, hence the univariate factorization step only need oracle access to a root-finding algorithm. 
\end{proof}

Observe that  we have found irreducible linear functions $L_i^{[1]}$, each is a scalar product of some $\ell_{\pi(i)}^{[1]}$, for some permutation $\pi$. Let $\cbra{\alpha_i}$ be such that $\alpha_i\cdot L_i^{[1]} = \ell_{\pi(i)}^{[1]}$. 

\begin{claim}
	For every $i$, the for-loop in Step~\ref{alg:freetermloop} returns $L_i$ such that $\alpha_i\cdot L_i = \ell_{\pi(i)}$.
\end{claim}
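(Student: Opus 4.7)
The plan is to compute $f'_i$ and $g_i$ explicitly in terms of the $\ell_j$'s, using the duality relation between $\{\vv_i\}$ and $\{L_i^{[1]}\}$, and then extract the free term $b_{\pi(i)}$ of $\ell_{\pi(i)}$ by looking at degree $m-1$ of $g_i$.

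First, I would apply \cref{lem:dualder}-style reasoning directly. Since $L_i^{[1]}(\vv_j)=\delta_{i,j}$ and $\ell_{\pi(j)}^{[1]}=\alpha_j L_j^{[1]}$, we have $\ell_k^{[1]}(\vv_i)=\alpha_i\delta_{\pi(i),k}$. Therefore, by the chain rule,
\[
f'_i=\partiald{f}{\vv_i}=\sum_{k=0}^{m-1}\partiald{\ell_k}{\vv_i}\cdot\partiald{\cont_m}{x_k}\bra{\ell_0,\ldots,\ell_{m-1}}=\alpha_i\cdot\partiald{\cont_m}{x_{\pi(i)}}\bra{\ell_0,\ldots,\ell_{m-1}}\;.
\]
Combined with $L_i^{[1]}=\ell_{\pi(i)}^{[1]}/\alpha_i$, this gives $L_i^{[1]}\cdot f'_i=\ell_{\pi(i)}^{[1]}\cdot\partiald{\cont_m}{x_{\pi(i)}}\bra{\ell_0,\ldots,\ell_{m-1}}$.

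Next, I would use that $\cont_m$ is multilinear, so $\cont_m(\xn)=x_{\pi(i)}\cdot\partiald{\cont_m}{x_{\pi(i)}}+\restr{\cont_m}{x_{\pi(i)}=0}$. Writing $\ell_{\pi(i)}=\ell_{\pi(i)}^{[1]}+b_{\pi(i)}$ and substituting, we get
\[
g_i=f-L_i^{[1]}\cdot f'_i=b_{\pi(i)}\cdot\partiald{\cont_m}{x_{\pi(i)}}\bra{\ell_0,\ldots,\ell_{m-1}}+\restr{\cont_m}{x_{\pi(i)}=0}\bra{\ell_0,\ldots,\ell_{m-1}}\;.
\]
The first term has degree $m-1$ (its top homogeneous part is $b_{\pi(i)}\cdot\prod_{j\neq\pi(i)}\ell_j^{[1]}$, which is nonzero if and only if $b_{\pi(i)}\neq 0$ since the $\ell_j^{[1]}$ are linearly independent), while the second term has degree at most $m-2$. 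Hence $\deg(g_i)=m-2$ exactly when $b_{\pi(i)}=0$, justifying the branch $\lambda_i=0$ in that case: then $L_i=L_i^{[1]}$ and $\alpha_i L_i=\ell_{\pi(i)}^{[1]}=\ell_{\pi(i)}$.

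In the case $\deg(g_i)=m-1$, I would read off $b_{\pi(i)}$ as follows. Since $f^{[m]}=\prod_{j=0}^{m-1}\ell_j^{[1]}$, we have $g_i^{[m-1]}=b_{\pi(i)}\cdot f^{[m]}/\ell_{\pi(i)}^{[1]}$. Evaluating at any $\vu$ with $f^{[m]}(\vu)\neq 0$ and using $\ell_{\pi(i)}^{[1]}(\vu)=\alpha_i L_i^{[1]}(\vu)$,
\[
\frac{L_i^{[1]}(\vu)\cdot g_i^{[m-1]}(\vu)}{f^{[m]}(\vu)}=\frac{L_i^{[1]}(\vu)}{f^{[m]}(\vu)}\cdot\frac{b_{\pi(i)}\,f^{[m]}(\vu)}{\alpha_i L_i^{[1]}(\vu)}=\frac{b_{\pi(i)}}{\alpha_i}\;,
\]
which is exactly the value assigned to $\lambda_i$ in the algorithm. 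Thus $\alpha_i L_i=\alpha_i(L_i^{[1]}+\lambda_i)=\ell_{\pi(i)}^{[1]}+b_{\pi(i)}=\ell_{\pi(i)}$, as claimed. The only routine subtlety is simulating query access to $g_i$ and to its homogeneous component $g_i^{[m-1]}$ (the former is immediate from the definition, the latter via univariate interpolation along a line); and finding $\vu$ with $f^{[m]}(\vu)\neq 0$ can be done using the hitting set from \cref{thmhitcont} (or by direct interpolation from the known factorization of $f^{[m]}$), which is where a mildly larger field may be needed.
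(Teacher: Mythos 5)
Your proof is correct and follows essentially the same route as the paper's: both decompose $\cont_m$ using multilinearity as $x_{\pi(i)}\cdot\partiald{\cont_m}{x_{\pi(i)}}+\restr{\cont_m}{x_{\pi(i)}=0}$ (the paper writes this as $y_{\pi(i)}F_{i,1}+F_{i,0}$), compute $f'_i=\alpha_i\partiald{\cont_m}{x_{\pi(i)}}(\boldsymbol\ell)$ via the dual set, and then read off the free term from $g_i^{[m-1]}$. The one place you are slightly less explicit than the paper is that you only claim $\deg(\restr{\cont_m}{x_{\pi(i)}=0})\le m-2$ whereas the paper shows equality by rewriting the restricted matrix product as $\cont_{m-2}$; this is immaterial to correctness since if $b_{\pi(i)}=0$ and $\deg(g_i)<m-2$ the else-branch still outputs $\lambda_i=0$ (because $g_i^{[m-1]}=0$), so no gap.
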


\begin{proof}
	For $i\in [m]_0$, denote $\cont_{m}(y_0,\ldots,y_{m-1}) = y_{\pi(i)}\cdot F_{i,1}(\by\setminus y_{\pi(i)})+F_{i,0}(\by\setminus y_{\pi(i)})$.
	Observe that $\deg(F_{i,1})=m-1$ (since it contains the product of all $y_j$ except $y_{\pi(i)}$) and that $\deg\bra{F_{i,0}}=m-2$. Indeed, 
	\begin{align*}
	F_{i,0}(\by) & = \restr{\cont_{m}(y_0,\ldots,y_{m-1})}{y_{\pi(i)}=0} \\
	&= \trace \bra{\begin{pmatrix}
			y_0 & 1 \\
			1 & 0
		\end{pmatrix}  
		\cdot \ldots \cdot  
		\begin{pmatrix}
			y_{{\pi(i)}-1} & 1 \\
			1 & 0
		\end{pmatrix}
		\cdot  
		\begin{pmatrix}
			0 & 1 \\
			1 & 0
		\end{pmatrix}\cdot  
		\begin{pmatrix}
			y_{{\pi(i)}+1} & 1 \\
			1 & 0
		\end{pmatrix}
		\cdot\ldots 			\cdot\begin{pmatrix}
			y_{m-1} & 1 \\
			1 & 0
		\end{pmatrix}  
	}\\ 
	&=  \trace \bra{\begin{pmatrix}
		y_0 & 1 \\
		1 & 0
	\end{pmatrix}  
	\cdot \ldots \cdot  
	\begin{pmatrix}
		y_{{\pi(i)}-2} & 1 \\
		1 & 0
	\end{pmatrix}
	\cdot    
	\begin{pmatrix}
		y_{{\pi(i)}-1}+y_{{\pi(i)}+1} & 1 \\
		1 & 0
	\end{pmatrix}\cdot  
\begin{pmatrix}
y_{{\pi(i)}+2} & 1 \\
1 & 0
\end{pmatrix}
	\cdot\ldots 			\cdot\begin{pmatrix}
		y_{m-1} & 1 \\
		1 & 0
	\end{pmatrix}  
}\\ &= \cont_{m-2}\bra{y_0,\ldots,y_{{\pi(i)}-2},y_{{\pi(i)}-1}+y_{{\pi(i)}+1},y_{{\pi(i)}+2},\ldots,y_{m-1}}\;.	\end{align*}
		We now note that 
	$$f'_i(\ell_0,\ldots,\ell_{m-1}) = \frac{\partial \ell_{\pi(i)}}{\partial {\vv_i}} \cdot F_{i,1}({\bm\ell}\setminus \ell_{\pi(i)}) = \alpha_i \cdot F_{i,1}({\bm\ell}\setminus \ell_{\pi(i)}) \;.$$ As $g_i= f - L_i^{[1]} \cdot f'_i$, we get that
	\begin{align*}
		g_i &= \bra{\ell_{\pi(i)} \cdot F_{i,1}({\bm\ell}\setminus \ell_{\pi(i)})+F_{i,0}({\bm\ell}\setminus \ell_{\pi(i)})} - L_i^{[1]} \cdot \bra{ \alpha_i \cdot F_{i,1}({\bm\ell}\setminus \ell_{\pi(i)})} \\ &=  \bra{\ell_{\pi(i)} - \alpha_i\cdot L_i^{[1]}}\cdot F_{i,1}({\bm\ell}\setminus \ell_{\pi(i)})+F_{i,0}({\bm\ell}\setminus \ell_{\pi(i)})\;.
	\end{align*}
	Thus, $\deg(g_i)=m-2$ if and only if $\ell_{\pi(i)} - \alpha_i\cdot L_i^{[1]}=0$. In other words,  $\deg(g_i)=m-2$ if and only if $\ell_{\pi(i)}$ is homogeneous and $L_i=L_i^{[1]}$.
	% i.e. $\ell_{\pi(i)} = \alpha_i \cdot L_i$. 
	As $\ell_{\pi(i)}^{[1]}=\alpha_i \cdot L_i^{[1]}$, it holds that $\bra{\ell_{\pi(i)} - \alpha_i\cdot L_i^{[1]}}\in \f$. Therefore, if $\deg(g_i)=m-1$ we get that 
	\begin{equation}\label{eq:gim-1}
		g_i^{[m-1]}=\bra{\ell_{\pi(i)} - \alpha_i\cdot L_i^{[1]}}\cdot F_{i,1}({\bm\ell}^{[1]}\setminus \ell_i^{[1]})^{[m-1]} = \bra{\ell_{\pi(i)} - \alpha_i \cdot L_i^{[1]}}\cdot \prod_{j\neq {\pi(i)}}\ell_j^{[1]} \;.
	\end{equation}
	Hence,  
	\begin{align*}
		\lambda_i &= L_i^{[1]}(\vu)\cdot g_i^{[m-1]}(\vu)/f^{[m]}(\vu) =L_i^{[1]}(\vu)\cdot \bra{\ell_{\pi(i)} - \alpha_i \cdot L_i^{[1]}}\cdot \prod_{j\neq {\pi(i)}}\ell_j^{[1]}/\prod_{j}\ell_j^{[1]}\\& = \bra{L_i^{[1]}(\vu) \cdot \bra{\ell_{\pi(i)} - \alpha_i \cdot L_i^{[1]}}}/\ell_{\pi(i)}^{[1]}(\vu) = \bra{\ell_{\pi(i)} - \alpha_i \cdot L_i^{[1]}}/\alpha_i \;.
	\end{align*}	
	It follows that 
	$$\alpha_i \cdot L_i = \alpha_i \cdot \bra{L_i^{[1]} + \lambda_i} = \alpha_i \cdot L_i^{[1]} + \alpha_i \cdot \lambda_i = \alpha_i \cdot L_i^{[1]}+ \bra{\ell_{\pi(i)} - \alpha_i \cdot L_i^{[1]}} = \ell_{\pi(i)}$$
	as claimed. 
	
	An important point to notice is that we can check whether  $\deg(g_i)=m-1$ in the same manner in which we computed $\deg(f)$ (thanks to Equation~\eqref{eq:gim-1}).
\end{proof}

Note that Step~\ref{alg:findtriplet} can be executed using \cref{cor:triplet} and \cref{lem:2hitder}. Indeed, as  $\ell_{\pi(i)}=\alpha_i L_i$, it follows that $\cbra{\vv_{i}/\alpha_i}$ is a dual set for  $\cbra{\ell_{\pi(i)}^{[1]}}$. That is, $\ell_{\pi(i)}^{[1]}\bra{\vv_{j}/\alpha_j}=\delta_{i,j}$. Therefore, $\restr{\frac{\partial^2 f}{\partial \bra{\vv_{i}/\alpha_i} \partial \bra{\vv_{k}/\alpha_k}}}{\ell_{\pi(j)}=0}=0$ if and only if $\restr{\frac{\partial^2 f}{\partial \vv_{i} \partial {\vv_{k}}}}{L_{j}=0}=0$.
Hence, with the help of  \cref{lem:2hitder} and interpolation, we can find all consecutive triplets. Once we have that information, construction of 
%\doricmnt{of} 
$\pi$ (up to reversal, which does not change the resulting polynomial) 
%\doricmnt{I don't think you need this comma} 
is immediate.
Since we know $\pi$ we can assume WLOG that $\pi$ is the identity permutation.

Step~\ref{alg:2nddualstep} is possible as the $L_i$s are linearly independent. Note that  $\ell_{\pi(i)}^{[1]}\bra{\vu_{j}}=\delta_{i,j}\cdot \alpha_j$.

\begin{claim}
	The linear functions $\tilde{\ell}_i$ that were computed in Steps~\ref{alg:findalphastep}-\ref{alg:endstep}  satisfy  $\cont_m\bra{\tilde{\ell}_0,\ldots,\tilde{\ell}_{m-1}}=f$.
\end{claim}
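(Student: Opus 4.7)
My plan is to verify that $\cont_m(\tilde\ell_0,\ldots,\tilde\ell_{m-1})=f$ by reducing the task to a characterization of the rescalings of $(L_0,\ldots,L_{m-1})$ that preserve the continuant. After the preceding steps we have affine linear functions $L_0,\ldots,L_{m-1}$ and unknown nonzero scalars $\alpha_0,\ldots,\alpha_{m-1}$ with $\alpha_i L_i = \ell_{\pi(i)}$; after reducing to $\pi=\mathrm{id}$ (by cyclic rotation and possibly transposing the matrix product, both of which leave $\cont_m$ invariant), $f=\cont_m(\alpha_0 L_0,\ldots,\alpha_{m-1} L_{m-1})$. Since each $\tilde\ell_i$ is of the form $\beta_i L_i$, Step~3 of the proof of \cref{lem:cont-hit-deriv} implies that $\cont_m(\beta_0 L_0,\ldots,\beta_{m-1}L_{m-1})=f$ is equivalent to one of two cases: (i) $\beta_i=\alpha_i$ for every $i$, or (ii) $m$ is even and $\beta_{i-1}\beta_i=\alpha_{i-1}\alpha_i$ for every $1\leq i\leq m-1$ (equivalently, $\beta_{2k}=c\cdot\alpha_{2k}$ and $\beta_{2k+1}=c^{-1}\alpha_{2k+1}$ for some $c\in\f^{\times}$). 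So my task reduces to computing $\beta_i$ from $f(\vu_i)$ or $f(\vu_{i-1}+\vu_i)$ and verifying that the produced values fit one of these patterns.

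For odd $m$, I would use $L_j(\vu_i)=\delta_{i,j}$ to deduce $\ell_j(\vu_i)=\alpha_j\delta_{i,j}$, so $f(\vu_i)=\cont_m(0,\ldots,0,\alpha_i,0,\ldots,0)$. Writing $P=\begin{pmatrix}0&1\\1&0\end{pmatrix}$ and $M_y=\begin{pmatrix}y&1\\1&0\end{pmatrix}$, this equals $\trace(P^i M_{\alpha_i} P^{m-1-i})$. Because $m-1$ is even, the exponents $i$ and $m-1-i$ have the same parity, and using $P^2=I$ the product collapses to either $M_{\alpha_i}$ (both exponents even) or $PM_{\alpha_i}P=\begin{pmatrix}0&1\\1&\alpha_i\end{pmatrix}$ (both odd), each of trace $\alpha_i$. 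Hence $\tilde\ell_i=f(\vu_i)L_i=\alpha_i L_i=\ell_i$, giving case (i).

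For even $m$, I would establish the key identity $f(\vu_{i-1}+\vu_i)=\alpha_{i-1}\alpha_i+2$ for every $1\leq i\leq m-1$. Granting this, the recursion $\beta_i=(f(\vu_{i-1}+\vu_i)-2)/\beta_{i-1}=\alpha_{i-1}\alpha_i/\beta_{i-1}$ with base $\beta_0=1$ telescopes to $\beta_{2k}=\alpha_{2k}/\alpha_0$ and $\beta_{2k+1}=\alpha_0\alpha_{2k+1}$, which matches case (ii) with $c=1/\alpha_0$. To prove the identity I would first reduce to the case in which the $L_j$'s are homogeneous, by translating $f$ along any vector $\vb$ in the common zero set of the $L_i$'s (which is nonempty because $n\geq m$ and the $L_i^{[1]}$'s are linearly independent) and by correspondingly translating the $\vu_j$'s; in those coordinates $\ell_j(\vu_{i-1}+\vu_i)=\alpha_j(\delta_{j,i-1}+\delta_{j,i})$, so $f(\vu_{i-1}+\vu_i)=\cont_m(0,\ldots,\alpha_{i-1},\alpha_i,\ldots,0)$. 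A direct computation gives $M_{\alpha_{i-1}}M_{\alpha_i}=\begin{pmatrix}\alpha_{i-1}\alpha_i+1 & \alpha_{i-1}\\ \alpha_i & 1\end{pmatrix}$, whose trace is $\alpha_{i-1}\alpha_i+2$; and the surrounding powers $P^{i-1}$ and $P^{m-1-i}$ have the same parity (their sum $m-2$ is even), so conjugation by these powers preserves the trace.

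The main subtle point is the treatment of the affine free terms in the even case: the quantity $f(\vu_{i-1}+\vu_i)$ is evaluated at a sum of affine points and a priori picks up contributions from the constant terms of the $L_j$'s. The reduction to homogeneous coordinates described above (equivalently, a careful specification of the $\vu_j$'s so that their pairwise sums evaluate cleanly under each $\ell_j$) is exactly what is needed to obtain the clean identity $f(\vu_{i-1}+\vu_i)=\alpha_{i-1}\alpha_i+2$ and thereby guarantee that the recursion in Step~\ref{alg:endstep} produces a rescaling admissible under $\cont_m$.
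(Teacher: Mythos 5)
Your argument mirrors the paper's: evaluate $f$ at $\vu_i$ for odd $m$, and for even $m$ establish $f(\vu_{i-1}+\vu_i)=\alpha_{i-1}\alpha_i+2$ and telescope the recursion. (The paper normalizes $\alpha_0=1$ using Equation~\eqref{eq:rescale} before telescoping, while you telescope directly and match the result against the admissible rescalings of $\cont_m$; both routes are fine.) You are also right to single out the affine free terms as the delicate point, and in fact your concern exposes a gap in the paper's own proof of the even case: the displayed chain of equalities for $f(\vu_0+\vu_1)$ silently uses $\ell_j(\vu_0+\vu_1)=\ell_j(\vu_0)+\ell_j(\vu_1)$, which fails when the $\ell_j$ carry nonzero constants. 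Concretely, for $m=2$, $\ell_0=x_1+1$, $\ell_1=x_2$, one has $\vu_0=(0,0)$, $\vu_1=(-1,1)$, so $f(\vu_0+\vu_1)=2$ while $\alpha_0\alpha_1+2=3$; Step~\ref{alg:endstep} would then set $\beta_1=0$ and output $\tilde\ell_1=0$, which is wrong.

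Your proposed fix, however, does not close this gap as stated. Translating by $\vb$ (a common zero of the $L_i$) and replacing $\vu_j$ by $\vu_j'=\vu_j-\vb$ gives $g(\vu_{i-1}'+\vu_i')=f(\vu_{i-1}+\vu_i-\vb)$, not $f(\vu_{i-1}+\vu_i)$. So the clean identity your computation actually yields is $f(\vu_{i-1}+\vu_i-\vb)=\alpha_{i-1}\alpha_i+2$, and Steps~\ref{alg:findalphastep}--\ref{alg:endstep} never query $f$ at that point. To repair the claim one should either shift the query point in Step~\ref{alg:endstep} by $-\vb$, or else replace the specification in Step~\ref{alg:2nddualstep} by $L_j(\vu_k)=\delta_{j,k}+\tfrac{1}{2}L_j(\vzero)$ (over fields of characteristic $\neq 2$), which makes $L_j(\vu_{i-1}+\vu_i)=\delta_{j,i-1}+\delta_{j,i}$ hold directly. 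Either modification is easy to compute, but as written your argument establishes a claim about a modified algorithm rather than about Steps~\ref{alg:findalphastep}--\ref{alg:endstep} as given, and you should spell out which change you are making before asserting the identity $f(\vu_{i-1}+\vu_i)=\alpha_{i-1}\alpha_i+2$.
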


\begin{proof}
	First, observe that $\ell_i(\vu_j)=\alpha_i \cdot \delta_{i,j}$.
	Assume first that $m$ is odd. Then
	\begin{align*}
	f(\vu_i) &= \trace \bra{\begin{pmatrix}
			\ell_{0}(\vu_i) & 1 \\
			1 & 0
		\end{pmatrix}  
		\cdot  
		\begin{pmatrix}
			\ell_{1}(\vu_i) & 1 \\
			1 & 0
		\end{pmatrix}
		\cdot\ldots 			\cdot\begin{pmatrix}
			\ell_{m-1}(\vu_i) & 1 \\
			1 & 0
		\end{pmatrix}  
	} \\ &=\trace\bra{\begin{pmatrix}
		0 & 1 \\
		1 & 0
\end{pmatrix}^i \cdot \begin{pmatrix}
\alpha_i & 1 \\
1 & 0
\end{pmatrix} \cdot \begin{pmatrix}
0 & 1 \\
1 & 0
\end{pmatrix}^{m-i-1} } = \\
&= \trace{\begin{pmatrix}
		\alpha_i & 1 \\
		1 & 0
\end{pmatrix}} = \alpha_i	\;.	
\end{align*}
In this case we get that $\tilde{\ell}_i = 	f(\vu_i)\cdot L_i = \alpha_i L_i = \ell_i$. In particular, we recovered the original $\ell_i$s.

Next, assume that $m$ is even. Observe that since $m$ is even we can replace each $\ell_{2i}$ with $\ell_{2i}/\alpha_0$ and each $\ell_{2i+1}$ with $\ell_{2i+1}\cdot \alpha_0$ and still get the same $f$ (recall Equation~\eqref{eq:rescale}).
Therefore, we may  assume WLOG that $\alpha_0=1$.

%Denote with $\beta_i$ the free term of $\ell_i$. That is, $\ell_i = \ell_i^{[1]} + \beta_i$.

The first iteration gives
	\begin{align*}
	f(\vu_{0}+\vu_1) &= \trace \bra{\begin{pmatrix}
			\ell_{0}(\vu_{0}+\vu_1) & 1 \\
			1 & 0
		\end{pmatrix}  
		\cdot  
		\begin{pmatrix}
			\ell_{1}(\vu_{0}+\vu_1) & 1 \\
			1 & 0
		\end{pmatrix}
		\cdot\ldots 			\cdot\begin{pmatrix}
			\ell_{m-1}(\vu_{0}+\vu_1) & 1 \\
			1 & 0
		\end{pmatrix}  
	} \\ &=\trace\bra{\begin{pmatrix}
			1 & 1 \\
			1 & 0
		\end{pmatrix} \cdot \begin{pmatrix}
			\alpha_1 & 1 \\
			1 & 0
		\end{pmatrix} \cdot \begin{pmatrix}
			0 & 1 \\
			1 & 0
		\end{pmatrix}^{m-2} } = \trace{\begin{pmatrix}
			\alpha_1+1 & 1 \\
			\alpha_1 & 1
	\end{pmatrix}} = \alpha_1+2	\;.	
\end{align*}
Hence, $\beta_1=\bra{f(\vu_{0}+\vu_1)-2}/\alpha_0 = \alpha_1/1=\alpha_1$, and therefore, $\tilde{\ell}_1=\ell_1$.
We proceed to show by induction that for every $i$, $\beta_i=\alpha_i$.
	\begin{align*}
	f(\vu_{i}+\vu_{i+1}) &= \trace \bra{\begin{pmatrix}
			0 & 1 \\
			1 & 0
		\end{pmatrix}  
		\cdot  \ldots \cdot 
		\begin{pmatrix}
			\ell_{i}(\vu_{i}+\vu_{i+1}) & 1 \\
			1 & 0
		\end{pmatrix}\cdot 
	\begin{pmatrix}
	\ell_{i+1}(\vu_{i}+\vu_{i+1}) & 1 \\
	1 & 0
\end{pmatrix}
		\cdot\ldots 			\cdot\begin{pmatrix}
			0 & 1 \\
			1 & 0
		\end{pmatrix}  
	} \\ &=\trace\bra{\begin{pmatrix}
			0 & 1 \\
			1 & 0
		\end{pmatrix} ^i \cdot \begin{pmatrix}
			\alpha_i & 1 \\
			1 & 0
		\end{pmatrix} \cdot \begin{pmatrix}
			\alpha_{i+1} & 1 \\
			1 & 0
		\end{pmatrix}\cdot \begin{pmatrix}
		0 & 1 \\
		1 & 0
	\end{pmatrix}^{m-i-2} }\\& = \trace{\begin{pmatrix}
			\alpha_i\cdot \alpha_{i+1}+1 & \alpha_i \\
			\alpha_{i+1} & 1
	\end{pmatrix}} = \alpha_i\cdot \alpha_{i+1}+2	\;,	
\end{align*}
and we conclude, from the induction hypothesis, that $\beta_{i+1}=\alpha_{i+1}$ and that $\tilde{\ell}_{i+1}=\ell_{i+1}$.
\end{proof}
Thus, \cref{alg:ccont} correctly outputs linear functions $\cbra{\tilde\ell_i}$ so that $\cont_m(\tilde\ell_0,\ldots,\tilde\ell_{m-1})=f$.

The claim regarding the running time is also obvious given the analysis above. We thus see that  \cref{thm:cont-recon} holds.

\begin{remark}\label{rem:rec-not-cont}
 As \cref{thm:tIndNotDense} shows that $t$-independent maps do not necessarily lead to robust hitting sets, our reconstruction algorithm is not continuous at $\vzero$ (recall the discussion in \cref{sec:discuss}): Intuitively, around $\vzero$, there is no way to break the tie between the different polynomials  $\cont_{m}^{(j,i,k)}(\xn)$ and decide which are the consecutive triplets.
\end{remark}

%\strictSubDepthThree*

\section{Orbits of read-once formulas}
\label{sec:rof}

In this section we discuss the circuit classes $\anf{}{\f}$ and $\ROF^{\gl{}{\f}}$ (see \cref{def:ANF,def:ROF} below), which are dense in $\VPe$. We construct a hitting set for $\ROF^{\gl{}{\f}}$ and an interpolating set for $\anf{}{\f}$. Finally we observe that the randomized reconstruction algorithm of \cite{gupta2014random} works for every polynomial in $\roanf$. 
%\doricmnt{missing end of sentence?}.

We start with basic definitions concerning ROFs and ROANFs and prove \cref{ROANFisDense}.

\begin{definition}\label{def:ROF}
	An \emph{arithmetic read-once formula} (\ROF{} for short) $\Phi$ over a field $\f$ in the variables $\xn=(x_1,\ldots,x_n)$ is a binary tree $T$ whose leaves are labeled with input variables and a pairs of field elements $(\alpha,\beta)\in\f^2$, and whose internal nodes are labeled with the arithmetic operations $\{+,\times\}$ and a field element $\alpha\in\f$. Each input variable can label at most one leaf.
	%, and the fan-in of each internal node
	% labeled with $\times$ 
	%is exactly $2$. 
	The computation is performed in the following way: A leaf labeled with the variable $x_i$ and with $(\alpha,\beta)$, computes the polynomial $\alpha x_i+\beta$. If a node $v$ is labeled with the operation $*\in\{+,\times\}$ and with $\alpha\in\f$, and its children compute the polynomials $\Phi_{v_1}$ and $\Phi_{v_2}$, then the polynomial computed at $v$ is $\Phi_v=\Phi_{v_1}*\Phi_{v_2}+\alpha$. A polynomial $f(\xn)$ is called a \emph{read-once polynomial} (\ROP{} for short) if $f(\xn)$ can be computed by a \ROF{}.
\end{definition}

%Note that unlike the definition in the work of Shpilka and Volkovich (\cite{shpilka2015read}), we do not allow field constants on the edges (internal nodes compute $\Phi_{v_1}*\Phi_{v_2}+\alpha$, not $\beta(\Phi_{v_1}*\Phi_{v_2})+\alpha$). Informally, these definitions have the same expressive power, as one can ``push down'' these multiplicative constants to the leaves.

\begin{observation}
	Read-once polynomials are always multilinear polynomials.
\end{observation}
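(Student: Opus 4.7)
The plan is to prove the statement by structural induction on the read-once formula $\Phi$ computing $f$. The key enabling fact is that the read-once restriction forces the variable sets of any two sibling subtrees to be disjoint, which is what keeps multiplication from raising any individual degree above $1$.

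For the base case, a leaf labeled with $x_i$ and $(\alpha,\beta)$ computes $\alpha x_i + \beta$, which is multilinear. For the inductive step, let $v$ be an internal node with children $v_1,v_2$ computing polynomials $\Phi_{v_1},\Phi_{v_2}$, which by induction are multilinear. Since each variable labels at most one leaf of $\Phi$, and the two subtrees rooted at $v_1$ and $v_2$ are vertex-disjoint in the tree, no variable can appear in both $\Phi_{v_1}$ and $\Phi_{v_2}$; that is, $\text{var}(\Phi_{v_1})\cap \text{var}(\Phi_{v_2})=\emptyset$. If $v$ is a $+$ gate, then $\Phi_v=\Phi_{v_1}+\Phi_{v_2}+\alpha$ is multilinear because each variable appears in at most one of the summands and each summand is multilinear. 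If $v$ is a $\times$ gate, then $\Phi_v=\Phi_{v_1}\cdot \Phi_{v_2}+\alpha$; for any variable $x_i$, the disjointness of variable sets means $x_i$ appears in at most one of the two factors, so the individual degree of $x_i$ in the product is the same as its (at most $1$) degree in that single factor. Hence $\Phi_v$ is multilinear.

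There is no real obstacle here; the only subtle point is remembering that the disjointness of variable sets is a consequence of the read-once property (each variable labels at most one leaf, so it lives in exactly one root-to-leaf path and therefore on only one side of any given internal node). Applying the induction to the root of $\Phi$ yields the claim for $f$.
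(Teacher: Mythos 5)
Your proof is correct. Note that the paper itself states this observation without proof, treating it as self-evident; your structural induction on the formula, using the read-once property to guarantee that the two subtrees feeding any internal gate are variable-disjoint, is exactly the standard argument one would supply, and it correctly handles both the addition and multiplication cases (and the additive constant $\alpha$ at each gate, which does not affect multilinearity).
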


We next define formulas in alternating normal form, as was first defined in \cite{gupta2014random}.

\begin{definition}[Section 3.2 in \cite{gupta2014random}]\label{def:ANF}
	We say that an arithmetic formula $\Phi$, over $\f$, is in \emph{alternating normal form} ($\Phi$ is called an \emph{ANF} for short) if:
	\begin{enumerate}
		\item The underlying tree of $\Phi$ is a complete rooted binary tree (the root node is called the output node). In particular, $\text{\normalfont size}(\Phi)=2^{\text{\normalfont depth}(\Phi)+1}-1$, where $\text{\normalfont size}(\Phi)$ is the number of nodes in the tree of $\Phi$ and $\text{\normalfont depth}(\Phi)$ is the maximum distance of a leaf node from the output node of $\Phi$.
		\item The internal nodes consist of alternating layers of $+$ and $\times$ gates. In particular, the label of an internal node at distance $d$ from the closest leaf node is $+$ if $d$ is even and $\times$ otherwise. So if the root node is a $+$ node, its children are all $\times$ nodes, its grandchildren are all $+$ etc.
		\item The leaves of the tree are labeled with linear functions. That is, each leaf is labeled with $\ell(\xn)=a_0+\sum_{i=1}^na_ix_i$, where each $a_i\in\f$ is a scalar.
	\end{enumerate}
	The \emph{product depth} $\Delta$ of $\Phi$ is the number of layers of product gates. The number of leaves of $\Phi$ is therefore always $4^\Delta$ if the top gate is $+$, and $\frac{1}{2}\cdot 4^\Delta$ if the top gate is $\times $.
\end{definition}

The class $\anf{}{\f}$ mentioned in \cref{sec:res-rof} is defined in terms of the following canonical read-once ANF formula (ROANF for short):

\begin{definition}[Notation from Fact 3.4 of \cite{gupta2014random}]\label{defCroanf}
	We denote the canonical ROANF polynomial, of product depth $\Delta$ on $4^\Delta$ variables, as $\croanf(\xn)$. It is defined recursively as follows:
	\begin{align*}
		\croanf[0](\xn)&=x_1 \\  \croanf[\Delta+1](\xn) &=\croanf \bra{\xn^{(1)}}\croanf\bra{\xn^{(2)}} +\croanf\bra{\xn^{(3)}}\croanf\bra{\xn^{(4)}} \;,
	\end{align*}
	where $\xn^{(i)}$ is the $4^\Delta$-tuple of variables $\{x_{(i-1)\cdot 4^\Delta+1},\ldots,x_{i\cdot 4^\Delta}\}$.
\end{definition}

For example, $\croanf[1]\bra{\xn}=x_1 x_2 + x_3 x_4$.

Observe that any polynomial in $\croanf^{\gla{n}{\f}}$ is an ANF according to \cref{def:ANF}, but not vice versa.

%\begin{definition}\label{defROANF}
%An ANF $\Phi$ on $n$ variables is called a \emph{read-once ANF} if there exists some $\Delta\in\n$ and some $(A,\vb)\in\GLplus$ such that $\Phi(\xn)=\croanf(A^T\xn+\vb)$. We denote the class of all read-once ANFs of product depth $\Delta$ on $n$ variables as \roanf[n].
%\end{definition}
%
%\begin{remark}
%If $n<4^\Delta$ then $\roanf[n]=\emptyset$; so we always assume $n\geq 4^\Delta$. This is not the case for general ANFs, and we will later discuss ANFs on a constant number of variables with unbounded depth $\Delta=\omega(1)$.
%\end{remark}

%{\color{red} As defined, it is trivial to see $\anf{}{\f} \subseteq \ropinv \cap \text{ANF}$, 
%	so a hitting set for $\ropinv$ gives a hitting set for \roanf. In addition, it is trivial to see we have strict inclusions $\roanf\subsetneq\ropinv\subsetneq \VPe$, as $x_1^2\in \VPe\setminus\ropinv$ and $x_1x_2x_3\in\ropinv\setminus\roanf$ (every $\roanf$ polynomial of product depth $\Delta$ has a monomial of degree exactly $2^\Delta$).
%}

%{\color{red}
%
%
%
%An important notion in the realm of \ROP{}s is that of the \emph{gate-graph}, in an attempt to answer the natural question: does every \ROP{} admit a unique \ROF{} computing it? The answer is obviously no in the general case, as we can play with the constants labeling the nodes (for example). The \textit{gate-graph} of a \ROP{} $f(\xn)$, however, depends only on the computed polynomial and gives information about the structure of \emph{any} \ROF{} computing $f(\xn)$:
%}
%
%

Next we give some basic definitions concerning the underlying tree of a ROF, or of a ROANF.

\begin{definition}
	Let $\Phi$ be a \ROF{} and  $v_i,v_j$  nodes of $\Phi$. The \emph{first common gate} of $v_i,v_j$ (denoted $\text{\normalfont fcg}(v_i,v_j)$) is the first gate in $\Phi$ common to all the paths from $v_i$ and $v_j$ to the root of the formula.
\end{definition}

%\begin{definition}
%	Let $\Phi$ be a \ROF{} in the input variables $\text{\normalfont var}(\Phi)$. The \emph{gate-graph} of the formula $\Phi$, denoted $G_\Phi$, is an undirected graph whose vertex set is $\text{\normalfont var}(\Phi)$ and whose edges are defined as follows: $(i,j)$ is an edge if and only if $\text{\normalfont fcg}(x_i,x_j)$ is a product gate.
%\end{definition}

\begin{definition}\label{def:sib}
	Let $T$ be the computation tree of some $\ROP$ polynomial $g\in\f[\xn]$. For a node $v\in T$ that is not the root, we denote by $\text{\normalfont sib}(v)\in T$ the unique sibling of $v$ in $T$. When clear from context, $\text{\normalfont sib}(v)\in\f[\xn]$ denotes the polynomial computed at node $\text{\normalfont sib}(v)$.
\end{definition}

We may characterize $\text{mon}(\croanf(\xn))$ by the first common gates of pairs of variables appearing in the monomials:
 
\begin{observation}\label{obs:monRoanf}
	$\xn^{\ve}\in\text{\normalfont mon}(\croanf(\xn))$ if and only if $\xn^{\ve}$ is multilinear of degree $2^\Delta$, and for every $x_i\neq x_j\in\text{\normalfont var}(\xn^{\ve})$ it holds that $\text{\normalfont fcg}(x_i,x_j)$ is a product gate.
\end{observation}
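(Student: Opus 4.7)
I would proceed by induction on the product depth $\Delta$, using the recursive identity
\[
\croanf[\Delta+1](\xn)=\croanf(\xn^{(1)})\cdot\croanf(\xn^{(2)})+\croanf(\xn^{(3)})\cdot\croanf(\xn^{(4)})
\]
from \cref{defCroanf}. The base case $\Delta=0$ is immediate since $\croanf[0]=x_1$.

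For the ``only if'' direction, any monomial of $\croanf[\Delta+1]$ has the form $m_1 m_2$ where $m_1, m_2$ are monomials of two variable-disjoint copies of $\croanf$. By the inductive hypothesis, each $m_i$ is multilinear of degree $2^\Delta$ with pairwise product-gate fcg, hence $m_1 m_2$ is multilinear of degree $2^{\Delta+1}$. For two variables of $m_1 m_2$ that lie in the same subformula, their fcg in $\croanf[\Delta+1]$ is the same as in the subtree and so is a product gate by the inductive hypothesis; for two variables lying in different subformulas, their fcg is the top-level $\times$ gate that combines them. Since the two product terms in the recursion have disjoint variable sets, no cancellation occurs between them, so $m_1 m_2$ does survive as a monomial of $\croanf[\Delta+1]$.

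The ``if'' direction is slightly more subtle. Given $\xn^{\ve}$ multilinear of degree $2^{\Delta+1}$ with pairwise product-gate fcg, the top $+$ gate of $\croanf[\Delta+1]$ cannot serve as the fcg of any pair in $\text{\normalfont var}(\xn^{\ve})$, so all variables of $\xn^{\ve}$ lie under a single top-level $\times$ gate, say the one combining $\croanf(\xn^{(1)})$ and $\croanf(\xn^{(2)})$. Write $\xn^{\ve}=m_1 m_2$ with $m_i$ supported on $\xn^{(i)}$; the fcg condition for pairs internal to $m_i$ is inherited from the subtree. What is not automatic is that $\deg(m_1)=\deg(m_2)=2^\Delta$; a priori we know only $\deg(m_1)+\deg(m_2)=2^{\Delta+1}$.

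Closing this degree gap is the main (and essentially only) obstacle. I would handle it by proving, in parallel by induction on $\Delta$, the auxiliary bound: any set of variables in $\croanf$ with pairwise product-gate fcg has size at most $2^\Delta$. The proof uses the same ``top-gate splits the set'' argument: such a set must lie under a single top-level $\times$ gate of $\croanf[\Delta+1]$, and its intersections with $\xn^{(i)}$ and $\xn^{(j)}$ are each bounded by $2^\Delta$ via the inductive hypothesis applied inside the two subformulas. Granting this bound, $\deg(m_1),\deg(m_2)\leq 2^\Delta$ combined with $\deg(m_1)+\deg(m_2)=2^{\Delta+1}$ forces equality on both sides, so by the inductive hypothesis $m_i\in\text{\normalfont mon}(\croanf(\xn^{(i)}))$ for both $i$. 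The ``only if'' argument of the previous paragraph then shows $\xn^{\ve}=m_1 m_2\in\text{\normalfont mon}(\croanf[\Delta+1])$, completing the induction.
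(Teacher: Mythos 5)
The paper states \cref{obs:monRoanf} without proof, so there is no argument in the text to compare against. Your inductive proof is correct, and you correctly identify the one subtle point: in the ``if'' direction the split $\xn^{\ve}=m_1 m_2$ only gives $\deg(m_1)+\deg(m_2)=2^{\Delta+1}$, while the induction hypothesis requires each factor to have degree exactly $2^\Delta$. Your auxiliary bound---any variable set with pairwise product-gate $\text{\normalfont fcg}$ has cardinality at most $2^\Delta$, established by the same observation that no such set can contain two variables whose first common gate is a $+$ gate---closes that gap cleanly, and the parallel induction is the right way to package it. This is what a careful proof of this observation should look like.
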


\begin{observation}\label{obs:derivCroanf}
	Let $n=4^\Delta$. Let $T$ be the computation tree of $\croanf(\xn)$ (from \cref{defCroanf} above). Fix some variable $x_i\in\xn$ and let $\{v_1,\ldots,v_\Delta\}\subseteq T$ be the addition gates on the path from $x_i$ to the root of $T$, where $v_\Delta$ is the root. Denote with $v_0\in T$  the leaf labeled $x_i$. Then, recalling \cref{def:sib},
	$$\partiald{\croanf}{x_i}=\prod_{k=0}^{\Delta-1}\text{\normalfont sib}(v_k)=\prod_{k=0}^{\Delta-1}\croanf[k](\text{\normalfont var}(\text{\normalfont sib}(v_k)))\;. $$
\end{observation}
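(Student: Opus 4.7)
The plan is to prove both equalities by induction on $\Delta$. For the base case $\Delta=1$, we have $\croanf[1](x_1,x_2,x_3,x_4)=x_1 x_2+x_3 x_4$. For any variable, say $x_1$, the only addition gate on the path is the root $v_1$, the leaf is $v_0$, and $\text{sib}(v_0)$ is the other factor $x_2$ in the same product. A direct computation gives $\partial \croanf[1]/\partial x_1=x_2=\text{sib}(v_0)$, and $x_2=\croanf[0](\text{var}(\text{sib}(v_0)))$, matching both forms of the claim.

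For the inductive step, I would unfold the recursive definition:
\[
\croanf=\croanf[\Delta-1]\bra{\xn^{(1)}}\cdot \croanf[\Delta-1]\bra{\xn^{(2)}}+\croanf[\Delta-1]\bra{\xn^{(3)}}\cdot \croanf[\Delta-1]\bra{\xn^{(4)}}.
\]
WLOG assume $x_i\in\xn^{(1)}$. Since the four variable sets are disjoint, only the first summand contributes to $\partial/\partial x_i$, and the product rule gives
\[
\partiald{\croanf}{x_i}=\partiald{\croanf[\Delta-1]\bra{\xn^{(1)}}}{x_i}\cdot\croanf[\Delta-1]\bra{\xn^{(2)}}.
\]
The path from $x_i$ to the root of the sub-ROANF $\croanf[\Delta-1]\bra{\xn^{(1)}}$ is precisely the restriction of the original path, and its addition gates are exactly $v_1,\ldots,v_{\Delta-1}$, with $v_{\Delta-1}$ serving as the root of this sub-ROANF. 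By the induction hypothesis applied to this sub-ROANF,
\[
\partiald{\croanf[\Delta-1]\bra{\xn^{(1)}}}{x_i}=\prod_{k=0}^{\Delta-2}\text{\normalfont sib}(v_k),
\]
where the siblings for $k\le \Delta-2$ are the same in the sub-tree as in the full tree (they lie strictly below $v_{\Delta-1}$).

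Finally, by construction, the sibling of $v_{\Delta-1}$ in the full tree $T$ is the root of $\croanf[\Delta-1]\bra{\xn^{(2)}}$, so $\text{sib}(v_{\Delta-1})=\croanf[\Delta-1]\bra{\xn^{(2)}}$. Combining, $\partial \croanf/\partial x_i=\prod_{k=0}^{\Delta-1}\text{sib}(v_k)$, giving the first equality. The second equality is a structural observation: the subtree rooted at $\text{sib}(v_k)$ is a complete alternating subtree of product depth $k$ rooted at a $+$ gate (or, for $k=0$, a single leaf), hence it is isomorphic as a formula to $\croanf[k]$ applied to its leaf variables. The main (mild) obstacle is simply keeping the indexing of the $v_k$'s consistent between the full tree and the sub-tree, and noting that passing from $\croanf$ to $\croanf[\Delta-1]\bra{\xn^{(1)}}$ demotes the root $v_\Delta$ but preserves $v_0,\ldots,v_{\Delta-1}$ together with all their siblings.
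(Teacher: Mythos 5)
Your proof is correct. The paper states this result as an \emph{Observation} with no proof supplied, so there is no paper argument to compare against; your induction on $\Delta$ (base case $\Delta=1$ by direct computation, inductive step by unfolding $\croanf=\croanf[\Delta-1]\bigl(\xn^{(1)}\bigr)\croanf[\Delta-1]\bigl(\xn^{(2)}\bigr)+\croanf[\Delta-1]\bigl(\xn^{(3)}\bigr)\croanf[\Delta-1]\bigl(\xn^{(4)}\bigr)$, applying variable-disjointness and the product rule, and identifying $\text{sib}(v_{\Delta-1})$ with the root of $\croanf[\Delta-1]\bigl(\xn^{(2)}\bigr)$) is a clean and complete justification, and your closing remark about $\text{sib}(v_k)$ being a complete alternating subtree of product depth $k$ correctly handles the second equality.
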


%\begin{remark}
%	$v_\Delta$ is the root node, and does not appear in the above product.
%\end{remark}

\begin{corollary}
	For any set of variables $S\subseteq\xn$, $\partiald{\croanf}{S}$ is either zero, or a product of variable-disjoint ROANFs.
\end{corollary}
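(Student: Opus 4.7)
The plan is to proceed by induction on $|S|$. The base case $|S|=0$ is trivial (the derivative is $\croanf$ itself, which is a product of one ROANF), and the base case $|S|=1$ is precisely the content of the preceding observation, since each $\text{sib}(v_k)$ in that product is (isomorphic to) some $\croanf[k]$, and the $\text{sib}(v_k)$ for different $k$ are variable-disjoint because they lie in disjoint subtrees of $T$.

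For the inductive step, write $S = S' \cup \{x_i\}$ with $x_i \notin S'$, and compute
$$\partiald{\croanf}{S} \;=\; \partiald{}{x_i}\!\left(\partiald{\croanf}{S'}\right).$$
By the inductive hypothesis, either $\partiald{\croanf}{S'}=0$ (and we are done), or $\partiald{\croanf}{S'} = g_1 \cdot g_2 \cdots g_r$, where each $g_j$ is a ROANF on a set of variables $X_j$ and the $X_j$ are pairwise disjoint. Since the $X_j$'s are disjoint, $x_i$ lies in at most one of them; if $x_i \notin \bigcup_j X_j$, then the product differentiates to $0$. Otherwise, WLOG $x_i \in X_1$, and by the product rule together with variable-disjointness,
$$\partiald{}{x_i}\bigl(g_1 \cdots g_r\bigr) \;=\; \partiald{g_1}{x_i}\cdot g_2 \cdots g_r .$$

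The key observation is that the preceding observation applies not only to $\croanf$ itself but to any ROANF (after a trivial relabeling of variables), because every ROANF on $4^k$ variables has the same underlying tree structure as $\croanf[k]$. Hence $\partiald{g_1}{x_i}$ is either zero or a product of variable-disjoint ROANFs on variables drawn from $X_1$. Combining this with the already variable-disjoint factors $g_2,\ldots,g_r$, which live on $X_2,\ldots,X_r$ disjoint from $X_1$, yields a product of variable-disjoint ROANFs as desired.

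The only subtle point — really the single ``obstacle,'' though a mild one — is justifying the application of the single-variable-derivative structure to an arbitrary ROANF $g_1$ arising in the induction, rather than only to $\croanf$ as stated. This is handled by noting that the proof of the preceding observation is purely structural: it depends only on the tree of the formula, and any ROANF of product-depth $k$ on $4^k$ variables has the same tree as $\croanf[k]$ up to a bijection on leaves. Thus the induction carries through and the corollary follows.
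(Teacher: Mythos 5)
Your induction is correct, and it fills in a proof that the paper actually leaves implicit: this corollary appears immediately after \cref{obs:derivCroanf} with no proof, presented as an evident consequence of the observation. Your argument — induct on $|S|$, use variable-disjointness of the factors so that only one factor sees $x_i$, then invoke the single-variable observation on that factor — is exactly the natural way to make the claim rigorous. The one point you flag, that the observation is stated for $\croanf$ but must be applied to a sub-ROANF $g_1$ arising in the induction, is indeed the only thing that needs saying; and your resolution is right, since each $g_j$ produced by the observation is literally $\croanf[k]$ applied to a relabeled subset of variables, so the observation applies verbatim after a bijection on leaves. A truly minor corner case worth a phrase: if $x_i$ kills the whole factor (e.g.\ $g_1$ is a single variable), the contribution $\partiald{g_1}{x_i}$ is the constant $1$, so the result is an empty product or a shorter product of ROANFs — consistent with the corollary's statement once one reads ``product'' as possibly empty.
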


\begin{corollary}\label{cor:derivRoanfIndep}
	For any  $\vzero\neq \vu \in \f^{4^\Delta}$, $\partiald{\croanf}{\vu}$ is non-zero.
\end{corollary}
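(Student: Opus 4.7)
My plan is as follows. Fix a nonzero $\vu=(u_1,\ldots,u_n)\in\f^n$ (with $n=4^\Delta$) and choose $j$ such that $u_j\neq 0$. Since
\[ \partiald{\croanf}{\vu}=\sum_{i=1}^n u_i\cdot\partiald{\croanf}{x_i}, \]
it suffices to exhibit a single monomial $M$ that appears in $\text{\normalfont mon}(\partiald{\croanf}{x_j})$ but in none of $\text{\normalfont mon}(\partiald{\croanf}{x_i})$ for $i\neq j$: the coefficient of such an $M$ in $\partiald{\croanf}{\vu}$ will then be exactly $u_j\neq 0$.

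The monomial $M$ will be built using the ``twin'' of $x_j$ at the bottom layer of the formula. In the notation of \cref{obs:derivCroanf}, let $v_0$ be the leaf labeled $x_j$ and set $x_p\triangleq\text{\normalfont sib}(v_0)$, the unique leaf sharing the bottom $\times$-gate with $x_j$. By \cref{obs:derivCroanf} the polynomial $\partiald{\croanf}{x_j}$ is a nonzero product of variable-disjoint ROANFs, so I can pick any $M\in\text{\normalfont mon}(\partiald{\croanf}{x_j})$; by multilinearity, such an $M$ has the form $N/x_j$ for some $N\in\text{\normalfont mon}(\croanf)$ containing $x_j$. The key structural fact I would establish is the \emph{twin observation}: a monomial of $\croanf$ contains $x_p$ if and only if it contains $x_j$. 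I expect this to follow by unrolling the recursive definition of $\croanf$ at the $\croanf[1]$-subformula rooted at the parent $v_1$ of the bottom $\times$-gate — this subformula computes $x_jx_p+x_qx_r$ for two further leaves $x_q,x_r$, and by \cref{obs:monRoanf} any monomial of $\croanf$ touching $x_p$ (resp.\ $x_j$) must select the $x_jx_p$ branch at $v_1$, hence contains the other of $\{x_j,x_p\}$. In particular, $N$ and therefore $M$ contain $x_p$.

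With the twin observation in hand, verifying $M\notin\text{\normalfont mon}(\partiald{\croanf}{x_i})$ for $i\neq j$ — equivalently, $Mx_i\notin\text{\normalfont mon}(\croanf)$ — is immediate: if $i=p$ then $Mx_i$ is non-multilinear, contradicting multilinearity of $\croanf$; and if $i\neq j,p$, then $Mx_i$ would be a monomial of $\croanf$ containing $x_p$ but not $x_j$, contradicting the twin observation. The only nontrivial step is the twin observation itself, which I expect to be essentially immediate from the recursive definition of $\croanf$ together with \cref{obs:monRoanf}.
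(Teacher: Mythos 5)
Your proposal is correct and follows essentially the same route as the paper. Both proofs hinge on (i) the decomposition from \cref{obs:derivCroanf}, which gives that $\partiald{\croanf}{x_i}$ is divisible by $\text{\normalfont sib}(x_i)$ but free of $x_i$, and (ii) the ``twin'' fact that monomials of $\croanf$ contain $x_i$ iff they contain $\text{\normalfont sib}(x_i)$; the paper packages this as pairwise disjointness of the monomial sets of the $\partiald{\croanf}{x_i}$, while you exhibit a single surviving monomial for the chosen $j$ — the same mechanism, and the twin fact is asserted without a detailed proof in both (the cleanest justification is by induction on the recursive structure of $\croanf$ rather than directly from \cref{obs:monRoanf} alone).
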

\begin{proof}
	Denote $\vu=(u_1,\ldots,u_n)$. By \cref{obs:derivCroanf}, every monomial of $\partiald{\croanf}{x_i}$ is divisible by $\text{sib}(x_i)$ and is not divisible by $x_i$. Furthermore, for every $j\neq i$, any monomial of $\partiald{\croanf}{x_j}$ that contains $\text{sib}(x_i)$, must also contain $x_i$. Thus, in any linear combination $\partiald{\croanf}{\vu}=\sum_{i=1}^n u_i\partiald{\croanf}{x_i}$, no cancellations can occur as the monomial sets in the summed polynomials are disjoint.
\end{proof}

% following Shpilka and Volkovich (\cite{shpilka2015read}), and reconstruction for $\roanf$ following Gupta et al. (\cite{gupta2014random}).\\

%We first give some basic definitions concerning ROFs (which also hold for ROANFs).

%Shpilka and Volkovich \cite{v010a018} show that the gate-graph of a \ROF{} $\Phi$ depends only on the polynomial computed by $\Phi$ (Lemmas 3.5 and 3.6 in \cite{v010a018}); particularly $(i,j)$ is an edge in $G_\Phi$ if and only if $\partiald{^2\Phi}{x_ix_j}\not\equiv 0$.

We first give the simple proof of \cref{ROANFisDense}, that separates $\anf{}{\f}$, $\ROF^{\gl{}{\f}}$ and $\VPe$, and that shows that their closures are equal.

\begin{proof}[Proof of \cref{ROANFisDense}]
	From the definition it is obvious that $\anf{}{\f} \subseteq  \ROF^{\gl{}{\f}}$. 
	It is also clear that the classes are different as the degree of every polynomial in  $\anf{}{\f}$ is always a power of $2$, which is not necessarily the case for  polynomials in $\ROF^{\gl{}{\f}}$. 
	%\doricmnt{The orbit of ANF contains polynomials where the number of variables is $n\geq 4^\Delta$, not just equal, but the degree is always a power of $2$}. 
	As polynomials in $\ROF^{\gl{}{\f}}$ are multilinear with respect to some basis, it is also clear that $\ROF^{\gl{}{\f}}\subsetneq \VPe$, as the example $f(x)=x^2$ shows. It is also not hard to demonstrate a multilinear polynomial in $\VPe$ that is not in  $\ROF^{\gl{}{\f}}$.
	The next claim follows example 3.8 of \cite{v010a018}.
	\begin{claim}
		$f(\xn)=x_1x_2+x_2x_3+x_3x_1\notin \ROF^{\gl{}{\f}}$.
	\end{claim}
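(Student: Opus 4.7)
The plan is to argue by contradiction. Suppose $f = x_1 x_2 + x_2 x_3 + x_3 x_1 = \Phi(A\xn)$ for some \ROF{} $\Phi$ on $m\leq 3$ variables and some invertible $A\in \gl{3}{\f}$. Since composing with an invertible linear map preserves total degree, $\Phi$ must itself have degree $2$. The goal is to show that such a $\Phi$ is structurally too simple to produce $f$.

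The key structural step I would prove first is: \emph{for every \ROF{} $\Phi$ of degree $2$ on at most $3$ variables $y_1,\ldots,y_m$, the degree-$2$ homogeneous part $\Phi^{[2]}$ is a nonzero product of two linear forms in $y_1,\ldots,y_m$.} This I would establish by induction on the size of the tree via a short case analysis on the root. If the root is a $\times$-gate then $\Phi = \Phi_1\cdot \Phi_2 + c$ with $\Phi_1,\Phi_2$ on disjoint variable sets; since both factors must be nonconstant with degree summing to $2$, each is linear, and so $\Phi^{[2]} = \Phi_1^{[1]}\cdot \Phi_2^{[1]}$. If the root is a $+$-gate then $\Phi = \Phi_1 + \Phi_2 + c$ on disjoint variable sets, and exactly one of the summands, say $\Phi_1$, has degree $2$; because the two summands share at most $3$ variables in total on disjoint sets, $\Phi_1$ uses at most $2$ variables (and so $\Phi_1^{[2]}$ is handled by the $m\leq 2$ case), or else $\Phi_2$ is constant and absorbed into $c$, at which point the tree strictly shrinks and the recursion terminates.

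Granting this structural claim, $f^{[2]}(\xn) = \Phi^{[2]}$ composed with the linear forms appearing in $A\xn$ is still a product of two linear forms in $\f[\xn]$. In particular $f^{[2]}$ is reducible. Since $f$ is itself homogeneous of degree $2$, we have $f = f^{[2]}$, so $f$ is reducible. To close the contradiction I would verify that $f = x_1 x_2 + x_2 x_3 + x_3 x_1$ is irreducible over every field $\f$: writing $f = x_1(x_2+x_3) + x_2 x_3$, it suffices that $x_2+x_3$ does not divide $x_2 x_3$ in $\f[x_2,x_3]$, and substituting $x_3\mapsto -x_2$ in $x_2 x_3$ yields $-x_2^2\neq 0$, even in characteristic $2$ (where this reads $x_2^2\neq 0$).

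The main obstacle I anticipate is being comprehensive in the case analysis for the structural claim: one must check every shape a degree-$2$ \ROF{} tree on at most $3$ variables can take (including lopsided splits and cases where a subtree is constant) and make sure each yields the conclusion that $\Phi^{[2]}$ factors as a product of two linear forms. Once that lemma is in hand, the rest of the argument, including the characteristic-independent irreducibility check, is essentially immediate.
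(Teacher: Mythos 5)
Your proof is correct and follows essentially the same approach as the paper's: both derive a contradiction from the irreducibility of the homogeneous degree-$2$ polynomial $f$ combined with the observation that a degree-$2$ ROF on at most three variables, after composition with an invertible linear change of basis, has a degree-$2$ homogeneous part that factors as a product of two linear forms. Your version packages the case analysis into an explicit structural lemma and applies irreducibility at the end, whereas the paper's terse sketch applies irreducibility up front to rule out a $\times$-root and then argues by pigeonhole at the top $+$-gate, but the underlying idea is the same.
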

	\begin{proof}
		Assume for a contradiction that there is some ROF formula containing $f$ in its orbit. As $f$ is irreducible, the top gate of $\Phi$ is an addition gate. As there cannot be any cancellations in $\Phi$, the children of the root must compute homogeneous degree $2$ polynomials. It is not hard to see that this means that the polynomial computed cannot be written as a ROF in only three linear functions, as one child of the root must compute a linear function.
	\end{proof}
	To show that the closures are equal, we note that 
	Proposition 3.2 of \cite{gupta2014random} states that any polynomial that is computed by a size $s$ formula, can be computed by  an ANF formula  of size $O(s^4)$. As the leaves of an ANF formula are labeled with linear functions, we can approximate these linear functions with linearly independent linear functions and thus conclude that 
	$\VPe\subseteq \overline{\anf{}{\f}}$. The claim about the closures immediately follows.
\end{proof}

\subsection{A hitting set generator for orbits of read-once formulas}

In this section we prove \cref{thm:PITROPINV} that gives a hitting set for  $\ROF^{\gla{n}{\F}}$. Our proof follows the proof of \cite{shpilka2015read}, who constructed such a generator for ROFs. We note that Minahan and Volkovich significantly improved upon the result of  \cite{shpilka2015read}, namely, they achieved a polynomial-sized hitting set for ROFs. However, we do not know how to adapt their approach to orbits of ROFs and instead use the method of  \cite{shpilka2015read} that is based on taking partial derivatives, an operation that works well when composing the ROF with a $k$-independent map (recall \cref{lem:indDerivLinear}). 
We now turn to proving \cref{thm:PITROPINV}.

%\PITROPINV*

%\begin{remark}
%We prove the theorem for the affine orbit of \ROF{}, as we will use this result when proving \cref{pitDerivRoanf}.
%\end{remark

\begin{proof}[Proof of \cref{thm:PITROPINV}]

The proof of the theorem is by induction on the number of variables in the  underlying ROF, which we denote by $m$.
In fact, we claim something stronger: 

Let $\Phi$ be a ROF on $m\leq 2^t$ many variables that computes a non-constant polynomial. Then, for  
$f\in \Phi^{\gla{n}{\f}}$ and any \ropinvgen[(t+1)]{} $\G$, over $\f$, $f\circ \G$ is a non-constant polynomial.

For $m\leq 2$ the claim follows from \cref{obs:kwise}.

Let $f\in \F[x_1,\ldots,x_n]$ be in the orbit of some ROF, on $m$ many variables,  $\Phi(w_1,\ldots,w_{m})$. Let $t$ be the smallest integer such that $m\leq 2^t$. By definition, for some linearly independent $n$-variate linear functions $\ell_1,\ldots,\ell_{m}$, $f(\xn)=\Phi\bra{\ell_1(\xn),\ldots,\ell_{m}(\xn)}$
	%\doricmnt{\ensuremath{\xn}} 
	(where we abuse notation and identify $\Phi$ with the polynomial that it computes). Let $\{\vv_i\}$ be a dual set to $\cbra{\ell_i}$.
 
As in the proof of Lemma 5.1 of \cite{shpilka2015read}, we split the proof into cases depending on the top gate of $\Phi$. 
Let $\G_1,\G_{t}$ be a \ropinvgen[1]{} and a \ropinvgen[t]{}, respectively, such that $\G=\G_1+\G_{t}$.

\textbf{Case $\Phi=\Phi_1+\Phi_2+\alpha$:} As $\Phi_1$ and $\Phi_2$ are variable disjoint, we can assume, WLOG that $|\text{var}\bra{\Phi_1}|\leq m/2 \leq 2^{t-1}$. Assume further, WLOG, that $\frac{\partial \Phi_1}{\partial w_1}\neq 0$. As $\Phi_2$ does not depend on $w_1$, we get from \cref{lem:dualder}  that  $\partiald{f}{{\vv_1}}=\partiald{\Phi_1}{ w_1}\bra{\ell_1,\ldots,\ell_{m}} \neq 0$. By our induction hypothesis,  $\bra{\partiald{f}{{\vv_1}}}\circ \G_{t}=\bra{\partiald{f_1}{{\vv_1}}}\circ \G_{t}$ 
%\doricmnt{duplicate expressions in the equality} 
is a non-constant polynomial.  \cref{lem:indDerivLinear} implies that $f\circ\G=f\circ(\G_{1}+\G_t)\neq 0$, and it is clearly not a constant polynomial.

\textbf{Case $\Phi=\Phi_1\times \Phi_2 + \alpha$:} As we can assume that both $\Phi_1$ and $\Phi_2$ are non-constant (there is always such formula computing $\Phi(\bw)$ if it is not the constant polynomial), they both contain less than $m$ variables.
Denote $f_i = \Phi_i\bra{\ell_1,\ldots,\ell_m}$, so that $f=f_1\cdot f_2 + \alpha$.
The induction hypothesis implies that $f_1 \circ \G_{t+1}$ and $f_2 \circ \G_{t+1}$ are both non-constant. Hence, $f\circ \G_{t+1} = \bra{f_1 \circ \G_{t+1}}\cdot \bra{f_2 \circ \G_{t+1}}+\alpha$ is also non-constant, as we wanted to prove.   
\end{proof}

As before, \cref{cor:PITROPINV} follows immediately from \cref{thm:PITROPINV} and \cref{obsHitSetGen}.

\subsection{An interpolating set generator for $\anf{}{\f}$}

In this section, we construct an interpolating set generator for $\anf{}{\f}$, thus proving \cref{thm:pitSumOfRoanfThm}.
We restate the theorem to ease the reading.

\pitSumOfRoanfThm*

The first step in the proof is a reduction to the case where $f_1$ and $f_2$ are ``almost the same''. Recall that by \cref{fact:anf-sym}, $f_1$ and $f_2$ can be equal and still compute different linear functions at their bottom layer. The next lemma (roughly) shows that composing $\croanf(\xn)$ with an $O(\Delta)$-independent map, preserves equivalence of different ANFs while not introducing any new equivalences. 

%There are several reasons why we cannot guarantee that they are the same. Perhaps the surprising fact is that these are all the possible sources for ambiguity, and this is captured in the next lemma.

\begin{restatable}{lemma}{pitSumOfRoanf}\label{lem:pitSumOfRoanf}
	Let $f_1=\croanf[\Delta_1](A_1\xn+\vb_1),f_2=\croanf[\Delta_2](A_2\xn+\vb_2)\in \anfn{}{\f}$ and $f=f_1-f_2$. For $i=1,2$, denote by $h_i\triangleq x_0^{\deg(f_i)}f_i(\frac{x_1}{x_0},\ldots,\frac{x_n}{x_0})$ the homogenization of $f_i$, and let $\tilde A_i$ be an extension of $A_i$ such that $\tilde A_i\in\gl{n+1}{\f}$ and $h_i=\croanf[\Delta_i](\tilde A_i\xn)$. 
	Set $k= 2\max\{\Delta_1,\Delta_2\}+7$ and let $\G$ be any uniform \ropinvgen[k]{}. If $f\neq 0$ then at least one of the following holds:
	\begin{enumerate}
		\item $f\circ \G\neq 0$. \label{case:equalG}
		\item $\Delta_1=\Delta_2$, and  there is a $1-1$ map between the quadratic  forms of  $h_2(\tilde A_1^{-1}\xn)$ and those of $\croanf[\Delta_1](\xn)$, such that any two quadratics that were matched have the same monomials, \emph{possibly with different coefficients}.\footnote{Thus, composition with $\G$ does not exactly preserve equivalence.} Furthermore, the map between the quadratics is a $\TR[4^{\Delta_1-1}][\f]$ symmetry (see \cref{defTR}). \label{case:different}
		\end{enumerate}
\end{restatable}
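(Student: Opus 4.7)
The plan is to use \cref{lem:indDerivLinear,lem:indProjectZero} to convert the vanishing of $f \circ \G$ into structural statements about iterated directional derivatives and restrictions of $f = f_1 - f_2$, then combine these with the symmetry characterization \cref{fact:anf-sym} to conclude case~2.

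First I would establish that $\Delta_1 = \Delta_2$. The top-degree homogeneous part of $f_i$ is $\prod_{j=1}^{2^{\Delta_i}} \ell_{i,j}^{[1]}$, a product of $2^{\Delta_i}$ linearly independent linear forms. Since $\G$ is uniform, composition preserves the grading in $\bz$; by \cref{obs:kwise}(\ref{item:coordsGenInd}) this product remains nonzero under $\G$ and is homogeneous of degree $2^{\Delta_i}\cdot d$ in $\bz$, where $d$ is the common coordinate degree of $\G$. If $\Delta_1 \neq \Delta_2$, then the top-degree-in-$\bz$ term of $f \circ \G$ cannot cancel, placing us in case~1. Hence assume $\Delta_1 = \Delta_2 = \Delta$ and work with the homogenizations $h_1, h_2$ (the condition $f \circ \G = 0$ transfers to a similar condition on $h_1 - h_2$ under a small enlargement of $\G$ to accommodate $x_0$).

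Assume henceforth that $f \circ \G = 0$. The goal is to recover, from this vanishing, the pairwise first-common-gate depths of leaves in $h_2(\tilde A_1^{-1}\xn)$ and show they match those of $\croanf[\Delta](\xn)$ up to a $\TR$-tree-symmetry. Let $\{\vv_i\}_{i=1}^{4^\Delta}$ be the dual basis to the rows of $\tilde A_1$, so $\vv_i = \tilde A_1^{-1} e_i$. By \cref{lem:dualder}, for any subset $S$ of leaves, $\partial^{|S|} h_1 / \prod_{i \in S} \partial \vv_i = \bigl(\partial^{|S|} \croanf[\Delta] / \prod_{i \in S} \partial x_i\bigr)(\tilde A_1 \xn)$, which by \cref{obs:derivCroanf,obs:monRoanf} is either zero or a product of $\croanf$'s along disjoint sibling subtrees; in particular $\partial^2 h_1 / \partial \vv_i \partial \vv_j \neq 0$ iff the first common gate of $x_i, x_j$ in $\croanf[\Delta]$ is a product gate. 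Applying \cref{lem:indDerivLinear} with $|S|$ of the $1$-independent summands of $\G$, the vanishing $\partial^{|S|}(h_1-h_2)/\prod_{i\in S} \partial \vv_i \circ \G' = 0$ (for the remaining map $\G'$), combined with the structural formula for $h_1$'s derivatives, pins down the corresponding derivative of $h_2$. Iterating this for pairs and deeper subsets at varying depths, and invoking \cref{lem:indProjectZero} on a few leaves via the slack in $k$, one assembles a $\TR[4^{\Delta-1}][\f]$-element matching leaf-quadruples of $h_2(\tilde A_1^{-1}\xn)$ to those of $\croanf[\Delta](\xn)$. The budget $k = 2\Delta + 7$ allows up to $2\Delta$ nested derivatives (enough to traverse the tree top-to-bottom with two derivatives per level) and an additional slack of $7$ for auxiliary projections and homogenization overhead.

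The hardest step will be passing from pairwise-depth information to the claim that each matched quadratic has \emph{exactly the same monomial support} as the corresponding quadratic of $\croanf[\Delta](\xn)$. Within a matched leaf-quadruple, the two degree-$2$ monomials of the quadratic of $h_2(\tilde A_1^{-1}\xn)$ must a priori be shown to coincide with those of $\croanf[\Delta]$'s quadratic rather than being some other pairing on the same four leaves; this will require applying \cref{fact:anf-sym} carefully at the bottom two layers and exploiting restrictions afforded by the slack in $k$ to isolate each quadruple. The surviving freedom --- arbitrary coefficient scalings in the matched quadratics --- is precisely the scaling freedom allowed by the $\TS$-symmetry of \cref{fact:anf-sym}, and this is exactly what case~2 permits.
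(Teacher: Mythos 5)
There is a concrete error at the start, and a genuine gap in the main argument.

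\textbf{The top-degree claim is false.} You write that the top-degree homogeneous part of $f_i = \croanf[\Delta_i](\ell_{i,1},\ldots,\ell_{i,4^{\Delta_i}})$ is $\prod_{j=1}^{2^{\Delta_i}}\ell_{i,j}^{[1]}$, a product of linear forms. This is not true even for $\Delta_i=1$: the top-degree part of $\ell_1\ell_2+\ell_3\ell_4$ is $\ell_1^{[1]}\ell_2^{[1]}+\ell_3^{[1]}\ell_4^{[1]}$, which is irreducible when the $\ell_j^{[1]}$ are independent. In general the top-degree part is $\croanf[\Delta_i](\ell_{i,1}^{[1]},\ldots,\ell_{i,4^{\Delta_i}}^{[1]})$, a non-trivial read-once polynomial. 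Your degree-comparison for ruling out $\Delta_1\neq\Delta_2$ can be salvaged by invoking \cref{thm:PITROPINV} to see this top part is hit by a $(2\Delta_i+1)$-independent summand of $\G$, but the paper avoids degree comparison entirely: it picks $\vv$ dual to a leaf of $f_1$ whose linear form is not spanned by $f_2$'s leaves (possible since $4^{\Delta_1}>4^{\Delta_2}$), so $\partiald{f_2}{\vv}=0$ while $\partiald{f_1}{\vv}\neq 0$ by \cref{cor:derivRoanfIndep}, and then hits this single derivative.

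\textbf{The structural core is missing.} Case~2 asserts a matching of quadratics' monomial supports up to a $\TR$ tree-symmetry. The paper reduces this to: ``if $h\circ\G=0$ then $\text{mon}(h_2(\tilde A_1^{-1}\xn))=\text{mon}(\croanf[\Delta](\xn))$'' via a \emph{single} pair of second-order directional derivatives that kill one of $h_1,h_2$ but not the other (\cref{lem:separateRoanfSum}), and then invokes a purely structural induction (\cref{lem:roanfMonInc}) showing that $\text{mon}(g)\subseteq\text{mon}(\croanf)$ forces equality and the $\TR$-symmetric quadratic matching. Your plan to ``traverse the tree top-to-bottom'' with $\approx 2\Delta$ nested derivatives is both a different budget accounting than what $k=2\Delta+7$ is actually spent on (in the paper it is: $2$ separating derivatives, then $2$ more derivatives and $2$ projections, then $2\Delta+1$ for hitting the residual read-once product via \cref{thm:PITROPINV}), and it leaves the combinatorial heart --- turning same-support information into a $\TR$-symmetry matching --- unaddressed. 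You acknowledge this is ``the hardest step'' and gesture at \cref{fact:anf-sym}, but that fact concerns $\TS$-equivalence of quadratic \emph{polynomials} given equality of the ANF, whereas what is needed here is a $\TR$-matching of \emph{monomial supports} extracted from mere containment of monomial sets; the paper proves this as a self-contained induction on $\Delta$ that your plan does not replicate or replace.

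In short, the high-level tools (dual-basis directional derivatives, \cref{lem:indDerivLinear}, \cref{lem:indProjectZero}, homogenization via uniformity) are the right ones and match the paper's toolbox, but the concrete reduction (one well-chosen second derivative rather than a depth-$\Delta$ cascade) and the decisive structural lemma (\cref{lem:roanfMonInc}) are absent, and the opening degree argument as written is incorrect.
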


Observe that if $\cbra{\ell_{i,j}}$ are linear functions such that  $f_i=\croanf[\Delta_i]\bra{\ell_{i,1}(\xn),\ldots,\ell_{i,4^{\Delta_i}}}$, then the condition ``the monomials appearing in the quadratic forms of $h_2\bra{(\tilde A_1)^{-1} \xn}$ are identical to the monomials of the quadratic forms of $\croanf[\Delta_1](\xn)$, up to $\TR[4^{\Delta_1}][\f]$ symmetry'' is equivalent to saying that there exists a permutation $\pi\in \TR[4^{\Delta_1-1}][\f]$, matching quadratics in $f_2$ to those of $f_1$, such that when we represent the $i$th quadratic $q^{(2)}_i$ of $f_2$ according to the linear functions $\cbra{\ell_{1,1},\ldots,\ell_{1,4^{\Delta_1}}}$, then $q^{(2)}_i$ has the same set of $\cbra{\ell_{1,1},\ldots,\ell_{1,4^{\Delta_1}}}$-monomials as $q^{(1)}_{\pi(i)}$, the $\pi(i)$th quadratic in $f_1$. In general, whenever we say ``up to $\TR[4^{\Delta_1-1}][\f]$ symmetry'' we mean that there exists a permutation $\pi \in \TR[4^{\Delta_1-1}][\f]$ such that the statement holds  when we apply $\pi$ to the quadratics computed at the bottom layers.

%Informally, given a sum $0\neq f=f_1+f_2\in\roanf+\roanf$ and a \ropinvgen[k]{} $\G$, if $f\circ\G=0$ then $f_1,f_2$ have the same quadratics. 

Once we have this in mind we can see that the only ``bad'' case  is when, for every $i$, $\ell_{2,i}=\alpha_i\cdot \ell_{1,i}$, for scalars $\alpha_i\in\f$ (possibly after applying some  $\TR[4^{\Delta_1-1}][\f]$ symmetry). Thus, the proof of \cref{thm:pitSumOfRoanfThm} would follow from the next lemma. 

\begin{restatable}{lemma}{pitRoanfSame}\label{lem:pitRoanfSame}
	Let $\ell_1(\xn),\ldots,\ell_n(\xn)$ be linearly independent linear forms, and let $\alpha_1,\ldots,\alpha_n\in\f$ be non-zero constants. Let $f=\croanf(\ell_1,\ldots,\ell_n)$ and $g=\croanf(\alpha_1\ell_1,\ldots,\alpha_n\ell_n)$, and let $\G$ be a \ropinvgen[(2\Delta+2)]{}. It holds that if %\doricmnt{if} 
	$f-g\neq 0$ then $(f-g)\circ \G\neq 0$.
	% \doricmnt{double 'then'?}.
\end{restatable}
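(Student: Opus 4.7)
The plan is to prove the lemma by induction on the product-depth $\Delta$. For the base case $\Delta = 0$, the polynomial $f - g = (1 - \alpha_1)\ell_1$ is nonzero exactly when $\alpha_1 \neq 1$; since $\G$ has at least one $1$-independent summand, $\ell_1 \circ \G \neq 0$ by \cref{obs:kwise}(\ref{item:coordsGenInd}).

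For the inductive step $\Delta \geq 1$, I split $\G = \G_A + \G_B + \G_C$, where $\G_A$ is $2\Delta$-independent (to invoke the inductive hypothesis at depth $\Delta - 1$) and $\G_B, \G_C$ are two extra $1$-independent summands on disjoint variables. Using the recursive definition of $\croanf$, I partition $\{1,\ldots,n\}$ into four consecutive blocks $S_1, S_2, S_3, S_4$ of size $4^{\Delta-1}$, obtaining
\[ f = P_1 P_2 + P_3 P_4, \qquad g = Q_1 Q_2 + Q_3 Q_4, \]
with $P_j = \croanf[\Delta-1](\ell_{S_j})$ and $Q_j = \croanf[\Delta-1](\alpha_{S_j}\ell_{S_j})$. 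Since $\ell_{S_1 \cup S_2}$ and $\ell_{S_3 \cup S_4}$ are disjoint sets of linearly independent linear forms, $f - g \neq 0$ forces at least one of $P_1 P_2 - Q_1 Q_2$ and $P_3 P_4 - Q_3 Q_4$ to be nonzero; WLOG the former.

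The key structural observation is that there exists a monomial $\sigma$ of $P_2$ (necessarily of size $2^{\Delta-1}$) such that $P_1 - \beta_\sigma Q_1 \neq 0$, where $\beta_\sigma \coloneqq \prod_{i \in \sigma}\alpha_i$. If this failed for every such $\sigma$, then since $Q_1 \neq 0$, all the $\beta_\sigma$ would coincide to a single value $\beta^*$, giving $Q_2 = \beta^* P_2$ and $P_1 = \beta^* Q_1$, whence $P_1 P_2 = Q_1 Q_2$, a contradiction. Absorbing $\beta_\sigma^{1/2^{\Delta-1}}$ into the rescaling parameters of $\alpha_{S_1}$ (working over an extension field if necessary, which is harmless for PIT purposes) rewrites $P_1 - \beta_\sigma Q_1$ as a depth-$(\Delta-1)$ instance of the lemma, and the inductive hypothesis applied via the $2\Delta$-independent map $\G_A$ yields $(P_1 - \beta_\sigma Q_1) \circ \G_A \neq 0$.

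The principal difficulty is converting this inductive nonvanishing on $P_1 - \beta_\sigma Q_1$ (a polynomial in $\ell_{S_1}$ alone) into the required nonvanishing of $(f - g) \circ \G$, using only the two remaining $1$-independent copies $\G_B, \G_C$. A naive extraction of $P_1 - \beta_\sigma Q_1$ from $f - g$ by iterated directional derivatives in the dual directions $\vv_i$ for $i \in \sigma$ (using \cref{lem:indDerivLinear}) would cost $|\sigma| = 2^{\Delta-1}$ copies, exceeding the budget for $\Delta \geq 3$; this is the main obstacle. My intended remedy is to combine a single directional derivative in the direction $\vv_j$ for a carefully chosen $j \in S_2$ (applied via $\G_B$ and \cref{lem:indDerivLinear}) with a single restriction $\ell_{j'} = 0$ for a well-chosen $j' \in S_2$ (applied via $\G_C$ and \cref{lem:indProjectZero}). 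The derivative kills the $P_3 P_4 - Q_3 Q_4$ contribution and factors $P_1$ out of the $P_1 P_2$ term, while the restriction collapses the resulting $\partiald{P_2}{\vv_j}$ factor — itself a product of smaller $\croanf$'s by \cref{obs:derivCroanf} — down to a product of linear forms multiplying $P_1 - \beta_\sigma Q_1$ for a specific monomial $\sigma$ determined by the path through the tree of $P_2$ selected by $j$ and $j'$. The subtle point I anticipate having to address carefully is coordinating the derivative and restriction so that this two-operation reduction genuinely descends exactly one level of the $\croanf$ recursion at a cost of precisely two $1$-independent copies, so that after $\Delta$ inductive levels the total budget matches the $2\Delta + 2$ independence promised by the statement.
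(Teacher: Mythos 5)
Your overall framing as an induction on $\Delta$ hits a real obstruction at the step you yourself flag as the ``subtle point'': the one-derivative-plus-one-restriction reduction does not descend to a depth-$(\Delta-1)$ instance of the lemma. After taking $\partiald{(f-g)}{\vv_j}$ for some $j\in S_2$, \cref{lem:dualder} and \cref{obs:derivCroanf} give
\[
\partiald{(f-g)}{\vv_j} \;=\; P_1\cdot D_P \;-\; \alpha_j\,Q_1\cdot D_Q,\qquad
D_P=\prod_{k=0}^{\Delta-2}\croanf[k]\!\bra{\ell_{T_k}},\quad
D_Q=\prod_{k=0}^{\Delta-2}\croanf[k]\!\bra{\alpha_{T_k}\ell_{T_k}},
\]
where $T_k$ is the variable set of the $k$th sibling subtree along the path from $x_j$ to the root of $\croanf[\Delta-1]$. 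Two things go wrong. First, $D_P$ and $D_Q$ are products of $\Delta-1$ ROANFs of depths $0,1,\ldots,\Delta-2$; a single restriction $\ell_{j'}=0$ touches only one factor, and for $k\ge 2$ it does not reduce $\croanf[k]$ to a product of linear forms, so the claimed ``collapse to linear forms'' fails for every $\Delta\ge 3$. Second, and more fundamentally, $D_P$ and $D_Q$ are not scalar multiples of one another (each factor $\croanf[k]\!\bra{\alpha_{T_k}\ell_{T_k}}$ has its monomials rescaled differently by the $\alpha_i$), so $P_1 D_P-\alpha_j Q_1 D_Q$ does not factor as $(P_1-\beta\,Q_1)\cdot(\text{common factor})$; the post-reduction polynomial is simply not of the form $\croanf(\bm\ell)-\croanf(\bm\alpha\bm\ell)$ that your inductive hypothesis accepts. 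Your preliminary observations are sound — in particular, the existence of a monomial $\sigma$ of $P_2$ with $P_1-\beta_\sigma Q_1\ne 0$, and the absorption of $\beta_\sigma^{1/2^{\Delta-1}}$ into the rescalings (using $2^{\Delta-1}$-homogeneity of $\croanf[\Delta-1]$) are both correct — but they do not repair the extraction step, which as written would need on the order of $2^{\Delta-1}$ independent copies, not two.

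The paper avoids the induction on $\Delta$ entirely. It picks the \emph{deepest} node $u$ of the formula tree at which the polynomials computed in $f$ and in $g$ fail to be proportional; such a $u$ is forced to be an addition gate, and writing $u_f=f_1f_2+f_3f_4$, $u_g=g_1g_2+g_3g_4$, minimality gives $f_1f_2=\alpha\,g_1g_2$, $f_3f_4=\beta\,g_3g_4$ with $\alpha\neq\beta$. A single directional derivative dual to a leaf under $f_1$ (resp.\ $f_3$) yields $\partiald{(f-g)}{\vv_1}=(\alpha F-G)\,g_2\,\partiald{g_1}{\vv_1}$ and $\partiald{(f-g)}{\vv_3}=(\beta F-G)\,g_4\,\partiald{g_3}{\vv_3}$, i.e.\ products of ROANF-orbit polynomials. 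Since $\alpha\neq\beta$, at least one of these is nonvanishing after composition with a $(2\Delta+1)$-independent map by \cref{thm:PITROPINV}, and \cref{lem:indDerivLinear} then closes the argument with budget $1+(2\Delta+1)=2\Delta+2$ — one copy for the derivative, the rest for hitting the product of ROPs, with no restriction step and no recursion.
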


We first give the formal proof of the theorem and then prove the main lemmas.

\begin{proof}[Proof of \cref{thm:pitSumOfRoanfThm}]
	Let $h_1,h_2$ be the homogenizations of $f_1,f_2$ as in the premise of \cref{lem:pitSumOfRoanf}. Assume Case~\ref{case:different} of \cref{lem:pitSumOfRoanf} holds, as otherwise we are done. Then, for $n=4^{\Delta_1}$, this assumption implies that  for some linearly independent linear forms $\ell_1,\ldots,\ell_n$ and non-zero constants $\alpha_1,\ldots,\alpha_n\in\f$,   $h_1=\croanf[\Delta_1](\ell_1,\ldots,\ell_n)$ and $h_2=\croanf[\Delta_1](\alpha_1\ell_1,\ldots,\alpha_n\ell_n)$. By \cref{lem:pitRoanfSame}, if $f\neq 0$ then $(h_1-h_2)\circ\G\neq 0$; and by the following lemma (\cref{lem:uniIndGenHitsHom}), we may conclude $f\circ\G\neq 0$.
\end{proof}

\begin{restatable}{lemma}{uniIndGenHitsHom}\label{lem:uniIndGenHitsHom}
	Let $\xn=(x_1,\ldots,x_n)$ and $f\in\f[\xn]$ be a polynomial of degree $d$. Let $g(x_0,\xn)=x_0^df(\frac{x_1}{x_0},\ldots,\frac{x_n}{x_0})$ be the homogenization of $f$, and let $\G:\f^t\to\f^{n+1}$ be a polynomial map such that the coordinates of $\G$ are homogeneous polynomials of identical degree. Let $H:\f^t\to\f^n$ be the restriction of $\G$  to the coordinates  in $[n]$ (i.e., we ignore the $0$th coordinate). If $g\circ \G\neq 0$ then $f\circ H\neq 0$.
\end{restatable}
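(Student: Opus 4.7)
The plan is to make the homogenization identity explicit and then use the equidegree hypothesis on $\G$ to separate graded pieces. Decomposing $f=\sum_{k=0}^{d} f^{[k]}$ into its homogeneous parts, the definition of the homogenization gives
\[
g(x_0,\xn) \;=\; x_0^{d}\sum_{k=0}^{d} f^{[k]}\!\left(\tfrac{x_1}{x_0},\ldots,\tfrac{x_n}{x_0}\right) \;=\; \sum_{k=0}^{d} x_0^{\,d-k}\, f^{[k]}(\xn),
\]
so substituting $\G=(\G_0,\G_1,\ldots,\G_n)=(\G_0,H)$ yields the key identity
\[
g\circ \G \;=\; \sum_{k=0}^{d} \G_0^{\,d-k}\cdot\bigl(f^{[k]}\circ H\bigr).
\]

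Next, if we assume $g\circ\G\neq 0$, then at least one summand on the right must be nonzero; in particular there exists some $k^{\star}\in\{0,1,\ldots,d\}$ with $f^{[k^{\star}]}\circ H\neq 0$.

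It remains to upgrade this to $f\circ H\neq 0$. Let $e$ denote the common degree of the coordinates of $\G$, so $H=(\G_1,\ldots,\G_n)$ also has equidegree coordinates of degree $e$. Then for every $k$, the polynomial $f^{[k]}\circ H$ is either zero or homogeneous of degree $e\cdot k$ in the input variables of $\G$. Consequently, the nonzero terms in the decomposition $f\circ H=\sum_{k=0}^{d} f^{[k]}\circ H$ live in pairwise distinct graded pieces and cannot cancel each other. Since $f^{[k^{\star}]}\circ H\neq 0$, we conclude $f\circ H\neq 0$, as desired.

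The whole argument reduces to a one-line identity and a graded-degree observation, so I do not anticipate any serious obstacle. The only subtlety worth highlighting is that the equidegree hypothesis on the coordinates of $\G$ is used in an essential way, precisely to ensure that the homogeneous pieces $f^{[k]}\circ H$ sit in distinct degrees of the ambient polynomial ring in the $\G$-variables and therefore cannot cancel when assembled back into $f\circ H$.
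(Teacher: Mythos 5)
Your proof is correct and matches the paper's argument: both expand $g$ via the graded decomposition of $f$, substitute $\G=(\G_0,H)$ to obtain $g\circ\G=\sum_k \G_0^{\,d-k}(f^{[k]}\circ H)$, and then use the equidegree hypothesis to place the nonzero pieces $f^{[k]}\circ H$ in pairwise distinct degrees so no cancellation can occur in $f\circ H$. The only cosmetic difference is that the paper indexes by $i=d-k$ and singles out the minimal such index, whereas you phrase it directly as distinct graded pieces; the substance is identical.
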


\begin{proof}
	Write $g(x_0,\xn)=\sum_{i=0}^d x_0^if^{[d-i]}(\xn)$, and denote by $\G_0$ the $0$th coordinate of $\G$ (such that $\G=(\G_0,H)$). We get:
	$$ g\circ \G=\sum_{i=0}^d (\G_0)^i\cdot(f^{[d-i]}\circ H). $$
	Fix $i\in [d+1]_0$ to be the minimal index such that $f^{[d-i]}\circ H\neq 0$. Such an index must exist, because $g\circ \G\neq 0$. As all coordinates of $\G$ are homogeneous and of identical degree, for any $i<i'\in[d]$ such that $f^{[d-i']}\circ H$ is non-zero, we must have $\deg(f^{[d-i]}\circ H)>\deg(f^{[d-i']}\circ H)$. Thus, nothing can cancel $f^{[d-i]}\circ H$ in $f\circ H$, proving $f\circ H\neq 0$.
\end{proof}

\subsubsection{Proof of \cref{lem:pitSumOfRoanf}}

	The high-level strategy for proving \cref{lem:pitSumOfRoanf} is as follows: first, we show that if Case~\ref{case:different} of the lemma is false, then there are  $\vv,\vu\in\f^n$ such that $\partiald{^2f}{\vv\partial\vu}=\partiald{^2f_1}{\vv\partial\vu}\neq 0$. This is proven in \cref{lem:separateRoanfSum}, based on the structural result of \cref{lem:roanfMonInc}. After that, we prove that \ropinvgen[(k-2)]{}s hit $\partiald{^2f_1}{\vv\partial\vu}$, in \cref{lem:pitDerivRoanf}.

To prove \cref{lem:pitSumOfRoanf}, we first set out to prove that inclusion of monomial sets is enough to deduce that Case~\ref{case:different} of \cref{lem:pitSumOfRoanf} holds:

\begin{lemma}\label{lem:roanfMonInc}
	Let $g(\xn)=\croanf(A\xn+\vb)$ for some $(A,\vb)\in\gla{n}{\f}$. Let $q_1,\ldots,q_{4^{\Delta-1}}$ denote the quadratic forms of $\croanf$ such that $g=\croanf[\Delta-1](q_1(A\xn+\vb),\ldots,q_{4^{\Delta-1}}(A\xn+\vb))$. If $\text{\normalfont mon}(g)\subseteq\text{\normalfont mon}(\croanf(\xn))$, then $\vb=0$ and $\text{\normalfont mon}(q_i(A\xn))=\text{\normalfont mon}(q_i(\xn))$, up to $\TR[4^{\Delta-1}][\f]$ symmetry. In particular, $\text{\normalfont mon}(g)=\text{\normalfont mon}(\croanf(\xn))$.
\end{lemma}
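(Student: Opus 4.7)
The plan begins with establishing $\vb=\vzero$. Since every monomial in $\text{\normalfont mon}(\croanf(\xn))$ has degree exactly $2^\Delta$, the containment hypothesis forces $g$ to be homogeneous of degree $2^\Delta$. Taylor-expanding $g(\xn)=\croanf(A\xn+\vb)$ around $A\xn$ (using that $\croanf$ is multilinear), the homogeneous component of $g$ of degree $2^\Delta-1$ equals $\sum_i b_i\cdot\partiald{\croanf}{y_i}(A\xn)=\partiald{\croanf}{\vb}(A\xn)$. Homogeneity forces this to vanish; since $A$ is invertible, composition with $A$ is injective, so $\partiald{\croanf}{\vb}\equiv 0$ as a polynomial in $\by$. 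By \cref{cor:derivRoanfIndep} this gives $\vb=\vzero$.

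Having reduced to $g=\croanf(A\xn)=\croanf[\Delta-1](\tilde q_1,\ldots,\tilde q_m)$ with $\tilde q_i=q_i(A\xn)$ and $m=4^{\Delta-1}$, I would prove by induction on $\Delta$ that the $\tilde q_i$'s match the $q_i(\xn)$'s in monomial support, up to some $\TR[m][\f]$-automorphism. For the base case $\Delta=1$ there is a single quadratic $q_1=x_1x_2+x_3x_4$ and $g=L_1L_2+L_3L_4$ with the $L_i=(A\xn)_i$ linearly independent. The symmetric bilinear form attached to $g$ therefore has rank $4$, while any proper nonempty subset of $\{x_1x_2,x_3x_4\}$ yields a form of rank at most $2$; together with $g\neq 0$, this forces $\text{\normalfont mon}(g)=\text{\normalfont mon}(q_1)$.

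The inductive step is the crux of the proof. The key lever is \cref{obs:monRoanf}: every monomial of $g$ arises as a product $\prod_{i\in S}m_i$ where $S$ is a $\croanf[\Delta-1]$-monomial support and $m_i\in\text{\normalfont mon}(\tilde q_i)$, and for such a product to land in $\text{\normalfont mon}(\croanf(\xn))$ it must be multilinear of degree $2^\Delta$ with every pair of its variables sharing a product gate as first common ancestor. Ranging over the many valid $S$'s yields enough constraints to force each $\tilde q_i$ to be multilinear, to depend only on the variables of one of the canonical 4-variable leaf-blocks of $\croanf$ (up to a tree permutation $\pi\in\TR[m][\f]$), and to have monomial support equal to that of $q_{\pi(i)}(\xn)$. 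Combining with the top-level decomposition $\croanf[\Delta](\xn)=\croanf[\Delta-1](\xn^{(1)})\croanf[\Delta-1](\xn^{(2)})+\croanf[\Delta-1](\xn^{(3)})\croanf[\Delta-1](\xn^{(4)})$ lets the induction go through on the two halves separately. The main obstacle will be this combinatorial step: ruling out $\tilde q_i$'s whose variables straddle several blocks, by leveraging the abundance of $\croanf[\Delta-1]$-monomial supports to derive contradictions with \cref{obs:monRoanf}. Once the matching is established, variable-disjointness of the canonical $q_i(\xn)$'s prevents cancellations in the expansion, yielding $\text{\normalfont mon}(g)=\text{\normalfont mon}(\croanf(\xn))$.
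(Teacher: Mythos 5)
Your $\vb=\vzero$ argument is a correct and somewhat cleaner alternative to what the paper does: you extract the degree-$(2^\Delta-1)$ homogeneous piece of $g$ directly via multilinearity of $\croanf$ and invoke \cref{cor:derivRoanfIndep}, handling all $\Delta$ at once, whereas the paper only argues $\vb=\vzero$ explicitly at the $\Delta=1$ base case (via the linear part and linear independence of the $\ell_i^{[1]}$) and appeals to homogeneity for $\Delta>1$. Your $\Delta=1$ base case also works, though the rank-of-the-bilinear-form argument needs characteristic $\neq 2$; the paper's base case uses irreducibility of $\croanf[1]$ instead, which is field-agnostic.

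The genuine gap is the inductive step, which is where essentially all of the content of this lemma lives. Your plan — use \cref{obs:monRoanf} and ``the abundance of $\croanf[\Delta-1]$-monomial supports'' to force each $\tilde q_i$ to be multilinear and confined to a single canonical leaf-block — is not carried out, and you say so yourself by calling it ``the main obstacle.'' Moreover, applying the induction ``on the two halves separately'' is a mismatch: the halves are \emph{products} $F_1F_2$ and $F_3F_4$, not single $\croanf[\Delta-1]$ instances, so it is not the induction hypothesis as stated that applies to them. The paper proceeds top-down from the root decomposition $\croanf = F_1F_2 + F_3F_4$ rather than bottom-up from the quadratics: it first establishes $\text{var}(g)=\text{var}(\croanf(\xn))$, then proves $g_1g_2$ and $g_3g_4$ are variable-disjoint by differentiating by a common variable and invoking \cref{irreducibleMultilinear} together with divisibility by the sibling, then shows $\text{mon}(g_1g_2)\subseteq\text{mon}(F_1F_2)$ (and symmetrically), and finally — via a case analysis on the four sub-blocks $p_1,p_2,p_3,p_4$ of $F_1F_2$ — rules out $g_1$ straddling $F_1$ and $F_2$. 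Only then can it recurse on the individual $g_i$ at depth $\Delta-1$. A bottom-up argument through the quadratics might well be made to work, but as written the proof has a hole exactly where the real combinatorial work needs to happen.
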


\begin{proof}
	The proof is by induction on $\Delta$. 
	
	For $\Delta=1$, we know $\text{mon}(g)\subseteq\{x_1x_2,x_3x_4\}$. $\croanf[1](\xn)$ is irreducible, so $\text{mon}(g)\neq\{x_1x_2\}$ or $\{x_3x_4\}$, and $g$ is non-constant so $\text{mon}(g)=\{x_1x_2,x_3x_4\}$. Now, let $\ell_1(\xn),\ldots,\ell_4(\xn)$ denote linearly independent linear functions such that $g=\ell_1(\xn)\ell_2(\xn)+\ell_3(\xn)\ell_4(\xn)$, and denote $\alpha_i\triangleq \ell_i(0)$. The $1$-homogeneous part of $g$ is given by:
	$$ g^{[1]}=\alpha_1\ell_2^{[1]}(\xn)+\alpha_2\ell_1^{[1]}(\xn)+\alpha_3\ell_4^{[1]}(\xn)+\alpha_4\ell_3^{[1]}(\xn). $$
	
	As $g$ is $2$-homogeneous, $g^{[1]}=0$. As the $\ell^{[1]}_i$s are linearly independent, this implies $\alpha_1=\ldots=\alpha_4=0$, and therefore $\vb=0$, proving the base case.
	
	Assume $\Delta>1$ and denote $\croanf(\xn)=F_1F_2+F_3F_4$, where $F_1,\ldots,F_4$ are the grandchildren of the root of $\croanf$. In particular, each $F_i$ is an 
	%\doricmnt{an?} 
	$\croanf[\Delta-1](\xn)$ formula (on one quarter of the variables). 
	We note that $g$ is $2^\Delta$ homogeneous because $\text{mon}(g)\subseteq\text{mon}(\croanf)$, so $g=\croanf(A\xn)$ (because $\croanf(A\xn+\vb)^{[2^\Delta]}=\croanf(A\xn)$). Denote $g=g_1g_2+g_3g_4$ where $g_i(\xn)=F_i(A\xn)$.
	
	First, note that $\text{var}(g)=\text{var}(\croanf(\xn))$: we already know $\text{var}(g)\subseteq\text{var}(\croanf(\xn))$, and $g$ must depend on at least $4^\Delta$ variables, or the $4^\Delta$ linear functions on the leaves cannot be linearly independent.
	
	Next, observe that $g_1g_2$ and $g_3g_4$ must be variable disjoint: if $x_i\in\text{var}(g_1g_2)\cap\text{var}(g_3g_4)$, then $\left(\partiald{g}{x_i}\right)(A^{-1}\xn)$ is a sum of non-constant, variable-disjoint, multilinear polynomials, and $\left(\partiald{g}{x_i}\right)(\xn)$ is therefore irreducible (recall \cref{irreducibleMultilinear}). However, if we denote by $x_j$ the sibling of $x_i$ in $\croanf(\xn)$, the fact that $\text{mon}(g)\subseteq\text{mon}(\croanf(\xn))$ implies that every monomial of $\partiald{g}{x_i}(\xn)$ is divisible by $x_j$. As $\Delta>1$, we have $\deg\bra{\partiald{g}{x_i}}\geq 3$, and therefore $\partiald{g}{x_i}$ must be reducible, in contradiction. Thus, $\text{var}(g_1g_2)\cap\text{var}(g_3g_4)=\emptyset$, and in particular $\text{mon}(g_1g_2),\text{mon}(g_3g_4)\subseteq\text{mon}(\croanf)$.
	
	Next, assume, WLOG, there exist some monomial $\xn^{\ve}\in\text{mon}(F_1F_2)$ such that $\xn^{\ve}\in\text{mon}(g_1g_2)$. If $g_1g_2$ contains a monomial of $F_3F_4$, then $g_1g_2$ can be partitioned into a sum of two variable-disjoint, non-constant, multilinear polynomials; which would contradict reducibility of $g_1g_2$. Thus, $\text{mon}(g_1g_2)\subseteq\text{mon}(F_1F_2)$. 
	As we showed that $\text{var}(g)=\text{var}(\croanf(\xn))$, the conditions on the monomials implies that there must exist some monomial of $F_3F_4$ in $g$, so we may conclude $\text{mon}(g_3g_4)\subseteq\text{mon}(F_3F_4)$, and in addition, $\text{var}(g_1g_2)=\text{var}(F_1F_2)$ and $\text{var}(g_3g_4)=\text{var}(F_3F_4)$.
	
	To apply induction, it remains to prove that $\text{mon}(g_i)\subseteq\text{mon}(F_i)$ for $i\in[4]$ (up to $\TR[][\f]$); focus on $g_1g_2$ and WLOG assume $\text{var}(g_1)\cap\text{var}(F_1)\neq\emptyset$.
	
	As all monomials of $g_1g_2$ are multilinear, $\text{var}(g_1)\cap\text{var}(g_2)=\emptyset$. As $\Delta>1$, we may denote by $p_1,p_2,p_3,p_4$ the variable-disjoint polynomials such that $F_1=p_1+p_2$ and $F_2=p_3+p_4$:
	$$ F_1F_2=(p_1+p_2)(p_3+p_4)=p_1p_3+p_1p_4+p_2p_3+p_2p_4 \;.$$
	
	We now show that $g_1$ cannot contain variables from both $F_1$ and $F_2$. Assume there exist monomials $\xn^{{\ve_1}},\xn^{{\ve_2}}\in\text{mon}(g_1)$ such that $\xn^{{\ve_1}}$ contains variables from $\text{var}(F_1)$ and $\xn^{{\ve_2}}$ contains variables from $\text{var}(F_2)$ ($\xn^{{\ve_1}}$ and $\xn^{{\ve_2}}$ may be the same monomial). 
	%In $g_1g_2$, there are no monomials involving variables only from $p_1$ and $p_2$, so can can 
	WLOG assume $\text{var}(\xn^{{\ve_1}})\cap\text{var}(p_1)\neq\emptyset$, and likewise $\text{var}(\xn^{{\ve_2}})\cap\text{var}(p_3)\neq\emptyset$. Let $\xn^{\vc}\in\text{mon}(g_2)$, and let $x_i|\xn^{\vc}$. If $x_i\in\text{var}(p_2)$, then $\xn^{{\ve_1}}\cdot \xn^{\vc}\in\text{mon}(g_1g_2)$ is a monomial involving variables from both $p_1$ and $p_2$, in contradiction; by a symmetric argument, we cannot have $x_i\in\text{var}(p_4)$. Thus, all monomials of $g_2$ may involve only variables of $p_1$ and $p_3$, i.e., $\text{var}(g_2)\subseteq\text{var}(p_1)\cupdot\text{var}(p_3)$. Therefore, the only way to get monomials involving variables of $p_2$ or $p_4$ is via monomials of $g_1$, so $g_1$ must contain monomials $\xn^{{\ve_1}'},\xn^{{\ve_2}'}$ containing variables of $p_2$ and $p_4$, respectively (here we use the fact that $\text{var}(g_1g_2)=\text{var}(F_1F_2)$). As before, we get $\text{var}(g_2)\subseteq\text{var}(p_2)\cupdot\text{var}(p_4)$, in contradiction.
	
	We can therefore conclude that $\text{var}(g_1)\subseteq \text{var}(F_1)$. Using $\text{var}(g_1g_2)=\text{var}(F_1F_2)$, we deduce $\text{var}(g_2)\cap\text{var}(F_2)\neq\emptyset$, and repeating the argument of the previous paragraph we conclude $\text{var}(g_2)\subseteq\text{var}(F_2)$, which implies $\text{var}(g_i)=\text{var}(F_i)$ for $i=1,2$.
	
	As $\text{mon}(g_1g_2)\subseteq\text{mon}(F_1F_2)$, we may conclude $\text{mon}(g_i)\subseteq\text{mon}(F_i)$ (for $i=1,2$):
	$$ \text{mon}(F_i)=\{\restr{\xn^{\ve}}{(\xn\setminus\text{var}(F_i))=1}:\xn^{\ve}\in\text{mon}(F_1F_2)\}\supseteq\{\restr{\xn^{\ve}}{(\xn\setminus\text{var}(g_i))=1}:\xn^{\ve}\in\text{mon}(g_1g_2)\}=\text{mon}(g_i). $$
	
	Finally, we may apply the induction hypothesis and conclude $\vb=\vzero$ and $\text{mon}(q_i(A\xn))=\text{mon}(q_i(\xn))$, up to $\TR[4^{\Delta-1}][\f]$ symmetry. I.e., there is a permutation $\pi \in \TR[4^{\Delta-1}][\f]$ such that  $\text{mon}(q_i(A\xn))=\text{mon}(q_{\pi(i)}(\xn))$ ($\TR[4^{\Delta-1}][\f]$ symmetry enters every time we use ``WLOG'' in the proof).
	%\footnote{Whenever we use WLOG in the proof, is when $\TR[][\f]$ symmetry enters.}
\end{proof}

The next step is showing that, if Case~\ref{case:different} of \cref{lem:pitSumOfRoanf} does not hold, then we may choose a pair of vectors by which to take a derivative of $f=f_1-f_2$ such that $\partiald{^2f_1}{\vv_1\vv_2}=0$ and $\partiald{^2f_2}{\vv_1\vv_2}\neq 0$. This is formalized in \cref{lem:separateRoanfSum} below, and is proved by applying \cref{lem:roanfMonInc}.

%
%\begin{observation}\label{obs:monRoanf}
%	$\xn^{\ve}\in\text{\normalfont mon}(\croanf(\xn))$ if and only if $\xn^{\ve}$ is multilinear of degree $2^\Delta$, and for every $x_i\neq x_j\in\text{\normalfont var}(\xn^{\ve})$ it holds that $\text{\normalfont fcg}(x_i,x_j)$ is a product gate.
%\end{observation}

\begin{lemma}\label{lem:separateRoanfSum}
	Let $f=\croanf(A_1\xn)$ and $g=\croanf(A_2\xn)$, for some $A_1,A_2\in\gl{n}{\f}$. Denote $\tilde g\triangleq g(A_1^{-1}\xn)$. If $\text{\normalfont mon}(\tilde g)\neq\text{\normalfont mon}(\croanf(\xn))$, then there exist $\vv,\vu\in\f^{n}$ such that $\partiald{^2f}{\vv\partial\vu}=0$ and $\partiald{^2g}{\vv\partial\vu}\neq 0$.
\end{lemma}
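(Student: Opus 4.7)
The plan is to first convert second-order derivatives of $f$ and $g$ into derivatives of $\croanf$ and $\tilde g$ respectively, via an appropriate dual basis. Let $\vv_1,\ldots,\vv_n$ be the columns of $A_1^{-1}$, so that the linear forms $\ell_i(\xn)\triangleq(A_1\xn)_i$ satisfy $\ell_i(\vv_j)=\delta_{i,j}$, matching the dual-set condition of \cref{lem:dualder}. Writing $f(\xn)=\croanf(\ell_1,\ldots,\ell_n)$ and $g(\xn)=\tilde g(\ell_1,\ldots,\ell_n)$, two applications of \cref{lem:dualder} give
\[
  \partiald{^2 f}{\vv_i\partial \vv_j}(\xn)=\partiald{^2\croanf}{x_i\partial x_j}(A_1\xn)
  \quad\text{and}\quad
  \partiald{^2 g}{\vv_i\partial \vv_j}(\xn)=\partiald{^2\tilde g}{y_i\partial y_j}(A_1\xn).
\]
Since $A_1$ is invertible, precomposition with $A_1$ preserves (non)vanishing, so it suffices to find $i,j\in[n]$ (possibly $i=j$) with $\partiald{^2\croanf}{x_i\partial x_j}\equiv 0$ in $\f[\xn]$ while $\partiald{^2\tilde g}{y_i\partial y_j}\not\equiv 0$ in $\f[\yn]$. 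By \cref{obs:monRoanf}, the first condition holds automatically when $i=j$ (as $\croanf$ is multilinear), and for $i\neq j$ it is equivalent to $\text{\normalfont fcg}(x_i,x_j)$ being a $+$-gate.

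\textbf{Extracting an offending monomial.} Next I would invoke \cref{lem:roanfMonInc}: since $\tilde g(\yn)=\croanf((A_2 A_1^{-1})\yn)$ is homogeneous of degree $2^\Delta$, the lemma applies with $\vb=\vzero$, and its contrapositive combined with $\text{\normalfont mon}(\tilde g)\neq\text{\normalfont mon}(\croanf)$ gives $\text{\normalfont mon}(\tilde g)\not\subseteq\text{\normalfont mon}(\croanf)$. Fix some $\yn^{\ve}\in\text{\normalfont mon}(\tilde g)\setminus\text{\normalfont mon}(\croanf)$ with coefficient $c_\ve\neq 0$ and split into two cases. If $\yn^{\ve}$ is multilinear (a product of $2^\Delta$ distinct variables), then \cref{obs:monRoanf} supplies $i\neq j$ with $y_iy_j\mid \yn^{\ve}$ and $\text{\normalfont fcg}(x_i,x_j)=+$; the coefficient of $\yn^{\ve}/(y_iy_j)$ in $\partiald{^2\tilde g}{y_i\partial y_j}$ is then $c_\ve$, and homogeneity of $\tilde g$ forces any source $\ve'$ of this monomial with $e'_i,e'_j\geq 1$ to coincide with $\ve$, so no cancellation occurs. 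Otherwise some entry $e_i\geq 2$, and I would set $\vv=\vu=\vv_i$: here $\partiald{^2\croanf}{x_i^2}=0$ while $\partiald{^2\tilde g}{y_i^2}$ contains $\yn^{\ve}/y_i^2$ with coefficient $e_i(e_i-1)c_\ve$, again unique by homogeneity.

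\textbf{Main obstacle.} The non-multilinear case is clean in characteristic zero (or sufficiently large characteristic), but the coefficient $e_i(e_i-1)c_\ve$ vanishes in characteristic two for every $e_i$, so the direct choice $\vv=\vu=\vv_i$ fails there. The hard part will be to replace $\vu=\vv_i$ by $\vu=\vv_j$ for some $j\neq i$ with $y_j\mid\yn^{\ve}$ and $\text{\normalfont fcg}(x_i,x_j)=+$, yielding the coefficient $e_ie_jc_\ve$ (nonzero in characteristic two whenever $e_i$ is odd and $e_j=1$). Guaranteeing that such a $j$ exists for \emph{some} offending $\yn^{\ve}$---possibly after swapping to a different offending monomial whose exponent pattern cooperates with the characteristic---is the subtle step, and it is where I expect to spend the most effort. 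The overall architecture of the argument is otherwise driven entirely by the monomial-containment statement of \cref{lem:roanfMonInc}.
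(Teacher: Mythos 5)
Your proof reproduces the paper's architecture exactly: the same dual-basis reduction via \cref{lem:dualder}, the same invocation of \cref{lem:roanfMonInc} to extract an offending monomial, and the same two-case split according to whether the offending monomial is multilinear. Over characteristic zero (or characteristic larger than $2^\Delta$), your argument is correct and coincides with the paper's.

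Your concern about characteristic two is not a weakness of your write-up but a genuine gap that the paper's own proof glosses over: in the non-multilinear case, the paper merely asserts that ``Clearly $\partiald{^2 g}{\vv\partial\vu}\neq 0$'', and this is precisely the step that fails. The coefficient of $\xn^\ve/x_i^2$ in $\partiald{^2\tilde g}{x_i^2}$ is $e_i(e_i-1)c_\ve$, a multiple of two. Moreover, the repair you propose---switching to $\vu=\vv_j$ for some $j\neq i$ with $x_j\mid\xn^\ve$---cannot always be completed. Take $\Delta=1$, $A_1=I$ and $A_2\xn=(x_1,\,x_1+x_2,\,x_3,\,x_4)$, so that $\tilde g=x_1^2+x_1x_2+x_3x_4$; the unique offending monomial is $x_1^2$, which has no second variable to switch to. A direct check shows that in characteristic two the Hessians of $f=x_1x_2+x_3x_4$ and $g=x_1^2+x_1x_2+x_3x_4$ coincide, so \emph{no} pair $(\vv,\vu)$ separates them and the lemma as stated is false over any field of characteristic two. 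Salvaging \cref{thm:pitSumOfRoanfThm} there would require either an explicit characteristic restriction or a separate argument covering this case (for instance, observing in this example that $f-g=x_1^2$ is the square of a nonzero linear form and is already hit by any $1$-independent map, without resort to second derivatives).
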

\begin{proof}
	Let $\ell_1(\xn),\ldots,\ell_n(\xn)$ be linearly independent linear forms such that $f=\croanf(\ell_1(\xn),\ldots,\ell_{4^\Delta}(\xn))$, and let $\cbra{\vv_1,\ldots,\vv_{4^\Delta}}$ be a dual set.
	
	By \cref{lem:roanfMonInc}, the fact that $\text{mon}(\tilde g)\neq\text{mon}(\croanf(\xn))$ implies $\text{mon}(\tilde g)\not\subseteq\text{mon}(\croanf(\xn))$. Fix some monomial $\xn^{\ve}\in\text{mon}(\tilde g)\setminus\text{mon}(\croanf(\xn))$, and choose $\vv,\vu$ as follows:
	\begin{itemize}
		\item If $\xn^{\ve}$ is not a multilinear monomial, let $x_i$ be such that $x_i^2|\xn^{\ve}$. Set $\vv=\vu\triangleq \vv_i$. In this case, we get from \cref{lem:dualder} that $\partiald{^2f}{\vv\partial\vu}=\partiald{^2\croanf}{x_i^2}(\ell_1,\ldots,\ell_{4^\Delta})=0$, as $\croanf$ is multilinear. Clearly $\partiald{^2 g}{\vv\partial\vu}\neq 0$.
		
		\item If $\xn^{\ve}$ is multilinear, then let  $x_i,x_j\in\text{var}(\xn^{\ve})$ be such that $\text{fcg}(x_i,x_j)$ is an addition gate (all monomials of $\tilde g$ are of degree exactly $2^\Delta$, so \cref{obs:monRoanf} implies the existence of such a pair of variables). Set $\vv\triangleq \vv_i,\vu\triangleq \vv_j$. \cref{lem:dualder} again implies that $\partiald{^2f}{\vv\partial \vu}=\partiald{^2\croanf}{x_i\partial x_j}=0$, because $\text{fcg}(x_i,x_j)$ is an addition gate in $\croanf$. As before, it is clear that $\partiald{^2 g}{\vv\partial\vu}\neq 0$. \qedhere
	\end{itemize}
%	In either case, $\partiald{^2 g}{\vv\partial \vu}=\partiald{^2 \tilde g}{x_i\partial x_j}\neq 0$, as required.
\end{proof}

Looking back at \cref{lem:pitSumOfRoanf}, \cref{lem:separateRoanfSum} allows us to separate $f_1$ from $f_2$, provided Case~\ref{case:different} of \cref{lem:pitSumOfRoanf} does not hold. We still need to provide a hitting set for $\partiald{^2f_1}{\vv\partial\vu}$, where $\vv,\vu$ are arbitrary, and satisfy $\partiald{^2f_1}{\vv\partial\vu}\neq 0$. To do so, we 
%study the structure of $\partiald{\croanf(\xn)}{x_i}$ for an arbitrary $x_i\in\xn$ (\cref{obs:derivCroanf}), and proceed to use this structure to 
reduce $\partiald{^2f_1}{\vv\partial\vu}$ to a single, non-zero product of variable-disjoint ROPs composed with affine transformations (\cref{lem:pitDerivRoanf}). For simplicity, we first reduce to a product of ROPs in the standard basis in \cref{lem:hitSecondDeriv}, and subsequently extend the result to affine  orbits in \cref{lem:pitDerivRoanf}.

\begin{lemma}\label{lem:hitSecondDeriv}
	Let $\Delta\geq 2$, and let $f(\xn)=\sum_{i,j}\alpha_{i,j}\partiald{^2\croanf(\xn)}{x_i\partial x_j}$ be some non-zero linear combination of second derivatives of $\croanf(\xn)$. Then, there exist variables $x_i,x_j$, sets $D,Z\subseteq\xn$ such that $|D|\leq 2$ and $|Z|=2$, and a constant $\beta_{i,j}$ such that 
	$$\restr{\left(\partiald{^{|D|}f}{D}\right)}{Z=\vzero}=\beta{i,j}\restr{\left(\partiald{^{2+|D|}\croanf(\xn)}{x_i\partial x_j\partial D}\right)}{Z=\vzero}\neq 0.$$
\end{lemma}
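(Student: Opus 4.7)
My plan is a structural case analysis that exploits the explicit form of $D_{i,j}\triangleq \partial^2\croanf/\partial x_i\partial x_j$. By \cref{obs:monRoanf,obs:derivCroanf}, $D_{i,j}$ is nonzero iff $\text{fcg}(x_i,x_j)$ is a $\times$-gate, in which case it factors as a product of variable-disjoint ROANFs determined by the siblings along the root-path of $x_i$. Crucially, whenever $\text{fcg}(x_i,x_j)$ lies strictly above the bottom $\times$-gate, the leaf $\text{sib}(x_i)$ appears in $D_{i,j}$ as a single-variable factor -- a convenient ``handle'' that can be annihilated by projecting that variable to zero via $Z$. By linearity,
$$\restr{(\partial^{|D|}f/\partial D)}{Z=\vzero}=\sum_{i,j}\alpha_{i,j}\restr{(\partial^{|D|+2}\croanf/\partial x_i\partial x_j\partial D)}{Z=\vzero},$$
so the task reduces to choosing $(i^*,j^*)$, $D$ and $Z$ that kill, or collapse into a scalar multiple of the target, every summand on the right-hand side.

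I would pick $(i^*,j^*)$ with $\alpha_{i^*,j^*}\neq 0$ chosen to \emph{minimise} the depth of $\text{fcg}(x_{i^*},x_{j^*})$, and split into two cases. In the first case, $x_{j^*}=\text{sib}(x_{i^*})$ (the fcg is the bottom $\times$-gate), so $D_{i^*,j^*}$ does not depend on $\{x_{i^*},x_{j^*}\}$; I would set $Z=\{x_{i^*},x_{j^*}\}$ to annihilate every competing $D_{i,j}$ that carries $x_{i^*}$ or $x_{j^*}$ as a single-variable factor, and supplement with at most two derivatives in $D$, chosen from the variables of the largest ROANF factor of $D_{i^*,j^*}$, to separate any competitors that still survive. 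In the second case every contributing $D_{i,j}$ carries the single-variable factor $\text{sib}(x_i)$; I would pick $(i^*,j^*)$ of minimum fcg-depth in the support and take $Z$ to be two sibling leaves lying in the variable supports of competing $D_{i,j}$'s but not in the factor $\partial\text{sib}(v_{t^*}^{(i^*)})/\partial x_{j^*}$ that distinguishes $D_{i^*,j^*}$, again possibly following up with one or two derivatives in $D$ to break remaining ties.

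\textbf{Main obstacle.} The principal difficulty is that the budget $|D|\leq 2$ and $|Z|=2$ is independent of $\Delta$, while $\croanf$ has $4^\Delta$ leaves and the support of $f$ can span pairs $(i,j)$ at many different fcg-depths. Hence the small number of operations must \emph{collapse} every surviving $D_{i,j}$ into a scalar multiple of one fixed reduced polynomial -- namely the reduced $D_{i^*,j^*}$ -- so that all coincident contributions can be absorbed into $\beta_{i,j}$. Verifying this collapse, and ruling out accidental cancellations that would force $\beta_{i,j}=0$, is the main technical step: it requires identifying the pairs $(i,j)$ whose reduced polynomial coincides with that of $(i^*,j^*)$ (typically those obtained from $(i^*,j^*)$ by the tree-symmetry action of \cref{defTR} acting locally near the fcg) and showing, using multilinearity of $\croanf$ and variable-disjointness of the ROANF factors of $D_{i^*,j^*}$, that the resulting $\alpha$-sum cannot vanish. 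The assumption $\Delta\geq 2$ enters precisely here: it guarantees that $D_{i^*,j^*}$ is itself a non-trivial product of ROANFs, providing enough internal structure for the two derivatives and two projections to accomplish the collapse.
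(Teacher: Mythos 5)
Your high-level plan (use $D$-derivatives and $Z$-projections, together with linearity and \cref{obs:derivCroanf}, to collapse every summand onto the fixed reduced polynomial of a chosen pair) is the right strategy and is the one the paper follows. But the concrete recipe you describe for $D$ and $Z$ does not work, and the failure is not merely a fixable technicality.

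Consider your Case~1, where $(i^*,j^*)$ is a sibling pair, say $(1,2)$ inside $q_1=x_1x_2+x_3x_4$. Take $\Delta=3$, write $Q_1=q_1q_2+q_3q_4$, $\croanf[3]=Q_1Q_2+Q_3Q_4$, and let $f=D_{(q_1)}-D_{(q_3)}$, where $D_{(q_\alpha)}$ denotes the common second derivative with respect to a sibling pair in $q_\alpha$ (so $D_{(q_1)}=q_2\cdot Q_2$ and $D_{(q_3)}=q_4\cdot Q_2$). Your $Z=\{x_1,x_2\}$ projection is a no-op here (neither term depends on $x_1,x_2$), and because $D_{(q_1)}$ and $D_{(q_3)}$ differ only in the \emph{bottom} factor ($q_2$ versus $q_4$) while sharing $Q_2$, any derivative set $D$ drawn from ``the variables of the largest ROANF factor of $D_{i^*,j^*}$'' (i.e.\ from $Q_2$) acts identically on both terms, leaving $\restr{(\partial^{|D|}f/\partial D)}{Z=\vzero}=(q_2-q_4)\cdot\partiald{^{|D|}Q_2}{D}$, which is nonzero but \emph{not} a scalar multiple of the target $q_2\cdot\partiald{^{|D|}Q_2}{D}$. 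The paper takes $D$ from the \emph{smallest} nontrivial ROANF factor, i.e.\ from $\text{sib}(q_1)=q_2$: with $D=\{x_5\}$ one immediately gets $\partiald{D_{(q_3)}}{x_5}=0$ while $\partiald{D_{(q_1)}}{x_5}=x_6\cdot Q_2\neq 0$. Your choice of $Z$ is also wrong as a mechanism for killing more distant sibling contributions: if $f$ contains a term $D_{(q_5)}=q_6\cdot Q_1$, then $Z=\{x_1,x_2\}$ leaves $Q_1|_{Z=0}=x_3x_4\,q_2+q_3q_4\neq 0$, so that term survives. The paper instead takes one variable from \emph{each} monomial of $q_1$ (e.g.\ $Z=\{x_1,x_3\}$), so that $q_1|_{Z=0}=0$, which is exactly what is needed because any surviving competitor after the $D$-derivative is divisible by $q_1$.

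Your Case~2 recipe has the same kind of imprecision: for a non-sibling pair $(i^*,j^*)$, the correct and clean choice is $D=\{\text{sib}(x_{i^*}),\text{sib}(x_{j^*})\}$ and $Z=\{x_{i^*},x_{j^*}\}$ (which are \emph{not} siblings of each other), whereas you describe $Z$ as ``two sibling leaves,'' and there is no reason two such leaves will lie in all the competing supports. Moreover, the minimum--fcg--depth selection is both ambiguous (depth from the root versus from the leaves changes what your two cases even are) and unnecessary: the paper splits simply on whether a non-sibling pair with nonzero coefficient exists, and in that case any such pair works, with no depth ordering needed. In short, the structural observations you identify (the single-variable handle $\text{sib}(x_i)$, variable-disjointness of the ROANF factors) are correct and indeed central, but the operative choices of $D$ and $Z$ are wrong, and the lemma's conclusion fails for explicit inputs under your recipe.
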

\begin{proof} \fussy
	First, assume there exist some $i,j$ such that $\alpha_{i,j}\partiald{^2\croanf(\xn)}{x_i\partial x_j}\neq 0$ and $x_i\neq\text{sib}(x_j)$. Set $D= \{\text{sib}(x_i),\text{sib}(x_j)\}$. By \cref{obs:derivCroanf}, $\partiald{^4\croanf(\xn)}{x_i\partial x_j\partial D}\neq 0$ and is a product of variable-disjoint ROPs that do not depend on $x_i$ nor on $x_j$.
	
	Consider any pair $\{i',j'\}\neq\{i,j\}$ and set $h=\partiald{^2\croanf(\xn)}{x_{i'}\partial x_{j'}}$. Note that if $\partiald{^2h}{D}\neq 0$ then $\partiald{^2h}{D}$ is divisible by $x_i$ or by $x_j$ (or both, if  $\{i,j\}\cap \{i',j'\}=\emptyset$). If we set $Z\triangleq\{x_i,x_j\}$, then $\restr{\left(\partiald{^2h}{D}\right)}{Z=\vzero}=0$. This is true for any $\{i',j'\}\neq\{i,j\}$, and as $\partiald{^4\croanf(\xn)}{x_i\partial x_j\partial D}$ does not depend on $x_i$ nor on $x_j$ we get
	$$\restr{\left(\partiald{^2f}{D}\right)}{Z=\vzero}=\restr{\left(\partiald{^4\croanf}{x_i\partial x_j\partial D}\right)}{Z=\vzero}\neq 0\;.$$
	
	Next, assume all non-zero summands of $f$, $\alpha_{i,j}\partiald{^2\croanf(\xn)}{x_i\partial x_j}$, satisfy $x_i=\text{sib}(x_j)$. Note that if $x_ix_j+x_{i'}x_{j'}$ is a quadratic form of $\croanf(\xn)$, then $\partiald{^2\croanf(\xn)}{x_i\partial x_j}=\partiald{^2\croanf(\xn)}{x_{i'}\partial x_{j'}}$ (\cref{obs:derivCroanf}). Therefore, 
	$$ f=\sum_{\substack{x_ix_j+x_{i'}x_{j'}\;is\;a\\quadratic\;of\;\croanf}}(\alpha_{i,j}+\alpha_{i',j'})\partiald{^2\croanf(\xn)}{x_i\partial x_j} \;.$$
	
	Fix some $i,j,i',j'$ such that $q_1=x_ix_j+x_{i'}x_{j'}$ is a quadratic of $\croanf(\xn)$, and $(\alpha_{i,j}+\alpha_{i',j'})\partiald{^2\croanf(\xn)}{x_i\partial x_j}\neq 0$. As $\Delta\geq 2$, $q_1$ has a sibling quadratic form; denote it by $q_2\triangleq \text{sib}(q_1)=x_kx_\ell+x_{k'}x_{\ell'}$ and set $D\triangleq\{x_k\}$. Note that by \cref{obs:derivCroanf}, $(\alpha_{i,j}+\alpha_{i',j'})\partiald{^3\croanf(\xn)}{x_i\partial x_j\partial D}\neq 0$,   does not depend on $x_i,x_j,x_{i'},x_{j'}$, and is a product of variable-disjoint ROPs.
	
	Set $Z=\{x_i,x_{i'}\}$. Consider any pair $\{s,t\}$ such that $\{s,t\}\notin\{\{i,j\},\{i',j'\}\}$ and $x_s=\text{sib}(x_t)$. Set $h=\partiald{^2\croanf}{x_t\partial x_s}$.
	%for some $\{s,t\}\notin\{\{i,j\},\{i',j'\}\}$ and $x_s=\text{sib}(x_t)$. 
	If 
	%$h$ depends on $x_k$ (recall that $x_k\in\text{var}(q_2)$), then 
	$\partiald{h}{x_k}\neq 0$ then it is divisible by the quadratic form $q_1=x_ix_j+x_{i'}x_{j'}$ (by \cref{obs:derivCroanf}), and thus $\restr{\left(\partiald{h}{D}\right)}{Z=\vzero}=0$. Hence,
	\begin{equation*}
		\restr{\left(\partiald{f}{D}\right)}{Z=\vzero}=\restr{\left((\alpha_{i,j}+\alpha_{i',j'})\partiald{^3\croanf(\xn)}{x_i\partial x_j\partial D}\right)}{Z=\vzero} \neq 0 \;. \qedhere
	\end{equation*} 
%	which is non-zero because $\partiald{^2\croanf(\xn)}{x_i\partial x_j}$ does not depend on $x_i,x_{i'}$.
\end{proof}

\begin{lemma}\label{lem:pitDerivRoanf}
	Let $\Delta\geq 2$, let $f=\croanf(A\xn+\vb)$ for some $(A,\vb)\in\gla{n}{\f}$, and let $\vw,\vu\in\f^n$. Then, for any \ropinvgen[(2\Delta+5)]{} $\G$, if $\partiald{^2f}{\vw\partial\vu}\neq 0$ then $\partiald{^2f}{\vw\partial\vu}\circ\G\neq 0$.
\end{lemma}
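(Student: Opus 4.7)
The plan is to reduce the statement $\partiald{^2f}{\vw\partial\vu}\circ \G \neq 0$ to hitting a single element of $\ROF^{\gla{n}{\f}}$, for which \cref{thm:PITROPINV} already provides a hitting set. The reduction proceeds in three stages: first, rewrite the directional second derivative as $g(\ell_1,\ldots,\ell_{4^\Delta})$ for a nonzero linear combination $g$ of second partials of $\croanf$; second, invoke \cref{lem:hitSecondDeriv} to collapse $g$, via additional derivations and a codimension-$2$ coordinate restriction, into a single nonzero product of variable-disjoint ROPs; third, use the $(2\Delta+5)$-independence of $\G$ to simulate all these operations while keeping enough independence in reserve to finish by \cref{thm:PITROPINV}.

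For the first stage, let $\ell_1,\ldots,\ell_{4^\Delta}$ denote the $4^\Delta$ linearly independent affine coordinates of $A\xn+\vb$, extend $\{\ell_i^{[1]}\}_{i=1}^{4^\Delta}$ to a basis $\{\ell_i^{[1]}\}_{i=1}^n$ of $(\f^n)^*$, and let $\{\vv_i\}_{i=1}^n\subset\f^n$ be the corresponding dual basis. Expanding $\vw,\vu$ in this basis and applying \cref{lem:dualder} (together with the fact that $\croanf$ depends only on $x_1,\ldots,x_{4^\Delta}$) yields
\[\partiald{^2f}{\vw\partial\vu}=g(\ell_1,\ldots,\ell_{4^\Delta}),\qquad g(\xn)\triangleq\sum_{i,j=1}^{4^\Delta}\alpha_i\beta_j\cdot\partiald{^2\croanf}{x_i\partial x_j},\]
with $\alpha_i=\ell_i^{[1]}(\vw)$ and $\beta_j=\ell_j^{[1]}(\vu)$. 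Nonvanishing of the left-hand side and linear independence of the $\ell_i$'s force $g\neq 0$. For the second stage, invoking \cref{lem:hitSecondDeriv} on $g$ produces variables $x_{i_0},x_{j_0}$, sets $D,Z\subseteq\xn$ with $|D|\leq 2$ and $|Z|=2$, and a nonzero scalar $\beta_{i_0,j_0}$ such that
\[\tilde g\;\triangleq\;\restr{\partiald{^{|D|}g}{D}}{Z=\vzero}\;=\;\beta_{i_0,j_0}\cdot\restr{\partiald{^{2+|D|}\croanf}{x_{i_0}\partial x_{j_0}\partial D}}{Z=\vzero}\;\neq\;0.\]
By \cref{obs:derivCroanf} and the corollaries following it, $\tilde g$ is a product of variable-disjoint ROPs on at most $4^\Delta=2^{2\Delta}$ variables, i.e., a nonzero element of $\ROF$.

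For the third stage, split $\G$ into variable-disjoint summands $\G=\G^{(1)}+\G^{(2)}+\G^{(3)}$ so that $\G^{(1)}$ is $|D|$-independent, $\G^{(2)}$ is $|Z|$-independent, and $\G^{(3)}$ is $(2\Delta+5-|D|-|Z|)$-independent; since $|D|+|Z|\leq 4$, the last summand is at least $(2\Delta+1)$-independent. Applying \cref{lem:indDerivLinear} with $\G^{(1)}$ and the directions $\{\vv_k:x_k\in D\}$ reduces the goal to showing $(\partiald{^{|D|}g}{D})(\ell_1,\ldots,\ell_{4^\Delta})\circ(\G^{(2)}+\G^{(3)})\neq 0$ (using \cref{lem:dualder} once more to re-express the iterated directional derivative). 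Then \cref{lem:indProjectZero}, applied with $\G^{(2)}$ to simulate the restriction $\ell_i=0$ for $x_i\in Z$, further reduces the task to showing $\tilde g(\tilde\ell_i:x_i\notin Z)\circ \G^{(3)}\neq 0$, where the $\tilde\ell_i$'s remain linearly independent. Since $\tilde g(\tilde\ell_i:x_i\notin Z)\in\ROF^{\gla{n}{\f}}$ has an underlying ROF on at most $2^{2\Delta}$ variables, \cref{thm:PITROPINV} applied to the $(2\Delta+1)$-independent $\G^{(3)}$ concludes the proof. The main obstacle, already dispatched by \cref{lem:hitSecondDeriv}, is the combinatorial choice of $D$ and $Z$ that turns the sum $g$ into a single, controllably nonzero product of variable-disjoint ROPs; once this is in place the independence budget matches exactly, as $|D|+|Z|+(2\Delta+1)\leq 2\Delta+5$.
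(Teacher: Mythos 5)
Your proposal is correct and follows essentially the same route as the paper's proof: pass to $g=\sum a_ib_j\,\partial^2\croanf/\partial x_i\partial x_j$ via the dual basis and \cref{lem:dualder}, invoke \cref{lem:hitSecondDeriv} to obtain $D,Z$ collapsing $g$ to a single nonzero product of variable-disjoint ROPs, and then spend the independence budget $2+2+(2\Delta+1)$ on \cref{lem:indDerivLinear}, \cref{lem:indProjectZero}, and \cref{thm:PITROPINV}, respectively. The only cosmetic differences are that you write the coefficients of $g$ explicitly as $\alpha_i\beta_j$ rather than a generic $\alpha_{i,j}$, and you don't pad $D$ to size $2$ (the paper does, purely for notational convenience); neither affects correctness.
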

\begin{proof}
	Let $\ell_1(\xn),\ldots,\ell_{4^\Delta}(\xn)$ be linearly independent linear functions such that $f=\croanf(\ell_1,\ldots,\ell_{4^\Delta})$. 
	%Note that the $\ell_i$s are the first $4^\Delta$ coordinates of $A\xn +\vb$. 
	Let $\cbra{\vv_1,\ldots,\vv_{4^\Delta}}$ be a dual set. There exist constants $\alpha_{i,j}$ such that:
	$$ 0\neq  \partiald{^2f}{\vw\partial\vu}(\xn)=\sum_{i,j}\alpha_{i,j}\partiald{^2\croanf}{x_i\partial x_j}(A\xn+\vb)\;. $$
	
	Denote $g(\xn)\triangleq\partiald{^2f}{\vw\partial\vu}(A^{-1}\xn-A^{-1}\vb)=\sum_{i,j}\alpha_{i,j}\partiald{^2\croanf}{x_i\partial x_j}(\xn)$, and let $x_{i_0},x_{j_0}$, $D=\{x_k,x_\ell\}$, $Z=\{x_r,x_m\}$ and $\beta_{i_0 j_0}$ be as promised by \cref{lem:hitSecondDeriv}. Thus,\footnote{Note that by  \cref{lem:hitSecondDeriv} we may have $|D|=1$, but we may add some other variable $x_\ell$  to simplify the notation.}
	\begin{equation}\label{eq:2nd-der}
	\restr{\left(\partiald{^2 g}{D}(\xn)\right)}{Z=\vzero}=\restr{\left(\beta_{i_0j_0}\partiald{^4\croanf}{x_{i_0}\partial x_{j_0}\partial D}(\xn)\right)}{x_r=x_m=0}\neq 0 \;.	
	\end{equation} 
 	From \cref{lem:dualder} and Equation~\eqref{eq:2nd-der}	we deduce that
	$$ \restr{\left(\partiald{^4 f}{\vw\partial\vu\partial\vv_k\partial\vv_\ell}(\xn)\right)}{\ell_r=\ell_m=0}=\restr{\left(\partiald{^2 g}{D}(A\xn+\vb)\right)}{\ell_r=\ell_m=0}=\restr{\left(\beta_{i_0 j_0}\partiald{^2}{\vv_k\partial\vv_\ell}\left(\partiald{^2\croanf}{x_i\partial x_j}(A\xn+\vb)\right)\right)}{\ell_r=\ell_m=0}\neq 0 \;.$$
	
	Let $\G = \G_1+\G_2+ \G_{2\Delta+1}$ be a $(2\Delta+5)$-independent map where $\G_1,\G_2$ are \ropinvgen[2]{}s,  $\G_{2\Delta+1}$ is a \ropinvgen[(2\Delta+1)]{}, and $\G_1,\G_2,\G_{2\Delta+1}$ are variable-disjoint.  
	As $\restr{\left(\partiald{^4 f}{\vw\partial\vu\partial\vv_k\partial\vv_\ell}(\xn)\right)}{\ell_r=\ell_m=0}$ is a non-zero product of ROPs composed with an affine transformation, where the underlying ROPs depend on at most $4^\Delta$ variables, we get from \cref{thm:PITROPINV} that $\restr{\left(\partiald{^4 f}{\vw\partial\vu\partial\vv_k\partial\vv_\ell}(\xn)\right)}{\ell_r=\ell_m=0}\circ\G_{2\Delta+1}\neq 0$.  \cref{lem:indProjectZero} implies that $\partiald{^4 f}{\vw\partial\vu\partial\vv_k\partial\vv_\ell}(\G_{2\Delta+1}+\G_2)\neq 0$. Finally, from \cref{lem:indDerivLinear} it follows that $\partiald{^2f}{\vw\partial\vu}(\G_{2\Delta+1}+\G_2+\G_1)\neq 0$, as required.
\end{proof}

We are now ready to prove \cref{lem:pitSumOfRoanf}. 

%reduce the problem to the case where $f_1$ and $f_2$ share a similar basis, by proving \cref{pitSumOfRoanf} (restated below):
%
%\pitSumOfRoanf*

\begin{proof}[Proof of \cref{lem:pitSumOfRoanf}]
	First, assume $\Delta_1\neq \Delta_2$. WLOG assume $\Delta_1>\Delta_2$. Let $\ell_1,\ldots,\ell_{4^{\Delta_1}}$ be linearly independent linear functions such that $f_1=\croanf[\Delta_1](\ell_1,\ldots,\ell_{4^{\Delta_1}})$. There must exist some $i$ such that $\ell_i$ is \emph{not} spanned by the linear functions at the leaves of $f_2$. Fix some vector $\vv$ such that $\ell^{[1]}(\vv)=0$ for every linear function $\ell$ labeling a leaf of $f_2$, and such that $\ell_i^{[1]}(\vv)=1$. By \cref{lem:dualder,cor:derivRoanfIndep}, $\partiald{f_2}{\vv}=0$ and $\partiald{f_1}{\vv}\neq 0$; thus, $0\neq \partiald{f}{\vv}=\partiald{f_1}{\vv}$. From \cref{lem:pitDerivRoanf} it follows that any \ropinvgen[(2\Delta_1+5)]{} $\G'$ satisfies $\partiald{f}{\vv}\circ\G'\neq 0$; and therefore,  using \cref{lem:indDerivLinear}, we get  $f\circ \G\neq 0$, so Case~\ref{case:equalG} of the lemma holds.
	
	Next, assume $\Delta_1=\Delta_2$ and denote $h\triangleq h_1-h_2$ (recall that $h_i$ is the homogenization of $f_i$). As $\G$ is uniform, \cref{lem:uniIndGenHitsHom} implies that it suffices to prove that either $h\circ \G\neq 0$ (where we extend $\G$ to $n+1$ coordinates such that $\G$ is still a uniform \ropinvgen[k]{}) or that Case~\ref{case:different} of the lemma holds. 

	Assume that $h\circ\G= 0$. \cref{lem:separateRoanfSum,lem:pitDerivRoanf} imply that $\croanf(\xn)$ and $h_2\bra{\tilde{A_1}^{-1}(\xn)}$ have the same set of monomials. From \cref{lem:roanfMonInc} we conclude that Case~\ref{case:different}  holds.
	
%	We proceed by assuming that Case~\ref{case:different} does not hold, so it remains to prove $h\circ\G\neq 0$.
%%	
%%	\cref{lem:roanfMonInc} implies that  $\text{mon}(h_2(\tilde A_1^{-1}\xn))\not\subseteq\text{mon}(\croanf(\xn))$,
%%	%	by \cref{lem:roanfMonInc}, this assumption implies item $2$ of the theorem is false, 
%%	 (note that the case $\text{mon}(h_1((\tilde A_2^T)^{-1}\xn))\not\subseteq\text{mon}(\croanf(\xn))$ is symmetric).
%%	
%	By \cref{lem:separateRoanfSum}, there exist vectors $\vv,\vu$ such that $\partiald{^2h_2}{\vv\partial\vu}\neq 0$ and $\partiald{^2h_1}{\vv\partial\vu}=0$. Let $\G=\G_1+\G_2$ where $\G_1$ is a \ropinvgen[(2\Delta_1+5)]{} and $\G_2$ is a \ropinvgen[2]{}. If $\Delta_1=1$, then $\partiald{^2h}{\vv\partial\vu}=\partiald{^2h_2}{\vv\partial\vu}$ is a non-zero constant, so $\partiald{^2h}{\vv\partial\vu}\circ\G_1\neq 0$ and thus $h\circ \G\neq 0$ (\cref{lem:indDerivLinear}). If $\Delta_1\geq 2$, then \cref{lem:pitDerivRoanf} implies $\partiald{^2h}{\vv\partial\vu}\circ\G_1\neq 0$, and as before $h\circ \G\neq 0$.
\end{proof}

\subsubsection{Proof of \cref{lem:pitRoanfSame}}

Finally, we conclude the proof of \cref{thm:pitSumOfRoanfThm} by proving  \cref{lem:pitRoanfSame} that gives a hitting set for the difference of two polynomials in \anfn{\Delta}{\f} that, up to constant factors, have the same linear functions on the leaves. %(restated below), and conclude \cref{pitSumOfRoanfThm}:

%\pitRoanfSame*
\begin{proof}[Proof of \cref{lem:pitRoanfSame}]
	First, if $f=\alpha g$ for some $\alpha\in\f$, then $f-g\in\anfn{}{\f}$ and the lemma follows from \cref{thm:PITROPINV}.  
	We therefore assume that $f$ is not a multiple of $g$, and denote that by $f\not\propto g$. 
	
	For any node $u$ in the complete binary tree of depth $2\Delta$, denote by $u_f$ the polynomial computed at node $u$ in $\croanf(\ell_1,\ldots,\ell_n)$, and by $u_g$ the polynomial computed at node $u$ in $\croanf(\alpha_1\ell_1,\ldots,\alpha_n\ell_n)$. Fix a node $u$ satisfying $u_f(\xn)\not\propto u_g(\xn)$, such that $u$ is a deepest node with that property. In particular, each child of $u_f$ is a multiple of the corresponding child of $u_g$. Note that, as $f\not\propto g$, such a node $u$ must exist; and by the premise of the lemma, $u_f$ and $u_g$ are not leaves. In addition, $u_f$ and $u_g$ must be addition gates, otherwise we may choose a child $u'$ of $u$ such that $u'_f(\xn)\not\propto u'_g(\xn)$.
	
	Let $\cbra{\vv_1,\ldots,\vv_n}$ be a dual set to $\cbra{\ell_1,\ldots,\ell_n}$. Denote $u_f=f_1f_2+f_3f_4$ and $u_g=g_1g_2+g_3g_4$, where the $f_i$s are the grandchildren of $u_f$ and the $g_i$s are the grandchildren of $u_g$. By choice of $u$, there exist constants $\alpha,\beta\in\f$ such that $f_1f_2=\alpha\cdot g_1g_2$ and $f_3f_4=\beta\cdot g_3g_4$, and $\alpha\neq \beta$ (otherwise $u_f=\alpha\cdot  u_g$). WLOG, assume $f_1,g_1$ are ancestors of the leaf labeled $\ell_1$ (or $\alpha_1\ell_1$), and $f_3,g_3$ are ancestors of the leaf labeled $\ell_3$ (or $\alpha_3\ell_3$). By \cref{obs:derivCroanf}, there exist polynomials $F(\xn),G(\xn)$ such that:
	\begin{eqnarray*}
		\partiald{f}{\vv_1}&=F(\xn)f_2(\xn)\partiald{f_1}{\vv_1}(\xn)  \; ,  \quad
		&\partiald{f}{\vv_3}=F(\xn)f_4(\xn)\partiald{f_3}{\vv_3}(\xn)\;,\\
		\partiald{g}{\vv_1}&=G(\xn)g_2(\xn)\partiald{g_1}{\vv_1}(\xn)  \quad \text{and} \quad
		&\partiald{g}{\vv_3}=G(\xn)g_4(\xn)\partiald{g_3}{\vv_3}(\xn)\;.
	\end{eqnarray*}
	Observe that 
	\begin{equation}\label{eq:anf-v1}
		 \partiald{(f-g)}{\vv_1}=F(\xn)f_2(\xn)\partiald{f_1}{\vv_1}(\xn)-G(\xn)g_2(\xn)\partiald{g_1}{\vv_1}(\xn)=(\alpha\cdot F(\xn)-G(\xn))g_2(\xn)\partiald{g_1}{\vv_1}(\xn) \;,
	\end{equation}
	and 
	\begin{equation}\label{eq:anf-v3}
		\partiald{(f-g)}{\vv_3}=F(\xn)f_4(\xn)\partiald{f_3}{\vv_3}(\xn)-G(\xn)g_4(\xn)\partiald{g_3}{\vv_3}(\xn)=(\beta\cdot F(\xn)-G(\xn))g_4(\xn)\partiald{g_3}{\vv_3}(\xn) \;.
	\end{equation}
	Let $\G_1,\G_{2\Delta+1}$ be a \ropinvgen[1]{} and a \ropinvgen[(2\Delta+1)]{}, respectively, such that $\G=\G_1+\G_{2\Delta+1}$. 
	\cref{thm:PITROPINV} and \cref{obs:derivCroanf} imply that $\left(g_2(\xn)\partiald{g_1}{\vv_1}(\xn)\right)\circ\G_{2\Delta+1}\neq 0$, so if $\alpha\cdot F(\G_{2\Delta+1})\neq G(\G_{2\Delta+1})$ then we get from Equation~\eqref{eq:anf-v1} that $\partiald{(f-g)}{\vv_1}\circ\G_{2\Delta+1}\neq 0$ and thus $(f-g)\circ\G\neq 0$ (using \cref{lem:indDerivLinear}). On the other hand, if $\alpha \cdot F(\G_{2\Delta+1})=G(\G_{2\Delta+1})$, then, since $\alpha\neq \beta$, a similar argument, relying on Equation~\eqref{eq:anf-v3}, shows that
	$\partiald{(f-g)}{\vv_3}\circ\G_{2\Delta+1}\neq 0$ and thus $(f-g)\circ\G\neq 0$, as claimed.
%	
%	 take a derivative of $f+g$ by $\vv_3$ and compose with $\G_2$ to get:
%	\begin{align*}
%		\partiald{(f+g)}{\vv_3}\circ\G_2&=F(\G_2)f_4(\G_2)\partiald{f_3}{\vv_1}(\G_2)+G(\G_2)g_4(\G_2)\partiald{g_1}{\vv_1}(\G_2)\\
%		&=(bF(\G_2)+G(\G_2))g_2(\G_2)\partiald{g_1}{\vv_1}(\G_2)=(b-a)F(\G_2)g_2(\G_2)\partiald{g_1}{\vv_1}(\G_2).
%	\end{align*}
%	
%	As $a\neq b$, by \cref{PITROPINV} and \cref{obs:derivCroanf}, we get $\partiald{(f+g)}{\vv_3}\circ\G_2\neq 0$ and thus $(f+g)\circ\G\neq 0$ by \cref{indDerivLinear}.
\end{proof}

\subsection{Reconstruction for \roanf}

In this section, we argue that the reconstruction algorithm of Gupta et al. \cite{gupta2014random}, when given oracle access to a polynomial $f\in\roanf$, w.h.p. successfully reconstructs an $\roanf$ formula computing $f$.
We do so by explaining why the different steps of their algorithm succeed w.h.p. on any input $f\in\roanf$. 
To ease the reading we give their algorithm (\texttt{AFR}) and its main subroutine (\texttt{LDR}) in the appendix (Algorithms~\ref{alg:afr} and \ref{alg:ldr}).
We remind  that their result, with minor changes, can be adapted to any large enough field, see \cref{rem:gupta2014-char}.

%
%A full proof of this statement requires an understanding of the methods of \cite{gupta2014random}, though an in-depth discussion of these methods is beyond the scope of this paper. 
%In any case, we shall be frequently referring to theorems, algorithms and tools of \cite{gupta2014random}, so the interested reader is  encouraged to read this section alongside the original paper.
%

Before quoting the original result, we define the distribution on ANF formulas used in \cite{gupta2014random}. To this end, we define the \emph{universal} ANF:

\begin{definition}
Let $\Delta,n\in\n$. Let $\xn=(x_1,\ldots,x_n)$ and $\by=\{y_{i,j}:i=1,\ldots,4^\Delta,j=1,\ldots,n+1\}$ be formal variables. The \emph{universal $\Delta,n$ ANF}, denoted $\cU_{\Delta,n}(\xn,\by)$, is an ANF formula of product depth $\Delta$ in which leaf $i$ is labeled $\sum_{j=1}^n x_j y_{i,j}+y_{i,n+1}$.
\end{definition}

Trivially, for any ANF formula $f(\xn)$ of product depth $\Delta$ on $n$ variables, there exists an assignment $\vv\in\C^{(n+1)\cdot 4^\Delta}$ to the $\by$ variables of $\cU_{\Delta,n}(\xn,\by)$ such that $f(\xn)=\cU_{\Delta,n}(\xn,\vv)$. Given the number of variables $n$, the size $s=2\cdot 4^\Delta-1$ of the ANF we wish to sample, and a finite set of field elements $S\subseteq\C$, we define the distribution $\mathcal D(n,s,S)$ on ANF formulas by uniformly sampling an assignment $\vv$ from $S^{4^\Delta(n+1)}$. This is the distribution used in the main result of \cite{gupta2014random}:

\begin{theorem}[Theorem 1.1 of \cite{gupta2014random}]
Let $\f$ be a field of characteristic $0$ and $S$ be a finite subset of $\f$. Assume there is a black box holding an ANF formula $\Phi$ of size $s$ sampled from $\mathcal D(n,s,S)$, and $\Phi$ computes a polynomial $f\in\f[x_1,\ldots,x_n]$. There is a randomized algorithm that, given this black box, either outputs an ANF formula $\Phi'$ of size $\leq s$ computing $f$, or outputs \texttt{Fail}. The algorithm succeeds for a $(1-\frac{n^2\cdot s^{O(1)}}{|S|})$ fraction of the ANF formulas from $\mathcal D(n,s,S)$. Moreover, the running time of the algorithm is at most $(ns)^{O(1)}$.
\end{theorem}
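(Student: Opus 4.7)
The approach is to analyze algorithm AFR of \cite{gupta2014random} (Algorithm~\ref{alg:afr}) and argue that all of its internal success conditions are satisfied for a random $\Phi \sim \mathcal{D}(n,s,S)$, except on an event of probability $n^2 \cdot s^{O(1)}/|S|$. The algorithm proceeds bottom-up: it invokes the LDR (low-degree recovery) subroutine (Algorithm~\ref{alg:ldr}) to identify the $4^{\Delta-1}$ quadratic forms $q_1,\ldots,q_{4^{\Delta-1}}$ computed at the bottom two layers of $\Phi$ (up to $\TS[4^{\Delta-1}][\f]$-equivalence, in the spirit of \cref{fact:anf-sym}), substitutes fresh variables $w_1,\ldots,w_{4^{\Delta-1}}$ for those quadratics to obtain an ANF of product depth $\Delta - 1$, and recurses.

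The core step is the analysis of LDR. LDR extracts the $q_i$ from the linear span of second-order partial derivatives of $f$ via linear algebra and univariate factorization. Its correctness rests on several non-degeneracy conditions on $\Phi$: the linear forms labeling each quadruple of sibling leaves must span a $4$-dimensional space; the $4^{\Delta-1}$ leaf-quadratics must be pairwise ``sufficiently generic'' so that certain Jacobian-like matrices assembled from $f$ have full rank; the characteristic polynomials used to separate the $q_i$ from one another must have $4^{\Delta-1}$ distinct roots; and the analogous conditions must persist at every level of the recursion, applied to the ``effective'' polynomial obtained after substitution. Each of these conditions can be phrased as the non-vanishing of a fixed polynomial $P_i$ of degree $s^{O(1)}$ in the coefficient variables $\by$ of $\cU_{\Delta,n}$, evaluated at the random sample $\vv$.

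The remaining ingredient is a counting argument via the \szlem{}. First I would verify, by exhibiting a single explicit ``generic'' choice of $\vv$, that each $P_i$ is not identically zero; the \szlem{} then gives $\Pr_{\vv \in S^{4^\Delta(n+1)}}[P_i(\vv) = 0] \le s^{O(1)}/|S|$. There are $O(n^2)$ such conditions across all $\Delta = O(\log s)$ levels of recursion (the $n^2$ factor arising from pairs of leaves and ambient variables appearing in the various sub-identifications), so a union bound yields overall failure probability $n^2 \cdot s^{O(1)}/|S|$. When every $P_i(\vv) \neq 0$, LDR uniquely recovers the quadratics up to $\TS$-equivalence, the recursion proceeds correctly, and AFR outputs an ANF $\Phi'$ with $\Phi' \equiv f$ of the same shape as $\Phi$, hence of size $\leq s$; otherwise, AFR detects an inconsistency (a rank deficiency in an intermediate matrix, or a failed consistency check against black-box evaluations of $f$) and returns \texttt{Fail}.

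The main obstacle is designing LDR so that its success conditions genuinely reduce to non-vanishing of low-degree polynomial identities in $\vv$, rather than to intrinsically hard algebraic tasks. This forces canonicalization through partial-derivative spaces, linear-algebraic operations on evaluations of $f$, and univariate factorization over $\f$ (all polynomial-time in characteristic zero), and it forces a careful argument that the ``effective'' leaf linear forms after substituting previous-layer quadratics by fresh variables $w_i$ inherit enough genericity from the original random $\vv$ to re-invoke the \szlem{} at the next recursion level. Given such a design, the time bound $(ns)^{O(1)}$ follows immediately, since each step is either linear algebra, a polynomial number of black-box evaluations used to compute partial derivatives, or univariate factorization in dimensions bounded by $\text{poly}(ns)$.
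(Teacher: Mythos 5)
There is a genuine gap: the entire algorithmic core of the theorem is assumed rather than supplied. Your probabilistic shell is right and does match the actual argument in \cite{gupta2014random} (and the way this paper later re-uses it for \cref{thm:reconstructROANF}): the success of the algorithm is reduced to the non-vanishing of polynomials of degree $s^{O(1)}$ in the coefficient variables of the universal ANF, one exhibits a single explicit good assignment to certify each such polynomial is nonzero, and the \szlem{} plus a union bound over $O(n^2)$ events gives the $n^2\cdot s^{O(1)}/|S|$ failure probability. But the theorem's substance is the design and correctness of \texttt{LDR}, and you explicitly defer it (``the main obstacle is designing LDR so that its success conditions genuinely reduce to non-vanishing of low-degree polynomial identities''). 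Your sketch of how \texttt{LDR} might work --- recovering the bottom-layer quadratics from the span of second-order partial derivatives, separating them via characteristic polynomials with distinct roots, then substituting fresh variables $w_i$ and recursing bottom-up on a depth-$(\Delta-1)$ ANF --- is not the actual algorithm, and no argument is given that it could work; for $\Delta>1$ it is far from clear that the order-$2$ partial derivative space isolates the individual quadratics, and the claim that the ``effective'' leaves after substitution inherit enough genericity to re-apply the \szlem{} at the next level is exactly the kind of statement that needs proof and is not obviously true.

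For comparison, the real \texttt{AFR}/\texttt{LDR} (Algorithms~\ref{alg:afr} and~\ref{alg:ldr}) work quite differently: \texttt{AFR} homogenizes $f$, takes $O(n^2)$ random linear projections $\sigma_{A^{i,j}_r}(f)$ down to a constant number ($128$) of variables, runs \texttt{LDR} on each projection, and then glues the recovered quadratics via a separate patchwork step before factoring each rank-$4$ quadratic into $\ell_1\ell_2+\ell_3\ell_4$. \texttt{LDR} itself is a top-down recursion driven by algebraic geometry, not linear algebra on derivative spaces: it computes the ideal $\text{\normalfont Sing}(f)$, extracts the codimension-$4$ top-dimensional component, recovers the four grandchild polynomials $h_1,\ldots,h_4$ of the root from degree-$d/2$ generators, and recurses; its correctness rests on the formulaic independence and pairwise singular independence conditions (\cref{defFI,defPSI}), which are dimension statements about the varieties $\text{\normalfont Sing}(f)$, $V_J$ and $V_I$, and it is these conditions --- not distinct-root or rank conditions of the kind you describe --- that are shown to be cut out by the low-degree polynomials to which the \szlem{} is applied (cf.\ \cref{obs:goodVariety} and the explicit witness $g$ in Equation~\eqref{gEqDef}). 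So while your outer counting argument is the standard one, the proposal as written does not constitute a proof of the theorem: the reconstruction procedure whose failure events you are union-bounding has not been shown to exist or to be correct.
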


We note that, although it is not mentioned in their main theorem, the output formula is unique up to \TS{}-equivalence, and this fact is stated when needed in intermediate results of \cite{gupta2014random} (recall \cref{fact:anf-sym}).
We prove \cref{thm:reconstructROANF} by going over the different steps of Algorithm~\ref{alg:afr}. We do not repeat all the arguments and claims of \cite{gupta2014random}, but rather give high level explanations, referring to theorems, algorithms and tools of \cite{gupta2014random}.

\begin{proof}[Sketch of proof of \cref{thm:reconstructROANF}]
%The interested reader is  encouraged to read this section alongside the original paper.

%It is clear that the first two steps 
%Step \texttt{AFR}\ref{alg:afr2} 
%of Algorithm~\ref{alg:afr} work just the same in our case.

%The first interesting step of the \texttt{AFR} algorithm involves homogenization of the input ANF. This step works the same for polynomials of \roanf{}. %We also note that a homogenized $\roanf$ polynomial is still a $\roanf$ polynomial.
%; if the $1$-homogeneous parts of the linear functions on the leaves of $\Phi$ were independent, then they remain independent after homogenization. Reconstructing the original $\roanf$ circuit from its homogenization is a trivial process. From this point forward, we shall assume we are dealing with a homogeneous instance of $\roanf$ as input to \texttt{AFR}.

%{Steps \texttt{AFR}\ref{alg:afr3}  of Algorithm~\ref{alg:afr} and the \texttt{LDR} algorithm (Algorithm~\ref{alg:ldr}):}

We shall use the following notation in the proof. We wish to reconstruct  $f\in\roanf[n]$ that is computed by the ANF formula $\Phi$. We define the homogenization of $f$, $f^h$, as usual: $f^h(x_0,\ldots,x_n)=x_0^{\deg(f)}\cdot f\bra{x_1/x_0,\ldots,x_n/x_0}$.
Denote by $\bm A$ an $(n+1)\times (n+1)$ matrix of formal variables $a_{i,j}$. 
For $i\neq j\in\{r+1,r+2,\ldots,n\}$ we denote by $\bm{A^{i,j}_r}$ the matrix $\bm A$ where all columns except those indexed by $\{0,1,2,\ldots,r\}\cup\{i,j\}$ are set to zero (generic projection matrix to the variables $x_0,x_1,\ldots,x_r,x_i,x_j$). 
We denote by $A\in\C^{n\times n}$ an assignment to $\bm A$, and likewise $A^{i,j}_r$ would be an assignment to the $n\cdot(r+3)$ variables of $\bm{A^{i,j}_r}$. 
Note that $\croanf(\bm{A^{i,j}_r}\xn)$ is a \emph{universal} homogeneous $(r+3)$-variate ANF (in $\{x_0,x_1,\ldots,x_r,x_i,x_j\}$) in the sense that for every $(r+3)$-variate homogeneous ANF $f(x_0,x_1,\ldots,x_r,x_i,x_j)$, of depth $2\Delta$, there exists an assignment $A^{i,j}_r$ such that $f(\xn)=\croanf(A^{i,j}_r\xn)$. Finally, following \cite{gupta2014random}, we denote  $\sigma_{A^{i,j}_r}(f) \triangleq f^h(A^{i,j}_r \xn)$ (where now we think of $\xn$ as $\xn=(x_0,\ldots,x_n)$).

Looking at  Algorithm~\ref{alg:afr}, it is clear that except for 
%\doricmnt{'for', not 'of'} 
Step \texttt{AFR}\ref{alg:afr3}, the  rest of the algorithm works without any assumptions on the input ANF. Hence, the proof of correctness boils down to proving that Step \texttt{AFR}\ref{alg:afr3} works w.h.p.; and more importantly, proving that the \texttt{LDR} algorithm (Algorithm~\ref{alg:ldr}, the subroutine invoked in Step \texttt{AFR}\ref{alg:afr3}) succeeds w.h.p. on random projections of \emph{any} $\roanf[n]$ instance.
Specifically, we need to prove that for \emph{any} $f\in\roanf[n]$, step \texttt{AFR}\ref{alg:afr3} succeeds with probability $\geq 1-\frac{|\Phi|^{O(1)}}{|T|}$ on a random linear projection  to $r+3=128$ variables (see \cref{rem:t}) of the \emph{homogenization} of $f$,  $f^h$ (where the coefficients of the projection are sampled from $T\subseteq\C$).
%\doricmnt{and coefficients sampled from $T\subseteq\C$}).

%Studying the proof of correctness of \texttt{AFR} in \cite{gupta2014random}, we see that we need to prove step \texttt{AFR}\ref{alg:afr3} succeeds w.p. $\geq 1-\frac{n^2|\Phi|^{O(1)}}{|T|}$, where $|\Phi|$ is the size of the formula for $f$ (which is $O(n)$ for any $\roanf[n]$ polynomial) and $T$ is the finite set from which we sample the projection $A$ in Step \texttt{AFR}\ref{alg:afr3}. 
%Step \texttt{AFR}\ref{alg:afr4} 

%\begin{definition}
%A projected ANF $f(\xn)=\croanf(A^{i,j}\xn)$ is called ``\emph{good}'' if Algorithm~\ref{alg:ldr} (\texttt{LDR} algorithm) on input $f$ outputs quadratic forms $q_1,\ldots,q_k$ such that $f(\xn)=\croanf[\Delta-1](q_1,\ldots,q_k)$ with probability $\geq 1-\frac{|\Phi|^{O(1)}}{|T|}$, and ``\emph{bad}'' otherwise.
%\end{definition}

Gupta et al. define 
%``good'' ANFs, i.e. ``good'' assignments $v$ to the $\bm u$ variables of $\mathcal \cU_{\Delta,n}(\xn,\bm u)$, in terms of 
two conditions on internal nodes of an ANF $\cU_{\Delta,n}(\xn,v)$: \emph{formulaic independence} (FI, see \cref{defFI}) and \emph{pairwise singular independence} (PSI, see \cref{defPSI}). These conditions are defined in terms of dimensions of certain algebraic varieties $V_1,\ldots,V_k$.
% (see \cref{defFI,defPSI}). 
In Lemmas 5.10, 5.11,
5.16 and 5.26 of their paper, they show that if every node of $\Phi$ satisfies FI, then the \texttt{LDR} algorithm  correctly reconstructs the polynomial computed at each node of $\Phi$ (up to an appropriate group of symmetries). Moreover, %\doricmnt{comma} 
part (2) of their Lemma 5.16 shows that when a node $u$ of $\Phi$ satisfies FI and PSI, 
%\doricmnt{comma} 
then the polynomials computed at the grandchildren of $u$ are computed up to %\doricmnt{upto or up to?} 
$\TS{}$ equivalence.
Overall, this means that all the quadratic forms are computed correctly up to %\doricmnt{upto or up to?} 
$\TS{}$-equivalence.

Thus, if the projected polynomials  $\sigma_{A^{i,j}_r}(f)$ that we compute in Step \texttt{AFR}\ref{alg:afr3} satisfy FI and PSI, 
%\doricmnt{comma} 
then the algorithm will correctly reconstruct our $\roanf[n]$ formula.

%Lemma 4.6 of \cite{gupta2014random} allows them to formulate the dimension conditions on the $V_i$s in terms of points of different varieties $V'_1,\ldots,V'_{k'}$: specifically, an assignment $v$ yields a ``good'' ANF $\mathcal \cU_{\Delta,n}(\xn, v)$ if and only if $v\notin V'_1\cup\ldots\cup V'_{k'}$.

%In our context, we can abstract Lemma 4.6, and its applications, in \cite{gupta2014random} into the following observation:

To prove that (w.h.p.) $\sigma_{A^{i,j}_r}(f)$ satisfies FI and PSI, Gupta et al. prove that these conditions are captured by a set of polynomial equations. Intuitively, this is not a surprising result as  FI and PSI are algebraic conditions.

\begin{observation}\label{obs:goodVariety}\sloppy
For every $i,j\in\{r+1,r+2,\ldots,n\}$ there exists a set of nonzero polynomials ${p_1,\ldots,p_k\in\C[\bm{A^{i,j}_r}]}$ with the property that $\croanf(A^{i,j}_r\xn)$ satisfies FI and PSI if $A^{i,j}_r$ is not a point on the variety $V\left(p_1(\bm{A^{i,j}_r}),\ldots,p_k(\bm{A^{i,j}_r})\right)\triangleq\cbra{A^{i,j}_r \mid p_1({A^{i,j}_r})=\ldots= p_k({A^{i,j}_r})=0}$. Furthermore, the degree of each $p_i$ is $2^{O(\Delta)}$, which is polynomial in the size of the formula.
\end{observation}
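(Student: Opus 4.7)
The plan is to unpack the conditions FI and PSI from \cite{gupta2014random} (see Definitions~\ref{defFI},~\ref{defPSI} in the appendix) and show they translate directly into non-vanishing of polynomial equations in the entries of $\bm{A^{i,j}_r}$, and then to exhibit at least one witness matrix to argue that these polynomials are not identically zero.

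First I would recall that FI and PSI are defined in terms of the dimensions of certain algebraic varieties $V_1,\ldots,V_k$ attached to the internal nodes of the universal ANF $\cU_{\Delta,r+3}(\xn,\bm{A^{i,j}_r}\xn)$ being equal to their expected (maximal) generic value. Such a dimension condition on a parametric family of varieties is equivalent, via the standard Jacobian criterion, to a rank condition on a matrix whose entries are polynomials in the parameters $\bm{A^{i,j}_r}$ and in the defining equations of the $V_\ell$'s. A rank condition is in turn equivalent to the non-vanishing of at least one maximal minor of that matrix. Collecting all the maximal minors that must be non-zero over all the finitely many FI and PSI conditions at all internal nodes of $\cU_{\Delta,r+3}$ produces the desired finite list of polynomials $p_1,\ldots,p_k\in\C[\bm{A^{i,j}_r}]$. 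By construction, if $A^{i,j}_r$ lies outside $V(p_1,\ldots,p_k)$, then every maximal minor is non-zero at $A^{i,j}_r$, so all the rank conditions are met and hence $\croanf(A^{i,j}_r\xn)$ satisfies FI and PSI.

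Next I would argue that the $p_i$'s are not identically zero. Here I would invoke the main result of \cite{gupta2014random} (Theorem~1.1) in the special case of the random ANF distribution $\mathcal D(r+3,2\cdot 4^\Delta-1,S)$: with positive probability a random assignment to $\bm{A^{i,j}_r}$ produces an ANF satisfying FI and PSI at every node. This yields an explicit witness $A^{\ast}$ at which every $p_i$ is non-zero; in particular no $p_i$ can be the zero polynomial. (Alternatively, one could exhibit an explicit $A^{\ast}$ by direct computation, e.g.\ a suitably generic diagonal-plus-identity perturbation, but invoking the already-proved genericity result is cleaner.)

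Finally, for the degree bound I would track degrees through the construction: the defining polynomials of $V_1,\ldots,V_k$ are coefficients of polynomials of the form $\sigma_{\bm{A^{i,j}_r}}(u)$ for internal nodes $u$ of the universal ANF, and hence have degree at most $\deg(u)\le 2^\Delta$ in the entries of $\bm{A^{i,j}_r}$. The associated Jacobian matrices have entries of degree at most $2^{O(\Delta)}$, and the number of rows/columns whose minors we take is bounded by a polynomial in $4^\Delta$; thus each maximal minor is a polynomial in $\bm{A^{i,j}_r}$ of degree $2^{O(\Delta)}$, which is polynomial in the formula size $s=2\cdot 4^\Delta-1$, as required. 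The main obstacle I expect is the bookkeeping in the first step: turning the abstract dimension-based definitions of FI and PSI into an explicit Jacobian rank condition at each node, and confirming that the entries of the relevant matrices really are polynomial (rather than rational) in $\bm{A^{i,j}_r}$ of the claimed degree; the non-vanishing step and the degree estimate are essentially immediate once this translation is in place.
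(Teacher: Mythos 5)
The paper's own "proof" of this observation is a one-line citation: it asserts the statement can be deduced from the proofs of Corollaries 5.31 and 5.32 of Gupta et al., without reproducing the argument. Your attempt to give a self-contained derivation is welcome, but the central technical step is glossed over in a way that does not work as stated. The FI and PSI conditions are dimension bounds on varieties $V(\bm f)$, $\text{Sing}(f)\cap V_J(\bm f)$, $W_S$ living in $\mathbb{P}^r$, cut out by polynomials in the $\xn$ variables whose coefficients are polynomials in the entries of $\bm{A^{i,j}_r}$. Converting "$\dim(V) < c$ (where $V \subset \mathbb{P}^r_{\xn}$)" into a polynomial non-vanishing condition on the parameters $\bm{A^{i,j}_r}$ is \emph{not} what the Jacobian criterion does: the Jacobian criterion is a pointwise smoothness test, and the maximal minors you form from $J(\bm f, \xn)$ are polynomials in \emph{both} $\bm{A^{i,j}_r}$ and $\xn$, not in the parameters alone. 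To pass from a condition of the form "for all $\vv \in \mathbb{P}^r$, some minor of $J(\bm f)(\vv)$ is nonzero" to a Zariski-closed condition on $\bm{A^{i,j}_r}$, you must \emph{eliminate} $\xn$ — e.g.\ via resultants, Chow forms, or the constructibility of fiber-dimension loci (semicontinuity of dimension plus Chevalley). That elimination step is precisely what the cited Corollaries 5.31 and 5.32 supply, and it is the piece missing from your argument. Your degree bound has the same gap: the degree of the eliminant is controlled by Bezout-type arguments, not simply by the degree of the pre-elimination minors, and you bound the latter rather than the former.

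Two smaller points. First, invoking Theorem 1.1 of Gupta et al.\ to produce a non-vanishing witness is logically valid but roundabout — that theorem is \emph{proved} via the corollaries whose content you are trying to reprove, and the paper instead gives an explicit witness in \cref{clmLDRsucceeds} (the polynomial $g$ from Gupta et al.'s Lemma 5.30, shown in \cref{gIsGood} to satisfy FI and PSI). Second, it is worth noticing that for the first FI condition, $\dim(V(\bm f)) = r-4$ is automatically equivalent to $\dim(V(\bm f)) \le r-4$ (since in $\mathbb{P}^r$ any four hypersurfaces intersect in dimension at least $r-4$ once $r\ge 4$), so all the FI/PSI conditions are genuinely one-sided dimension bounds and hence Zariski-open; the closedness of the bad locus is therefore clean, but it still must be realized via elimination to obtain the explicit $p_i$'s with the claimed degree.
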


This observation is not stated as is in \cite{gupta2014random} but it can be immediately deduced from the proofs of Corollaries 5.31 and 5.32 of \cite{gupta2014random}.

Thus, we wish to show that a random ${A^{i,j}_r}$ does not belong to the variety defined in \cref{obs:goodVariety}. For this we follow the same approach as Gupta et al. We prove that \emph{there exist} 
%\doricmnt{exist} 
good projections $A^{i,j}_r$ that do not belong to the variety, and then using \szlem{} we  conclude that such a random projection is not on the variety.

\begin{claim}\label{clmLDRsucceeds}
Let $r\geq 125$ and $n\geq r$. For any $n$-variate $f\in\roanf[n]$, computed by the ANF formula $\Phi$, and any $i,j\in\{r+1,r+2,\ldots,n\}$, there exists some projection $A^{i,j}_r$ such that $\sigma_{A^{i,j}_r}(f)$ satisfies FI and PSI at \emph{every internal node} of $\Phi$.
\end{claim}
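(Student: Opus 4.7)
The plan is to reduce the claim to the corresponding statement for $f = \croanf$ itself, at which point it follows immediately from \cref{obs:goodVariety}.

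First, I would unpack what membership in $\roanf[n]$ gives us. Since $\Phi$ computes $f$, we may write $f(\xn) = \croanf(B\xn + \vb)$ for some $(B,\vb) \in \gla{n}{\C}$ with the canonical ROANF of product depth $\Delta$ satisfying $4^\Delta \leq n$. Homogenizing and pulling $x_0$ into each leaf label, $f^h(x_0,\xn) = \croanf(\tilde L_1,\ldots,\tilde L_{4^\Delta})$ where each $\tilde L_k$ is a linear form in $(x_0,\xn)$ and the $4^\Delta$ forms are linearly independent (since the first $4^\Delta$ rows of $B$ are linearly independent). Because $n + 1 > 4^\Delta$, we may complete these $4^\Delta$ rows into an invertible matrix $\tilde B \in \gl{n+1}{\C}$ such that $f^h(x_0,\xn) = \croanf(\tilde B \cdot (x_0,\xn)^T)$. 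Consequently, for every projection matrix $A^{i,j}_r$,
\[
\sigma_{A^{i,j}_r}(f) \;=\; f^h(A^{i,j}_r \xn) \;=\; \croanf\bigl((\tilde B \cdot A^{i,j}_r)\xn\bigr).
\]

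Second, I would observe the key structural fact. Left-multiplication by $\tilde B$ preserves the column-support of $A^{i,j}_r$: column $k$ of $\tilde B \cdot A^{i,j}_r$ equals $\tilde B$ times column $k$ of $A^{i,j}_r$, which is zero exactly when the latter is zero. Since $\tilde B$ is invertible, the map $M \mapsto \tilde B M$ is a linear bijection on the affine space of $(n+1)\times(n+1)$ matrices whose only possibly non-zero columns are indexed by $\{0,1,\ldots,r\} \cup \{i,j\}$. Under this bijection, the set of $A^{i,j}_r$ for which $\sigma_{A^{i,j}_r}(f)$ fails FI or PSI at some internal node of $\Phi$ is precisely the preimage of the analogous bad set for $\croanf$ itself (noting that the internal nodes of $\Phi$ correspond to the internal nodes of the projected formula, which shares the tree structure of $\Phi$).

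Third, I would invoke \cref{obs:goodVariety}: the bad set for $\croanf$ is contained in the vanishing locus of a finite family of \emph{nonzero} polynomials $p_1,\ldots,p_k \in \C[\bm{A^{i,j}_r}]$, and is therefore a proper Zariski-closed subset of parameter space. Its preimage under the linear bijection from the previous step is again a proper Zariski-closed subset, so the complementary set of good $A^{i,j}_r$ for $f$ is non-empty; any such $A^{i,j}_r$ witnesses the claim.

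The main obstacle is conceptual rather than technical: one has to see that, although $f$ is a specific polynomial rather than a generic one, the projections $\sigma_{A^{i,j}_r}(f)$ sweep out the same family as $\sigma_{A^{i,j}_r}(\croanf)$ up to a fixed invertible reparameterization by $\tilde B$, so the genericity available for the canonical ROANF (codified in \cref{obs:goodVariety}) transfers directly to every member of the orbit.
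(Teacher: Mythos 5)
Your proposal is correct, and it takes a genuinely different route from the paper's proof. The paper proves the claim by an explicit construction: it takes the concrete $128$-variate polynomial $g$ from Lemma~5.30 of \cite{gupta2014random}, which is known (Fact~\ref{gIsGood}) to satisfy FI and PSI, then builds a ROANF $\tilde\Psi$ by perturbing each leaf of a ROANF computing $g$ with a linear form in the fresh variables $x_{129},\ldots,x_n$, finds $B$ with $f^h(B\xn)=\tilde\Psi(\xn)$, and zeroes out columns $129,\ldots,n$ of $B$ to obtain a projection witnessing $\sigma_{A^{i,j}_r}(f)=g$. Your argument dispenses with the explicit witness $g$ entirely. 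The key step you contribute is the observation that $\sigma_{A^{i,j}_r}(f)=\croanf\bigl((\tilde B A^{i,j}_r)\xn\bigr)$ and that $M\mapsto\tilde B M$ is a linear automorphism of the space of matrices supported on columns $\{0,\ldots,r\}\cup\{i,j\}$; under it, the bad set for $f$ is the preimage of the bad set for $\croanf$, which is proper by \cref{obs:goodVariety}. This is cleaner and more modular: it makes explicit that the only property of $\roanf[n]$ being used is that every member is obtained from $\croanf$ by an invertible linear change of variables (after homogenization), so the genericity statement for $\croanf$ transfers verbatim to any polynomial in the orbit. One thing to be careful about, which you should state explicitly if you write this up: the argument presupposes that the nonzero-ness of the polynomials $p_1,\ldots,p_k$ in \cref{obs:goodVariety} is already established independently (as the paper attributes to the proofs of Corollaries~5.31 and 5.32 of \cite{gupta2014random}), rather than being something that Claim~\ref{clmLDRsucceeds} is needed to prove — otherwise the argument would be circular. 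Given the paper's framing, that presupposition is justified, but it is a load-bearing dependency.
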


\begin{proof} 
	To prove the existence of a ``good'' projection for an arbitrary $f\in\roanf[n]$, we use an explicit ANF $g$, on $128$ variables, that can be described as a projection of \emph{any} $f\in\roanf[n]$ (more accurately, of $f^h$). 
	%The polynomial $g$ as defined below satisfies FI and PSI at its top gate, and can be defined for any product depth $\Delta\geq 2$. We reiterate the 
	The definition of $g$ comes from the proof of Lemma 5.30 of \cite{gupta2014random}:
	\begin{align}
		\forall i,j\in[4]:\;\;\;g_{i,j}(\xn)\triangleq&(x_{32(i-1)+8(j-1)}^e+x^e_{32(i-1)+8(j-1)+1})\cdot(x^e_{32(i-1)+8(j-1)+2}+x^e_{32(i-1)+8(j-1)+3})\nonumber\\
		&+(x^e_{32(i-1)+8(j-1)+4}+x^e_{32(i-1)+8(j-1)+5})\cdot(x^e_{32(i-1)+8(j-1)+6}+x^e_{32(i-1)+8(j-1)+7})\nonumber\\
		\forall i\in[4]:\;\;\;g_i(\xn)\triangleq&g_{i,1}(\xn)g_{i,2}(\xn)+g_{i,3}(\xn)g_{i,4}(\xn)\label{giEqDef}\\
		g(\xn)\triangleq&g_1(\xn)g_2(\xn)+g_3(\xn)g_4(\xn)\label{gEqDef}.
	\end{align}
	
	The exponent $e\in\n$ is chosen such that the degree of $g$ is $2^\Delta$ for the given $\Delta$, i.e. $e=2^{\Delta-3}$. Gupta et al. prove that $g$ satisfies PSI in Lemma 5.30. In Lemma 5.29, the FI condition is proven to hold for a slightly different polynomial (specifically, they prove $g_i$ as defined in equation \eqref{giEqDef} satisfies FI), but the proof for formulaic independence of $g$ itself works exactly the same (relies on variable-disjointness of $g_1,\ldots,g_4$), so we get:
	
	\begin{fact}\label{gIsGood}
		The polynomial $g$ defined in Equation \eqref{gEqDef} satisfies FI and PSI (and so does $g(x_{\pi(0)},\ldots,x_{\pi(127)})$, for any permutation $\pi$).
	\end{fact}

	Let $g(\xn)$ be as defined in equation \eqref{gEqDef} above. Our goal here is, given an unknown $f\in\roanf[n]$ and indices $i,j\in[n]$, to prove there exists some projection $A^{i,j}_r$ such that $\sigma_{A^{i,j}_r}(f)=g(\xn)$ (possibly up to a permutation of the variables); as we only care about  projections up to permutations of the variables, we can WLOG assume $i=r+1,j=r+2$. The correctness of Algorithm~\ref{alg:ldr} is proven for a number of variables $\geq 128$ and $g$ is a $128$-variate polynomial, so for sake of simplicity we may assume $r=125$ such that projections of $f^h$ have the same number of variables as $g$.
	
	For an ANF $\Psi$ computing $g$ such that each leaf is labeled by a single variable from $\{x_1,\ldots,x_{128}\}$ (times some constant), denote by $\tilde\Psi$ a new formula constructed as follows: for every $i\in[4^\Delta]$, if leaf number $i$ in $\Psi$ is labeled $\alpha_i\cdot x_j$, relabel it to $\alpha_i \cdot x_j+\ell_i(\xn)$, where $\ell_i$ is some linear form depending on the variables $x_{129},\ldots,x_n$. Choose the coefficients of the $\ell_i$s so that all the leaves of $\tilde\Psi$ are linearly independent (thus, $\tilde\Psi(\xn)\in\roanf[n]$). As $f^h$ and $\tilde\Psi$ are two polynomials in the $\GL[n+1]$-orbit of \croanf{}, there exists some $B\in\GL[n+1]$ such that $f^h(B\xn)=\tilde \Psi(\xn)$, and by construction $\restr{\tilde \Psi}{x_{129}=0,x_{130}=0,\ldots,x_n=0}(\xn)=\Psi(\xn)=g(\xn)$. By defining $A^{i,j}_r$ to be the matrix $B$ with columns $129,\ldots,n$ set to zero, we get $\sigma_{A^{i,j}_r}(f) =\restr{\tilde \Psi}{x_{129}=0,x_{130}=0,\ldots,x_n=0}(\xn)=g(\xn)$. Since $A^{i,j}_r$ is a projection, this is what we wanted to prove.
\end{proof}

%Ee can now conclude the polynomials in \cref{obs:goodVariety} are not all zero; 

Thus, by applying the \szlem{}, we can conclude that a random projection (sampled from a set $T\subseteq\C$) 
%\doricmnt{sampled from a set $T\subseteq\C$} 
of the homogenization of any $f\in\roanf[n]$ satisfies FI and PSI with probability at least $\bra{1-\frac{|\Phi|^{O(1)}}{|T|}}$, thanks to the upper bound on the degree of the $p_i$s of \cref{obs:goodVariety}. For Step \texttt{AFR}\ref{alg:afr3} to work, we need all $n^2$ projections to yield ``good'' polynomials, and by a simple application of the union bound we  deduce that \texttt{AFR3} succeeds with probability at least $\bra{1-\frac{n^2\cdot |\Phi|^{O(1)}}{|T|}}$.

This completes the proof of \cref{thm:reconstructROANF}
\end{proof}

\begin{remark}
The original theorem of \cite{gupta2014random} uses two sets of field elements: the set $S$, used to sample random ANFs from the distribution $\mathcal D(n,s,S)$, and the set $T$, used to sample random projections $A^{i,j}_r$ of the input ANF. As their algorithm works for any $f\in \roanf[n]$, we do not need the set $S$. Thus,we only use $T$, and we add run-time dependence on $\log(|T|)$ so we can sample the uniform distribution on $T$.
\end{remark}

\section{Dense orbits for \dthree{} circuits}\label{sec:sps}

In this section we  prove our claims regarding dense orbits in $\dthree$. We start by proving \cref{thm:denseInDepthThree} regarding the relation between $\taff{\gla{}{\f}}$, $\spinv{\f}$ and $\dthree$.

\begin{proof}[Proof of \cref{thm:denseInDepthThree}]
	The claim regarding the closures follows immediately from the fact that every matrix can be approximated by invertible matrices and from the simple observation that for any $n$-variate polynomial $f(\xn)\in\Sigma^{[s]}\Pi^{[d]}\Sigma(\f)$, there exist $A\in\f^{n\times n},\vb\in\f^n$ such that $\sdmpol(A\xn+\vb)=f(\xn)$.
	
	To prove the separation we first note that the polynomial $f(\xn)=x_1^2$ is in $\dtwo$, but not in $\taff{\gla{}{\f}}$: if $f(\xn)\in \taff{\gla{}{\f}}$, then there exists $(A,\vb)\in\gla{n}{\f}$ such that $f(A\xn+\vb)=\sdmpol[s][d]$, for some $s$ and $d$ (as we compose with invertible affine maps). However,  $f(A \xn+\vb)=(\ell(\xn))^2$ for some non-constant linear function $\ell(\xn)$, which is obviously not a multilinear polynomial.
	The second separation will follow from the next simple claim.
	
	\begin{claim}\label{cla:sdminvHom}
		If $f\in\spinv{\f}$ is $d$-homogeneous, then it is in the $\gl{n}{\f}$ orbit of some $d$-homogeneous $\dtwo$ circuit (i.e. no affine translation is needed). 
	\end{claim}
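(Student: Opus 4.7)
The strategy is to absorb the translation $\vb$ into the polynomial, and then use the invertibility of the linear part to commute composition with the extraction of the $d$-homogeneous component.

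Suppose $f\in\spinv{\f}$. By definition, there is a $\dtwo$ circuit $g$ and $(A,\vb)\in \gla{n}{\f}$ such that $f(\xn)=g(A\xn+\vb)$. Define $\tilde g(\bm y)\triangleq g(\bm y+\vb)$; then $f(\xn)=\tilde g(A\xn)$. I claim that $\tilde g$ is itself a $\dtwo$ circuit. Indeed, writing $g(\bm y)=\sum_{i} c_i \prod_{j} y_j^{a_{i,j}}$, each factor $(y_j+b_j)^{a_{i,j}}$ expands via the binomial theorem into a sum of monomials in $y_j$; multiplying out and collecting, $\tilde g$ is a sum of monomials in $\bm y$, of degree at most $\deg(g)$. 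Hence $\tilde g\in\dtwo$.

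Next, because $A\in\gl{n}{\f}$ (no translation), the entries of $A\xn$ are homogeneous linear forms. Consequently, composing with $A$ preserves the grading by total degree: for every polynomial $p(\bm y)$ and every $k\in\n$, $(p(A\xn))^{[k]}=p^{[k]}(A\xn)$. Applying this with $p=\tilde g$ and $k=d$, and using the assumption that $f$ is $d$-homogeneous so that $f=f^{[d]}$, we obtain
\[
f(\xn)=f^{[d]}(\xn)=\bigl(\tilde g(A\xn)\bigr)^{[d]}=\tilde g^{[d]}(A\xn).
\]
Finally, set $\hat g\triangleq \tilde g^{[d]}$. As $\tilde g$ is a sum of monomials, $\hat g$ is the sum of those monomials having total degree exactly $d$, so $\hat g$ is a $d$-homogeneous $\dtwo$ circuit. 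The displayed equality $f(\xn)=\hat g(A\xn)$ with $A\in\gl{n}{\f}$ shows $f\in \hat g^{\gl{n}{\f}}$, which is exactly the conclusion.

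I do not anticipate a real obstacle here: the argument is essentially a bookkeeping observation about how the affine action decomposes. The only subtle point is that one must keep the action of $\gl{n}{\f}$ separate from that of $\gla{n}{\f}$ to justify the commutation $(p(A\xn))^{[d]}=p^{[d]}(A\xn)$, which fails in general when the composition involves a non-trivial translation.
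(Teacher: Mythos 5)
Your proof is correct and is essentially the same as the paper's: both hinge on the fact that, for an invertible linear $A$, extracting the degree-$d$ homogeneous part commutes with substitution by $A$, so that the affine translation reduces to a bookkeeping step. The only difference is the order of operations: the paper first deduces $\deg(\Psi)\le d$ (via invertibility of $A$) and then extracts $\Psi^{[d]}$ directly, which makes the preservation of sparsity manifest (a property that the surrounding proof of \cref{thm:denseInDepthThree} relies on), whereas you absorb $\vb$ into $\tilde g(\bm y)=g(\bm y+\vb)$ first and extract $\tilde g^{[d]}$; these are the same polynomial, since the same invertibility argument forces $\deg(g)\le d$ and hence $\tilde g^{[d]}=g^{[d]}$, but your phrasing leaves the sparsity preservation implicit.
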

	\begin{proof}
		Let $(A,\vb)\in\gla{n}{\f}$ and let $\Psi$ be a $\dtwo$ circuit such that $f(\xn)=\Psi(A\xn+\vb)$. 
		%\doricmnt{either remove the transpose here or add it everywhere else}. 
		Observe that for every $i$ it holds that $\Psi(\xn)^{[i]}\neq 0$ if and only if $\Psi(A\xn)^{[i]}\neq 0$, since $A$ is invertible. In particular, if $\Psi(\xn)$ had a monomial of degree larger than $d$ then the degree of $f(\xn)=\Psi(A\xn+\vb)$ would have been larger than $d$ in contradiction. Thus, all gates in $\Psi$ have degree at most $d$. Similarly, we now see that $f(\xn)=\bra{\Psi(Ax+\vb)}^{[d]}=\bra{\Psi(\xn)}^{[d]}(A\xn)$. Thus, $\Psi^{[d]}$ is the claimed $\dtwo$ circuit.
	\end{proof}
	%The separation $\spinv{\f} \subsetneq\dthree(\f)$ is obtained as follows: 
	Let $\sigma_d(\xn)$ be the $n$th elementary symmetric polynomial. I.e. the sum over all degree-$d$ multilinear monomials in $n$-variables.  Theorem 0 of \cite{nisan1996lower}
	shows that any homogeneous $\dthree$ circuit computing $\sigma_d$ must have size $\Omega(\frac{n}{2d})^d$. 
	As any homogeneous polynomial in $\dtwo^{\gl{n}{\f}}$ can be computed by a homogeneous $\dthree$ circuit of the same complexity, 
	we get an exponential lower bound on the sparsity of any $\spinv{\f}$ circuit computing $\sigma_d$, over any field. 
	To get an upper bound on the $\dthree$ complexity, note that, over any field of size $|\f|\geq n+1$,  $\sigma_d$ has a $\dthree$ circuit of size $O\bra{n^2}$ (see \cite{shp18}), that is obtained by interpolating the polynomial $f(Y)=\prod_{i=1}^{n}(Y+x_i)$. 
\end{proof}

%To get a separation over smaller fields we do the following trick. Let $z_1,\ldots,z_k$ be new variables, for $k$ such that $|\f|^k\geq n+1$.  Follow the construction of the $\dthree$ circuit for $\sigma_d$ given in \cite{shp18}, with $\cbra{\ell_0(\bz),\ldots,\ell_n(\bz)}$ as the different evaluation points, where the $\ell_i$ are $n+1$ different linear functions. There are rational functions $r_0(\bz)=f_0(\bz)/g(\bz),\ldots,r_n(\bz)=f_n(\bz)/g(\bz)$, where the ${f_i}$s and $g$ are polynomials of degree at most $n^2$ (e.g. $g(z)=\prod_{0\leq i< j \leq n} \bra{\ell_j-\ell_i}$) so that $\sigma_d(\xn) = \sum_{i=0}^{n}r_i(z)\cdot \prod_{j=1}^{n}\bra{\ell_i(\bz)+x_j}$. As each $f_i$ is a polynomial in $k = O(\log n)$ many variables of degree at most $n$, it has a $\dtwo$ circuit of size $n^{O\bra{\log n}}$.
%Thus, the homogeneous polynomial $g(z)\cdot \sigma_d(\an)$ has  a $\dthree$ circuit of size $n^{O\bra{\log n}$. The same argument as in \cite{nisan1996lower} implies that any homogeneous $\dthree$ circuit for it has exponential size (the dimension of the partial derivatives is still exponential and the degree did not increase enough to compensate for that), thus yielding the required separation.

We devote the rest of this section to proving \cref{thm:reconstructSDM,thm:PITsumSDMinv,thm:PITSINV}.

\subsection{A hitting-set generator for $\dtwo^{\gla{}{\f}}$ circuits}

%We begin by defining derivation by vectors:
%
%\begin{definition}\label{defDeriveLinear}
%For an $n$-variate polynomial $f\in\C[\xn]$ and a vector $\vv=(v_1,\ldots,v_n)\in\C^n$, the \emph{derivative of $f(\xn)$ in the direction $\vv$} is defined as:
%$$ \partiald{f}{\vv}=\sum_{i=1}^n v_i\cdot\partiald{f}{x_i}. $$
%\end{definition}
%
%An application we will use is the effect of taking a derivative of a product of linear functions by a vector in the ``dual basis'':
%
%\begin{lemma}\label{dualBaseLem}
%Let $\bm y=(y_1,\ldots,y_n)$ be formal variables and let $g\in\C[\bm y]$. Let $f=g(A^T\xn+\vb)$ where $(A,\vb)\in\GLplus$, and denote by $\ell_1,\ldots,\ell_n$ linear functions such that the $i$th coordinate of $A^T\xn+\vb$ is $\ell_i(\xn)$. Let $\vv_1,\ldots,\vv_n\in\C^n$ be such that $\ell_i(\vv_j)=\delta_{i,j}$. We get that:
%$$ \partiald{f}{\vv_i}(\xn)=\partiald{g}{y_i}(A^T\xn+\vb)=\partiald{g}{y_i}(\ell_1(\xn),\ldots,\ell_n(\xn)). $$
%\end{lemma}
%\begin{proof}
%$$ \forall i\in[n]:\partiald{f}{\vv_i}(\xn)=\sum_j v_{i,j}\partiald{f}{x_j}(\xn)=\sum_{j,k}v_{i,j}\partiald{\ell_k}{x_j}\partiald{g}{y_k}(A^T\xn+\vb)=\sum_k\ell_k(\vv_i)\partiald{g}{y_k}(A^T\xn+\vb)=\left(\partiald{g}{y_i}\right)(A^T\xn+\vb), $$
%\end{proof}

In this section, we prove \cref{thm:PITSINV}. The main idea is that given 
some $f\in \dtwo^{\gla{}{\f}}$, where $f(\xn)=g(A\xn+\vb)=g(\ell_1(\xn),\ldots,\ell_n(\xn))$ for an $s$-sparse polynomial $g$, composing $f$ with a \ropinvgen[1]{} allows us to ``halve'' the number of monomials appearing in the underlying $\dtwo$ circuit $g(x)$. 
%Given a non-zero, $s$-sparse polynomial $g(\xn)$, and black-box access to its composition with an affine transformation $f(\xn)=g(A\xn+\vb)=g(\ell_1(\xn),\ldots,\ell_n(\xn))$, our goal is to use \ropinvgen[1]{}s to reduce the sparsity of the underlying sparse polynomial $g$ by a factor of $2$ %\doricmnt{isn't this sentence a bit redundant, following the previous paragraph?}. 
Depending on the structure of $g$, this can be done by either taking a derivative of $f$ at the direction of an appropriately chosen dual vector, or by restricting $f$ to a linear subspace in which some $\ell_i(\xn)= 0$ and other linear functions remain linearly independent. By \cref{lem:indDerivLinear,lem:indProjectZero}, both tasks can be simulated using a $1$-independent generator.
%\doricmnt{by applying? see:?} .

As a reminder, we restate \cref{thm:PITSINV} before giving its proof.

%We are now ready to restate and prove our main result for this section:

\PITSINV*
\begin{proof}
	By induction on $t$. For $t=0$, $0\neq f(\xn)$ is either a non-zero constant, or a product of non-zero linear functions. A non-zero linear function composed with a \ropinvgen[1]{} $\G$ is non-zero because the $n$ entries of $G$ are linearly independent (\cref{obs:kwise}(\ref{item:coordsGenInd})), so $f\circ \G\neq 0$.
	
	Let $t>0$ and let $\G_1(\bm{y_1},z_1)$ and $\G_t(\bm{y_2},z_2,\ldots,z_{t+1})$ be a \ropinvgen[1]{} and a \ropinvgen[t]{}, respectively, such that $\G=\G_1+\G_t$. Let $\ell_1,\ldots,\ell_n$ be linear functions such that the $i$th coordinate of $A\xn$ is $\ell_i(\xn)$, and let $\vb=(b_1,...,b_n)$.
	
	First, we note that WLOG we can assume that no variable $x_i$ divides $g$; otherwise we can take some $\tilde g\in\f[\xn]$ such that $g(\xn)=x_i^k\tilde g(\xn)$, $x_i$ does not divide $\tilde g$ and both $g$ and $\tilde g$ have the same sparsity. By the base case (sparsity $1$), $(\ell_i(\xn)+b_i)^k\circ \G\neq 0$, so $f\circ \G\neq 0$ if and only if $ \left(\tilde g(A\xn+\vb)\right)\circ \G\neq 0$. %\doricmnt{no transpose in the premise, remove it or add it everywhere}.
	
	Now that we know $g(\xn)$ is not divisible by any variable, we consider two cases:
	
	\textbf{Case 1:} There exists a variable $x_i\in\text{var}(g)$ that appears in $\leq 2^{t-1}$ monomials of $g(\xn)$. Choose $\vv\in\f^n$ such that $\ell_i(\vv)=1$, and for all $j\neq i$, $\ell_j(\vv)=0$. By \cref{lem:dualder}, $\partiald{f}{\vv}(\xn)=\bra{\partiald{g}{x_i}}\bra{A\xn+\vb}$. By choice of $x_i$, $\partiald{g}{x_i}$ is non-zero and of sparsity $\leq 2^{t-1}$, so by induction: $\bra{\partiald{f}{\vv}}(\G_t)\neq 0$.  \cref{lem:indDerivLinear} implies that $f\circ \G=f\circ (\G_1+\G_t)\neq 0$.
	
	\textbf{Case 2:} Every variable $x_i \in \text{var}(g)$ appears in at least $2^{t-1}$ monomials of $g$. Assume, WLOG, that  $x_1\in\text{var}(g)$, and define $\tilde g(\xn)\triangleq g(0,x_2,x_3,\ldots,x_n)$. As $x_1$ does not divide $g$, $\tilde g\neq 0$ and is of sparsity $\leq 2^{t-1}$. By \cref{lem:indProjectZero}, there exist linearly independent linear functions $\tilde\ell_2,\ldots,\tilde\ell_n$, an assignment $\valpha\in\f^{|\bm{y_1}|}$ and some linear function $L(\xn)$ such that $f\bra{\xn+\G_1\bra{\valpha,L(\xn)}}=\tilde g\bra{\ell_2(\xn),\ldots,\ell_n(\xn)}\neq 0$. As $\tilde g$ is non-zero and has sparsity $\leq 2^{t-1}$, we get from the induction hypothesis that $f\bra{\xn+\G_1\bra{\valpha,L(\xn)}}\circ \G_t\neq 0$, and therefore $f\bra{\xn+\G_1\bra{\bm{y_1},z_1}}\circ \G_t\neq 0$. Hence, $f\circ \G=f\circ\bra{\G_t+\G_1}=f\bra{\xn+\G_1\bra{\bm{y_1},z_1}}\circ \G_t\neq 0$.
%	As $\G_t$ assigns to $\xn$ variables only, $f\bra{\xn+\G_1\bra{\bm{y_1},z_1}}\circ \G_t=f\circ\bra{\G_t+\G_1}=f\circ \G$, so $f\circ \G\neq 0$.
\end{proof}

\cref{cor:hs-sparse} follows immediately from \cref{thm:PITSINV} and \cref{obsHitSetGen}.

\subsection{An interpolating set generator for $\taff{\gla{}{\f}}$}

%We have previously seen a reconstruction algorithm for $\sdminv$, but this algorithm is \emph{randomized}. In this section, we construct a polynomial-sized interpolating set $\mathcal H$ (\cref{defInterSet}) for polynomials $f\in\sdminv{}$, which shows non-adaptive, deterministic reconstruction algorithms that make only a polynomial number of queries, are theoretically possible from evaluations of $f$ on $\mathcal H$.\\

To construct an interpolating set generator for $\taff{\gla{n}{\f}}\triangleq \taff{\gla{}{\f}}\cap\f[x_1,\ldots,x_n]$ we need a generator that hits the difference of two polynomials of $\taff{\gla{n}{\f}}$. As this class is closed under multiplication by scalars, such a generator hits every nonzero sum of two $\taff{\gla{n}{\f}}$ polynomials. The main idea can be described as follows: the  tensor $\sdmpol$ on variables $\{x_{1,1},\ldots,x_{s,d}\}$ has the property that  for any two variables in distinct product gates, $x_{i,j}$ and $x_{i',j'}$ ($i\neq i'$), it holds that $\partiald{^2\sdmpol}{x_{i,j}\partial x_{i',j'}}= 0$. 
%If we apply some affine transformation $(A,\vb)\in\gla{n}{\f}$ to \sdmpol{}, then we can still take a derivative according to dual vectors $v,u\in\f^n$, and get $\partiald{^2\sdmpol(A\xn+\vb)}{v\partial u}= 0$.
We prove that for a sum of distinct $\taff{\gla{n}{\f}}$ polynomials, there is always a pair of ``dual'' vectors such that if we take a derivative in their direction then one of the $\taff{\gla{n}{\f}}$ polynomials of the sum vanishes. Once we prove this, all that is left is to hit the remaining polynomial (or actually, its derivative).

If $f\in\taff{\gla{n}{\f}}$ and $\vv\in\f^n$ is arbitrary, then $\partiald{f}{\vv}$ need not be in $\taff{\gla{n}{\f}}$. We thus begin by constructing a hitting set generator for  directional derivatives of $\taff{\gla{n}{\f}}$ polynomials.

\begin{lemma}\label{PITderivSDMINV}
	Let $f\in\taff{\gla{n}{\f}}$,  $k\in\n$ and $\vv_1,\ldots,\vv_k\in\f^n$. Then, for any \ropinvgen[(k+2)]{} $\G$:
	$$ \partiald{^kf}{\vv_1\partial\vv_2\cdots\partial\vv_k}\neq 0\Rightarrow \partiald{^kf}{\vv_1\partial\vv_2\cdots\partial\vv_k}\circ \G\neq 0 \;.$$
\end{lemma}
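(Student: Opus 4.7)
The plan is to prove this by induction on $k$, exploiting a gate-by-gate decomposition of the derivative.  Writing $f=\sum_{i=1}^s\prod_{j=1}^d\ell_{i,j}(\xn)$ with linearly independent $\ell_{i,j}$ and setting $y_{i,j}=\ell_{i,j}$ (extended to a basis of $\f^n$), the fact that $\partiald{^2\sdmpol[s][d]}{y_{i,j}\partial y_{i',j'}}=0$ for $i\neq i'$ yields the decomposition
\begin{equation*}
\partiald{^k f}{\vv_1\cdots\partial\vv_k}=\sum_{i=1}^s D_i,\qquad D_i=\sum_{S\subseteq[d],\,|S|=k}C_{i,S}\prod_{j\notin S}\ell_{i,j},
\end{equation*}
where each $C_{i,S}\in\f$ is a permanent in the entries $\ell_{i,s}^{[1]}(\vv_m)$ with $m\in[k],\,s\in S$, and the $D_i$ are variable-disjoint in the $\bm y$-basis.

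For the base case $k=0$, I would show that any $2$-independent $\G$ hits a nonzero $f\in\taff{\gla{n}{\f}}$.  Decomposing $\G=\G_1+\G_2$ into variable-disjoint $1$-independent maps and choosing $\G_1=\svgen_1$ (\cref{defSVGEN}), the coefficient of $z_1^d$ in $f\circ \G_1$ is, up to a nonzero scalar, $\sum_i\prod_j L_{(i,j)}(y_1)$; this polynomial factors as $\prod_r(y_1-\alpha_r)^{d-1}\cdot\sum_i C_i\prod_{r\notin S_i}(y_1-\alpha_r)$ with $S_i=\{(i,1),\ldots,(i,d)\}$ and $C_i\neq 0$, and evaluation at $y_1=\alpha_{(i_0,1)}$ isolates the single nonzero summand indexed by $i_0$, proving it is nonzero as a polynomial.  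For general $f\in\taff{\gla{n}{\f}}$, the same calculation applies in the basis where the $\ell_{i,j}$ are coordinate functions.

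For the inductive step, pick $i^*$ and $S^*$ with $C_{i^*,S^*}\neq 0$, together with $j^*\notin S^*$ (which exists because $|S^*|=k<d$).  Let $\vv^*$ be a dual vector for $\ell_{i^*,j^*}$ (\cref{def:dual}).  By \cref{lem:dualder} together with gate-disjointness, $\partiald{^{k+1}f}{\vv_1\cdots\partial\vv_k\partial\vv^*}$ annihilates every gate $i\neq i^*$ and, inside gate $i^*$, retains (at least) the summand $C_{i^*,S^*}\prod_{j\notin S^*\cup\{j^*\}}\ell_{i^*,j}\neq 0$.  The resulting polynomial is supported on a single gate (the ``$s=1$'' case) of degree $d-k-1$, and using \cref{lem:indDerivLinear} we trade this extra derivative for one unit of independence, reducing the task to hitting a single-gate reduced derivative with a $(k+1)$-independent map.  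Iterating this isolate-and-differentiate step, each iteration consuming one unit of independence via \cref{lem:indDerivLinear} and peeling off one linear factor, eventually reduces to hitting a nonzero product of linearly independent linear forms, which any $1$-independent map handles by \cref{obs:kwise}(\ref{item:coordsGenInd}).

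The main obstacle is making the independence bookkeeping tight.  A naive application of \cref{thm:PITSINV} to the single-gate polynomial of sparsity $\binom{d-1}{k-1}$ does not give the claimed $(k+2)$ bound when $d,k$ are large; the tight bound seems to require iterating the dual-derivative isolation step, verifying at each iteration that a nonzero single-gate summand survives, and then appealing to the $s=1$ base case to finish with only $2$ residual units.  I expect the delicate part to be ruling out cross-gate cancellations after composition with $\G$, i.e., ensuring that for every gate $i^*$ with $D_{i^*}\neq 0$ there exists $\vv^*$ making the isolated derivative survive the $\G$-composition; here the linear independence of the $\ell_{i,j}$'s, combined with \cref{obs:kwise}(\ref{item:ind-lin}), plays the key role.
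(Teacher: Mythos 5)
Your structural decomposition of $\partiald{^kf}{\vv_1\cdots\partial\vv_k}$ into gate-disjoint pieces $D_i=\sum_{|S|=k}C_{i,S}\prod_{j\notin S}\ell_{i,j}$ is exactly right, and your opening move in the inductive step---taking one more directional derivative along a dual vector $\vv^*$ of $\ell_{i^*,j^*}$ to annihilate all gates $i\neq i^*$---coincides with the paper's first step. From there, however, the two arguments diverge, and yours has a genuine gap that you yourself flag as ``bookkeeping.''

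After isolating gate $i^*$ with one derivative, you are left (inside that gate) with a sum of up to $\binom{d-1}{k}$ terms of the form $C_{i^*,S}\prod_{j\notin S\cup\{j^*\}}\ell_{i^*,j}$, and you propose to iterate the ``differentiate along a dual vector'' step to peel off one factor per iteration, charging one unit of independence per derivative via \cref{lem:indDerivLinear}. But each such derivative only removes terms $S$ containing the differentiated index; to reduce to the single term $S=S^*$ this way requires roughly $d-k-1$ additional derivatives (one for each $j\notin S^*\cup\{j^*\}$), so the total independence cost is about $d-k$, which exceeds $k+2$ whenever $d$ is large relative to $k$. The missing idea is that the ``peeling'' should be done with a \emph{restriction} rather than a derivative: choose $T$ with $\alpha_T\neq 0$ (after one derivative the surviving terms are indexed by $S\subseteq\{2,\ldots,d\}$, $|S|=k$) and use \cref{lem:indProjectZero} to restrict to the codimension-$k$ subspace $\{\ell_{i^*,j}=0: j\in T\}$. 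For every $S\neq T$ of size $k$ the product $\prod_{j\in\{2,\ldots,d\}\setminus S}\ell_{i^*,j}$ contains some $\ell_{i^*,j}$ with $j\in T\setminus S$, so it vanishes on the restriction; the $T$-term does not involve any $\ell_{i^*,j}$ with $j\in T$, so it survives as a nonzero product of linearly independent linear functions. This kills $\binom{d-1}{k}-1$ terms at the cost of $k$ units of independence in one shot, and one remaining unit hits the surviving product. Adding back the single derivative used to isolate the gate gives exactly $k+2$. Without this derivative-vs.-restriction distinction, the inductive ``iterate and appeal to the $s=1$ case with 2 residual units'' claim does not close, since the $s=1$ case itself requires $k+1$ units for a $k$th-order derivative, not $2$.

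A secondary issue: your base case for $k=0$ is written for the specific generator $\svgen_1$, relying on the Lagrange structure $L_i(\alpha_j)=\delta_{i,j}$, and the subsequent remark that ``the same calculation applies in the basis where the $\ell_{i,j}$ are coordinate functions'' does not hold as stated, because conjugating an arbitrary $1$-independent map by $A^{-1}\in\gl{n}{\f}$ does not again produce a $1$-independent map. The lemma must hold for \emph{every} $(k+2)$-independent map, and the paper obtains the $k=0$ case by the same derivative-then-restriction argument (take one derivative to isolate a gate, then hit the resulting product with one more unit), with no special generator needed.
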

\begin{proof}
	Let $\G_1^{(1)},\G_1^{(2)},\G_k$ be a pair of \ropinvgen[1]{}s and a \ropinvgen[k]{}, respectively, such that $\G=\G_1^{(1)}+\G_1^{(2)}+\G_k$. 
	Let $\{\ell_{1,1},\ldots,\ell_{s,d}\}$ be linearly independent linear functions such that $f(\xn)=\sum_{i=1}^{s}\prod_{j=1}^{d}\ell_{i,j}$. Let $\cbra{\vu_{i,j}}$ be a dual set to $\cbra{\ell^{[1]}_{i,j}}$. I.e., $\ell^{[1]}_{i,j}(\vu_{i',j'})=\delta_{i,i'}\cdot \delta_{j,j'}$.
	
	\fussy
	Set $g(\xn)\triangleq\partiald{^kf}{\vv_1\partial\vv_2\cdots\partial\vv_k}(\xn)$. 
	For every $i$, let $Q_i(w_{i,1},\ldots,w_{i,d})$ be a polynomial satisfying 
	$Q_i(\ell_{i,1}(\xn),\ldots,\ell_{i,d}(\xn))=\partiald{^k \bra{\prod_{j=1}^{d}\ell_{i,j}(\xn)}}{\vv_1\partial\vv_2\cdots\partial\vv_k}$.
	In particular, $g(\xn)=\sum_{i=1}^sQ_i(\ell_{i,1},\ldots,\ell_{i,d})$.
	Fix some $i\in[s]$ such that $Q_i$ is non-constant (if no such $i$ exists, then $g$ is a non-zero constant and thus $g\circ \G\neq 0$). Assume, WLOG, that $Q_i$ depends non-trivially on $w_{i,1}$ and consider the derivative  in direction $\vu_{i,1}$.
	%As $Q_i$ can be written as a polynomial in $\ell_{i,j}$, assume, WLOG, that it depends non-trivially on $\ell_{i,1}$. Let $\cbra{u_{i,j}}$ be a dual set to $\cbra{\ell_{i,j}}$. 
	%Fix a dual vector $\vu\in\f^n$ such that $\ell_{i,1}(\vu)=1$, and $\ell_{i',j'}(\vu)=0$ for all $(i',j')\neq(i,1)$. Note that $Q_{i'}(w_{i'1},\ldots,w_{i'd})$ does not depend on $w_{i,1}$ for any $i'\neq i$. 
	From \cref{lem:dualder} We get
	$$ \partiald{Q_i(\ell_{i,1}(\xn),\ldots,\ell_{i,d}(\xn))}{\vu_{i,1}}=\partiald{Q_i}{w_{i,1}}\bra{\ell_{i,1}(\xn),\ldots,\ell_{i,d}(\xn)}\neq 0 \;, $$
	and for $i'\neq i$
	$$ \partiald{Q_{i'}(\ell_{i,1}(\xn),\ldots,\ell_{i,d}(\xn))}{\vu_{i,1}}=\partiald{Q_{i'}}{w_{i,1}}(\ell_{i,1}(\xn),\ldots,\ell_{i,d}(\xn))=0\;. $$
	Thus
	$$ \partiald{g}{\vu_{i,1}}=\partiald{Q_i}{w_{i,1}}(\ell_{i,1}(\xn),\ldots,\ell_{i,d}(\xn)) \neq 0 \;. $$
	%and $\partiald{g}{\vu}\neq 0$ because $Q_i(w_{i,1},\ldots,w_{i,d})$ depends on $w_{i,1}$. 
	As $Q_i(\ell_{i,1}(\xn),\ldots,\ell_{i,d}(\xn))$ is a $k$th order directional derivative of the product $\ell_{i,1}(\xn)\cdots\ell_{i,d}(\xn)$ we have that
	$$ Q_i(\ell_{i,1}(\xn),\ldots,\ell_{i,d}(\xn))=\sum_{\substack{S\subseteq[d]\\|S|=k}}\alpha_S\left(\prod_{j\in [d]\setminus S}\ell_{i,j}(\xn)\right) \;,$$
	for some constants $\alpha_S\in\f$. Thus, 
	$$ \partiald{g}{\vu_{i,1}}=\sum_{\substack{S\subseteq\{2,\ldots,d\}\\|S|=k}}\alpha_S
	%\cdot \partiald{\ell_{i,1}}{\vu_{i,1}} \cdot 
	\left(\prod_{j\in\{2,\ldots,d\}\setminus S}\ell_{i,j}(\xn)\right)\;. $$
%	for some constants $\alpha_S\in\f$, depending on $\tilde\alpha_S$ and $\ell_{i,1}$.
	Assume, WLOG, that for $T=\{2,\ldots,k+1\}$, $\alpha_T\neq 0$.
	Observe that except for 
	%\doricmnt{'for' instead of 'of'} 
	the term $\alpha_T\left(\prod_{j\in\{2,\ldots,d\}\setminus T}\ell_{i,j}(\xn)\right)$, every other term is divisible by one of the functions $\ell_{i,j}$, for $j\in T$. Let $V=\{\vv \mid \ell_{i,j}(\vv)=0, \; \forall j\in T\}$. It follows that  $\restr{\partiald{g}{\vu_{i,1}}}{V} = \restr{\alpha_T\left(\prod_{j\in\{2,\ldots,d\}\setminus T}\ell_{i,j}(\xn)\right)}{V}\neq 0$.
	%, where $V=\{\vv \mid \ell_{i,j}(\vv)=0, \; \forall j\in T\}$. 
	\cref{lem:indProjectZero} implies that there exist linear functions $L_1(\xn),\ldots,L_k(\xn)$ and an assignment $\vbeta$ such that for $\vL=\bra{L_1,\ldots,L_k}$: $$\partiald{g}{\vu_{i,1}}\bra{\xn+\G_k(\vbeta,\vL(\xn))} = \alpha_T\left(\prod_{j\in\{2,\ldots,d\}\setminus T}\ell_{i,j}\bra{\xn+\G_k(\vbeta,\vL(\xn))}\right) \neq 0\;.$$ 
	As the right term is a product of linear functions, we get from \cref{obs:kwise}(\ref{item:coordsGenInd})  that $$\partiald{g}{\vu_{i,1}}\bra{\xn+\G_k(\vbeta,\vL(\xn))}\circ \G_1^{(2)}\neq 0\,.$$
	Therefore, $\partiald{g}{\vu_{i,1}}\circ (\G_1^{(2)}+\G_k)\neq 0$. The claim now follows from \cref{lem:indDerivLinear}. %implies  $g\circ (\G_1^{(2)}+\G_1^{(1)}+\G_k)=g\circ \G\neq 0$ as we wanted to prove.
	% As $h\neq 0$, we can assume, WLOG, that for $T=\{2,\ldots,k+1\}$, $\alpha_T\neq 0$. Define $\tilde h\triangleq\alpha_T\left(\prod_{j\in\{2,\ldots,d\}\setminus T}\ell_{i,j}(\xn)\right)$. %Thus, if we prove $\tilde h\circ \G_1^{(2)}\neq 0$, then $h(\xn +\G_k(\bm y,\bm z))\circ \G_1^{(2)}=h\circ (\G_1^{(2)}+\G_k)\neq 0$, and we're done.
	%Since $\tilde h(\xn)$ is a product of non-zero linear functions, \cref{obs:kwise}(\ref{item:coordsGenInd}) shows that $\tilde h\circ \G_1^{(2)}$ is a product of non-zero polynomials, and is therefore non-zero.
	%Hence, $h(\xn +\G_k(\bm y,\bm z))\circ \G_1^{(2)}=h\circ (\G_1^{(2)}+\G_k)\neq 0$, and we're done.
\end{proof}

It is not hard to see that the proof above implies the following hitting set generator for $\taff{\gla{}{\f}}$:

\begin{corollary}\label{cor:PITSDMINV}
	If $0\neq f\in\taff{\gla{n}{\f}}$, then for any \ropinvgen[2]{} $\G$: $f\circ \G\neq 0$.
\end{corollary}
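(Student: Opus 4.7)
The plan is to extract this from the proof of \cref{PITderivSDMINV} by specializing to the ``$k=0$'' case. Decompose the $2$-independent map as $\G = \G_1^{(1)} + \G_1^{(2)}$, with $\G_1^{(1)},\G_1^{(2)}$ variable-disjoint $1$-independent maps, and write
\[
f(\xn) \;=\; \sdmpol[s][d]\bigl(\ell_{1,1}(\xn),\ldots,\ell_{s,d}(\xn)\bigr) \;=\; \sum_{i=1}^{s}\prod_{j=1}^{d}\ell_{i,j}(\xn),
\]
where the $\ell_{i,j}$ are the relevant coordinates of $A\xn+\vb$. Since $A\in\gl{n}{\f}$, the $sd$ linear forms $\ell_{i,j}^{[1]}$ are a subset of the rows of $A$ and are therefore linearly independent; in particular each $\ell_{i,j}$ is a nonzero linear function, and there is a dual set $\{\vu_{i,j}\}\subseteq\f^n$ with $\ell_{i',j'}^{[1]}(\vu_{i,j}) = \delta_{(i,j),(i',j')}$.

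First I would apply \cref{lem:dualder} in the direction $\vu_{1,1}$, with outer polynomial $g=\sdmpol[s][d]$ and inner functions $\ell_{1,1},\ldots,\ell_{s,d}$. Because $\sdmpol[s][d]=\sum_i\prod_j y_{i,j}$ and only the $i=1$ summand depends on $y_{1,1}$, this yields
\[
\partiald{f}{\vu_{1,1}} \;=\; \partiald{\sdmpol[s][d]}{y_{1,1}}\bigl(\ell_{1,1},\ldots,\ell_{s,d}\bigr) \;=\; \prod_{j=2}^{d}\ell_{1,j}(\xn),
\]
which is a nonzero polynomial, since it is a product of nonzero linear functions.

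Next I would verify that $\bigl(\prod_{j=2}^{d}\ell_{1,j}\bigr)\circ\G_1^{(2)}\ne 0$. Each factor is a nonzero linear function and a $1$-independent map can place its $z$-variable at any single coordinate with zero elsewhere, so by setting the control variables appropriately we see that each individual composition $\ell_{1,j}\circ\G_1^{(2)}$ is a nonzero polynomial; as the polynomial ring is an integral domain, their product is nonzero. Then \cref{lem:indDerivLinear}, applied with the outer map $H\triangleq\G_1^{(2)}$ and the $1$-independent $\G_1^{(1)}$ supplying the direction $\vu_{1,1}$, gives $f\circ(\G_1^{(1)}+\G_1^{(2)})=f\circ\G\ne 0$, as required.

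The argument is essentially the $k=0$ case of \cref{PITderivSDMINV}: the subspace-restriction step of that proof becomes vacuous (no $\ell_{i,j}$ needs to be set to zero), and only two layers of $1$-independence are needed---one to simulate the single directional derivative via \cref{lem:indDerivLinear} and one to hit the resulting product of linearly independent linear functions. I do not anticipate any real obstacle here beyond checking that the cited lemmas specialize cleanly; the only thing worth a moment of care is ensuring the dual vector $\vu_{1,1}$ exists, which is where we use $n\ge s\cdot d$ together with invertibility of $A$.
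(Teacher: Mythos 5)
Your proposal is correct and is exactly the $k=0$ specialization of the paper's proof of \cref{PITderivSDMINV}, which is what the paper intends when it remarks that the corollary follows from that proof: take a directional derivative along a dual vector to isolate a single product gate, hit the resulting product of nonzero linear functions with one $1$-independent map via \cref{obs:kwise}, and fold the derivative back in with \cref{lem:indDerivLinear} using the other. The only cosmetic difference is that you check each $\ell_{1,j}\circ\G_1^{(2)}\neq 0$ by hand rather than citing linear independence of the coordinates of $\G_1^{(2)}$ directly, but the content is the same.
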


%{\color{red}
%We would like to reduce our argument to the case where the sum $f(\xn)\in\sdminv+\sdminv$ is \emph{homogeneous}. To this end, we focus on \emph{uniform} \ropinvgen[k]{}s (\cref{defUniIndGen}), and prove that if $\G:\C^{O(k)}\to\C^{n+1}$ is a such hitting set generator for $x_0^{\deg(f)}f(\frac{x_1}{x_0},\ldots,\frac{x_n}{x_0})$, then the restriction of $\G$ to the $n$ coordinates $1,\ldots,n$ is a hitting set generator for $f(\xn)$. We state the respective lemma, proof of which can be found in \cref{techProofs}:
%
%\begin{restatable}{lemma}{uniIndGenHitsHom}\label{uniIndGenHitsHom}
%Let $\xn=(x_1,\ldots,x_n)$ and $f\in\C[\xn]$ be a polynomial of degree $d$. Let $g(x_0,\xn)=x_0^df(\frac{x_1}{x_0},\ldots,\frac{x_n}{x_0})$ be the homogenization of $f$, and let $\G:\C^t\to\C^{n+1}$ be a polynomial map such that the coordinates of $\G$ are homogeneous polynomials of identical degree. Set $H:\C^t\to\C^n$ to the coordinates of $\G$ mapping to the variables $x_1,\ldots,x_n$. Then, $g\circ \G\neq 0\Rightarrow f\circ H\neq 0$.
%\end{restatable}
%}

We are now prepared to a construct hitting set generator for $\taff{\gla{n}{\f}}+\taff{\gla{n}{\f}}$. We recall the statement of  \cref{thm:PITsumSDMinv}.

\PITsumSDMinv*
\begin{proof}
	Let $\G_6$ be a uniform $6$-independent polynomial map and 
	let $\cbra{\ell_{i,j,k}}$ be linear functions such that $f_i = \sdmpol[s_i][d_i]\bra{\ell_{i,1,1},\ldots,\ell_{i,s_i,d_i}}$.
	
	We first prove that  if $f\circ \G_6=0$ then $d_1=d_2$. 
	Assume for a contradiction that $d_1>d_2$. Observe that $f_1^{[d_1]}= \sdmpol[s_1][d_1]\bra{\ell_{1,1,1}^{[1]},\ldots,\ell_{1,s_1,d_1}^{[1]}}$ (recall that $\ell^{[1]}$ is the degree $1$ homogeneous part of $\ell$). As the $\ell_{1,i,j}^{[1]}$s are linearly independent, it follows that $f_1^{[d_1]}\neq 0$.  \cref{cor:PITSDMINV} implies that $f_1^{[d_1]}\circ \G_6\neq 0$, and as $\G_6$ is uniform, we get that $\deg\bra{f_1\circ \G_6}=d_1\cdot \deg\bra{\G_6}$. On the other hand,  $\deg\bra{f_2\circ \G_6}\leq d_2\cdot \deg\bra{\G_6}< \deg\bra{f_1\circ \G_6}$. It follows that $f\circ \G_6\neq 0$, in contradiction. From now on we denote $d=d_1=d_2$.
	
	Next, we note that we can  assume that $f$ is homogeneous. Let $\tell_{i,j,k}=x_0\cdot \ell_{i,j,k}(\xn/x_0)$ be the homogenization of $\ell_{i,j,k}$.
	Observe that the homogenization of $f$ is  $\tilde{f}(x_0,\xn)\triangleq x_0^d f(\xn/x_0)= \sdmpol[s_1][d]\bra{\tell_{1,1,1},\ldots,\tell_{1,s_1,d_1}}+ \sdmpol[s_2][d]\bra{\tell_{2,1,1},\ldots,\tell_{2,s_2,d_2}}$, which is a homogeneous polynomial in $\taff{\gl{n+1}{\f}}+\taff{\gl{n+1}{\f}}$.
	By \cref{lem:uniIndGenHitsHom},  it is enough to prove that $\tilde{f}\circ \G'_6 \neq 0$, where $\G'_6$ is a uniform $6$-independent map into $\f^{n+1}$. Hence, to simplify notation and WLOG, we assume from now on  that $f$ is homogeneous and that $\ell_{i,j,k}=\ell_{i,j,k}^{[1]}$. Next, we handle the case  $s_1\neq s_2$. 

%\uniIndGenHitsHom*
	
%	\begin{restatable}{lemma}{uniIndGenHitsHom}\label{lem:uniIndGenHitsHom}
%		Let $\xn=(x_1,\ldots,x_n)$ and $f\in\f[\xn]$ be a polynomial of degree $d$. Let $g(x_0,\xn)=x_0^df(\frac{x_1}{x_0},\ldots,\frac{x_n}{x_0})$ be the homogenization of $f$, and let $\G:\f^t\to\f^{n+1}$ be a polynomial map such that the coordinates of $\G$ are homogeneous polynomials of identical degree. Let $H:\f^t\to\f^n$ be the restriction of $\G$ mapping to the coordinates  in $[n]$ (i.e., we ignore the $0$th coordinate). If $g\circ \G\neq 0$ then $f\circ H\neq 0$.
%	\end{restatable}

	Assume, WLOG, that  $s_1>s_2$.  
	As the $s_1\cdot d$ linear functions $\cbra{\ell_{1,i,j}}_{i,j}$ are linearly independent, there must exist a linear form, WLOG, $\ell_{1,1,1}$, such that $\ell_{1,1,1}\not\in \text{span}\bra{\cbra{\ell_{2,i,j}}_{i,j}}$.  
	As before, fix a vector $\vv$ such that $\ell_{1,1,1}(\vv)=1$ and $\ell_{2,i,j}(\vv)=0$ for all $i,j\in[s_2]\times [d]$.  \cref{lem:dualder} implies that $\partiald{f_2}{\vv}= 0$. On the other hand, from linear independence we get that $\frac{\partial \bra{\prod_{j=1}^{d}\ell_{1,1,j}} }{\partial \vv}\neq 0$ and,  the same argument also gives $\partiald{f_1}{\vv}\neq  0$. Thus $\partiald{f}{\vv}\neq 0$.
	From  \cref{lem:indDerivLinear,PITderivSDMINV} we conclude that any uniform \ropinvgen[4]{} hits $f$. Observe that the proof above also shows that it must be the case that $\text{span}\bra{\cbra{\ell_{1,i,j}}_{i,j}}=\text{span}\bra{\cbra{\ell_{2,i,j}}_{i,j}}$, or else any uniform \ropinvgen[4]{} hits $f$.
	
	From this point on, we assume that $s_1=s_2=s$ and that $\text{span}\bra{\cbra{\ell_{1,i,j}}_{i,j}}=\text{span}\bra{\cbra{\ell_{2,i,j}}_{i,j}}$. 
	
	As $\text{span}\bra{\cbra{\ell_{1,i,j}}_{i,j}}=\text{span}\bra{\cbra{\ell_{2,i,j}}_{i,j}}$, we can represent $f_2$ as a polynomial in $\cbra{\ell_{1,i,j}}_{i,j}$ (recall this notion from Section~\ref{sec:not}). We split the proof into two cases, depending on the $\cbra{\ell_{1,i,j}}_{i,j}$-monomials appearing in $f_2$:
	
	%Define $\tilde f_2(\xn)=f_2((A_1^T)^{-1}\xn)$. The polynomial $\tilde f_2$ is the ``$f_1$-basis representation of $f_2$'' in the sense that if we take each term $\alpha_m\xn^{\va}$ in $\tilde f_2$ and replace it with $\alpha_m\vec \ell(\xn)^{\va}$, we get $f_2(\xn)$ (i.e, $\tilde f_2(A_1^T\xn)=f_2(\xn)$). We split into cases, depending on the monomials appearing in $\tilde f_2$:
	
	\begin{enumerate}
		\item The set of $\cbra{\ell_{1,i,j}}_{i,j}$-monomials appearing in $f_2$ is a subset of the $\cbra{\ell_{1,i,j}}_{i,j}$-monomials in $f_1$. I.e., $f_2(\xn)=\sum_{i=1}^{s}\alpha_i \cdot \prod_{j=1}^{d}\ell_{1,i,j}$. 
		This means that $f=\sum_{i=1}^{s}(1+\alpha_i) \cdot \prod_{j=1}^{d}\ell_{1,i,j} \in \sdmpol[s][d]^{\gl{n}{\f}}$, and the theorem follows from \cref{cor:PITSDMINV}.
		
		\item There exists an $\cbra{\ell_{1,i,j}}_{i,j}$-monomial $\prod_{i,j}\ell_{i,j}^{a_{i,j}}$ 
		%\doricmnt{$a_{i,j}$ has a comma between $i$ and $j$ here, but only here} 
		in $f_2$ that is not an $\cbra{\ell_{1,i,j}}_{i,j}$-monomial of $f_1$. %As $f_2$ is $d$-homogeneous, $\deg(\xn^{\va})=d$.  
		Let $\cbra{\vv_{i,j}}$ be a dual set to $\cbra{\ell_{1,i,j}}$. We proceed to show we can choose two vectors 
		$\vu,\vw\in\{\vv_{1,1},\ldots,\vv_{s,d}\}$ 
		such that $\partiald{^2f_1}{\vu\partial\vw}=0$ and $\partiald{^2f_2}{\vu\partial\vw}\neq 0$. We again consider two cases:
		\begin{itemize}
			\item There exists some $a_{i,j}\geq 2$: 
			%If $\xn^{\va}$ is not a multilinear monomial, i.e. there exist some $i,j$ such that $a_{i,j}\geq 2$, then fix 
			Let $\vu=\vw =\vv_{i,j}$. By \cref{lem:dualder}:
			$$ \partiald{^2f_1}{\vu\partial\vw}(\xn)=\partiald{\sdmpol}{^2x_{i,j}}(\ell_{1,1,1},\ldots,\ell_{1,s,d})=0 $$
			and
			$$ \partiald{^2f_2}{\vu\partial\vw}(\xn) \neq 0\;,$$
			as 
			%\doricmnt{no need for capital 'A' here, this is mid-sentence} 
			the $\cbra{\ell_{1,i,j}}$-monomial $\prod_{i,j}\ell_{i,j}^{a_{i,j}}$ exists in $f_2$. 
			
			\item $a_{i,j}\leq 1$ for every $i,j$: In this case, since $f_2$ is homogeneous, there must be some $i\neq i'$ such that for some $j$ and $j'$, $a_{i,j},a_{i',j'}\neq 0$. 
			%If $\xn^{\va}$ is a multilinear monomial, then there must exist some $i\neq i'$ and some $j,j'\in[d]$ such that $a_{i,j},a_{i',j'}=1$ (where $a_{i,j}$ is the exponent of $x_{i,j}$ in $\xn^{\va}$), because $\deg(\xn^{\va})=d$ and otherwise $\xn^{\va}$ is a monomial of $\sdmpol(\xn)$. 
			Now choose $\vu = \vv_{i,j}$ and $\vw = \vv_{i',j'}$. As before, it is easy to verify that 
			$$\partiald{^2f_1}{\vu\partial\vw}(\xn)=0 \quad \text{and} \quad \partiald{^2f_2}{\vu\partial\vw}(\xn)\neq 0\;.$$
			%
			%As taking a derivative of $\sdmpol(\xn)$ by two variables in distinct product gates results in the zero polynomial:
			%$$ \partiald{^2f_1}{\vu\partial\vw}(\xn)=\partiald{\sdmpol}{x_{i,j}\partial x_{i',j'}}(A_1^T\xn)=0, $$
			%and if we take a derivative of $f_2$:
			%$$ \partiald{^2f_2}{\vu\partial\vw}(\xn)=\partiald{^2\tilde f_2}{x_{i,j}\partial x_{i',j'}}(A_1^T\xn). $$
			%As the monomial $\partiald{^2\xn^{\va}}{x_{i,j}\partial x_{i',j'}}$ exists in $\partiald{^2\tilde f_2}{^2x_{i,j}}(\xn)$ and is non-zero, it follows that $\partiald{^2\tilde f_2}{^2x_{i,j}}(A_1^T\xn)\neq 0$.
		\end{itemize}
		Thus, in either cases, there exist $\vu,\vw$ such that
		$$ \partiald{^2f}{\vu\partial\vw}=\partiald{^2f_2}{\vu\partial\vw}\neq 0. $$
		
		By \cref{PITderivSDMINV}, any \ropinvgen[4]{} hits $\partiald{^2f}{\vu\partial\vw}$; so by \cref{lem:indDerivLinear}, any uniform \ropinvgen[6]{} hits $f$.
	\end{enumerate}
\end{proof}

\subsection{Reconstruction of \sdminv{} circuits}\label{secReconD3}

In  \cite{kayal2019reconstruction}, Kayal and Saha gave a polynomial-time, randomized reconstruction algorithm that, given black-box access to a homogeneous \dthree{} circuits satisfying a \emph{non-degeneracy} condition (\cref{nonDegenKayal}), reconstructs the circuit with high probability.
To prove \cref{thm:reconstructSDM} all we have to do is show that any homogeneous polynomial $f\in \taff{\gla{}{\f}}$
satisfies the non-degeneracy condition of \cref{nonDegenKayal}.

To explain the condition we first need to define the \emph{partial derivative} space of a polynomial:

%\cite{kayal2019reconstruction} define a non-degeneracy condition on the input $f$ for which their algorithm succeeds w.h.p, and proceed to prove this non-degeneracy condition is satisfied w.h.p. on random depth-$3$ homogeneous circuits. Before presenting this condition we recall the definition of the partial derivative space:

\begin{definition}
	For an $n$-variate polynomial $f(\xn)\in\f[\xn]$, of degree $d$, and for any $k\in[d]$, the \emph{partial derivative space of order $k$} of $f$ ($\text{PD}_k$ space for short),  denoted $\partial^kf$, is the $\f$-span of all  partial derivatives of $f$ of  order $k$:
	$$ \partial^kf=\text{\normalfont span}_\f\left\{\partiald{^kf}{x_{i_1}\partial x_{i_2}\cdots\partial x_{i_k}}:i_1,\ldots,i_k\in[n]\right\} \;.$$
\end{definition}

%The non-degeneracy condition of \cite{kayal2019reconstruction} is described in terms of the PD space of the top gate of the circuit in question, and the PD space of each of its product gates:

\begin{definition}[Non-degeneracy condition \cite{kayal2019reconstruction}]\label{nonDegenKayal}
	Let $f(\xn)=f_1(\xn)+\ldots+f_s(\xn)$, where $f_i=\prod_{j=1}^{d}\ell_{i,j}$ for some linear forms $\ell_{i,j}$, be an $n$-variate $d$-homogeneous polynomial, which can be computed by a depth-$3$ circuit of top fan-in $s$. Fix $k\triangleq \ceil{\frac{\log(s)}{\log(\frac{n}{{\mathsf e}\cdot d})}}$, where ${\mathsf e}$ is the base of the natural logarithm. We say $f(\xn)$ is \emph{non-degenerate} if $\dim(\partial^kf)=s\cdot{d\choose k}$, and for every $i\in[s]$ there exist $2k+1$ linear forms $\ell_{i,r_1},\ldots,\ell_{i,r_{2k+1}}$ such that:
	$$ \dim\left(\partial^k\left(\sum_{j\in[s]\setminus\{i\}}f_j\right)\;\;\text{\normalfont mod}\;\;\text{\normalfont span}_\C\{\ell_{i,r_1},\ldots,\ell_{i,r_{2k+1}}\}\right)=(s-1)\cdot{d\choose k} $$
\end{definition}

\begin{theorem}[Theorem 1 of \cite{kayal2019reconstruction}]\label{ksThm1}
	Let $n,d,s\in\n$, $n\geq(3d)^2$ and $s\leq(\frac{n}{3d})^{\frac{d}{3}}$. Let $\f$ be a field of characteristic zero or greater than $ds^2$.\footnote{This requirement appears before the statement of their theorem.}  There is a randomized, $\text{poly}(n,d,s)=\text{poly(n,s)}$ time algorithm which takes as input black-box access to an $n$-variate $d$-homogeneous polynomial $f$ that can be computed by a non-degenerate (\cref{nonDegenKayal}) $\dthree$ circuit of top fan-in $s$, and outputs a non-degenerate, $n$-variate, $d$-homogeneous $\dthree$ circuit of top fan-in $s$ computing $f$.
\end{theorem}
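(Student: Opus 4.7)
The plan is to design a randomized reconstruction algorithm that exploits the geometric structure of the partial derivative space $\partial^k f$ under the non-degeneracy assumption of \cref{nonDegenKayal}. First, I would use black-box access and interpolation to compute a basis for $\partial^k f$: each $k$-th order partial derivative can be obtained by finite differencing, and since $s \cdot \binom{d}{k}$ is polynomial in $n,s$ under the stated bounds, computing this space is polynomial time. The key observation is that for each summand $f_i = \prod_{j=1}^d \ell_{i,j}$, every element of $\partial^k f_i$ is a linear combination of products of $d-k$ of the forms $\ell_{i,j}$, so $\dim(\partial^k f_i) \leq \binom{d}{k}$. The first non-degeneracy condition $\dim(\partial^k f) = s \cdot \binom{d}{k}$ therefore forces the direct-sum decomposition $\partial^k f = \bigoplus_{i=1}^s \partial^k f_i$.

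The main algorithmic task is to recover this decomposition into its $s$ invariant pieces. To do this, I would exploit the second non-degeneracy condition: for each $i$ there exist $2k+1$ linear forms $\ell_{i,r_1},\ldots,\ell_{i,r_{2k+1}}$ from gate $i$ such that working modulo their span annihilates $\partial^k f_i$ entirely but preserves the dimensions of the other pieces. The annihilation follows from the pigeonhole computation $(d-k) - (d-2k-1) = k+1 \geq 1$, so every product of $d-k$ of the $\ell_{i,j}$'s must include at least one specified form. The plan is to search, by sampling, for linear forms $\ell$ whose quotient drops $\dim(\partial^k f)$ by exactly $\binom{d}{k}$; these are the ``gate-revealing'' forms. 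By clustering such forms into groups of $2k+1$ that collectively zero out a single summand, we isolate $\bigoplus_{j\neq i}\partial^k f_j$, and take $\partial^k f_i$ as the kernel of the projection.

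Having isolated $\partial^k f_i$, recovering the actual linear forms $\ell_{i,1},\ldots,\ell_{i,d}$ is relatively straightforward: any nonzero element of $\partial^k f_i$ is a product of $d-k$ of these forms, and applying black-box multivariate polynomial factorization (which over our field of characteristic zero or $>ds^2$ is randomized polynomial time) recovers the linear factors. Combining factorizations of $\binom{d}{k}$ basis elements and matching variables across them yields the complete set of $d$ linear forms for gate $i$. After reconstructing all $s$ gates this way, verification that the candidate circuit $\tilde{f}$ computes $f$ can be done by running a PIT test on $f - \tilde{f}$, which is again in $\sdmpol[2s][d]^{\gla{n}{\f}}$; we can use \cref{thm:PITsumSDMinv} here, or a simpler randomized PIT since we only need one-sided correctness.

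The main obstacle I expect is the decomposition step: the search for gate-revealing linear forms and the clustering into $2k+1$-subsets must be done efficiently. The natural approach is to sample random linear forms and detect dimension drops, but ensuring that the random sample finds enough of the $\ell_{i,j}$'s to cover every gate, and that the clustering avoids mixing forms from different gates, requires delicate use of the quantitative bounds $n \geq (3d)^2$ and $s \leq (n/(3d))^{d/3}$. These bounds are precisely what give the partial derivative spaces ``enough room'' so that random linear forms behave generically, and extracting the right non-degeneracy-preserving polynomial identities to certify successful separation is the technically heaviest part of the argument.
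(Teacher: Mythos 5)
You should first be aware that this statement is not proved in the paper at all: it is Theorem~1 of Kayal and Saha, imported verbatim as a black box. The paper's only work in this direction (in the proof of \cref{thm:reconstructSDM}) is to verify that every polynomial in $\taff{\gla{}{\f}}$ satisfies the non-degeneracy condition of \cref{nonDegenKayal} and that the output circuit has linearly independent linear forms. So your proposal has to stand as a sketch of the Kayal--Saha algorithm itself, and as such it has genuine gaps.

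The central gap is the decomposition step. Your plan to ``search, by sampling, for linear forms $\ell$ whose quotient drops $\dim(\partial^k f)$ by exactly $\binom{d}{k}$'' is not an algorithm: the gate-revealing forms are precisely the unknown $\ell_{i,j}$, a measure-zero subset of all linear forms, so random sampling will never hit them and there is no finite candidate set to enumerate. The detection criterion is also wrong even if a genuine $\ell_{i,1}$ were handed to you: working modulo a single form does not annihilate $\partial^k f_i$ --- it kills only the order-$k$ derivative products divisible by $\ell_{i,1}$, leaving the $\binom{d-1}{k-1}$ products that omit it (generically still independent) and possibly perturbing the other summands, so no drop of $\binom{d}{k}$ occurs; annihilation requires all $2k+1$ forms simultaneously, and grouping unknown forms into such clusters is exactly the hard part you have deferred. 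In Kayal--Saha, the second non-degeneracy condition certifies uniqueness and correctness of the decomposition $\partial^k f=\bigoplus_i \partial^k f_i$, which their algorithm computes by entirely different, linear-algebraic means (spaces of operators between derivative spaces and eigenspace/simultaneous-diagonalization techniques), not by a search over linear forms. A second error is in gate recovery: a nonzero element of $\partial^k f_i$ is a linear \emph{combination} of products of $d-k$ of the $\ell_{i,j}$, e.g.\ $\ell_{i,1}\ell_{i,2}+\ell_{i,3}\ell_{i,4}$, and is in general not itself a product of linear forms, so factoring basis elements does not return the gate's forms; the standard fix is to first recover each $f_i$ (for instance by solving the linear system for $g_1,\dots,g_s$ with $\partial^k g_i\subseteq \partial^k f_i$ and $\sum_i g_i=f$) and then factor $f_i$, which genuinely is a product of linear forms. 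Finally, a minor point: $f-\tilde f$ need not lie in the orbit class you name, since the union of the two sets of linear forms need not be linearly independent (and may exceed $n$ forms), but a plain Schwartz--Zippel identity test suffices for the verification step.
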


For our proof we will need the following simple fact.

%\begin{restatable}{claim}{claimSpan}\label{claimSpan}
\begin{fact}\label{claimSpan}
	Let $f(\xn)$ be a polynomial of degree $d$ and $(A,\vb)\in\gla{n}{\f}$. Then, for any $k\in[d]$:
	$$ \partial^kf(A\xn+\vb)=\left\{g(A\xn+\vb):g\in\partial^k f(\xn)\right\} \;.$$
	
\end{fact}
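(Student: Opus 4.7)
The plan is to prove both set inclusions by a direct application of the chain rule, combined with the invertibility of $A$. Set $h(\xn) \triangleq f(A\xn+\vb)$, so that the left-hand side is $\partial^k h$ and the right-hand side is the image under the substitution $\xn \mapsto A\xn+\vb$ of the linear space $\partial^k f$.

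I would proceed by induction on $k$. For the base case $k=1$, the chain rule gives
\[
\partiald{h}{x_i}(\xn) \;=\; \sum_{j=1}^n A_{j,i}\cdot\partiald{f}{y_j}(A\xn+\vb),
\]
which in compact form reads $\nabla_x h(\xn) = A^T\cdot (\nabla_y f)(A\xn+\vb)$. This identity, read forward, shows that each $\partial h/\partial x_i$ lies in $\{g(A\xn+\vb):g\in\partial^1 f\}$, giving the inclusion $\partial^1 h \subseteq \{g(A\xn+\vb):g\in\partial^1 f\}$. Read backward, using that $A^T$ is invertible (as $A\in\GLm{n}{\f}$), we can express each $(\partial f/\partial y_j)(A\xn+\vb)$ as an $\f$-linear combination of the $\partial h/\partial x_i$, yielding the reverse inclusion. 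Thus the case $k=1$ is settled.

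For the inductive step, assume the identity holds for $k-1$, so that $\partial^{k-1} h = \{g(A\xn+\vb):g\in \partial^{k-1} f\}$. Any element of $\partial^k h$ is an $\f$-linear combination of expressions of the form $\partial/\partial x_i$ applied to some element of $\partial^{k-1} h$. By the inductive hypothesis, any such element of $\partial^{k-1} h$ has the form $g(A\xn+\vb)$ for some $g\in\partial^{k-1} f$, and applying the base case to $g(A\xn+\vb)$ shows that its first-order derivatives lie in $\{\tilde g(A\xn+\vb):\tilde g\in\partial^1 g\}\subseteq \{\tilde g(A\xn+\vb):\tilde g\in\partial^k f\}$. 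This gives $\partial^k h\subseteq \{g(A\xn+\vb):g\in\partial^k f\}$. The reverse inclusion is symmetric: any $g\in\partial^k f$ can be written as a combination of first-order derivatives of elements of $\partial^{k-1} f$, so $g(A\xn+\vb)$ is a combination of first-order derivatives (with respect to $y_j$, evaluated at $A\xn+\vb$) of elements of the form $g'(A\xn+\vb)$ with $g'\in \partial^{k-1} f$; by the base case each such $y_j$-derivative lies in $\partial^1$ of the corresponding element of $\partial^{k-1} h$, and hence in $\partial^k h$.

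There is no real obstacle here: the only thing that could fail is the ``reverse'' direction in the base case, and that is handled purely by invertibility of $A$. The proof works verbatim over fields of arbitrary characteristic because we are using only formal partial derivatives, which satisfy the chain rule as an identity of polynomials. One minor bookkeeping point is to remark that the translation vector $\vb$ plays no role: derivatives are invariant under the constant shift, and only the linear part $A$ enters the chain-rule matrix; this is why the statement would still hold with $\gla{n}{\f}$ replaced by $\GLm{n}{\f}$.
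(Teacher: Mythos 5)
Your proof is correct. The paper states this as a Fact without giving a proof, treating it as a routine consequence of the chain rule, and your argument supplies exactly the standard reasoning: writing $\nabla_x h = A^T\bigl(\nabla_y f\bigr)(A\xn+\vb)$, using invertibility of $A^T$ to get both inclusions at order one, and then closing the induction by observing that $\partial^k h$ is spanned by first-order derivatives of elements of $\partial^{k-1} h$ (and similarly for $f$). Your side remark that $\vb$ is irrelevant and only $A$ enters the Jacobian is also correct and explains why the fact is really about $\GLm{n}{\f}$ rather than $\gla{n}{\f}$.
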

%\end{restatable}

%For completeness we give the proof in \cref{techProofs}.

\begin{proof}[Proof of \cref{thm:reconstructSDM}]
	
	As given a non-homogeneous $\taff{\gla{n}{\f}}$ circuit we can easily get query access to its homogenization, $f^h=x_0^df(\frac{x_1}{x_0},\ldots,\frac{x_n}{x_0})$, which is a homogeneous polynomial in $\taff{\gl{n+1}{\f}}$, we can assume WLOG that the black-box polynomial  is homogeneous. It should also be  clear that a polynomial satisfies  the condition in \cref{nonDegenKayal} if and only if its homogenization does.

	It is clear that $\dim\left(\partial^k\sdmpol\right)=s{d\choose k}$, and since composing with an invertible linear transformation does not affect the dimension of the PD$_k$ space (\cref{claimSpan}), it follows that  $\dim\left(\partial^kf\right)=s{d\choose k}$ for \emph{any} $d$-homogeneous, $s$-sparse $f\in \taff{\gl{n}{\f}}$. 
	It is also clear that $\sdmpol$ satisfies the second condition and that this condition too is invariant under invertible linear transformations.

	We still need to argue that the output of the algorithm of \cref{ksThm1} is a $\taff{\gl{}{\f}}$ circuit. \cref{ksThm1}  guarantees that the output circuit $\Phi=\sum_1^s\prod_1^d\ell_{i,j}$ is a non-degenerate $d$-homogeneous, $\dthree$ circuit computing $f$. We claim the linear forms $\ell_{i,j}$ on the leaves are linearly independent, and conclude that it is indeed a  $\taff{\gl{}{\f}}$ circuit.
	Indeed, as $f(\xn)$ is $\gl{n}{\f}$-equivalent to $\sdmpol(\xn)$ and $\partial^{d-1}\sdmpol(\xn)=\text{span}_\f\{x_{1,1},\ldots,x_{s,d}\}$, it follows that $\partial^{d-1}\Phi$ has dimension $s\cdot d$. The space $\partial^{d-1}\Phi$ is contained in $\text{span}_\f\{\ell_{1,1},\ldots,\ell_{s,d}\}$, so by dimension argument the set $\{\ell_{1,1},\ldots,\ell_{s,d}\}$ must be linearly independent.
	
	Finally, we note that by \cref{sdmSymmetry} the representation that was found is unique up to $\TPS[s,d][\f]$-equivalence. 
	
	This concludes the proof of \cref{thm:reconstructSDM}.
\end{proof}

%\subsection{Group of symmetries of \sdmpol}

%\begin{definition}
%	For any $n\in\n$ the \emph{special linear group}, denoted $\SL[n][\f]$, is the set of all matrices $A\in \GL[m][\f]$ satisfying $\det(A)=1$.
%\end{definition}

%\begin{proof}[Proof of \cref{sdmSymmetry}]
%Before we prove \cref{sdmSymmetry}, we make two simple observations.
%we state and prove the following easy lemma regarding multilinear polynomials:

%We first make  a simple observation regarding multilinear polynomials.

%\begin{observation}\label{multReducDisj}
%	If $f$ is multilinear and reducible $f(x)=g_1(x)\cdots g_k(x)$, then $g_1,\ldots,g_k$ are multilinear and pairwise variable-disjoint; i.e. for every $i\neq j\in[k]$: $var(g_i)\cap var(g_j)=\emptyset$.
%\end{observation}
%
%\begin{observation}\label{irreducibleMultilinear}
%	If $f,g$ are non-constant, variable-disjoint, multilinear polynomials, then for every $c\in\f$ the polynomial $f(\xn)+g(\xn)+c$ is irreducible.
%\end{observation}
%
%

%
%\begin{observation}\label{sdminvHomInv}
%If a set of linear functions $\ell_1(\xn),\ldots,\ell_n(\xn)$ are linearly independent, then their homogenizations $\ell_i^{[1]}(\xn)+\ell_i(0)x_0$ are still linearly independent. Thus, the homogenization of a polynomial $f\in\sdminv$ is still in $\sdminv$.
%\end{observation}

\bibliographystyle{alpha}
\bibliography{main}
%\newpage

\appendix

\section{The reconstruction algorithm of \cite{gupta2014random}}

%\subsection{\texttt{LDR}}

For Algorithm~\ref{alg:afr} we introduce the following notation. Given integers $0<r<i<j\leq n$ we denote $\xn_{r,i,j}\triangleq (x_0,\ldots,x_r,0,\ldots,0,x_i,0,\ldots,0,x_j,0,\ldots,0)$ a vector of variables of length $n+1$. To be consistent with  the notation of \cite{gupta2014random} we also use the following notation: given an $(n+1)\times (n+1)$ matrix $A$ and a  polynomial $f(x_1,\ldots,x_n)$ we denote $\sigma_{A}(f)\triangleq f^h(A\xn)$, where $f^h$ is the homogenization of $f$. Finally, we define the rank of a homogeneous quadratic polynomial $q$ to be the minimal $k$ such that for some linear forms $\cbra{\ell_i}_{i=1}^{k}$, $q=\ell_1^2+\ldots+\ell_t^2 - \ell_{t+1}^2 - \ldots - \ell_k^2$.

\IncMargin{2em}
\begin{algorithm}
	\SetKwData{Left}{left}\SetKwData{This}{this}\SetKwData{Up}{up}	\SetKwFunction{Union}{Union}\SetKwFunction{FindCompress}{FindCompress}
	\SetKwInOut{Input}{input}\SetKwInOut{Output}{output}
	\SetNlSty{texbf}{\texttt{\bf AFR}}{}
	\Input{Black-box access to an $n$-variate  polynomial $f\in\f[\xn]$ of degree at most $d=2^\Delta$}
	\Output{Either a set of $4^\Delta$ linear functions $\ell_1,\ldots,\ell_{4^\Delta}$ such that $f=\croanf(\ell_1,\ldots,\ell_{4^\Delta})$ or \texttt{Fail}}
	\BlankLine
	If $\Delta=0$ then $f$ is a linear function.  Compute $f$ via interpolation and return the linear function;
	
	\textbf{Homogenization}. Homogenize $f$ (i.e. obtain query access to $f^h$)\label{alg:afr2}; %So now assume that $\xn=(x_0,x_1,\ldots,x_n)$ and $f$ is a homogeneous polynomial of degree exactly $d=2^\Delta$ \label{alg:afr2};
	
	\textbf{Reduction to \texttt{LDR}}. Pick $(n+1)$ vectors $\va_0,\ldots,\va_n$, each of whose coordinates are chosen uniformly at random from a large enough subset $T\subseteq\f$. Let $r=127$ and $m=4^{\Delta-1}$. For $r<i<j\leq n$, let $A^{i,j}_r$ be the $(n+1)\times(n+1)$ matrix whose $k$th column  $(k\in[n+1]_0)$ is $\delta_{ijk}\cdot \va_k$ where $\delta_{ijk}$ is $1$ if $k\in\{0,1,\ldots,r\}\cup\{i,j\}$ and $0$ otherwise. For each $A^{i,j}_r$ invoke the LDR algorithm on $\sigma_{A^{i,j}_r}(f)$ (which is an  $r+3$-variate polynomial) to obtain an $m$-tuple, $Q_{i,j}=(q_{i,j,1},\ldots,q_{i,j,m})$, of quadratic polynomials  satisfying
	{\begin{itemize}
			\item $\text{rank}(q_{i,j,l})\leq 4$ for each $\cbra{i,j}\in{\{r+1,r+2,\ldots,n\}\choose 2}$ and $l\in[m]$, and
			\item $\sigma_{A^{i,j}_r}(f)=\croanf[{\Delta-1}](Q_{i,j})$
	\end{itemize}}
	\label{alg:afr3}
	
	\textbf{Patchwork}. Invoke the algorithm of Lemma 6.6 of \cite{gupta2014random} on input $((\va_0,\ldots,\va_n),(q_{i,j})_{r<i<j\leq n})$ and obtain an $m$-tuple of quadratic forms $Q=(q_1,q_2,\ldots,q_m)$\label{alg:afr4};
	
	For each $i\in[m]$, find linear forms $\ell_{i,1},\ell_{i,2},\ell_{i,3},\ell_{i,4}$ such that
	$q_i=\ell_{i,1}\cdot\ell_{i,2}+\ell_{i,3}\cdot\ell_{i,4}$\label{alg:afr5};
	
	\Return $(\ell_{1,1},\ldots,\ell_{1,4},\ell_{2,1},\ldots,\ell_{m,3},\ell_{m,4})$\label{alg:afr6};

	\caption{ANF Formula Reconstruction \texttt{AFR($f(\xn),\Delta$)} (Algorithm 6.9 of \cite{gupta2014random})}\label{alg:afr}
\end{algorithm}
\DecMargin{2em}

\begin{remark}\label{rem:t}
	We note that in Algorithm 5.1 of \cite{gupta2014random} (Algorithm~\ref{alg:ldr}) they treat $f$ as an $(r+1)$-variate polynomial. However the $r$ in their Algorithm 6.9 (Algorithm~\ref{alg:afr}) is not the same $r$ as in Algorithm~\ref{alg:ldr}, specifically, $r_{\texttt{AFR}}=r_{\texttt{LDR}}+2$. Hence,  to avoid confusion, we decided to denote the number of variables in  Algorithm~\ref{alg:ldr}  with $r+3$.	
\end{remark}

\IncMargin{2em}
\begin{algorithm}
	\SetKwData{Left}{left}\SetKwData{This}{this}\SetKwData{Up}{up}	\SetKwFunction{Union}{Union}\SetKwFunction{FindCompress}{FindCompress}
	\SetKwInOut{Input}{input}\SetKwInOut{Output}{output}
	\SetNlSty{texbf}{\texttt{\bf LDR}}{}

	\Input{An $r+3$-variate homogeneous polynomial $f\in\f[\bm Y]$ of degree $d=2^\Delta$ given as a list of coefficients}
	\Output{Either a tuple of $m=4^{\Delta-1}$ quadratic forms $(q_1,\ldots,q_m)$, each of rank $4$, such that $f=\croanf[\Delta-1](q_1,\ldots,q_m)$, or \texttt{Fail}}
	\BlankLine
	If $\Delta=1$ then return $f$ itself;
	
	Let $\text{\normalfont Sing}(f)$ be the ideal generated by the first order derivatives of $f$ - i.e., the ideal
	$$\gen{\partiald{f}{Y_0},\partiald{f}{Y_1},\ldots,\partiald{f}{Y_{r+2}}}.$$
	Use Proposition 4.8 of \cite{gupta2014random} to determine the dimension of $\text{\normalfont Sing}(f)$. If codimension of $\text{\normalfont Sing}(f)$ is not $4$, output \texttt{Fail}. Else, compute a set of generators $g_1,g_2, \ldots,g_l$ for the top dimensional component (of codimension $4$) of $\text{\normalfont Sing}(f)$ using the algorithm of Theorem 4.14 of \cite{gupta2014random}\label{alg:ldr2};
	
	Compute a basis $\{\tilde g_1,\ldots,\tilde g_{t}\}$ for the vector space $V\subseteq\f[\bm Y]$ consisting of all the homogeneous components of degree $\frac{d}{2}$ of each $g_i$ above. If $t=\dim(V)\neq 4$, output \texttt{Fail}\label{alg:ldr3};
	
	By solving an appropriate system of polynomial equations in $4$ unknowns, compute another basis $\{h_1,h_2,h_3,h_4\}$ of $V$ such that the singularities of each $h_i$ has a component of codimension $4$\label{alg:ldr4};
	
	By going over all permutations $\pi:[4]\to[4]$, find one such that $f$ is an $\f$-linear combination of $h_{\pi(1)}\cdot h_{\pi(2)}$ and $h_{\pi(3)}\cdot h_{\pi(4)}$. Compute $\alpha,\beta$ such that
	$f=\alpha h_{\pi(1)}h_{\pi(2)}+\beta h_{\pi(3)}h_{\pi(4)}$.
	Let $\tilde h_1=\alpha h_{\pi(1)}$, $\tilde h_2=h_{\pi(2)}$, $\tilde h_3=\beta h_{\pi(3)}$, $\tilde h_4=h_{\pi(4)}$\label{alg:ldr5};

	For each $i\in[4]$, make a recursive call to \texttt{LDR}$(\tilde h_i,\Delta-1)$ and obtain $Q_i= (q_{i,1},q_{i,2},\ldots,q_{i,4^{\Delta-2}})$ such that $\tilde h_i=\croanf[\Delta-2](q_{i,1},q_{i,2},\ldots,q_{i,4^{\Delta-2}})$ \label{alg:ldr6};
	
	 \Return $Q=Q_1\circ Q_2\circ Q_3\circ Q_4$, where `$\circ $' denotes list concatenation \label{alg:ldr7};
	
	\caption{Low-dimensional formula reconstruction \texttt{LDR}($f(\bm Y),\Delta$) (Algorithm 5.1 of \cite{gupta2014random})}\label{alg:ldr}
\end{algorithm}
\DecMargin{2em}

\newpage
\subsection{Definition of Formulaic Independence and Pairwise Singular Independence}

%\section{Varieties}

%The reconstruction algorithm of Gupta et al. (\cite{gupta2014random}), which we will apply to our class \roanf{}, uses the concept of \emph{varieties} (or \emph{algebraic sets}) extensively. While we will be applying their algorithm in a black-box fashion, we will need to talk about this fundamental algebraic-geometric object:

In \cite{gupta2014random} Gupta et al.  characterize ``bad'' inputs to their average-case, randomized algorithm in terms of points in a specific variety. As we only stated their algorithm over the complex numbers, we define varieties only over $\C$. However, all definitions can be easily extended to other fields as well.

%\begin{definition}
	For any set of $n$-variate polynomials $\cF\subseteq\C[\xn]$, we define the \emph{zero set} of $\cF$ as:
	$$ V(\cF)\triangleq \{\va\in\C^n\;|\;\forall f\in \cF:f(\va)=0\}\;. $$
	Any set $V\subseteq\C^n$ that can be defined as a zero set $V=V(\cF)$ for some set of polynomials $\cF\subseteq\C[\xn]$ is called a \emph{variety}, or an \emph{algebraic set}.
%\end{definition}

%Our main use of varieties will be the fact that if the set of polynomials $P$ is non-empty and contains a non-zero polynomial, then for a random $a\in\C^n$ we have $a\notin V(P)$ w.h.p. (simple application of the \szlem{}). 

%; we will show this argument is strong enough to give a good probability bound on success of their algorithm on \emph{any} input from our subclass \roanf{}.

The notions ``Formulaic Independence'' and ``Pairwise Singular Independence'' are defined in terms of dimensions of \emph{projective varieties}, as the polynomials in question are always homogeneous.

%\begin{definition}
	%\doricmnt{You don't want this part as a definition?}. No, either you know it or you won't get as far as to read this part...
	Let $r\in\n$. The \emph{r-dimensional projective space} $\mathbb{P}^r$ is the space $\C^{r+1}\setminus\{\vzero\}$ with the equivalence relation $\sim$, where $\vv,\vu\in\C^{r+1}\setminus\{\vzero\}$ satisfy $\vv\sim\vu$ if and only if there exists some $\lambda\in\C$ such that $\lambda\vv=\vu$.
%\end{definition}

If $V=V(f_1,\ldots,f_k)$ is a variety where every $f_i$ is an $r+1$-variate homogeneous polynomial, and if $\vv\in\C^{r+1}$ satisfies $f_1(\vv)=\ldots=f_k(\vv)=0$, then for every $\lambda\in\C$: $f_1(\lambda \cdot\vv)=\ldots=f_k(\lambda \cdot\vv)=0$. Thus, the set $V\setminus\{\vzero\}$ can be viewed as a subset of $\mathbb{P}^r$. In this case we call $V$ a \emph{projective variety}, and define its dimension as follows:

\begin{definition}[Proposition 11.4 in \cite{harris2013algebraic}]
	The dimension of a projective variety $V\subseteq\mathbb{P}^r$, denoted $\dim(V)$, is the largest integer $k$ such that any linear space of dimension $\geq r-k$ intersects $V$ nontrivially.
\end{definition}

The definition of \emph{formulaic independence} involves the algebraic set of \emph{singularities} of a polynomial $f$, and the \emph{Jacobian matrix} of a tuple of polynomials:
%
%\begin{definition}\label{defSing}
	For a polynomial $f\in\C[\xn]$, the set of \emph{singularities} of $f$ is the set of points $\vv\in\C^n$ such that $f(\vv)=\bra{\partiald{f}{x_1}}(\vv)=\bra{\partiald{f}{x_2}}(\vv)=\ldots=\bra{\partiald{f}{x_n}}(\vv)=0$. In other words, 
	$$ \text{\normalfont Sing}(f)=V\bra{f,\partiald{f}{x_1},\ldots,\partiald{f}{x_n}}\;. $$
%\end{definition}

%\begin{definition}\label{defJacobian}
	Given a tuple of polynomials $\bm f=(f_1,\ldots,f_m)\in\C[\xn]^m$, the \emph{Jacobian} of $\bm f$ is the following matrix of partial derivatives of $f_1,\ldots,f_m$:
	$$ J(\bm f,\xn)=\begin{pmatrix}
		\partiald{f_1}{x_1}&\partiald{f_1}{x_2}&\cdots&\partiald{f_1}{x_n}\\
		\partiald{f_2}{x_1}&\partiald{f_2}{x_2}&\cdots&\partiald{f_2}{x_n}\\
		\vdots&\vdots&\ddots&\vdots\\
		\partiald{f_m}{x_1}&\partiald{f_m}{x_2}&\cdots&\partiald{f_m}{x_n}
	\end{pmatrix}\in\C[\xn]^{m\times n}\;. $$
%\end{definition}

\begin{definition}[Definition from Section 3.1 of \cite{gupta2014random}]\label{defMinors}
	Let $M(\xn)\in\C[\xn]^{s\times r}$ be a matrix whose entries are polynomials in $\xn$,  and let $t\in\n$. We denote by $\text{\normalfont Minors}(M(\xn),t)\subseteq\C[\xn]$ the set of determinants of all $t\times t$ submatrices of $M(\xn)$.
\end{definition}

\begin{definition}[Definition 5.2 of \cite{gupta2014random}]\label{defVJ}
	Let $\bm g=(g_1(\xn),\ldots,g_k(\xn))\in\C[\xn]$ be a $k$-tuple  of homogeneous polynomials. The algebraic set $V_J(g_1,\ldots,g_k)$ ($V_J(\bm g)$ for short) is defined to be %the algebraic set which is 
	the set of common zeroes of polynomials in $\text{\normalfont Minors}(J(\bm g,\xn),k)$. In other words, $V_J(\bm g)$ consists of all points $\vv\in \mathbb{P}^r$ for which the rank
	of the Jacobian matrix $J(\bm g, \xn)$ is less than $k$.
\end{definition}

\begin{definition}[Formulaic Independence, Definition 5.3 of \cite{gupta2014random}]\label{defFI}
	Let $\xn=(x_0,x_1,\ldots,x_r)$ and let $f,f_1,f_2,f_3,f_4\in\C[\xn]$ such that $f=f_1\cdot f_2+f_3\cdot f_4$. Denote $\bm f\triangleq (f_1,f_2,f_3,f_4)$. We say that $f_1,f_2,f_3,f_4$ are \emph{formulaically independent} if
	$ \dim(V(\bm f))=r-4$ and $\dim(\text{\normalfont Sing}(f)\cap V_J(\bm f))<r-4$.
	We say that a homogeneous ANF formula $\Phi$ \emph{satisfies formulaic independence} at node $v$ if  $v$ is a $+$ gate, and the four polynomials computed at the grandchildren of $v$ are formulaically independent.
\end{definition}

To define \emph{pairwise singular independence}, we must first define the \emph{iterated Jacobian matrix}:

\begin{definition}[The Iterated Jacobian and the variety $V_I$, Definition 5.19 of \cite{gupta2014random}]\label{defIterJacobian}\fussy
	Let $\xn=(x_0,x_1,\ldots,x_r)$, and let $\bm{g_1},\ldots,\bm{g_k}\in(\C[\xn])^m$ be $m$-tuples of homogeneous, $(r+1)$-variate polynomials: $\bm{g_i}=g_{i,1},\ldots,g_{i,m}$. The \emph{iterated Jacobian} of $(\bm{g_1},\ldots,\bm{g_k})$, denoted $I(\bm{g_1},\ldots,\bm{g_k})$, is defined to be the following matrix: $I(\bm{g_1},\ldots,\bm{g_k})\in\C[\xn]^{{r+1\choose k}\times m^k}$ has its rows indexed by $k$-sized subsets of indices of variables $\{j_1,\ldots,j_k\}\in{[r+1]_0\choose k}$ and its columns indexed by tuples $(i_1,\ldots,i_k)\in[m]^k$. The $(\{j_1,\ldots,j_k\},(i_1,\ldots,i_k))$th entry of $I(\bm{g_1},\ldots,\bm{g_k},\xn)$ is the polynomial
	$$ \det\begin{pmatrix}
		\partiald{g_{1,i_1}}{x_{j_1}}&\partiald{g_{2,i_2}}{x_{j_1}}&\cdots&\partiald{g_{k,i_k}}{x_{j_1}}\\
		\partiald{g_{1,i_1}}{x_{j_2}}&\partiald{g_{2,i_2}}{x_{j_2}}&\cdots&\partiald{g_{k,i_k}}{x_{j_2}}\\
		\vdots&\vdots&\ddots&\vdots\\
		\partiald{g_{1,i_1}}{x_{j_k}}&\partiald{g_{2,i_2}}{x_{j_k}}&\cdots&\partiald{g_{k,i_k}}{x_{j_k}}
	\end{pmatrix} \;. $$
\fussy	The algebraic set $V_I(\bm{g_1},\ldots,\bm{g_k})$ is defined to be the common zeroes of the polynomials in $\text{\normalfont Minors}(I(\bm{g_1},\ldots,\bm{g_k}),\ell^k)$.
\end{definition}

\begin{definition}[Pairwise Singular Independence, Definition 5.20 of \cite{gupta2014random}]\label{defPSI}
	Let $\{f_{i,j}\}_{i,j=1}^4\subseteq\C[\xn]$ be sixteen homogeneous, $(r+1)$-variate polynomials of the same degree. For every $i\in[4]$, let $f_i=f_{i,1}\cdot f_{i,2}+f_{i,3}\cdot f_{i,4}$ and $\bm{f_i}=(f_{i,1},f_{i,2},f_{i,3},f_{i,4})$. For a set $S=\{i_1,\ldots,i_k\}\subseteq[4]$, denote:
	$ W_S\triangleq V_J(f_{i_1},\ldots,f_{i_k})\cap V_I(\bm{f_{i_1}},\ldots,\bm{f_{i_k}})$.
	We say that $(\bm{f_1},\bm{f_2},\bm{f_3},\bm{f_4})$ are \emph{pairwise singularly independent} if
	\begin{enumerate}
		\item for all $1\leq i<j\leq 4$:
		$ \dim(\text{\normalfont Sing}(f_i)\cap\text{\normalfont Sing}(f_j))\leq r-6,\qquad$ and
		\item for all $S\subseteq[4]$ such that $|S|\geq 2$:
		$ \dim(W_S)\leq r-6 $.
	\end{enumerate}
	We  say that  a homogeneous ANF formula $\Phi$ satisfies \emph{pairwise singular independence} at a node $v$ if the node $v$ is a $+$ gate, and
	 $(\bm{f_{v_1}},\bm{f_{v_2}},\bm{f_{v_3}},\bm{f_{v_4}})$ are pairwise singularly independent, where $v_1,v_2,v_3,v_4$ are nodes which are the grandchildren of $v$ and $\bm{f_{v_i}}$ is the $4$-tuple of polynomials computed at the grandchildren of the node $v_i$.
%	\begin{itemize}
%		\item the node $v$ is a $+$ gate, and
%		\item $(\bm{f_{v_1}},\bm{f_{v_2}},\bm{f_{v_3}},\bm{f_{v_4}})$ are pairwise singularly independent, where $v_1,v_2,v_3,v_4$ are nodes which are the grandchildren of $v$ and $\bm{f_{v_i}}$ is the $4$-tuple of polynomials computed at the grandchildren of the node $v_i$.
%	\end{itemize}
\end{definition}

\end{document}